\newtheorem{definition}{Definition}
\newtheorem{theorem}{Theorem}
\newtheorem{lemma}{Lemma}
\newtheorem{property}{Property}
\newcommand*{\permcomb}[4][0mu]{{{}^{#3}\mkern#1#2_{#4}}}
\newcommand*{\comb}[1][-1mu]{\permcomb[#1]{C}}
\newcommand\independent{\protect\mathpalette{\protect\independenT}{\perp}}
\def\independenT#1#2{\mathrel{\rlap{$#1#2$}\mkern2mu{#1#2}}}
\begin{document}

\title{Wiretapped Oblivious Transfer}

\author{
Manoj Mishra, ~\IEEEmembership{Member, ~IEEE,}  Bikash Kumar Dey, ~\IEEEmembership{Member, ~IEEE,}\\ Vinod M. Prabhakaran, ~\IEEEmembership{Member, ~IEEE,} Suhas Diggavi, ~\IEEEmembership{Fellow, ~IEEE}
\thanks{The work of M.~Mishra and B.~K.~Dey is supported in part by the Bharti Centre for Communication, IIT Bombay, a grant from the Department of Science and Technology, Government of India and by the Information Technology Research Academy (ITRA), Government of India under ITRA-Mobile grant ITRA/15(64)/Mobile/USEAADWN/01.  V.~M.~Prabhakaran's work was supported in part by a Ramanujan Fellowship from the Department of Science and Technology, Government of India and by the Information Technology Research Academy (ITRA), Government of India under ITRA-Mobile grant ITRA/15(64)/Mobile/USEAADWN/01. The work of S.~Diggavi was supported in part by NSF grant 1321120. This work was presented in part at the 2014 and 2015 IEEE International Symposia on Information Theory and at the IEEE Information Theory Workshop, Hobart, 2014}
\thanks{M.~Mishra and B.~K.~Dey are with the Department of Electrical Engineering, Indian Institute of Technology Bombay (IIT Bombay), Mumbai, India(email: mmishra,bikash@ee.iitb.ac.in). V.~M.~Prabhakaran is with the School of Technology and Computer Science, Tata Institute of Fundamental Research (TIFR), Mumbai, India(email: vinodmp@tifr.res.in).S.~Diggavi is with the Department of Electrical Engineering, University of California at Los Angeles (UCLA), Los Angeles, USA(email: suhasdiggavi@ucla.edu).}
}

\maketitle

\begin{abstract}

In this paper, we study the problem of obtaining $1$-of-$2$ string oblivious transfer (OT) between users Alice and Bob, in the presence of a passive eavesdropper Eve. The resource enabling OT in our setup is a noisy broadcast channel from Alice to Bob and Eve. Apart from the OT requirements between the users, Eve is not allowed to learn anything about the users' inputs. When Alice and Bob are honest-but-curious and the noisy broadcast channel is made up of two independent binary erasure channels (connecting Alice-Bob and Alice-Eve), we derive the $1$-of-$2$ string OT capacity for both $2$-privacy (when Eve can collude with either Alice or Bob) and $1$-privacy (when no such collusion is allowed). We generalize these capacity results to $1$-of-$N$ string OT and study other variants of this problem. When Alice and/or Bob are malicious, we present a different scheme based on interactive hashing. This scheme is shown to be optimal for certain parameter regimes. We present a new formulation of multiple, simultaneous OTs between Alice-Bob and Alice-Cathy. For this new setup, we present schemes and outer bounds that match in all but one regime of parameters. Finally, we consider the setup where the broadcast channel is made up of a cascade of two independent binary erasure channels (connecting Alice-Bob and Bob-Eve) and $1$-of-$2$ string OT is desired between Alice and Bob with $1$-privacy. For this setup, we derive an upper and lower bound on the $1$-of-$2$ string OT capacity which match in one of two possible parameter regimes.

\end{abstract}

\begin{keywords}
Oblivious transfer, honest-but-curious, malicious, $2$-privacy, $1$-privacy
\end{keywords}


\section{Introduction}
\label{sec:intro}

In secure multiparty computation (MPC), mutually distrusting users wish to communicate with each other in such a way that, at the end of the communication, each user can compute a function of the distributed private inputs without learning any more than what the function output and the private input reveal about other users' inputs and outputs. Applications such as voting, auctions and data-mining, amongst several others \cite{CramDamNiel} illustrate the need for secure MPC in real life. It is well known that information-theoretically (unconditionally) secure computation between two users is not possible in general, when the users have only private randomness and noiseless communication as a resource to enable the computation. A combinatorial characterization of functions that \emph{can} be securely computed was derived in \cite{kushilevitz1992}. However, additional stochastic resources, such as a noisy channel \cite{CrepKilian1988} or distributed sources, can be used to enable two users to compute a function unconditionally securely.

Oblivious Transfer (OT) is a secure two-user computation which has been shown to be a primitive for all two-user secure computation \cite{jkilian1988},\cite{jkilian2000}. That is, if the two users can obtain OT using the resources available to them, then they can securely compute any function of their inputs. In particular, OT can be achieved if the two users have access to a noisy channel. A $1$-of-$2$ string OT is a two-party computation where user Alice's private inputs are two equal-length strings and user Bob's private input is a choice bit. Bob obtains exactly one string of his choice from Alice's strings, without Alice finding out the identity of the string chosen by Bob. If a discrete memoryless channel (DMC) is used as a resource to enable such OT, then the OT capacity of the DMC is the largest rate, i.e. string-length per channel use, that can be obliviously transferred to Bob. Nascimento and Winter \cite{NascWinter2008} characterized source distributions and channels from which non-zero $1$-of-$2$ string OT rates can be obtained. When Alice and Bob are \emph{honest-but-curious}, Ahlswede and Csisz\'ar \cite{ot2007} derived upper bounds on the $1$-of-$2$ string OT capacity both for DMCs and distributed sources. Users are honest-but-curious if they do not deviate from the given protocol but, from whatever they learn during the protocol, they will infer all they can about forbidden information. In contrast, \emph{malicious} users may deviate arbitrarily from the given protocol. When the DMC is a binary erasure channel (BEC) and users are honest-but-curious, Ahlswede and Csisz\'ar \cite{ot2007} presented a protocol which they showed was capacity achieving, establishing that $\min \{ \epsilon, 1 - \epsilon \}$ is the $1$-of-$2$ string OT capacity of a BEC($\epsilon$), where $\epsilon$ is the erasure probability of the channel. They extended these results for a generalized erasure channel (GEC). A GEC is a channel $p_{Y|X}$, with input alphabet $\mathcal{X}$ and output alphabet $\mathcal{Y}$, where $\mathcal{Y}$ can be partitioned as $\mathcal{Y}_e \cup \mathcal{Y}_{\overline{e}}$ such that $p_{Y|X}(y|x)$ does not depend on the input $x \in \mathcal{X}$ whenever $y \in \mathcal{Y}_e$. Specifically, Ahlswede and Csisz\'ar \cite{ot2007} derived lower bounds on $1$-of-$2$ string OT capacity for a GEC and showed that the bounds are tight when the erasure probability of the GEC is at least $1/2$. In a surprising result, Pinto et al. \cite{PintoDowsMorozNasc2011} proved that using a GEC with erasure probability at least $1/2$, any $1$-of-$2$ string OT rate achieved when Alice and Bob are honest-but-curious can also be achieved even if Alice and Bob behave maliciously. This result characterized the $1$-of-$2$ string OT capacity of a GEC, with erasure probability is atleast $1/2$, for malicious users. The achievable scheme presented by Pinto et al. \cite{PintoDowsMorozNasc2011} for establishing this result is a generalization of the scheme presented by Savvides \cite{savvides_thesis}, that uses a BEC($1/2$) and uses the cryptographic primitive of \emph{interactive hashing} (see Appendix~\ref{appndx:interactive_hashing} for the properties and a protocol for interactive hashing) to establish checks that detect malicious behavior. More recently, Dowsley and Nascimento proved \cite{DowsNasc2014-arxiv} that even when the GEC's erasure probability is less than $1/2$, the rate that was shown to be achievable in \cite{ot2007} for honest-but-curious Alice and Bob is also achievable when Alice and Bob are malicious. To the best of our knowledge, characterizing the $1$-of-$2$ string OT capacity for other natural channels such as a binary symmetric channel (BSC) remains open in the two-party setting, even with honest-but-curious users.

In this paper we study a natural extension of the OT setup when there is an eavesdropper Eve, who may wiretap the noisy channel between Alice and Bob. In this case, Eve, who receives partial information about the transmissions, can use it to deduce the private data or outputs of Alice and Bob. The noisy wiretapped channel we consider is a binary erasure broadcast channel whose inputs come from Alice and whose outputs are available to Bob and Eve. For the most part, we consider a  binary erasure broadcast channel which provides independent erasure patterns to Bob and Eve. We also consider the physically degraded binary erasure broadcast channel. In our $3$-party setups, we define two privacy regimes. Privacy against individual parties is referred to as \emph{$1$-privacy}, whereas privacy against any set of $2$ colluding parties is referred to as \emph{$2$-privacy}.


\subsection{Contributions and organization of the paper}

\begin{itemize}

\item When the noisy broadcast channel is made up of two independent BECs (see Figure~\ref{fig:ot_hbc_wtap}) and the users are honest-but-curious, we characterize the $1$-of-$2$ string OT capacity both for $2$-privacy and $1$-privacy (Theorem~\ref{thm:result_2p_1p_hbc_wtap}). We extend these capacity results to $1$-of-$N$ string OT (Theorem~\ref{thm:result_N_hbc_wtap}). Our protocols are natural extensions of the two-party protocols of Ahlswede and Csisz\'ar \cite{ot2007} where we use secret keys between Alice and Bob, secret from Eve, to provide rate-optimal schemes for both privacy regimes. Our converse arguments generalize the converse of Ahlswede and Csisz\'ar \cite{ot2007}.

\item We consider the setup of Figure~\ref{fig:ot_hbc_wtap}, where Alice and Bob may act maliciously during the OT protocol. We derive an expression for an achievable rate under $2$-privacy constraints (Theorem~\ref{thm:result_2p_malicious_wtap}) for this setup. The achievable rate is optimal when $\epsilon_1 \leq 1/2$ and is no more than a factor of $\epsilon_1$ away from the optimal rate when $\epsilon_1 > 1/2$. In a departure from previous protocols \cite{PintoDowsMorozNasc2011},\cite{DowsNasc2014-arxiv} which used interactive hashing primarily to detect the malicious behavior of a user, our protocol uses interactive hashing to generate the secret keys used by Alice and Bob, secret from Eve, to achieve $2$-privacy even with malicious users (for $\epsilon_1 > 1/2$). Using interactive hashing only for checks to detect malicious behavior will not work when $\epsilon_1 > 1/2$, since it is possible for Bob, in collusion with Eve, to pass any such check for uncountably many values of $\epsilon_1, \epsilon_2$. 

\item In a generalization of the setup of Figure~\ref{fig:ot_hbc_wtap}, we consider the setup of Figure~\ref{fig:ot_hbc_indep}, where instead of the eavesdropper, we have a legitimate user Cathy. All users are honest-but-curious. Independent $1$-of-$2$ string OTs are required between Alice-Bob and Alice-Cathy, with $2$-privacy. We derive inner and outer bounds on the rate-region (Theorem~\ref{thm:result_2p_hbc_indep}) for this setup. These bounds match except when $\epsilon_1,\epsilon_2 > 1/2$.

\item When the channel is a physically degraded broadcast channel made up of a cascade of two independent BECs (see Figure~\ref{fig:ot_hbc_degraded}), a BEC($\epsilon_1$) connecting Alice-Bob followed by a BEC($\epsilon_2$) connecting Bob-Eve, we derive upper and lower bounds on the $1$-of-$2$ string OT capacity under $1$-privacy (Theorem~\ref{thm:result_1p_hbc_degraded}), for honest-but-curious users. These bounds match when $\epsilon_1 \leq (1/3) \cdot \epsilon_2(1 - \epsilon_1)$. Unlike the secret key agreement problem, which has a simpler optimal scheme when the broadcast channel is degraded, the scheme for OT turns out to be more complicated than when Bob's and Eve's erasure patterns are independent. This happens because Eve knows more about the legitimate channel's noise process when the channel is degraded. Hiding Bob's choice bit from a more informed Eve is the main novelty of this protocol, compared to the independent erasures case.

\end{itemize}

The main system model we consider is for obtaining OT between honest-but-curious Alice and Bob, in the presence of an eavesdropper Eve. This model is introduced in Section~\ref{sec:prob_statement_wtap}. We consider several variants of this model. Section~\ref{sec:prob_statement_malicious_wtap} defines a variant of the main model, where OT is required when Alice and Bob may be malicious. In Section~\ref{sec:prob_statement_indep}, we generalize the main model by introducing the user Cathy instead of the eavesdropper and requiring independent OTs between Alice-Bob and Alice-Cathy. Section~\ref{sec:prob_statement_degraded} is a variant of the main model where a physically degraded broadcast channel is used as the resource for OT, instead of a broadcast channel providing independent erasure patterns to Bob and Eve considered in all previous models. The problem statement for each model is followed by a statement of the result we derive for that model. These results are proved in Sections~\ref{sec:proofs_0},~\ref{sec:proofs_1},~\ref{sec:proofs_2} and~\ref{sec:proofs_3}. In Section~\ref{sec:conclusions}, we summarize the work presented in this paper. Section~\ref{sec:open_problems} contains a discussion of the open problems related to the present work. The Appendices at the end consist of the supporting results referenced in the main proofs.


\section{Problem Statement and Results}
\label{sec:model_and_problem_statement}


\subsection{Notation}

We will use the capital letter $X$ to denote a random variable, whose alphabet will be specified in the context where $X$ is used. The small letter $x$ will denote a specific realization of $X$. The bold, small letter $\boldsymbol{x}$ will denote a $k$-tuple, where $k$ will be clear from the context in which $\boldsymbol{x}$ is used. The small, indexed letter $x_i$, $i=1,2,\ldots,k$ will denote the $i$th element of $\boldsymbol{x}$. The bold, capital $\boldsymbol{X}$ will denote a random $k$-tuple. Furthermore,

\begin{itemize}
\item $\boldsymbol{x}^i := (x_1, x_2, \ldots, x_i)$
\item Suppose $\boldsymbol{a} \in \{1,2,\ldots,k\}^m$. Then,
    \begin{itemize}
      \item $\boldsymbol{x}|_{\boldsymbol{a}} := (x_{a_i} : i=1,2,\ldots,m)$.
    \end{itemize}
\item $\{\boldsymbol{x}\} := \{ x_i : i=1,2,\ldots,k\}$.
\item Let $A \subset \mathbb{N}$. Then, $(A)$ is the tuple formed by arranging the elements of $A$ in increasing order. That is,
          \begin{itemize}
            \item $(A) := (a_i \in A, i=1,2,\ldots,|A| : \forall i > 1, a_{i-1} < a_i)$
          \end{itemize}
          For example, if $A = \{1,7,3,9,5\}$, then $(A) = (1,3,5,7,9)$.
\item Let $A \subset \{1,2,\ldots,k\}$. Then,
          \begin{itemize}
            \item $\boldsymbol{x}|_A := \boldsymbol{x}|_{(A)}$
          \end{itemize}
        For example, if $\boldsymbol{x} = (a,b,c,d,e,f,g)$ and $A = \{7,2,5\}$, then $\boldsymbol{x}|_A = (b,e,g)$.
\item Suppose $\boldsymbol{y} \in \{ 0,1, \bot \}^k$, where $\bot$ represents an erasure. Then,
    \begin{itemize}
      \item $\#_e(\boldsymbol{y}) := | \{i \in \{1,2,\ldots,k\} : y_i = \bot \}|$.
      \item $\#_{\overline{e}}(\boldsymbol{y}) := | \{i \in \{1,2,\ldots,k\} : y_i \neq \bot \}|$. 
    \end{itemize}
\end{itemize}

For $a \in \mathbb{R}$, $b \in \mathbb{R}^{+}$, we define:
\begin{itemize}
\item $<a> := |a| - \lfloor |a| \rfloor$
\item $\mathcal{N}_{b}(a) := \{\alpha \in \mathbb{R} : |a - \alpha| \leq b\}$
\end{itemize}


\subsection{Oblivious Transfer over a Wiretapped Channel: Honest-but-Curious Model}
\label{sec:prob_statement_wtap}

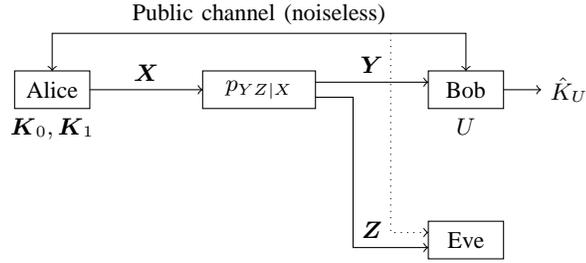
\begin{figure}[h]
\setlength{\unitlength}{1cm}
\centering
\begin{tikzpicture}[trim left, scale=1]

\draw (1,3) rectangle (2,3.5);
\draw (3.5,3) rectangle (5,3.5);
\draw (6.5,1) rectangle (7.5,1.5);
\draw (6.5,3) rectangle (7.5,3.5);

\draw [->] (2,3.25) -- (3.5,3.25);
\draw [->] (5,3.35) -- (6.5,3.35);
\draw [<->] (1.5,3.5) -- (1.5,4) -- (7,4) -- (7,3.5);
\draw [->] (7.5,3.25) -- (8,3.25);
\draw [->] (5,3.15) -- (5.5,3.15) -- (5.5,1.15) -- (6.5,1.15);
\draw [->, dotted] (6,4) |- (6.5,1.35);

\node at (1.5,3.25) {\small{Alice}};
\node at (7,3.25) {\small{Bob}};
\node at (7,1.25) {\small{Eve}};
\node at (4.25, 3.25) {\small{$p_{YZ|X}$}};
\node [below] at (1.5,3) {\small{$\boldsymbol{K}_0,\boldsymbol{K}_1$}};
\node [below] at (7,3) {\small{$U$}};
\node [right] at (8,3.25) {\small{$\hat{K}_U$}};
\node [above] at (2.75,3.25) {\small{$\boldsymbol{X}$}};
\node [above] at (5.75,3.35) {\small{$\boldsymbol{Y}$}};
\node [above] at (5.75,1.15) {\small{$\boldsymbol{Z}$}};
\node [above] at (4.25,4) {\small{Public channel (noiseless)}};

\end{tikzpicture}
\caption{$1$-of-$2$ string OT in presence of an eavesdropper}
\label{fig:ot_hbc_wtap_bcast}
\end{figure}

The setup of Figure~\ref{fig:ot_hbc_wtap_bcast} has two users Alice and Bob and an eavesdropper Eve. Alice and Bob are honest-but-curious. Alice's private data consists of two $m$-bit strings $\boldsymbol{K}_0,\boldsymbol{K}_1$. Bob's private data is his choice bit $U$. The random variables $\boldsymbol{K}_0,\boldsymbol{K}_1,U$ are independent and chosen uniformly at random over their respective alphabets. Alice can communicate with Bob and Eve over a broadcast channel $p_{YZ|X}$, with the output $Y$ available to Bob and the output $Z$ available to Eve. Additionally, Alice and Bob can send messages over a noiseless public channel, with each such message becoming available to Eve as well.

\begin{definition}
\label{defn:protocol_hbc_wtap}
Let $m,n \in \mathbb{N}$. An \emph{$(m,n)$-protocol} is an exchange of messages between Alice and Bob in the setup of Figure~\ref{fig:ot_hbc_wtap_bcast}. Alice's private strings $\boldsymbol{K}_0, \boldsymbol{K}_1$ are $m$-bits each. Alice transmits a bit $X_t$ over the channel at each time instant $t = 1,2,\ldots,n$. Also, before each channel transmission and after the last channel transmission, Alice and Bob take turns to send messages (arbitrarily many but finite number) over the public channel. Any transmission by a user is a function of the user's input, private randomness and all the public messages, channel inputs or channel outputs the user has seen. The rate of the protocol is $r_n := m/n$. Let $\boldsymbol{\Lambda}$ denote the transcript of the public channel at the end of an $(m,n)$-protocol.
\end{definition}

Let the \emph{final views} of Alice, Bob and Eve be, respectively, $V_A$, $V_B$ and $V_E$, where the \emph{final view} of a user is the set of all random variables received and generated by that user over the duration of the protocol. For the present setup:
\begin{align}
V_A & := \{\boldsymbol{K}_0, \boldsymbol{K}_1, \boldsymbol{X}, \boldsymbol{\Lambda}\} \\
V_B & := \{U, \boldsymbol{Y}, \boldsymbol{\Lambda}\} \\
V_E & := \{\boldsymbol{Z}, \boldsymbol{\Lambda}\}
\end{align}
where $\boldsymbol{X} := (X_1,X_2,\ldots,X_n)$, $\boldsymbol{Y} := (Y_1,Y_2,\ldots,Y_n)$ and $\boldsymbol{Z} := (Z_1,Z_2,\ldots,Z_n)$. Bob computes the estimate $\hat{\boldsymbol{K}}_U$ (of the string $\boldsymbol{K}_U$) as a function of its final view $V_B$.

\begin{definition}
\label{defn:ach_rate_2p_hbc_wtap}
$R_{2P}$ is an \emph{achievable $2$-private rate for honest-but-curious users} if there exists a sequence of $(m,n)$-protocols such that $m/n \longrightarrow R_{2P}$ as $n \longrightarrow \infty$ and
\begin{align}
P[\hat{\boldsymbol{K}}_U \neq \boldsymbol{K}_U] & \longrightarrow 0 \label{eqn:ach_2p_wtap_0} \\
I(\boldsymbol{K}_{\overline{U}} ; V_B,V_E) & \longrightarrow 0 \label{eqn:ach_2p_wtap_1} \\
I(U ; V_A,V_E) & \longrightarrow 0 \label{eqn:ach_2p_wtap_2} \\
I(\boldsymbol{K}_0,\boldsymbol{K}_1,U ; V_E) & \longrightarrow 0 \label{eqn:ach_2p_wtap_3}
\end{align} 
as $n \longrightarrow \infty$, where $\overline{U} = U \oplus 1$ and $\oplus$ is the sum modulo-$2$.
\end{definition}

\begin{definition}
\label{defn:ach_rate_1p_hbc_wtap}
$R_{1P}$ is an \emph{achievable $1$-private rate for honest-but-curious users} if there exists a sequence of $(m,n)$-protocols such that $m/n \longrightarrow R_{1P}$ as $n \longrightarrow \infty$ and
\begin{align}
P[\hat{\boldsymbol{K}}_U \neq \boldsymbol{K}_U] & \longrightarrow 0 \label{eqn:ach_1p_wtap_0} \\
I(\boldsymbol{K}_{\overline{U}} ; V_B) & \longrightarrow 0 \label{eqn:ach_1p_wtap_1} \\
I(U ; V_A) & \longrightarrow 0 \label{eqn:ach_1p_wtap_2} \\
I(\boldsymbol{K}_0,\boldsymbol{K}_1,U ; V_E) & \longrightarrow 0 \label{eqn:ach_1p_wtap_3}
\end{align} 
\end{definition}

The \emph{$2$-private capacity} $C_{2P}$ is the supremum of all achievable $2$-private rates for honest-but-curious users and the \emph{$1$-private capacity} $C_{1P}$ is the supremum of all achievable $1$-private rates for honest-but-curious users.


The main result in this section is a characterization of $C_{2P}$ and $C_{1P}$ for the setup of Figure~\ref{fig:ot_hbc_wtap}. The setup of Figure~\ref{fig:ot_hbc_wtap} is a specific case of the setup of Figure~\ref{fig:ot_hbc_wtap_bcast}, where the broadcast channel is made up of two independent binary erasure channels (BECs), namely, BEC($\epsilon_1$) which is a BEC with erasure probability $\epsilon_1$ connecting Alice to Bob and BEC($\epsilon_2$) connecting Alice to Eve.

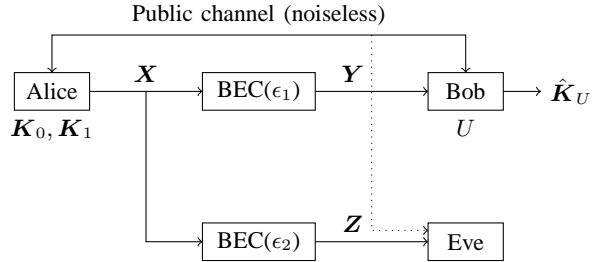
\begin{figure}[h]
\setlength{\unitlength}{1cm}
\centering
\begin{tikzpicture}[trim left, scale=1]

\draw (1,3) rectangle (2,3.5);
\draw (3.5,3) rectangle (5,3.5);
\draw (6.5,3) rectangle (7.5,3.5);
\draw (3.5,1) rectangle (5,1.5);
\draw (6.5,1) rectangle (7.5,1.5);

\draw [->] (5,3.25) -- (6.5,3.25);
\draw [->] (2,3.25) -- (3.5,3.25);
\draw [<->] (1.5,3.5) -- (1.5,4) -- (7,4) -- (7,3.5);
\draw [->] (7.5,3.25) -- (8,3.25);
\draw [->] (2.75,3.25) |- (3.5,1.25);
\draw [->] (5,1.25) -- (6.5,1.25);
\draw [->, dotted] (5.75,4) |- (6.5,1.4);

\node at (1.5,3.25) {\small{Alice}};
\node at (7,3.25) {\small{Bob}};
\node at (7,1.25) {\small{Eve}};
\node at (4.25, 3.25) {\small{BEC($\epsilon_1$)}};
\node at (4.25, 1.25) {\small{BEC($\epsilon_2$)}};
\node [above] at (2.75,3.25) {\small{$\boldsymbol{X}$}};
\node [above] at (5.5,3.25) {\small{$\boldsymbol{Y}$}};
\node [above] at (5.5,1.25) {\small{$\boldsymbol{Z}$}};
\node [below] at (1.5,3) {\small{$\boldsymbol{K}_0,\boldsymbol{K}_1$}};
\node [below] at (7,3) {\small{$U$}};
\node [right] at (8,3.25) {\small{$\hat{\boldsymbol{K}}_U$}};
\node [above] at (4.25,4) {\small{Public channel (noiseless)}};

\end{tikzpicture}
\caption{$1$-of-$2$ string OT using a binary erasure broadcast channel}
\label{fig:ot_hbc_wtap}
\end{figure}

\begin{theorem}[OT capacity for erasure broadcast channel]
\label{thm:result_2p_1p_hbc_wtap}
The $1$-of-$2$ string OT capacity, with $2$-privacy, for honest-but-curious users in the setup of Figure~\ref{fig:ot_hbc_wtap} is
\[ C_{2P} = \epsilon_2 \cdot \min \{\epsilon_1, 1 - \epsilon_1 \}.\]

The $1$-of-$2$ string OT capacity, with $1$-privacy, for honest-but-curious users in the setup of Figure~\ref{fig:ot_hbc_wtap} is
\[C_{1P} = \left\{ \begin{array}{ll} \epsilon_1, & \epsilon_1 < \frac{\epsilon_2}{2} \\ \frac{\epsilon_2}{2}, & \frac{\epsilon_2}{2} \leq \epsilon_1 < \frac{1}{2} \\ \epsilon_2(1 - \epsilon_1), & \frac{1}{2} \leq \epsilon_1 \end{array} \right.\]
\end{theorem}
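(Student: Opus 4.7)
The plan is to handle $C_{2P}$ first, then $C_{1P}$, in each case proving achievability and converse. All the schemes build on the Ahlswede--Csisz\'ar two-party OT protocol for BEC, augmented with privacy amplification (universal hashing) to handle Eve. The common skeleton is: Alice transmits $n$ uniformly random bits $\boldsymbol{X}$; Bob announces two equal-sized subsets $(L_0,L_1)$ with $L_U$ consisting of his unerased positions and $L_{\bar U}$ consisting of his erasures; Alice hashes $\boldsymbol{X}|_{L_j}$ to produce a mask $M_j$ of length $m$; and Alice publicly sends $(\boldsymbol{K}_0\oplus M_0, \boldsymbol{K}_1\oplus M_1)$. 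Since neither Alice nor Eve knows Bob's erasure pattern, the pair $(L_0,L_1)$ is distributed symmetrically in $U$ from their viewpoints, so $U$ is hidden.

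For $C_{2P}$ achievability I would take $|L_j|=n\min\{\epsilon_1,1-\epsilon_1\}$ and hash to length $m\approx n\epsilon_2\min\{\epsilon_1,1-\epsilon_1\}$. The leftover hash lemma, together with the fact that Eve's channel independently $\epsilon_2$-erases each bit, guarantees $M_0,M_1$ nearly uniform given $V_E$ alone and $M_{\bar U}$ nearly uniform given $(V_B,V_E)$, which covers all four $2$-privacy requirements of Definition~\ref{defn:ach_rate_2p_hbc_wtap}. For the converse in the regime $\epsilon_1\leq 1/2$, I would merge Bob and Eve's views into a single super-receiver observing an effective BEC with erasure probability $\epsilon_1\epsilon_2$ (both channels independently erase), and invoke the two-party AC converse to get $C_{2P}\leq \epsilon_1\epsilon_2$. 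For $\epsilon_1>1/2$ I would instead adapt the AC ``$1-\epsilon_1$'' converse argument with $\boldsymbol{Z}$ carried as conditioning throughout, ultimately yielding $C_{2P}\leq \epsilon_2(1-\epsilon_1)$.

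For $C_{1P}$ the weaker (no-coalition) secrecy requirements permit a richer achievable scheme in the regimes $\epsilon_1<1/2$: Bob additionally takes the ``extra'' received positions $R=\{1,\dots,n\}\setminus(L_0\cup L_1)$ (all lying in his known indices) and splits them uniformly at random into disjoint pieces $R_0,R_1$, announcing these as well; Alice then hashes $\boldsymbol{X}|_{L_j\cup R_j}$ to a mask of length $m=\lfloor n\min\{\epsilon_1,\epsilon_2/2\}\rfloor$. Because $L_0,L_1,R_0,R_1$ are mutually disjoint, $M_0$ and $M_1$ depend on disjoint independent portions of $\boldsymbol{X}$, so they remain independent from Eve's view --- this is what prevents the $M_0\oplus M_1$ leakage that would otherwise appear in the publicly sent pair $(\boldsymbol{K}_0\oplus M_0,\boldsymbol{K}_1\oplus M_1)$. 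Bob's ignorance of $\boldsymbol{X}|_{L_{\bar U}}$ still leaves him $n\epsilon_1$ bits of uncertainty about $M_{\bar U}$, enforcing $1$-privacy OT against Bob alone; this yields rate $\min\{\epsilon_1,\epsilon_2/2\}$ in regimes 1 and 2. For regime 3 ($\epsilon_1\geq 1/2$), $R=\emptyset$ and the scheme collapses to the $C_{2P}$ scheme, achieving $\epsilon_2(1-\epsilon_1)$. The converses follow from the standard two-party AC bound $m\leq n\epsilon_1$ in regime 1 (which is blind to Eve), from the adapted AC $\epsilon_1>1/2$ argument in regime 3 (as in $C_{2P}$), and from a two-term information bound in regime 2: summing Bob's Fano inequality for $\boldsymbol{K}_U$ with a dual inequality exploiting $I(\boldsymbol{K}_{\bar U};V_B)\to 0$, and then using $I(\boldsymbol{K}_0,\boldsymbol{K}_1,U;V_E)\to 0$ to condition on $V_E$, should give $2m\lesssim n\epsilon_2$.

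The main obstacle is the $C_{1P}$ converse in regime 2. The appearance of $\tfrac{1}{2}$ in $\epsilon_2/2$ is not predicted by any two-party AC bound or by the super-receiver argument, and must instead arise from simultaneously charging a single ``secrecy resource'' of size $\approx n\epsilon_2$ --- essentially $H(\boldsymbol{X}|\boldsymbol{Z})$, the portion of Alice's transmission hidden from Eve --- against two separate leakage constraints (Bob decoding $\boldsymbol{K}_U$ and the secrecy of $\boldsymbol{K}_{\bar U}$ from Bob). Making this accounting rigorous against arbitrary multi-round interactive public discussion, rather than one-shot transmissions, is where the bulk of the technical work is expected.
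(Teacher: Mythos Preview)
Your achievability schemes are correct and essentially match the paper's (your $C_{1P}$ protocol, which symmetrically enlarges both $L_0$ and $L_1$ with leftover unerased positions, is a cosmetic variant of the paper's Protocol~\ref{protocol:C1P}, which enlarges only $L_{\overline{U}}$; both work for the same privacy-amplification reasons).

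The converses diverge from the paper in two places. For the bound $\epsilon_2(1-\epsilon_1)$ (needed both for $C_{2P}$ when $\epsilon_1>1/2$ and for $C_{1P}$ in regime~3), the paper does not ``adapt the AC $1-\epsilon_1$ argument with $\boldsymbol{Z}$ carried as conditioning.'' It instead reduces to secret-key capacity: if Bob reveals $U$ after the protocol, then $\boldsymbol{K}_U$ is a key that Alice and Bob share and about which Eve learns nothing (by~(\ref{eqn:ach_1p_wtap_3}) or~(\ref{eqn:ach_2p_wtap_3})); hence any OT rate is at most $\max_{p_X} I(X;Y\mid Z)=\epsilon_2(1-\epsilon_1)$. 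This one-line reduction sidesteps any interactive-round bookkeeping, and you should use it rather than trying to retrofit the two-party $1-\epsilon_1$ argument.

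For the $\epsilon_2/2$ bound, your intuition is exactly right, but the mechanism you sketch --- Fano for $\boldsymbol{K}_U$ plus a ``dual inequality'' from $I(\boldsymbol{K}_{\overline{U}};V_B)\to 0$, both taken at Bob's view --- is not how the paper proceeds, and it is not clear how conditioning on $V_E$ afterwards produces the factor~$2$. The paper's pivot is to work at \emph{Alice's} output interface rather than Bob's input. The key step is Lemma~\ref{lem:small_quant_1p_hbc_wtap}: $H(\boldsymbol{K}_0,\boldsymbol{K}_1\mid \boldsymbol{X},\boldsymbol{\Lambda})=o(n)$, i.e.\ both strings are decodable from the cut $(\boldsymbol{X},\boldsymbol{\Lambda})$. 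This follows because Alice must not learn $U$ (condition~(\ref{eqn:ach_1p_wtap_2})), so an observer of that cut must be able to recover whichever string Bob wanted, hence both; the formal proof combines the Markov relation $I(\boldsymbol{K}_0,\boldsymbol{K}_1;U,\boldsymbol{Y}\mid\boldsymbol{X},\boldsymbol{\Lambda})=0$ with a continuity estimate (Lemma~\ref{lem:lemma3_ahl_csis}). Once this lemma is in hand the bound is immediate: $2m\approx I(\boldsymbol{K}_0,\boldsymbol{K}_1;\boldsymbol{X},\boldsymbol{\Lambda})$; condition on Eve's erasure set and split $\boldsymbol{X}$ into the part Eve sees (contributing $I(\boldsymbol{K}_0,\boldsymbol{K}_1;\boldsymbol{Z},\boldsymbol{\Lambda})=o(n)$ by~(\ref{eqn:ach_1p_wtap_3})) and the $\approx n\epsilon_2$ bits erased for Eve. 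The interactive public discussion never enters beyond the single Markov fact above, so the obstacle you flag dissolves once the accounting is moved from Bob's side to Alice's.
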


This result is proved in Section~\ref{sec:proofs_0}.

The above results extend easily to the setup of $1$-of-$N$ ($N \geq 2$) string OT, with honest-but-curious users, in the presence of an eavesdropper (see Figure~\ref{fig:ot_hbc_wtap_N}). The difference with the setup of Figure~\ref{fig:ot_hbc_wtap} is that Alice now has $N$ private strings $\boldsymbol{K}_0,\boldsymbol{K}_1,\ldots,\boldsymbol{K}_{N-1}$ and Bob's choice variable $U$ can take values in $\{ 0,1,\ldots, N-1 \}$. Definition~\ref{defn:protocol_hbc_wtap} still defines a protocol and it is straightforward to extend Definition~\ref{defn:ach_rate_2p_hbc_wtap} and Definition~\ref{defn:ach_rate_1p_hbc_wtap} to define the achievable rates, for the setup of Figure~\ref{fig:ot_hbc_wtap_N}. The following theorem characterizes $C_{2P}$ and $C_{1P}$ for this setup:

\begin{theorem}[1-of-N OT capacity for erasure broadcast channel]
\label{thm:result_N_hbc_wtap}
The $1$-of-$N$ string OT capacity, with $2$-privacy and with $1$-privacy, for honest-but-curious users in the setup of Figure~\ref{fig:ot_hbc_wtap_N} is, respectively,
\begin{align*}
C^N_{2P} & = \epsilon_2 \cdot \min \left\{ \frac{\epsilon_1}{N-1}, 1 - \epsilon_1 \right\} \\
C^N_{1P} & = \left\{ \begin{array}{ll} \frac{\epsilon_1}{N-1}, & \frac{\epsilon_1}{N-1} < \frac{\epsilon_2}{N} \\ \frac{\epsilon_2}{N}, & \frac{\epsilon_2}{N} \leq \frac{\epsilon_1}{N-1} < \frac{1}{N} \\ \epsilon_2(1 - \epsilon_1), & \frac{1}{N} \leq \frac{\epsilon_1}{N-1} \end{array} \right.
\end{align*}
\end{theorem}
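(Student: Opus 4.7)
\emph{Plan.} The theorem states a natural extension of Theorem~\ref{thm:result_2p_1p_hbc_wtap} from $N=2$ to general $N \geq 2$, with the substitutions $\epsilon_1 \mapsto \epsilon_1/(N-1)$ (reflecting that Bob's $\epsilon_1 n$ erased positions must be split among $N-1$ hidden strings) and $1/2 \mapsto 1/N$ in the regime boundaries of the $1$-privacy case (reflecting the $N$-ary, rather than binary, choice of $U$). I will generalize the protocols and the converse inequalities used for the $N=2$ theorem; the overall proof structure is preserved.

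\emph{Achievability for $2$-privacy.} Alice transmits $\boldsymbol{X} \in \{0,1\}^n$ uniformly over the broadcast channel. Bob, knowing his erasure pattern, picks disjoint subsets $L_0, L_1, \ldots, L_{N-1} \subset \{1,\ldots,n\}$, each of size $L := \lfloor (1-\delta) \min(\epsilon_1/(N-1), 1-\epsilon_1) \cdot n \rfloor$, with $L_U$ entirely Bob-received and $L_b$ for $b \neq U$ entirely Bob-erased; this is feasible with high probability by standard Chernoff arguments on the binomial erasure counts. Bob publishes $(L_0, \ldots, L_{N-1})$, having permuted the tuple so that the positional index reveals no information about $U$ (symmetry over the choice). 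For each $b$, Alice applies a random universal hash $f_b : \{0,1\}^L \to \{0,1\}^m$ and transmits $\boldsymbol{C}_b := \boldsymbol{K}_b \oplus f_b(\boldsymbol{X}|_{L_b})$ on the public channel. With $m \approx \epsilon_2 L$, Eve's conditional min-entropy on $\boldsymbol{X}|_{L_b}$ is $\approx \epsilon_2 L$ with high probability, so by the leftover hash lemma $f_b(\boldsymbol{X}|_{L_b})$ is statistically close to uniform given Eve's view, which secures $\boldsymbol{K}_b$ from Eve. Bob recovers $\hat{\boldsymbol{K}}_U = \boldsymbol{C}_U \oplus f_U(\boldsymbol{X}|_{L_U})$. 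The achieved rate is $\epsilon_2 \min(\epsilon_1/(N-1), 1-\epsilon_1) = C^N_{2P}$. For $1$-privacy, three distinct strategies (one per regime) generalize the three $N=2$ strategies from the proof of Theorem~\ref{thm:result_2p_1p_hbc_wtap}, with subset sizes and mask lengths scaled by the same substitutions above; the choice of which strategy to run depends only on the regime of $(\epsilon_1, \epsilon_2)$.

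\emph{Converse.} For $2$-privacy, I combine Fano's inequality applied to $\hat{\boldsymbol{K}}_U$ with the security constraint $I(\boldsymbol{K}_{\bar U}; V_B, V_E) \to 0$. This yields $nR \leq I(\boldsymbol{K}_U; V_B) + o(n) \leq \epsilon_2 (1-\epsilon_1) n + o(n)$, since what Bob can decode is limited to his non-erased channel bits and must further be independent of Eve's view. A symmetric argument over the $N-1$ hidden strings, counted against Bob's erased positions, gives $(N-1) \cdot nR \leq \epsilon_1 \epsilon_2 n + o(n)$. The intersection of these two bounds produces exactly $R \leq \epsilon_2 \min(\epsilon_1/(N-1), 1-\epsilon_1)$. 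For $1$-privacy, the analogous information-theoretic bookkeeping yields the three competing bounds $R \leq \epsilon_1/(N-1)$, $R \leq \epsilon_2/N$ and $R \leq \epsilon_2(1-\epsilon_1)$, whose minimum, regime by regime, reproduces the piecewise formula for $C^N_{1P}$.

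\emph{Main obstacle.} The delicate step is the $1$-privacy converse: one must carefully combine Fano, the three secrecy constraints of Definition~\ref{defn:ach_rate_1p_hbc_wtap}, and the $N$-fold choice symmetry so that the thresholds $\epsilon_2/N$ and $1/N$ emerge with the correct constants. In particular, the $\epsilon_2/N$ bound requires accounting for Eve's erasures simultaneously across all $N$ subsets, whereas the $\epsilon_1/(N-1)$ bound isolates the hidden subsets only. The achievability, once the universal-hashing and leftover-hash primitives are in hand from the $N=2$ case, reduces to combinatorial bookkeeping together with standard Chernoff concentration for the erasure counts.
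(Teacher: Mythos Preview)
Your overall plan matches the paper's: the paper states that Theorem~\ref{thm:result_N_hbc_wtap} ``extends easily'' from Theorem~\ref{thm:result_2p_1p_hbc_wtap}, and indeed both the achievability protocols (Protocols~\ref{protocol:C2P} and~\ref{protocol:C1P}) and the converse arguments generalize with the substitutions you identify. Your achievability sketch is fine; for $1$-privacy the paper actually uses a single unified protocol (Protocol~\ref{protocol:C1P}) rather than three regime-dependent ones, but this is cosmetic.

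There is, however, a genuine gap in your converse for the $\epsilon_2(1-\epsilon_1)$ bound. You write $nR \le I(\boldsymbol{K}_U;V_B)+o(n) \le \epsilon_2(1-\epsilon_1)n + o(n)$, but the second inequality is false as stated: Bob decodes $\boldsymbol{K}_U$, so $I(\boldsymbol{K}_U;V_B)\approx nR$, and nothing in $V_B$ alone limits this to $\epsilon_2(1-\epsilon_1)n$. The paper's argument (Lemma~\ref{lem:upper_bound_2p_wtap}) is different: after the OT protocol, have Bob reveal $U$; then $\boldsymbol{K}_U$ becomes a secret key shared by Alice and Bob and hidden from Eve (by~(\ref{eqn:ach_2p_wtap_3}) or~(\ref{eqn:ach_1p_wtap_3})), whence $R$ is bounded by the secret-key capacity $\max_{p_X} I(X;Y|Z)=\epsilon_2(1-\epsilon_1)$. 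You need Eve's secrecy constraint, not just Fano on $V_B$.

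For the $\epsilon_2/N$ bound in the $1$-privacy converse, your description (``accounting for Eve's erasures simultaneously across all $N$ subsets'') understates what is required. The crux is the analogue of Lemma~\ref{lem:small_quant_1p_hbc_wtap}: since Alice must not learn $U$, \emph{all $N$} strings $\boldsymbol{K}_0,\ldots,\boldsymbol{K}_{N-1}$ are (asymptotically) recoverable from $(\boldsymbol{X},\boldsymbol{\Lambda})$; combining this with~(\ref{eqn:ach_1p_wtap_3}) and the fact that Eve is missing only $\approx n\epsilon_2$ bits of $\boldsymbol{X}$ yields $Nm \le n\epsilon_2 + o(n)$. This lemma is the non-obvious step and should be made explicit. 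The $\epsilon_1/(N-1)$ and $\epsilon_1\epsilon_2/(N-1)$ bounds do follow the pattern you describe, via the two-party $1$-of-$N$ OT outer bound applied to Alice versus Bob (respectively, Bob--Eve combined).
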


\begin{figure}[h]
\setlength{\unitlength}{1cm}
\centering
\begin{tikzpicture}[trim left, scale=1]

\draw (1,3) rectangle (2,3.5);
\draw (3.5,3) rectangle (5,3.5);
\draw (6.5,3) rectangle (7.5,3.5);
\draw (3.5,1) rectangle (5,1.5);
\draw (6.5,1) rectangle (7.5,1.5);

\draw [->] (5,3.25) -- (6.5,3.25);
\draw [->] (2,3.25) -- (3.5,3.25);
\draw [<->] (1.5,3.5) -- (1.5,4) -- (7,4) -- (7,3.5);
\draw [->] (7.5,3.25) -- (8,3.25);
\draw [->] (2.75,3.25) |- (3.5,1.25);
\draw [->] (5,1.25) -- (6.5,1.25);
\draw [->, dotted] (5.75,4) |- (6.5,1.4);

\node at (1.5,3.25) {\small{Alice}};
\node at (7,3.25) {\small{Bob}};
\node at (7,1.25) {\small{Eve}};
\node at (4.25, 3.25) {\small{BEC($\epsilon_1$)}};
\node at (4.25, 1.25) {\small{BEC($\epsilon_2$)}};
\node [above] at (2.75,3.25) {\small{$\boldsymbol{X}$}};
\node [above] at (5.5,3.25) {\small{$\boldsymbol{Y}$}};
\node [above] at (5.5,1.25) {\small{$\boldsymbol{Z}$}};
\node [below] at (1.2,3) {\small{$\boldsymbol{K}_0,\boldsymbol{K}_1,\ldots,\boldsymbol{K}_{N-1}$}};
\node [below] at (7,3) {\small{$U$}};
\node [right] at (8,3.25) {\small{$\hat{\boldsymbol{K}}_U$}};
\node [above] at (4.25,4) {\small{Public channel (noiseless)}};

\end{tikzpicture}
\caption{$1$-of-$N$ string OT using a binary erasure broadcast channel}
\label{fig:ot_hbc_wtap_N}
\end{figure}
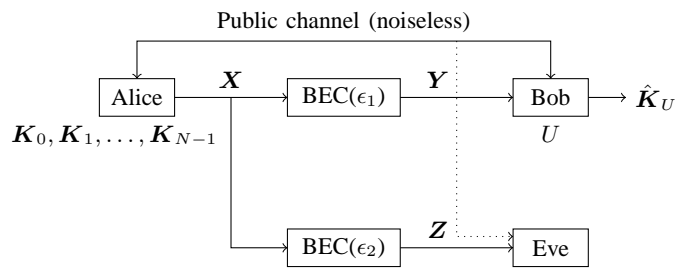

Theorems~\ref{thm:result_2p_1p_hbc_wtap} and~\ref{thm:result_N_hbc_wtap} show that the presence of an eavesdropper reduces the OT capacity by a factor of $\epsilon_2$ for $2$-privacy, compared to the results of Ahlswede and Csisz\'ar \cite{ot2007}. Intuitively, this means that Alice and Bob can get OT with $2$-privacy only over the segment of Alice's transmissions that were erased for Eve. Also, note that for $\epsilon_1 \geq 1/2$, $C_{1P} = C_{2P}$ while for $\epsilon_1 < 1/2$, $C_{1P} > C_{2P}$. By putting $\epsilon_2 = 1$, as one would expect, these capacity results reduce to the $2$-party OT capacity results of Ahlswede and Csisz\'ar \cite{ot2007}.


\subsection{Oblivious Transfer over a Wiretapped Channel: Malicious Model}
\label{sec:prob_statement_malicious_wtap}

The setup is the same as that shown in Figure~\ref{fig:ot_hbc_wtap}. The main difference with the problem definition of Section~\ref{sec:prob_statement_wtap} is that Alice an Bob can be malicious. That is, they can deviate arbitrarily from the protocol. We consider only $2$-privacy in this setup\footnote{A protocol for achieving $1$-privacy in this setup is obtained by only a minor modification (greater privacy amplification) to the two-party protocols presented in \cite{PintoDowsMorozNasc2011}, \cite{DowsNasc2014-arxiv} and is, therefore, being omitted from this work.}. Definition~\ref{defn:protocol_hbc_wtap} defines a protocol and the rate of the protocol for this setup.

\begin{definition}
\label{defn:ach_rate_2p_malicious_wtap}
$R$ is an \emph{achievable $2$-private rate for malicious users} if there exists a sequence of $(m,n)$-protocols such that $m/n \longrightarrow R$ and:

\begin{enumerate}

\item If Alice and Bob are both honest, then the protocol aborts with vanishing probability and (\ref{eqn:ach_2p_wtap_0})-(\ref{eqn:ach_2p_wtap_3}) are satisfied, as $n \longrightarrow \infty$.

\item If Alice is malicious and colludes with Eve and Bob is honest, let $V_n$ be the view of a malicious Alice colluding with Eve at the end of the protocol. Then, $I(U ; V_n) \longrightarrow 0$ as $n \longrightarrow \infty$.

\item If Alice is honest and Bob is malicious and colludes with Eve, let $V_n$ be the view of a malicious Bob colluding with Eve at the end of the protocol. Then, $\min \{ I(\boldsymbol{K}_0 ; V_n),  I(\boldsymbol{K}_1 ; V_n)\} \longrightarrow 0$ as $n \longrightarrow \infty$. 

\end{enumerate}

\end{definition}

\begin{theorem}[An achievable OT rate with malicious users]
\label{thm:result_2p_malicious_wtap}
Any $R < \left\{ \begin{array}{lr} C_{2P}, & \epsilon_1 \leq \frac{1}{2}\\ \epsilon_1 \cdot C_{2P}, & \epsilon_1 > \frac{1}{2} \end{array} \right\}$, where $C_{2P} = \epsilon_2 \cdot \min \{\epsilon_1, 1 - \epsilon_1\}$, is an achievable $2$-private $1$-of-$2$ string OT rate for malicious users in the setup of Figure~\ref{fig:ot_hbc_wtap}.
\end{theorem}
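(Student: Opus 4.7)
The plan is to construct an explicit $(m,n)$-protocol that combines the template of the $2$-party malicious OT protocols of \cite{PintoDowsMorozNasc2011} and \cite{DowsNasc2014-arxiv} with an additional privacy amplification layer against Eve. Alice transmits $n$ i.i.d.\ uniform bits $\boldsymbol{X}$ over the broadcast channel, producing Bob's view $\boldsymbol{Y}$ and Eve's view $\boldsymbol{Z}$. Using his erasure pattern and his choice bit $U$, Bob prepares a set $L$ of positions that, if honest, contains only non-erasures, and invokes interactive hashing to produce two sets $L_0, L_1$ in a canonical order with $L_U = L$. Alice then applies two-universal hash functions to $\boldsymbol{X}|_{L_0}$ and $\boldsymbol{X}|_{L_1}$ to produce the strings $\boldsymbol{K}_0$ and $\boldsymbol{K}_1$, and sends Slepian-Wolf syndromes enabling Bob to recover $\boldsymbol{X}|_{L_U}$ and hence $\boldsymbol{K}_U$. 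The output length $m$ is tuned so that Eve's residual uncertainty about $\boldsymbol{X}|_{L_b}$, of order $\epsilon_2 |L_b|$, suffices for an application of the leftover hash lemma against Eve.

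For $\epsilon_1 \leq 1/2$, the structure of the \cite{DowsNasc2014-arxiv} protocol is preserved and the extra privacy amplification against Eve multiplies the two-party rate by $\epsilon_2$, yielding $\epsilon_2 \min\{\epsilon_1, 1-\epsilon_1\} = C_{2P}$. The more delicate regime is $\epsilon_1 > 1/2$, where, as noted in the introduction, using interactive hashing only for checks against malicious Bob (as in \cite{PintoDowsMorozNasc2011}) fails: a malicious Bob colluding with Eve can answer consistency checks for both $L_0$ and $L_1$ simultaneously for uncountably many $(\epsilon_1,\epsilon_2)$. The key idea here is to let interactive hashing do double duty, using its output not only as a binding primitive but also as a source of randomness hidden from Eve. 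Concretely, Alice extracts each $\boldsymbol{K}_b$ from only those positions in $L_b$ at which Eve suffered an erasure; because interactive hashing forces Bob to commit to a set drawn essentially uniformly from a large combinatorial ensemble, and because Eve's erasure pattern is independent of that ensemble, the leftover hash lemma still produces an extracted key of length $\approx \epsilon_1 \epsilon_2 (1-\epsilon_1) n$, matching the stated rate.

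The security analysis proceeds through the three items of Definition~\ref{defn:ach_rate_2p_malicious_wtap}. Correctness when both users are honest follows from standard typicality estimates on Slepian-Wolf decoding together with the completeness of interactive hashing. Privacy of $U$ against a malicious Alice colluding with Eve follows from the $U$-hiding property of interactive hashing: the joint distribution of $(L_0,L_1)$ together with Alice's and Eve's views is invariant under swapping the role of $U$. The hardest step, and the source of the $\epsilon_1$ factor gap in the $\epsilon_1 > 1/2$ regime, is the privacy of at least one of $\boldsymbol{K}_0,\boldsymbol{K}_1$ against a malicious Bob colluding with Eve. The obstacle is that Eve's side information $\boldsymbol{Z}$ could, in principle, let a cheating Bob bias his choice of $L$ toward positions Eve received correctly, undermining the min-entropy needed for privacy amplification. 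The planned resolution is a counting argument which shows that, conditional on the interactive-hashing transcript, the non-chosen set $L_{\overline{U}}$ is, up to $2^{o(n)}$ factors, uniformly distributed over a combinatorial class that is effectively independent of Eve's erasure pattern; the extracted hash then retains sufficient min-entropy conditioned on $V_E$ for the leftover hash lemma to deliver the desired $\min\{I(\boldsymbol{K}_0;V_n),I(\boldsymbol{K}_1;V_n)\}\to 0$ guarantee.
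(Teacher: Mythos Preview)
Your proposal captures the high-level idea---use interactive hashing plus privacy amplification against Eve---but several concrete steps are either missing or incorrect, and as written the protocol would not satisfy Definition~\ref{defn:ach_rate_2p_malicious_wtap}.

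First, for $\epsilon_1>1/2$ you omit the step that actually accounts for the $\epsilon_1$ rate loss: the two sets $L_0,L_1$ produced by interactive hashing are \emph{not} disjoint, and the paper's protocol has Bob reveal $\boldsymbol{X}|_{L_0\cap L_1}$ to Alice as a check and then discards those indices, hashing only $\boldsymbol{X}|_{L_0\setminus(L_0\cap L_1)}$ and $\boldsymbol{X}|_{L_1\setminus(L_0\cap L_1)}$. Without this, a malicious Bob can steer the overlap toward indices erased for him and pack the two non-overlapping parts with indices he (together with Eve) knows; the min-entropy argument then fails for \emph{both} keys. The reveal-and-discard step is what forces $\#_e(\boldsymbol{\Psi}|_{L_b\setminus(L_0\cap L_1)})=\beta n-\#_{\overline e}(\boldsymbol{\Psi}|_{L_b})$ and lets Property~\ref{prop:ih_4} do its work.

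Second, ``Alice extracts each $\boldsymbol{K}_b$ from only those positions in $L_b$ at which Eve suffered an erasure'' is not implementable: Alice never learns $\boldsymbol{Z}$. The protocol must hash \emph{all} of $\boldsymbol{X}|_{L_b\setminus(L_0\cap L_1)}$; Eve's erasures enter only through the R\'enyi-entropy lower bound in the leftover-hash analysis.

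Third, your malicious-Bob argument (``$L_{\overline U}$ is, up to $2^{o(n)}$ factors, uniformly distributed'') is the wrong framing. A malicious Bob need not have a well-defined $U$, and he controls the input to interactive hashing, so neither output is uniform conditioned on his strategy. The correct argument defines the small set $\mathcal{T}_{\overline e}=\{A:\#_{\overline e}(\boldsymbol{\Psi}|_A)\geq\beta n(1-\epsilon_1\epsilon_2+2\delta)\}$ and uses Property~\ref{prop:ih_4} to conclude that with high probability at least one of $L_0,L_1$ lies outside $\mathcal{T}_{\overline e}$; for that index the key has enough min-entropy against the joint Bob--Eve view $\boldsymbol{\Psi}$.

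Finally, for $\epsilon_1\leq 1/2$ the paper does \emph{not} use interactive hashing to generate $L_0,L_1$. Bob forms both tuples himself (as in \cite{PintoDowsMorozNasc2011,DowsNasc2014-arxiv}) and interactive hashing is run on a small subset $J$ used only for a consistency check; this is what recovers the full rate $C_{2P}$ rather than $\epsilon_1 C_{2P}$ in that regime. Your unified description would give only the latter.
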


This result is proved in Section~\ref{sec:proofs_1}. Note that $C_{2P}$ is the $2$-private OT capacity when users are honest-but-curious in this setup. Hence, the result shows that the achievable scheme we present is rate-optimal when $\epsilon_1 \leq 1/2$ and no more than a fraction $\epsilon_1$ away from the optimal rate otherwise. The compromise in rate when $\epsilon_1 > 1/2$ happens for the following reason. Our protocol (for the regime where $\epsilon_1 > 1/2$) uses interactive hashing to obtain two subsets of Alice's transmissions over the broadcast channel. Alice converts the non-overlapping parts of these subsets into two secret keys using standard techniques.\footnote{Alice will use these keys to encrypt her strings. For obtaining $2$-privacy, our protocol ensures that one of the keys is secret from Bob and both the keys are secret from Eve.} Losing the overlapping part of both the subsets in this process gives us shorter secret keys, which in turn results in the rate loss by a factor of $\epsilon_1$.


\subsection{Independent Oblivious Transfers over a broadcast channel}
\label{sec:prob_statement_indep}

\begin{figure}[h]
\setlength{\unitlength}{1cm}
\centering
\begin{tikzpicture}[trim left, scale=1]

\draw (1,3) rectangle (2,3.5);
\draw (3.5,3) rectangle (5,3.5);
\draw (6.5,3) rectangle (7.5,3.5);
\draw (3.5,1) rectangle (5,1.5);
\draw (6.5,1) rectangle (7.5,1.5);

\draw [->] (5,3.25) -- (6.5,3.25);
\draw [->] (2,3.25) -- (3.5,3.25);
\draw [<->] (1.5,3.5) -- (1.5,4) -- (7,4) -- (7,3.5);
\draw [->] (7.5,3.25) -- (8,3.25);
\draw [->] (7.5,1.25) -- (8,1.25);
\draw [->] (2.75,3.25) |- (3.5,1.25);
\draw [->] (5,1.25) -- (6.5,1.25);
\draw [<->] (5.75,4) |- (6.5,1.4);

\node at (1.5,3.25) {\small{Alice}};
\node at (7,3.25) {\small{Bob}};
\node at (7,1.25) {\small{Cathy}};
\node at (4.25, 3.25) {\small{BEC($\epsilon_1$)}};
\node at (4.25, 1.25) {\small{BEC($\epsilon_2$)}};
\node [above] at (2.75,3.25) {\small{$\boldsymbol{X}$}};
\node [above] at (5.5,3.25) {\small{$\boldsymbol{Y}$}};
\node [above] at (5.5,1.25) {\small{$\boldsymbol{Z}$}};
\node [below] at (1.5,3) {\small{$\begin{array}{c} \boldsymbol{K}_0,\boldsymbol{K}_1 \\ \boldsymbol{J}_0, \boldsymbol{J}_1 \end{array} $}};
\node [below] at (7,3) {\small{$U$}};
\node [below] at (7,1) {\small{$W$}};
\node [right] at (8,3.25) {\small{$\hat{\boldsymbol{K}}_U$}};
\node [right] at (8,1.25) {\small{$\hat{\boldsymbol{J}}_W$}};
\node [above] at (4.25,4) {\small{Public channel (noiseless)}};

\end{tikzpicture}
\caption{Independent OTs using a binary erasure broadcast channel}
\label{fig:ot_hbc_indep}
\end{figure}
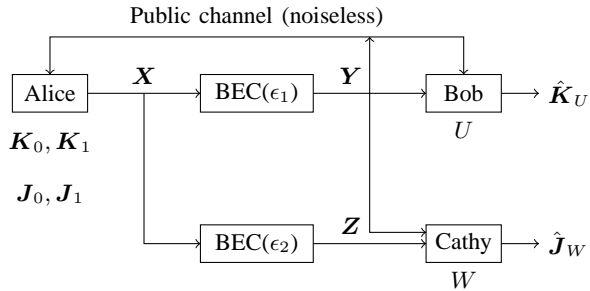

In the setup of Figure~\ref{fig:ot_hbc_indep}, we have three users Alice, Bob and Cathy. Alice is connected to Bob and Cathy by a broadcast channel made up of two independent BECs, a BEC($\epsilon_1$) connecting Alice to Bob and a BEC($\epsilon_2$) connecting Alice to Cathy. In addition, there is a noiseless public channel of unlimited capacity over which the three users can take turns to send messages. Each such public message is received by all the users. Alice's private data consists of two pairs of strings $\boldsymbol{K}_0, \boldsymbol{K}_1$ and $\boldsymbol{J}_0, \boldsymbol{J}_1$. Bob's and Cathy's private data are the choice bits $U$ and $W$ respectively. $\boldsymbol{K}_0, \boldsymbol{K}_1, \boldsymbol{J}_0, \boldsymbol{J}_1, U, W $ are independent and uniform over their respective alphabets. The goal is for Bob to obtain $\boldsymbol{K}_U$ with $2$-privacy and for Cathy to obtain $\boldsymbol{J}_W$ with $2$-privacy\footnote{ 
The BGW algorithm \cite{BGW88} gives a mechanism to achieve $1$-private computations in any $3$-user setting when each user is allowed to use private randomness and there are private links between each pair of users. But achieving a $2$-private computation in a $3$-user setting is, in general, not feasible even with honest-but-curious users.}

\begin{definition}
Let $n,m_B,m_C \in \mathbb{N}$. An $(n,m_B,m_C)$-\emph{protocol} is an exchange of messages between Alice, Bob and Cathy over the setup of Figure~\ref{fig:ot_hbc_indep}. Alice's private data consists of strings $\boldsymbol{K}_0,\boldsymbol{K}_1$ which are $m_B$-bits each and strings $\boldsymbol{J}_0,\boldsymbol{J}_1$ which are $m_C$-bits each. Alice transmits a bit $X_t$ over the broadcast channel at each time instant $t = 1,2,\ldots,n$. In addition, before each such transmission and after the last transmission ($t=n$), the users take turns to send messages on the noiseless public channel over several rounds. The number of rounds maybe random, but finite with probability one. Any transmission by a user is a function of the user's input, private randomness and all the public messages, channel inputs or channel outputs the user has seen. The \emph{rate-pair} $(r_{B,n},r_{C,n})$ of an $(n,m_B,m_C)$-protocol is given by $r_{B,n} := m_B/n$ and $r_{C,n} := m_C/n$. Let $\boldsymbol{\Lambda}$ denote the transcript of the public channel at the end of an $(n,m_B,m_C)$-protocol.
\end{definition}

The \emph{final view} of a user is the collection of all random variables available to the user at the end of the execution of the $(n,m_B,m_C)$-protocol. We denote these for Alice, Bob, and Cathy by $V_A$, $V_B$, and $V_C$, respectively. At the end of an $(n,m_B,m_C)$-protocol, Bob generates an estimate $\hat{\boldsymbol{K}}_U$ of $\boldsymbol{K}_U$ as a function of its final view $V_B$. Similarly, Cathy generates an estimate $\hat{\boldsymbol{J}}_W$ of $\boldsymbol{J}_W$ as a function of its final view $V_C$.

\begin{definition}
\label{defn:ach_rate_2p_indep}
$(R_B,R_C)$ $\in \mathbb{R}^2$ is an \emph{achievable 2-private rate-pair for honest-but-curious users} in the setup of Figure~\ref{fig:ot_hbc_indep} if there exists a sequence of $(n,m_B,m_C)$-protocols with $(r_{B,n},r_{C,n}) \longrightarrow (R_B,R_C)$ as $n \longrightarrow \infty$, such that
\begin{eqnarray}
 P[\hat{\boldsymbol{K}}_U \neq \boldsymbol{K}_U] & \longrightarrow 0 \label{eqn:ach_2p_indep_0}\\
 P[\hat{\boldsymbol{J}}_W \neq \boldsymbol{J}_W] & \longrightarrow 0 \label{eqn:ach_2p_indep_1}\\
 I(\boldsymbol{K}_{\overline{U}}, \boldsymbol{J}_{\overline{W}} ; V_B,V_C) & \longrightarrow 0 \label{eqn:ach_2p_indep_2}\\
 I(U ; V_A,V_C) & \longrightarrow 0 \label{eqn:ach_2p_indep_3} \\
 I(W ; V_A,V_B) & \longrightarrow 0 \label{eqn:ach_2p_indep_4} \\
 I(U,W ; V_A) & \longrightarrow 0 \label{eqn:ach_2p_indep_5} \\
 I(\boldsymbol{K}_0,\boldsymbol{K}_1,U, \boldsymbol{J}_{\overline{W}} ; V_C) & \longrightarrow 0 \label{eqn:ach_2p_indep_6}\\
 I(\boldsymbol{K}_{\overline{U}}, \boldsymbol{J}_0,\boldsymbol{J}_1,W ; V_B) & \longrightarrow 0 \label{eqn:ach_2p_indep_7}
\end{eqnarray}
as $n \longrightarrow \infty$.
\end{definition}

\begin{definition}
The 2-private rate-region $\mathcal{R} \subset \mathbb{R}^2$ for the setup of Figure~\ref{fig:ot_hbc_indep} is
the closure of the set of all achievable $2$-private rate pairs for honest-but-curious users.
\end{definition}


The main results in this section are inner and outer bounds for the $2$-private rate region $\mathcal{R}$, for the setup of Figure~\ref{fig:ot_hbc_indep}, when the users are honest-but-curious\footnote{See Section~\ref{sec:open_problems} for a discussion on considering malicious users in this setup.}. 

\begin{theorem}[OT rate-region for erasure broadcast channel]
\label{thm:result_2p_hbc_indep}
The rate-region $\mathcal{R}$ of independent pairs of $1$-of-$2$ string OTs, with $2$-privacy, for honest-but-curious users in the setup of Figure~\ref{fig:ot_hbc_indep} is such that
\[ \mathcal{R}_{\text{inner}} \subseteq \mathcal{R} \subseteq \mathcal{R}_{\text{outer}}  \]
\end{theorem}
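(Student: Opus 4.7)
The plan is to establish the two inclusions $\mathcal{R}_{\text{inner}} \subseteq \mathcal{R}$ and $\mathcal{R} \subseteq \mathcal{R}_{\text{outer}}$ separately: the inner bound by an explicit protocol generalizing the Ahlswede--Csisz\'ar BEC-OT scheme to the broadcast setting with a joint erasure-pattern allocation, and the outer bound by adapting the converse of Theorem~\ref{thm:result_2p_1p_hbc_wtap} in which each two-user coalition plays the role of the eavesdropper, supplemented by a sum-rate inequality arising from the joint privacy constraint~(\ref{eqn:ach_2p_indep_2}).

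For achievability, Alice transmits $n$ i.i.d.\ uniform bits $\boldsymbol{X}$. Let $\mathcal{B}, \mathcal{C} \subseteq \{1,\ldots,n\}$ denote Bob's and Cathy's sets of non-erased positions; by independence of the two BECs, the four cells of the $(\mathcal{B}, \mathcal{C})$-partition concentrate around sizes $n(1-\epsilon_1)(1-\epsilon_2)$, $n(1-\epsilon_1)\epsilon_2$, $n\epsilon_1(1-\epsilon_2)$, $n\epsilon_1\epsilon_2$. Bob publicly announces two disjoint index sets $L_0^B, L_1^B$ of common size $m_B$, with $L_U^B \subseteq \mathcal{B} \cap \mathcal{C}^c$ and $L_{\overline{U}}^B \subseteq \mathcal{B}^c \cap \mathcal{C}^c$; Cathy symmetrically announces $L_0^C, L_1^C$ of size $m_C$ with $L_W^C \subseteq \mathcal{B}^c \cap \mathcal{C}$ and $L_{\overline{W}}^C \subseteq \mathcal{B}^c \cap \mathcal{C}^c$, chosen disjoint from $L_0^B \cup L_1^B$. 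Alice extracts near-uniform keys $\kappa_j^B, \kappa_k^C$ from $\boldsymbol{X}|_{L_j^B}, \boldsymbol{X}|_{L_k^C}$ via universal hashing and broadcasts the one-time-pad ciphertexts $\boldsymbol{K}_j \oplus \kappa_j^B$ and $\boldsymbol{J}_k \oplus \kappa_k^C$. Correctness is immediate: Bob recomputes $\kappa_U^B$ from $\boldsymbol{X}|_{\mathcal{B}}$ and decrypts $\boldsymbol{K}_U$, and symmetrically for Cathy. Privacy of $U$ against $\{A, C\}$ holds because the labels on $(L_0^B, L_1^B)$ are exchangeable in their combined view (neither knows $\mathcal{B}$); privacy of $\boldsymbol{K}_{\overline{U}}$ against $\{B, C\}$ follows from the leftover-hash lemma applied to the doubly-erased pool $\mathcal{B}^c \cap \mathcal{C}^c$, which is unknown to both Bob and Cathy. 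Symmetric arguments handle the remaining constraints. This yields an inner region characterized by $R_B \leq \epsilon_2 \min\{\epsilon_1, 1-\epsilon_1\}$, $R_C \leq \epsilon_1 \min\{\epsilon_2, 1-\epsilon_2\}$, and $R_B + R_C \leq \epsilon_1\epsilon_2$, the last arising because both unchosen-key subsets compete for the shared doubly-erased pool.

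For the converse, treating $\{A, C\}$ as a single coalitional eavesdropper with joint view $(V_A, V_C)$ reduces the Alice--Bob OT to an instance of wiretapped two-party OT: constraints~(\ref{eqn:ach_2p_indep_3}),~(\ref{eqn:ach_2p_indep_6}) and~(\ref{eqn:ach_2p_indep_2}) correspond to $U$- and $\boldsymbol{K}_{\overline{U}}$-privacy against the wiretap, and Fano applied to~(\ref{eqn:ach_2p_indep_0}) supplies correctness; the converse of Theorem~\ref{thm:result_2p_1p_hbc_wtap} then yields $R_B \leq \epsilon_2\min\{\epsilon_1, 1-\epsilon_1\}$, and the symmetric reduction with $\{A, B\}$ gives $R_C \leq \epsilon_1\min\{\epsilon_2, 1-\epsilon_2\}$. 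To obtain a joint sum-rate converse, I would apply Fano to~(\ref{eqn:ach_2p_indep_2}) to get $m_B + m_C \lesssim H(\boldsymbol{K}_{\overline{U}}, \boldsymbol{J}_{\overline{W}} \mid V_B, V_C)$ and upper-bound this conditional entropy by the entropy of $\boldsymbol{X}$ on positions unknown to both Bob and Cathy, which yields $R_B + R_C \leq \epsilon_1 \epsilon_2$ after standard chaining through the public transcript and the channel-independence structure.

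The hardest step, and the source of the remaining gap, is the sum-rate converse in the three-party setting: the pairwise-coalition reductions handle the individual bounds cleanly, but the sum-rate argument must simultaneously account for the public transcript $\boldsymbol{\Lambda}$ (seen by everyone), the independence of Bob's and Cathy's erasure processes, and the fact that each user's ``chosen'' key draws secrecy from a different pool than the ``unchosen'' keys. When at least one of $\epsilon_1, \epsilon_2$ is below $1/2$, the individual converse constraints already imply the sum-rate constraint, so the two regions coincide; when both exceed $1/2$ the individual bounds become $\epsilon_2(1-\epsilon_1)$ and $\epsilon_1(1-\epsilon_2)$, which no longer imply $R_B + R_C \leq \epsilon_1 \epsilon_2$, and a gap can persist between the achievable sum rate and the converse sum rate. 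Closing this gap---either by a scheme that avoids splitting the doubly-erased pool (perhaps via correlated key extraction using interactive hashing over the singly-received positions) or by a tighter three-party joint converse---is precisely the regime flagged in the abstract as open.
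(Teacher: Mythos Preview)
Your outer-bound sketch is close to the paper's: the individual bounds do follow by collapsing one receiver with Alice into an eavesdropping coalition and invoking Theorem~\ref{thm:result_2p_1p_hbc_wtap}, and the sum-rate bound $R_B+R_C\le\epsilon_1\epsilon_2$ is obtained in the paper by essentially the chain you indicate, with the key intermediate step being Lemma~\ref{lem:small_quant_2p_hbc_indep}, namely $H(\boldsymbol{K}_0,\boldsymbol{K}_1,\boldsymbol{J}_0,\boldsymbol{J}_1\mid \boldsymbol{X},\boldsymbol{\Lambda})=o(n)$, which you use implicitly when you pass from the joint $\{B,C\}$ view to ``the entropy of $\boldsymbol{X}$ on positions unknown to both.''

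Your achievability, however, has a genuine gap. You require Bob to select $L_U^B\subseteq\mathcal{B}\cap\mathcal{C}^c$ and $L_{\overline U}^B\subseteq\mathcal{B}^c\cap\mathcal{C}^c$, but Bob does not know $\mathcal{C}$. Nor can Cathy reveal $\mathcal{C}$ (or even a large unstructured part of it) without destroying the exchangeability of $(L_0^C,L_1^C)$ in the view of $\{A,B\}$: once Bob learns $\mathcal{C}$, he can tell which of Cathy's two announced sets lies in $\mathcal{C}$ and which in $\mathcal{C}^c$, so (\ref{eqn:ach_2p_indep_4}) fails. The symmetric obstruction blocks Bob from revealing $\mathcal{B}$ to help Cathy. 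Thus the four-cell allocation you describe cannot be carried out by honest-but-curious parties in this model.

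Relatedly, the region you claim to achieve---individual caps together with $R_B+R_C\le\epsilon_1\epsilon_2$---is $\mathcal{R}_{\text{outer}}$, not $\mathcal{R}_{\text{inner}}$. The paper's $\mathcal{R}_{\text{inner}}$ carries the strictly tighter sum constraint
\[
R_B+R_C\le \epsilon_2\min\{\epsilon_1,1-\epsilon_1\}+\epsilon_1\min\{\epsilon_2,1-\epsilon_2\}-\min\{\epsilon_1,1-\epsilon_1\}\cdot\min\{\epsilon_2,1-\epsilon_2\},
\]
and it is achieved not by a joint cell allocation but sequentially: run Protocol~\ref{protocol:C2P} for Bob (so $L_U,L_{\overline U}\subseteq\mathcal{B},\mathcal{B}^c$ of size $\approx m_B/\epsilon_2$, with privacy amplification against Cathy producing keys of length $m_B$), then hand Cathy the leftover block $L\subseteq\mathcal{B}^c\setminus L_{\overline U}$, which is entirely erased for Bob, and run the plain two-party Ahlswede--Csisz\'ar protocol for Alice--Cathy over $\boldsymbol{X}|_L$. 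This gives the corner point $\bigl(\epsilon_2\min\{\epsilon_1,1-\epsilon_1\},\,\max\{0,(2\epsilon_1-1)\}\min\{\epsilon_2,1-\epsilon_2\}\bigr)$; time-sharing with the role-reversed corner yields $\mathcal{R}_{\text{inner}}$. The loss relative to $\epsilon_1\epsilon_2$ when both $\epsilon_i>1/2$ comes precisely from the privacy amplification: Bob must burn $m_B/\epsilon_2$ (not $m_B$) indices of $\mathcal{B}^c$ to hide $\boldsymbol{K}_{\overline U}$ from $\{B,C\}$, shrinking what remains for Cathy.
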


where 
\begin{align*}
\mathcal{R}_{\text{inner}} = \Big\{ (R_B , R_C ) \in \mathbb{R}_+^2 : R_B & \leq \epsilon_2  \min \{ \epsilon_1, 1 - \epsilon_1\},\\
 R_C &\leq \epsilon_1  \min \{ \epsilon_2, 1 - \epsilon_2\}, \\
R_B + R_C  &\leq  \epsilon_2\cdot \min\{\epsilon_1, 1 - \epsilon_1\}  + \epsilon_1\cdot \min\{\epsilon_2, 1 - \epsilon_2\} \\
& \quad - \min\{\epsilon_1, 1 - \epsilon_1\} \cdot \min\{\epsilon_2, 1 - \epsilon_2\} \Big\}
\end{align*}

and 
\begin{align*}
\mathcal{R}_{\text{outer}} = \Big\{
(R_B,R_C) \in \mathbb{R}_+^2 : R_B &\leq  \epsilon_2 \cdot \min \{ \epsilon_1, 1 - \epsilon_1\}, \\
                                 R_C &\leq  \epsilon_1 \cdot \min \{ \epsilon_2, 1 - \epsilon_2\}, \\
                           R_B + R_C &\leq \epsilon_1 \cdot \epsilon_2 \Big\}.
\end{align*}

Theorem~\ref{thm:result_2p_hbc_indep} is proved in Section~\ref{sec:proofs_2}. The regions $\mathcal{R}, \mathcal{R}_{\text{inner}}, \mathcal{R}_{\text{outer}}$ are illustrated for different regimes of $\epsilon_1, \epsilon_2$ in Figure~\ref{fig:rate_region_indep_lt_half}, Figure~\ref{fig:rate_region_indep_bet_half_1} and Figure~\ref{fig:rate_region_indep_gt_half}. The inner and outer bounds match except when $\epsilon_1,\epsilon_2 > 1/2$. The upper bounds on $R_B$ and $R_C$ in the expressions above are the $2$-private OT capacities for Bob and Cathy, respectively, obtained as a consequence of Theorem~\ref{thm:result_2p_1p_hbc_wtap}. The upper bound on the sum-rate is the fraction of Alice's transmissions that are erased for both Bob and Cathy.

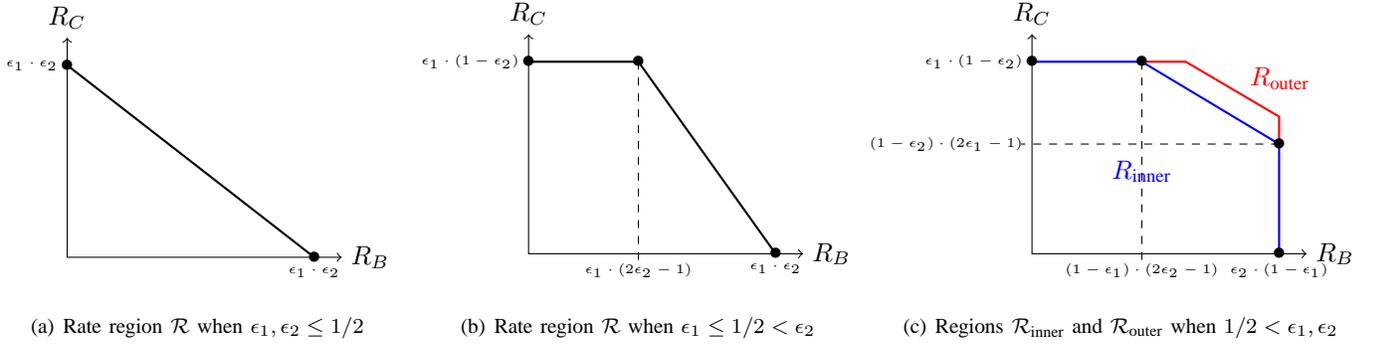
\begin{figure*}
\setlength{\unitlength}{1cm}
\centering

\subfigure[Rate region $\mathcal{R}$ when $\epsilon_1,\epsilon_2 \leq 1/2$]{
\centering
\begin{tikzpicture}[scale=0.73]

\draw [<->] (1,5) |- (6,1);
\node [right] at (6,1) {$R_B$};
\node [above] at (1,5) {$R_C$};

\draw [thick] (1,4.5) -- (5.5,1);
\node at (1,4.5) {\textbullet};
\node at (5.5,1) {\textbullet};
\node [left] at (1,4.5) { \tiny{$\epsilon_1 \cdot \epsilon_2$} };
\node [below] at (5.5,1) { \tiny{$\epsilon_1 \cdot \epsilon_2$} };

\end{tikzpicture}
\label{fig:rate_region_indep_lt_half}
}%
\subfigure[Rate region $\mathcal{R}$ when $\epsilon_1 \leq 1/2 < \epsilon_2$]{
\centering
\hspace{-5pt}
\begin{tikzpicture}[scale=0.73]

\draw [<->] (1,5) |- (6,1);
\node [right] at (6,1) {$R_B$};
\node [above] at (1,5) {$R_C$};

\draw [thick] (1,4.5) -- (3,4.5) -- (5.5,1);
\node at (1,4.5) {\textbullet};
\node at (3,4.5) {\textbullet};
\draw [thin, dashed] (3,4.5) -- (3,0.8);
\node at (5.5,1) {\textbullet};
\node [left] at (1,4.5) { \tiny{$\epsilon_1 \cdot (1 - \epsilon_2)$} };
\node [below] at (3,1) { \tiny{$\epsilon_1 \cdot (2 \epsilon_2 - 1)$} };
\node [below] at (5.5,1) { \tiny{$\epsilon_1 \cdot \epsilon_2$} };

\end{tikzpicture}
\label{fig:rate_region_indep_bet_half_1}
}%
\subfigure[Regions $\mathcal{R}_{\text{inner}}$ and $\mathcal{R}_{\text{outer}}$ when $1/2 < \epsilon_1,\epsilon_2$]{
\centering
\hspace{-10pt}
\begin{tikzpicture}[scale=0.73]

\draw [<->] (1,5) |- (6,1);
\node [right] at (6,1) {$R_B$};
\node [above] at (1,5) {$R_C$};

\draw [thick, color=blue] (1,4.5) -- (3,4.5) -- (5.5,3) -- (5.5,1);
\draw [thick, color=red] (3,4.5) -- (3.8,4.5) -- (5.5,3.5) -- (5.5,3);
\node at (1,4.5) {\textbullet};
\node at (3,4.5) {\textbullet};
\draw [thin, dashed] (3,4.5) -- (3,0.8);
\node at (5.5,3) {\textbullet};
\draw [thin, dashed] (5.5,3) -- (0.8,3);
\node at (5.5,1) {\textbullet};
\node [left] at (1,4.5) { \tiny{$\epsilon_1 \cdot (1 - \epsilon_2)$} };
\node [left] at (1,3) { \tiny{$(1 - \epsilon_2) \cdot (2\epsilon_1 - 1)$} };
\node [below] at (3,1) { \tiny{$(1 - \epsilon_1) \cdot (2 \epsilon_2 - 1)$} };
\node [below] at (5.5,1) { \tiny{$\epsilon_2 \cdot (1 - \epsilon_1)$} };

\node [right] at (4.8,4.2) {\color{red}{$R_{\text{outer}}$}};
\node at (3,2.5) {\color{blue}{$R_{\text{inner}}$}};

\end{tikzpicture}
\label{fig:rate_region_indep_gt_half}
}

\caption{ $\mathcal{R}$, $\mathcal{R}_{\text{inner}}$, $\mathcal{R}_{\text{outer}}$ for all regimes of $\epsilon_1, \epsilon_2$}
\end{figure*}


\subsection{Oblivious Transfer Over a Degraded Wiretapped Channel}
\label{sec:prob_statement_degraded}

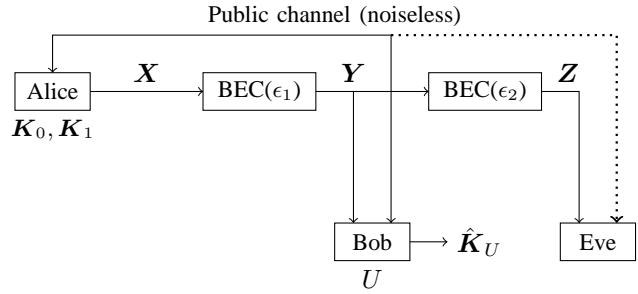
\begin{figure}[h]
\setlength{\unitlength}{1cm}
\centering
\begin{tikzpicture}[scale=1]

\draw (1,3) rectangle (2,3.5);
\draw (3.5,3) rectangle (5,3.5);
\draw (6.5,3) rectangle (8,3.5);
\draw (5.25,1) rectangle (6.25,1.5);
\draw (8.25,1) rectangle (9.25,1.5);

\draw [->] (2,3.25) -- (3.5,3.25);
\draw [->] (5,3.25) -- (6.5,3.25);
\draw [->] (5.5,3.25) -- (5.5,1.5);
\draw [->] (8,3.25) -| (8.5,1.5);

\draw [<-](1.5,3.5) |- (6,4);
\draw [->] (6,4) -- (6,1.5);
\draw [->, dotted, line width=0.3mm] (6,4) -| (9,1.5);
\draw [->] (6.25,1.25) -- (6.75,1.25);

\node at (1.5,3.25) {\small{Alice}};
\node at (5.75,1.25) {\small{Bob}};
\node at (8.75,1.25) {\small{Eve}};
\node at (4.25, 3.25) {\small{BEC($\epsilon_1$)}};
\node at (7.25, 3.25) {\small{BEC($\epsilon_2$)}};
\node [above] at (5.25,4) {\small{Public channel (noiseless)}};
\node [below] at (1.5,3) {\small{$\boldsymbol{K}_0,\boldsymbol{K}_1$}};
\node [below] at (5.75,1) {$U$};
\node [right] at (6.75,1.25) {$\hat{\boldsymbol{K}}_U$};

\node [above] at (2.75,3.25) {$\boldsymbol{X}$};
\node [above] at (5.5,3.25) {$\boldsymbol{Y}$};
\node [above] at (8.35,3.25) {$\boldsymbol{Z}$};

\end{tikzpicture}
\caption{$1$-of-$2$ string OT over a degraded binary erasure broadcast channel}
\label{fig:ot_hbc_degraded}
\end{figure}

In the setup of Figure~\ref{fig:ot_hbc_degraded}, Alice is connected to Bob and Eve by a broadcast channel made up of a cascade of two independent BECs, a BEC($\epsilon_1$) followed by a BEC($\epsilon_2$). Alice and Bob are honest-but-curious. A $1$-of-$2$ string OT is desired between Alice and Bob, with $1$-privacy\footnote{We suspect that no positive $2$-private OT rate can be achieved in this setup, though our brief attempt to prove this has not been successful. The problem of obtaining OT when users can behave maliciously in this setup appears to require newer techniques and has been deferred to a future study.}. Definition~\ref{defn:protocol_hbc_wtap} and Definition~\ref{defn:ach_rate_1p_hbc_wtap} define a protocol and an achievable rate, respectively, for this setup.


\begin{theorem}[OT capacity bounds for degraded erasure broadcast channel]
\label{thm:result_1p_hbc_degraded}
The $1$-of-$2$ string OT capacity with $1$-privacy, $C_{1P}$, for honest-but-curious users in the setup of Figure~\ref{fig:ot_hbc_degraded}, is such that
\begin{equation*}
\min \left\{\frac{1}{3}\epsilon_2(1 - \epsilon_1), \epsilon_1 \right\} \leq C_{1P} \leq \min\{\epsilon_2(1 - \epsilon_1), \epsilon_1 \}.
\end{equation*}
\end{theorem}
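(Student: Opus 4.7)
I would establish the two sides of the inequality by separate arguments.

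\textbf{Upper bound.} I will show $C_{1P}\le \epsilon_1$ and $C_{1P}\le \epsilon_2(1-\epsilon_1)$ independently. The first follows by reduction: if we drop the Eve-privacy condition~\eqref{eqn:ach_1p_wtap_3}, what remains is exactly honest-but-curious two-party OT over BEC$(\epsilon_1)$, whose capacity is $\min\{\epsilon_1,1-\epsilon_1\}\le \epsilon_1$ by Ahlswede--Csisz\'ar~\cite{ot2007}; any rate achievable here is a fortiori achievable there. For the second bound, I view $\boldsymbol{K}_U$ as a secret key shared between Alice and Bob against Eve: conditioned on either value of $U$, Bob reconstructs $\boldsymbol{K}_U$ correctly by~\eqref{eqn:ach_1p_wtap_0}, and by~\eqref{eqn:ach_1p_wtap_3} it is asymptotically independent of $V_E$. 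Hence the protocol distills a length-$Rn$ secret key from the wiretap source induced by the channel, so $R$ cannot exceed the secret-key capacity, which for a physically degraded source is $I(X;Y|Z)$. A direct calculation with $X\sim\mathrm{Bern}(1/2)$ (the capacity-achieving input) gives $I(X;Y|Z)=\epsilon_2(1-\epsilon_1)$. The formal converse will follow the Ahlswede--Csisz\'ar template, combining Fano's inequality with $n$-letter manipulations of the security conditions.

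\textbf{Lower bound.} The plan is to construct a protocol that interleaves wiretap secret-key distillation with an Ahlswede--Csisz\'ar style OT. Alice broadcasts $n$ uniform bits $\boldsymbol{X}$; let $\mathcal{I}_B=\{i:Y_i\neq \bot\}$ (known to Bob) and $\mathcal{I}_E=\{i:Z_i\neq \bot\}$ (known only to Eve). By physical degradation, $\mathcal{I}_E\subset \mathcal{I}_B$ almost surely. Standard information reconciliation and privacy amplification on this wiretap source yield an Alice-Bob secret key $K$ of length $\approx n(1-\epsilon_1)\epsilon_2$ bits, asymptotically independent of $V_E$. This key is then used to (i) one-time-pad Bob's description of $(L_0,L_1)$ before it is sent over the public channel and (ii) mask the strings $\boldsymbol{X}|_{L_U}\oplus \boldsymbol{K}_U$ against Eve's partial knowledge of $\boldsymbol{X}|_{L_U\cap \mathcal{I}_E}$ via privacy amplification. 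With $L_U\subset \mathcal{I}_B$ and $L_{\bar U}\subset \mathcal{I}_B^c$ each of size $m$, the OT constraint forces $m\le n\epsilon_1$, while the three-way accounting of the secret key (set-description encryption, string encryption, and amplification seed) reduces the effective key rate by a factor of three, yielding the achievable rate $\min\{\tfrac{1}{3}\epsilon_2(1-\epsilon_1),\epsilon_1\}$.

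\textbf{Main obstacle.} The crux, flagged as the novelty over the independent-erasures setup of Section~\ref{sec:prob_statement_wtap}, is hiding $U$ from Eve. Under degradation, the inclusion $\mathcal{I}_E\subset \mathcal{I}_B$ forces $L_{\bar U}\cap \mathcal{I}_E=\emptyset$ deterministically, whereas $L_U\cap \mathcal{I}_E$ is nonempty with probability $1-o(1)$. If $(L_0,L_1)$ is broadcast in the clear, Eve identifies $U$ almost surely by checking which of the two sets intersects her received positions. Bob cannot preempt the attack by choosing $L_U\subset \mathcal{I}_B\setminus \mathcal{I}_E$ because $\mathcal{I}_E$ is unknown to him. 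Encrypting $(L_0,L_1)$ with the distilled secret key defeats the test, since Eve now observes only a ciphertext on the public channel. The remaining correctness and privacy checks---verifying each of~\eqref{eqn:ach_1p_wtap_0}--\eqref{eqn:ach_1p_wtap_3}---then reduce to standard typicality and Chernoff concentration arguments on $|\mathcal{I}_B|$, $|\mathcal{I}_E|$, and $|L_U\cap \mathcal{I}_E|$, together with routine privacy-amplification bounds.
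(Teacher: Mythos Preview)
Your upper bound argument and your identification of the main obstacle are correct and match the paper exactly: the two bounds come from the two-party OT converse and the secret-key capacity, and the need to hide $(L_0,L_1)$ from Eve under degradation is indeed the new difficulty.

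The gap is in your lower-bound accounting for the factor $1/3$. Your list ``set-description encryption, string encryption, and amplification seed'' is wrong: the hash functions (seeds) are sent in the clear and cost no secret key at all. Moreover, you phrase the scheme as first distilling a single key of length $\approx n\epsilon_2(1-\epsilon_1)$ and then spending it in three places, but you never account for the fact that $L_U\subset\mathcal{I}_B$ itself already removes $m$ positions from the key-distillation budget; that is one of the three pieces, not an afterthought. The paper's construction partitions $\overline{E}$ (of size $\approx n(1-\epsilon_1)$) into three disjoint blocks: $L_U$ of size $m$; a block $\tilde G_L$ of size $\approx 2m/\epsilon_2$, from which a $2m$-bit key $F_L(\boldsymbol{X}|_{\tilde G_L})$ is extracted to one-time-pad only the partition indicator $\boldsymbol{Q}$ of $L_0\cup L_1$ (the set $L_0\cup L_1$ itself is revealed in the clear via its complement); and a block $\tilde G_S$ of size $\approx m(1-\epsilon_2)/\epsilon_2$, which is \emph{appended to the hash inputs} rather than XOR-ed as a pad, so that Alice sends $\boldsymbol{K}_i\oplus F_i(\boldsymbol{X}|_{L_i\cup\tilde G_S})$. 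Summing, $m+2m/\epsilon_2+m(1-\epsilon_2)/\epsilon_2=3m/\epsilon_2\le n(1-\epsilon_1)$, which is where the $1/3$ actually comes from.

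A second point you miss is why $\tilde G_S$ can be \emph{shared} between the two hash inputs without costing twice. Because the channel is degraded, $L_{\overline U}\subset E$ is erased for Eve as well as for Bob, so Eve's equivocation about $\boldsymbol{X}|_{L_{\overline U}\cup\tilde G_S}$ already exceeds $m$ without any help from $\tilde G_S$; the supplementary block is needed only on the $L_U$ side, yet adding it to both inputs is harmless and keeps the two keys symmetric. The paper explicitly flags this as the rate-saving feature of degradation. Your sketch, which treats the mask as a one-time-pad consumed from a pre-distilled key, would naturally double-count this cost and does not arrive at $1/3$ for the stated reason.
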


This result is proved in Section~\ref{sec:proofs_3}. The upper and lower bounds in this result match when $\epsilon_1 \leq \frac{1}{3}\epsilon_2(1 - \epsilon_1)$. Unlike the previous setups where Bob and Eve/Cathy receive independent erasure patterns, Eve here has more knowledge of the noise process in the channel connecting Alice and Bob. Specifically, Eve knows that Bob's erasure pattern is a subset of the erasure pattern she observes. This makes it harder to guarantee privacy for Bob against Eve. 


\section{Oblivious transfer over a wiretapped channel in the honest-but-curious model : Proof of Theorem~\ref{thm:result_2p_1p_hbc_wtap}}
\label{sec:proofs_0}


\subsection{2-privacy : Achievability}

For the achievability part of our proof, we describe a protocol (Protocol~\ref{protocol:C2P}) which is a natural extension of the two-party protocol of Ahlswede and Csisz\'ar \cite{ot2007} for achieving OT between Alice and Bob using a BEC($\epsilon_1$). Our extension is designed to achieve OT in the presence of Eve (see Figure~\ref{fig:ot_hbc_wtap}), with $2$-privacy. For a sequence of Protocol~\ref{protocol:C2P} instances of rate $r < C_{2P}$, we show that (\ref{eqn:ach_2p_wtap_0})-(\ref{eqn:ach_2p_wtap_3}) hold. This establishes that any $r < C_{2P}$ is an achievable $2$-private rate in the setup of Figure~\ref{fig:ot_hbc_wtap}. We begin by introducing the two-party OT protocol of Ahlswede and Csisz\'ar \cite{ot2007}.


\subsubsection{Two-party OT protocol \cite{ot2007}}
\label{sec:two_party_ot_protocol}

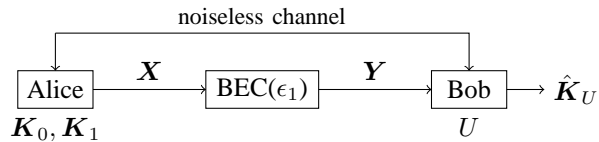
\begin{figure}[h]
\setlength{\unitlength}{1cm}
\centering
\begin{tikzpicture}[trim left, scale=1]

\draw (1,3) rectangle (2,3.5);
\draw (6.5,3) rectangle (7.5,3.5);

\draw [->] (1.5,4) -- (1.5,3.5);
\draw [->] (1.5,4) -| (7,3.5);
\draw [->] (7.5,3.25) -- (8,3.25);

\draw (3.5,3) rectangle (5,3.5);
\draw [->] (2,3.25) -- (3.5,3.25 );
\draw [->] (5,3.25) -- (6.5, 3.25);
\node at (4.25, 3.25) {BEC($\epsilon_1$)};

\node at (1.5,3.25) {Alice};
\node at (7,3.25) {Bob};
\node [below] at (1.5,3) {$\boldsymbol{K}_0,\boldsymbol{K}_1$};
\node [below] at (7,3) {$U$};
\node [above] at (4.25,4) {\small{noiseless channel}};
\node [above] at (2.75, 3.25) {$\boldsymbol{X}$};
\node [above] at (5.75, 3.25) {$\boldsymbol{Y}$};
\node [right] at (8,3.25) {$\hat{\boldsymbol{K}}_U$};

\end{tikzpicture}
\caption{Setup for two-party OT using a BEC($\epsilon_1$)}
\label{fig:ot_twoparty_dmc}
\end{figure}

Consider the two-party setup of Figure~\ref{fig:ot_twoparty_dmc}. The OT capacity in this setup is $\min\{\epsilon_1, 1 - \epsilon_1\}$ \cite{ot2007}. Let $r < \min\{\epsilon_1, 1 - \epsilon_1\}$. Ahlswede and Csisz\'ar's \cite{ot2007} protocol begins with Alice transmitting a sequence $\boldsymbol{X}$ of $n$ i.i.d., Bernoulli($1/2$) bits over the channel. Bob receives the channel output $\boldsymbol{Y}$. Let $E$ be the set of all indices at which $\boldsymbol{Y}$ is erased and $\overline{E}$ that of all indices at which $\boldsymbol{Y}$ is unerased. If $|E| < nr$ or $|\overline{E}| < nr$, Bob aborts the protocol since he does not have sufficient erasures or non-erasures to run the protocol. From $\overline{E}$, Bob picks a subset $L_U$ of cardinality $nr$, uniformly at random. From $E$, Bob picks a subset $L_{\overline{U}}$ of the same cardinality as $L_U$, also uniformly at random and then sends the sets $L_0,L_1$ over the public channel. Alice cannot infer which of the sets $L_0,L_1$ consists of indices at which $\boldsymbol{Y}$ was erased, since the channel acts independently on each input bit. As a result, Alice does not learn $U$ when it receives $L_0,L_1$ from Bob. Alice uses $\boldsymbol{X}|_{L_0}, \boldsymbol{X}|_{L_1}$ as the keys to encrypt its strings and send these encrypted strings to Bob. That is, Alice sends $\boldsymbol{K}_0 \oplus \boldsymbol{X}|_{L_0}, \boldsymbol{K}_1 \oplus \boldsymbol{X}|_{L_1}$ to Bob over the public channel. Bob knows only the key $\boldsymbol{X}|_{L_U}$ (since $\boldsymbol{Y}|_{L_U}$ is unerased) and knows nothing about the key $\boldsymbol{X}|_{L_{\overline{U}}}$ (since $\boldsymbol{Y}|_{L_{\overline{U}}}$ is erased). As a result, Bob learns $\boldsymbol{K}_U$ but learns nothing about $\boldsymbol{K}_{\overline{U}}$.


\subsubsection{Protocol for $2$-privacy in the wiretapped model}
\label{sec:2p_ot_protocol}

The above two-party protocol, as is, is insufficient for privacy against Eve in our wiretapped channel model (see Figure~\ref{fig:ot_hbc_wtap}). This is because the keys $\boldsymbol{X}|_{L_0}, \boldsymbol{X}|_{L_1}$ that Alice uses in the two-party protocol are both partially known to Eve, since Eve knows an independently erased version $\boldsymbol{Z}$ of $\boldsymbol{X}$. As a result, when Alice sends $\boldsymbol{K}_0 \oplus \boldsymbol{X}|_{L_0}, \boldsymbol{K}_1 \oplus \boldsymbol{X}|_{L_1}$ to Bob over the public channel, Eve learns approximately a fraction $(1 - \epsilon_2)$ of both of Alice's strings, violating (\ref{eqn:ach_2p_wtap_3}).

The key idea in our protocol (Protocol~\ref{protocol:C2P}) is that Alice converts the strings $\boldsymbol{X}|_{L_0}, \boldsymbol{X}|_{L_1}$ into independent secret keys $\boldsymbol{S}_0,\boldsymbol{S}_1$ respectively. Only one of these keys will be known to Bob and none of these keys will be known to Eve. Alice now sends $\boldsymbol{K}_0 \oplus \boldsymbol{S}_0, \boldsymbol{K}_1 \oplus \boldsymbol{S}_1$ to Bob over the public channel. In order to have $\boldsymbol{S}_0$ and $\boldsymbol{S}_1$ of length about $nr$ bits each, our protocol requires $|L_0| = |L_1| = nr/\epsilon_2$ approximately. Clearly, Bob knows $\boldsymbol{S}_U$ since he knows $\boldsymbol{X}|_{L_U}$ and can, thus, recover $\boldsymbol{K}_U$ from Alice's public message. As we prove later, $\boldsymbol{S}_{\overline{U}}$ remains unknown to a colluding Bob and Eve and so these colluding parties do not learn $\boldsymbol{K}_{\overline{U}}$, a key requirement for $2$-privacy. Since $\boldsymbol{S}_0, \boldsymbol{S}_1$ were independent and secret from Eve, clearly, Eve does not learn anything about Alice's strings from $\boldsymbol{K}_0 \oplus \boldsymbol{S}_0, \boldsymbol{K}_1 \oplus \boldsymbol{S}_1$. In order to convert $\boldsymbol{X}|_{L_0}, \boldsymbol{X}|_{L_1}$ into independent secret keys $\boldsymbol{S}_0, \boldsymbol{S}_1$ respectively, Alice selects two functions $F_0,F_1$ randomly and independently from a class $\mathcal{F}$ of \emph{universal}$_2$ hash functions \cite{carter_wegman_1979, carter_wegman_1981} (see Appendix~\ref{appndx:privacy_amplification} for details). The inputs of $F_0,F_1$ are about $nr/\epsilon_2$ bits long and their outputs are about $nr$ bits long. The required keys are $\boldsymbol{S}_0 = F_0(\boldsymbol{X}|_{L_0})$ and $\boldsymbol{S}_1 = F_1(\boldsymbol{X}|_{L_1})$. The main property of universal$_2$ hash functions used here is \emph{privacy amplification} \cite[Corollary 4]{generalized_privacy_ampl_1995}. In the present case, privacy amplification by the chosen universal$_2$ hash functions guarantees that the function output appears nearly random to any eavesdropper (e.g. colluding Bob and Eve) who does not know approximately a fraction $\epsilon_2$ (or more) of the function input. Alice sends $F_0,F_1$ to Bob alongwith $\boldsymbol{K}_0 \oplus \boldsymbol{S}_0, \boldsymbol{K}_1 \oplus \boldsymbol{S}_1$ over the public channel.

\begin{algorithm*}
\floatname{algorithm}{Protocol}
\caption{Protocol for achieving any $r < C_{2P}$}
\label{protocol:C2P}

Parameters : \begin{minipage}[t]{0.8\linewidth}
 \begin{itemize}
  \item $\delta \in (0,1)$ such that $r < (\epsilon_2 - \delta)(\min \{\epsilon_1, 1 - \epsilon_1\} - \delta)$ and $(\epsilon_2 - \delta) \in \mathbb{Q}$
  \item $0 < \tilde{\delta} < r$, $\tilde{\delta} \in \mathbb{Q}$
  \item $\beta = \frac{r}{\epsilon_2 - \delta}$
  \item $\beta n, n(r - \tilde{\delta}) \in \mathbb{N}$
  \item The rate\footnotemark of the protocol is $(r - \tilde{\delta})$
 \end{itemize}
\end{minipage}

\begin{multicols}{2}
\begin{algorithmic}[1]

\STATE Alice transmits an $n$-tuple $\boldsymbol{X}$ of i.i.d. Bernoulli($1/2$) bits over the channel.

\STATE \label{step:2p_wtap_abort0} Bob receives the $n$-tuple $\boldsymbol{Y}$ from BEC($\epsilon_1$). Bob forms the sets
\begin{align*}
\overline{E} & := \{ i \in \{1,2,\ldots,n\}: Y_i \neq \bot\} \\
E & := \{ i \in \{1,2,\ldots,n\}: Y_i = \bot\}
\end{align*}

If $|\overline{E}| < \beta n$ or $|E| < \beta n$, Bob aborts the protocol.

\STATE Bob creates the following sets:
\begin{align*}
L_U & \thicksim \text{Unif}\{A \subset \overline{E} : |A| = \beta n\} \\
L_{\overline{U}} & \thicksim \text{Unif}\{A \subset E : |A| = \beta n\} \\
\end{align*}

Bob reveals $L_0,L_1$ to Alice over the public channel.

\STATE Alice randomly and independently chooses functions $F_0,F_1$ from a family $\mathcal{F}$ of universal$_2$ hash functions:
\[ F_0,F_1 : \{0,1\}^{\beta n} \longrightarrow \{0,1\}^{n(r - \tilde{\delta})} \]

Alice finally sends the following information to Bob on the public channel:
\[ F_0, \; F_1, \; \boldsymbol{K}_0 \oplus F_0(\boldsymbol{X}|_{L_0}), \; \boldsymbol{K}_1 \oplus F_1(\boldsymbol{X}|_{L_1}) \]

\STATE Bob knows $F_U,\boldsymbol{X}|_{L_U}$ and can, therefore, recover $\boldsymbol{K}_U$.

\end{algorithmic}
\end{multicols}
\end{algorithm*}

\footnotetext{The parameters $\delta, \tilde{\delta}$ can be chosen to be arbitrarily small so that this rate takes any desired value less than $C_{2P}$.}

\begin{lemma}
\label{lem:c2p_ach_wtap}
Any $r < C_{2P}$ is an achievable $2$-private rate in the setup of Figure~\ref{fig:ot_hbc_wtap} when users are honest-but-curious.
\end{lemma}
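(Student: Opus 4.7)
The plan is to verify that Protocol~\ref{protocol:C2P} satisfies each of the four conditions \eqref{eqn:ach_2p_wtap_0}--\eqref{eqn:ach_2p_wtap_3} when $r<C_{2P}$. The parameter choice forces $\beta = r/(\epsilon_2-\delta) < \min\{\epsilon_1,1-\epsilon_1\}-\delta$, so a Chernoff/law-of-large-numbers argument gives $P[|E|<\beta n] + P[|\overline{E}|<\beta n]\to 0$, i.e., the abort event in step~\ref{step:2p_wtap_abort0} is negligible. On the non-abort event Bob knows $F_U$ and $\boldsymbol{X}|_{L_U}$ (since $\boldsymbol{Y}|_{L_U}$ is unerased), so $\hat{\boldsymbol{K}}_U$ is recovered exactly, giving \eqref{eqn:ach_2p_wtap_0}.

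For \eqref{eqn:ach_2p_wtap_2} I will actually show the exact equality $I(U;V_A,V_E)=0$. The inputs $\boldsymbol{K}_0,\boldsymbol{K}_1,\boldsymbol{X}$, the Eve output $\boldsymbol{Z}$, and the hash functions $F_0,F_1$ are jointly independent of $(U,\text{Bob-erasure-pattern})$, because the two BECs are independent. Every component of $V_A\cup V_E$ apart from $(L_0,L_1)$ is a deterministic function of this independent block together with $(L_0,L_1)$, so it suffices to show that the conditional law of $(L_0,L_1)$ is invariant under $U\mapsto U\oplus 1$. Writing $P(L_0{=}A_0,L_1{=}A_1\mid U{=}u)$ as an expectation over the Bernoulli($\epsilon_1$) pattern on $B=\{1,\dots,n\}\setminus(A_0\cup A_1)$, the two sums obtained for $u=0$ and $u=1$ are formally identical after the obvious relabeling of $E$ vs.\ $\overline{E}$ and $A_0$ vs.\ $A_1$; hence $U$ is uniform given $(L_0,L_1)$, and therefore given all of $(V_A,V_E)$.

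The main technical step is \eqref{eqn:ach_2p_wtap_1}, the secrecy of $\boldsymbol{K}_{\overline{U}}$ from the Bob--Eve coalition. I will apply privacy amplification (Appendix~\ref{appndx:privacy_amplification}) to the key $\boldsymbol{S}_{\overline{U}}:=F_{\overline{U}}(\boldsymbol{X}|_{L_{\overline{U}}})$. By construction $\boldsymbol{Y}|_{L_{\overline{U}}}$ is entirely erasures, so Bob contributes no information about $\boldsymbol{X}|_{L_{\overline{U}}}$; Eve's string $\boldsymbol{Z}|_{L_{\overline{U}}}$ reveals each bit independently with probability $1-\epsilon_2$, because the Eve-BEC is independent of the Bob-BEC and therefore i.i.d.\ Bernoulli($\epsilon_2$) even after conditioning on $L_{\overline{U}}$. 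A Chernoff bound gives that, outside an event of vanishing probability, the number of positions in $L_{\overline{U}}$ that are erased at Eve (and hence unknown to both parties) is at least $(\epsilon_2-\delta')\beta n$ for a small $\delta'<\delta$, and by the parameter choice $(\epsilon_2-\delta')\beta n - n(r-\tilde{\delta}) = \bigl(\tfrac{(\epsilon_2-\delta')r}{\epsilon_2-\delta}-r+\tilde{\delta}\bigr)n$ grows linearly in $n$. Conditioned on this high-probability event, $\boldsymbol{X}|_{L_{\overline{U}}}$ has min-entropy at least $(\epsilon_2-\delta')\beta n$ given the rest of $(V_B,V_E)$ stripped of the ciphertext $\boldsymbol{K}_{\overline{U}}\oplus\boldsymbol{S}_{\overline{U}}$, so the leftover hash lemma makes $\boldsymbol{S}_{\overline{U}}$ exponentially close to uniform and independent of that stripped view. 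The one-time-pad structure then renders $\boldsymbol{K}_{\overline{U}}\oplus\boldsymbol{S}_{\overline{U}}$ nearly independent of $\boldsymbol{K}_{\overline{U}}$ as well, and Pinsker's inequality converts the vanishing statistical distance into $I(\boldsymbol{K}_{\overline{U}};V_B,V_E)\to 0$.

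Condition \eqref{eqn:ach_2p_wtap_3} then follows by replaying the same privacy-amplification argument to each of $\boldsymbol{S}_0$ and $\boldsymbol{S}_1$ from Eve's vantage point alone: because $L_0\cap L_1=\emptyset$ and the bits of $\boldsymbol{X}$ are i.i.d., Eve's residual uncertainty about $\boldsymbol{X}|_{L_0}$ and $\boldsymbol{X}|_{L_1}$ is conditionally independent given $(L_0,L_1,\boldsymbol{Z})$, and in each block she is ignorant of at least $(\epsilon_2-\delta')\beta n>n(r-\tilde{\delta})$ bits. Independent privacy amplification on the two blocks therefore makes the pair of ciphertexts jointly nearly independent of $(\boldsymbol{K}_0,\boldsymbol{K}_1)$, and combining this with $I(U;V_E)=0$ (a special case of the second paragraph) yields $I(\boldsymbol{K}_0,\boldsymbol{K}_1,U;V_E)\to 0$. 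The anticipated obstacle is purely bookkeeping in the third paragraph: one must carefully separate the randomness of the two erasure patterns and the hash choices so that the leftover hash lemma applies conditionally on the union of all ``leaky'' random variables, rather than the marginals.
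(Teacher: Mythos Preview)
Your proposal is essentially correct and follows the same architecture as the paper's proof: Chernoff for the abort, exact recovery on non-abort, exact independence $I(U;V_A,V_E)=0$, and privacy amplification on $F_{\overline{U}}(\boldsymbol{X}|_{L_{\overline{U}}})$ against the Bob--Eve coalition (respectively on both $F_0(\boldsymbol{X}|_{L_0})$ and $F_1(\boldsymbol{X}|_{L_1})$ against Eve alone). Two points are worth tightening.

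\textbf{The symmetry argument for \eqref{eqn:ach_2p_wtap_2}.} Your phrase ``relabeling of $E$ vs.\ $\overline{E}$ and $A_0$ vs.\ $A_1$'' is not the right bijection when $\epsilon_1\neq 1/2$: complementing the whole erasure pattern does not preserve its probability. The correct bijection complements the erasure status only on $A_0\cup A_1$ and leaves $B$ fixed; since $|A_0|=|A_1|=\beta n$, this preserves the number of erasures (hence the pattern probability) as well as $|E|$ and $|\overline{E}|$ (hence the selection normalizers $\binom{|\overline{E}|}{\beta n}^{-1}\binom{|E|}{\beta n}^{-1}$). The paper dispatches this by reducing via Markov chains to $I(U;L_0,L_1)=0$ and invoking exactly this $|L_0|=|L_1|$ observation in one line.

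\textbf{Pinsker is the wrong direction.} Pinsker upper bounds total variation by KL, not the converse; to go from an exponentially small statistical distance to $I(\boldsymbol{K}_{\overline{U}};V_B,V_E)\to 0$ you would need a continuity-of-entropy bound of the form $I\le d_{TV}\cdot\log|\mathcal{K}|+h(d_{TV})$, which does work here since $\log|\mathcal{K}|$ is linear in $n$ while $d_{TV}$ is exponentially small. The paper sidesteps this detour entirely by using the entropy form of privacy amplification (Lemma~\ref{lem:privacy_amplification}, i.e.\ \cite[Corollary~4]{generalized_privacy_ampl_1995}): it bounds $I(\boldsymbol{K}_{\overline{U}};V_B,V_E)\le n(r-\tilde{\delta})-H\bigl(F_{\overline{U}}(\boldsymbol{X}|_{L_{\overline{U}}})\,\big|\,F_{\overline{U}},\boldsymbol{\Psi}|_{L_{\overline{U}}}\bigr)$, where $\boldsymbol{\Psi}$ is the ``best-of'' combination of $\boldsymbol{Y}$ and $\boldsymbol{Z}$, and then uses $R(\boldsymbol{X}|_{L_{\overline{U}}}\mid\boldsymbol{\Psi}|_{L_{\overline{U}}}=\boldsymbol{\psi})=\#_e(\boldsymbol{\psi})$ together with Chernoff. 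Because $\beta=r/(\epsilon_2-\delta)$ was chosen so that $(\epsilon_2-\delta)\beta n=nr$, there is no need for your auxiliary $\delta'$: the slack is exactly $\tilde{\delta}n$, giving $2^{-\tilde{\delta}n}/\ln 2$ directly. Your argument is valid, but the paper's route is shorter and avoids the statistical-distance bookkeeping.
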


A formal proof of this lemma is deferred to Appendix~\ref{appndx:proof_ach_2p_hbc_wtap}. A sketch of this proof is as follows. It suffices to prove this lemma only for rational values of $r < C_{2P}$ due to the denseness of $\mathbb{Q}$ in $\mathbb{R}$. Let $(\mathcal{P}_n)_{\{n \in \mathbb{N}\}}$ be a sequence of Protocol~\ref{protocol:C2P} instances, of rate $r - \tilde{\delta}$. With high probability, $\mathcal{P}_n$ does not abort. In that case, Bob knows the key $\boldsymbol{S}_U = F_U(\boldsymbol{X}|_{L_U})$ and can recover $\boldsymbol{K}_U$ from $\boldsymbol{K}_0 \oplus \boldsymbol{S}_0, \boldsymbol{K}_1 \oplus \boldsymbol{S}_1$ that Alice sends. As a result, (\ref{eqn:ach_2p_wtap_0}) holds for $(\mathcal{P}_n)_{\{n \in \mathbb{N}\}}$. For the key $\boldsymbol{S}_{\overline{U}} = F_{\overline{U}}(\boldsymbol{X}|_{L_{\overline{U}}})$, the privacy amplification by $F_{\overline{U}}$ on its input ensures that the amount of information that colluding Bob and Eve learn about $\boldsymbol{S}_{\overline{U}}$ falls exponentially in $n$. As a result, colluding Bob and Eve learn only a vanishingly small amount of information about $\boldsymbol{K}_{\overline{U}}$ and, thus, (\ref{eqn:ach_2p_wtap_1}) holds for $(\mathcal{P}_n)_{\{n \in \mathbb{N}\}}$. The only way that colluding Alice and Eve can learn $U$ is when Bob sends $L_0,L_1$. But since the channel acts independently on each input bit, the composition of $L_0,L_1$ does not reveal $U$. Thus, (\ref{eqn:ach_2p_wtap_2}) holds for the protocol sequence. Finally, conditioned on knowing $U$, Eve still does not learn anything about Alice's strings. This is because in the keys $\boldsymbol{S}_0 = F_0(\boldsymbol{X}|_{L_0}), \boldsymbol{S}_1 =  F_1(\boldsymbol{X}|_{L_1})$, the privacy amplification by $F_0,F_1$ on their respective inputs ensures that the amount of information Eve learns about $\boldsymbol{S}_0, \boldsymbol{S}_1$ falls exponentially in $n$. As a result, Eve gains only a vanishingly small amount of information about $\boldsymbol{K}_0, \boldsymbol{K}_1$ from Alice's public message. This guarantees that (\ref{eqn:ach_2p_wtap_3}) holds for $(\mathcal{P}_n)_{\{n \in \mathbb{N}\}}$.


\subsection{1-privacy : Achievability}
\label{sec:1p_ot_protocol}

 Our protocol (Protocol~\ref{protocol:C1P}) for achieving OT in the presence of Eve, with $1$-privacy in our setup (see Figure~\ref{fig:ot_hbc_wtap}), is an extension of Ahlswede and Csisz\'ar's two-party OT protocol \cite{ot2007}. For a sequence of Protocol~\ref{protocol:C1P} instances of rate $r < C_{1P}$, we show that (\ref{eqn:ach_1p_wtap_0})-(\ref{eqn:ach_1p_wtap_3}) hold. This establishes that any $r < C_{1P}$ is an achievable $1$-private rate in the setup of Figure~\ref{fig:ot_hbc_wtap}. 

For achieving $1$-privacy, recall that privacy for Alice's strings is required only individually against Bob and against Eve, not against colluding Bob and Eve. As a result, the main change in Protocol~\ref{protocol:C1P}, compared to Protocol~\ref{protocol:C2P}, is that the requirement of $L_{\overline{U}}$ coming entirely from $E$ is relaxed. Protocol~\ref{protocol:C1P} requires that $nr$ indices in $L_{\overline{U}}$ have to come from $E$.  The remaining about $(nr/\epsilon_2) - nr$ indices in $L_{\overline{U}}$ can come from an arbitrary combination of leftover indices of $E$ and $\overline{E}$. Since the key $\boldsymbol{S}_{\overline{U}} = F_{\overline{U}}(\boldsymbol{X}|_{L_{\overline{U}}})$ is about $nr$ bits long, privacy amplification by $F_{\overline{U}}$ on its input $\boldsymbol{X}|_{L_{\overline{U}}}$ ensures that $\boldsymbol{S}_{\overline{U}}$ is unknown to Bob. Since $|L_{\overline{U}}|$ is about $nr/\epsilon_2$, the privacy amplification also guarantees that $\boldsymbol{S}_{\overline{U}}$ is unknown to Eve as well. Thus, the key $\boldsymbol{S}_{\overline{U}}$ remains hidden individually from Bob and from Eve and that suffices to achieve $1$-privacy in the setup. Furthermore, note that when $\epsilon_1 < 1/2$, Protocol~\ref{protocol:C2P} had unused indices from $\overline{E}$ which Protocol~\ref{protocol:C1P} can use in constructing a larger $L_{\overline{U}}$. This results in higher achievable $1$-private rates compared to achievable $2$-private rates when $\epsilon_1 < 1/2$.

\begin{algorithm*}
\floatname{algorithm}{Protocol}
\caption{Protocol for achieving any $r < C_{1P}$}
\label{protocol:C1P}

Parameters : \begin{minipage}[t]{0.8\linewidth}
 \begin{itemize}
  \item $\delta \in (0,1)$ such that $r < \min\{ (\epsilon_1 - \delta), \frac{1}{2}(\epsilon_2 - \delta), (\epsilon_2 - \delta)(1 - \epsilon_1 - \delta) \}$ and $(\epsilon_2 - \delta) \in \mathbb{Q}$
  \item $0 < \tilde{\delta} < r$, $\tilde{\delta} \in \mathbb{Q}$
  \item $\beta = \frac{r}{\epsilon_2 - \delta}$
  \item $\beta n, nr, n(r - \tilde{\delta}) \in \mathbb{N}$
  \item The rate\footnotemark of the protocol is $(r - \tilde{\delta})$
 \end{itemize}
\end{minipage}

\begin{multicols}{2}
\begin{algorithmic}[1]

\STATE Alice transmits an $n$-tuple $\boldsymbol{X}$ of i.i.d. Bernoulli($1/2$) bits over the channel.

\STATE \label{step:1p_wtap_abort0} Bob receives the $n$-tuple $\boldsymbol{Y}$ from BEC($\epsilon_1$). Bob forms the sets
\begin{align*}
\overline{E} & := \{ i \in \{1,2,\ldots,n\}: Y_i \neq \bot\} \\
E & := \{ i \in \{1,2,\ldots,n\}: Y_i = \bot\}
\end{align*}

If $|\overline{E}| < \beta n$ or $|E| < nr$, Bob aborts the protocol.

\STATE Bob creates the following sets:
\begin{align*}
L & \thicksim \text{Unif}\{ A \subset E : |A| = nr \} \\
L_U & \thicksim \text{Unif}\{A \subset \overline{E} : |A| = \beta n\} \\
L_{\overline{U}} & \thicksim L \cup \text{Unif}\{A \subset \overline{E} \backslash L_U \; \cup E \backslash L : |A| = (\beta - r)n\}
\end{align*}

Bob reveals $L_0,L_1$ to Alice over the public channel.

\STATE Alice randomly and independently chooses functions $F_0,F_1$ from a family $\mathcal{F}$ of universal$_2$ hash functions:
\[ F_0,F_1 : \{0,1\}^{\beta n} \longrightarrow \{0,1\}^{n(r - \tilde{\delta})} \]

Alice finally sends the following information on the public channel:
\[ F_0, \; F_1, \; \boldsymbol{K}_0 \oplus F_0(\boldsymbol{X}|_{L_0}), \; \boldsymbol{K}_1 \oplus F_1(\boldsymbol{X}|_{L_1}) \]

\STATE Bob knows $F_U,\boldsymbol{X}|_{L_U}$ and can, therefore, recover $\boldsymbol{K}_U$.

\end{algorithmic}
\end{multicols}
\end{algorithm*}

\begin{lemma}
\label{lem:c1p_ach_wtap}
Any $r < C_{1P}$ is an achievable $1$-private rate in the setup of Figure~\ref{fig:ot_hbc_wtap} when users are honest-but-curious.
\end{lemma}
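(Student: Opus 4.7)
The plan is to verify that Protocol~\ref{protocol:C1P} achieves any rational $r < C_{1P}$; irrational rates follow by the density of $\mathbb{Q}$ in $\mathbb{R}$. First I would establish that the abort in step~\ref{step:1p_wtap_abort0} has vanishing probability: since $|\overline{E}|$ and $|E|$ are sums of i.i.d.\ Bernoullis with means $n(1-\epsilon_1)$ and $n\epsilon_1$, and the parameter constraints give $\beta<1-\epsilon_1-\delta$ and $r<\epsilon_1-\delta$, Hoeffding's inequality yields $P(|\overline{E}|<\beta n)+P(|E|<nr)\to 0$ exponentially. The constraint $\beta=r/(\epsilon_2-\delta)<1/2$ also guarantees enough leftover indices in $(\overline{E}\setminus L_U)\cup(E\setminus L)$ to complete $L'$. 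Correctness~(\ref{eqn:ach_1p_wtap_0}) is then immediate: since $L_U\subseteq\overline{E}$, Bob observes $\boldsymbol{X}|_{L_U}$ directly and recovers $\boldsymbol{K}_U=C_U\oplus F_U(\boldsymbol{X}|_{L_U})$.

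For privacy of $U$ against Alice~(\ref{eqn:ach_1p_wtap_2}), note that Alice's view factors as $(\boldsymbol{K}_0,\boldsymbol{K}_1,\boldsymbol{X},F_0,F_1)$, which is independent of $U$, together with the pair $(L_0,L_1)$. The key observation is that the joint law of $(L_U,L_{\overline{U}})$ is invariant under coordinate permutations (since $E$ is a Bernoulli product and Bob's subsampling is defined symmetrically in the coordinates), so $P(L_U=A,\,L_{\overline{U}}=B)$ depends only on the type $(|A|,|B|,|A\cap B|)$. Specializing to disjoint pairs of size $\beta n$ and setting $(A',B')=(B,A)$ shows the distribution is swap-symmetric, and since $U$ is uniform and independent of everything else, $I(U;L_0,L_1)=0$. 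Hence $I(U;V_A)=0$ modulo the vanishing abort event.

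Privacy of $\boldsymbol{K}_{\overline{U}}$ against Bob alone~(\ref{eqn:ach_1p_wtap_1}) is the step where the C1P construction genuinely diverges from C2P, and the crucial calculation is a lower bound on the min-entropy of $\boldsymbol{X}|_{L_{\overline{U}}}$ conditioned on $V_B$. Since $L\subseteq E$, Bob sees none of $\boldsymbol{X}|_L$; the only coordinates of $L_{\overline{U}}$ that Bob can observe unerased lie in $L'\cap\overline{E}$, and deterministically $|L'\cap\overline{E}|\leq|L'|=(\beta-r)n$. Thus the conditional min-entropy of the $\beta n$-bit string $\boldsymbol{X}|_{L_{\overline{U}}}$ is at least $rn$ given Bob's entire view. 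Applying the Leftover Hash Lemma (Appendix~\ref{appndx:privacy_amplification}) to the universal$_2$ hash $F_{\overline{U}}\colon\{0,1\}^{\beta n}\to\{0,1\}^{n(r-\tilde\delta)}$ with output length strictly below this min-entropy yields $\boldsymbol{S}_{\overline{U}}=F_{\overline{U}}(\boldsymbol{X}|_{L_{\overline{U}}})$ exponentially close to uniform and independent of $V_B$, so that $C_{\overline{U}}=\boldsymbol{K}_{\overline{U}}\oplus\boldsymbol{S}_{\overline{U}}$ acts as a near-perfect one-time pad and $I(\boldsymbol{K}_{\overline{U}};V_B)\to 0$.

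For privacy against Eve~(\ref{eqn:ach_1p_wtap_3}), I would split $I(\boldsymbol{K}_0,\boldsymbol{K}_1,U;V_E)=I(U;V_E)+I(\boldsymbol{K}_0,\boldsymbol{K}_1;V_E\mid U)$. Since Eve's channel output $\boldsymbol{Z}$ is independent of Bob's erasure pattern and of $U$, the coordinate-exchangeability argument from the previous paragraph carries over verbatim to give $I(U;V_E)=0$. For the second term, condition on $U=u$; Eve's erasure set $E'$ is drawn independently of Bob's randomness and of $L_0,L_1$, and by Hoeffding $|L_i\cap E'|\geq(\epsilon_2-\delta/2)\beta n=rn\cdot(\epsilon_2-\delta/2)/(\epsilon_2-\delta)$, which exceeds $rn$ by $\Omega(n)$ with high probability. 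Consequently the conditional min-entropy of each $\boldsymbol{X}|_{L_i}$ given Eve's view is at least $rn+\Omega(n)$, and a second application of the Leftover Hash Lemma to $F_0$ and $F_1$ jointly shows $(\boldsymbol{S}_0,\boldsymbol{S}_1)$ is exponentially close to the uniform product given Eve's view. I expect the principal technical nuisance to be the min-entropy bookkeeping under conditioning on the full public transcript and on the non-abort event; the slack built into the parameters $\delta,\tilde\delta$ is precisely what allows the leftover hash bound to dominate all the conditioning losses and give exponential convergence in each of the four target conditions.
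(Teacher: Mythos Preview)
Your proposal is correct and follows essentially the same approach as the paper: vanishing abort probability via concentration, correctness from $L_U\subseteq\overline{E}$, $I(U;L_0,L_1)=0$ from coordinate exchangeability (the paper phrases this tersely as ``the channel acts independently on each input bit and $|L_0|=|L_1|$''), privacy of $\boldsymbol{K}_{\overline{U}}$ against Bob from the deterministic guarantee that $L_{\overline{U}}$ contains at least $nr$ erased-for-Bob indices, and privacy against Eve by the same argument as for Protocol~\ref{protocol:C2P}. The only cosmetic differences are that you speak of min-entropy and the Leftover Hash Lemma where the paper uses R\'enyi entropy of order two and Lemma~\ref{lem:privacy_amplification} (equivalent here since the conditional source is uniform on its support), and your exchangeability justification is spelled out more explicitly than the paper's one-line remark.
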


This lemma is formally proved in Appendix~\ref{appndx:proof_ach_1p_hbc_wtap}. A sketch of its proof now follows. Let $(\mathcal{P}_n)_{\{n \in \mathbb{N}\}}$ be a sequence of Protocol~\ref{protocol:C1P} instances, of rate $r - \tilde{\delta}$. If the protocol does not abort, then (\ref{eqn:ach_1p_wtap_0}), (\ref{eqn:ach_1p_wtap_2}) and (\ref{eqn:ach_1p_wtap_3}) hold for $(\mathcal{P}_n)_{\{n \in \mathbb{N}\}}$ for the same reasons that (\ref{eqn:ach_2p_wtap_0}), (\ref{eqn:ach_2p_wtap_2}) and (\ref{eqn:ach_2p_wtap_3}) respectively hold for a sequence of Protocol~\ref{protocol:C2P} instances. To see that (\ref{eqn:ach_1p_wtap_1}) holds for $(\mathcal{P}_n)_{\{n \in \mathbb{N}\}}$, note that $L_{\overline{U}}$ consists of at least $nr$ indices at which $\boldsymbol{Y}$ is erased. Also, the key $\boldsymbol{S}_{\overline{U}} = F_{\overline{U}}(\boldsymbol{X}|_{L_{\overline{U}}})$ is about $nr$ bits long and privacy amplification by $F_{\overline{U}}$ on its input $\boldsymbol{X}|_{L_{\overline{U}}}$ ensures that the amount of information Bob learns about $\boldsymbol{S}_{\overline{U}}$ falls exponentially with $n$. Hence, Bob learns only a vanishingly small amount of information about the string $\boldsymbol{K}_{\overline{U}}$ from $\boldsymbol{K}_0 \oplus \boldsymbol{S}_0, \boldsymbol{K}_1 \oplus \boldsymbol{S}_1$.

\footnotetext{The parameters $\delta, \tilde{\delta}$ can be chosen to be arbitrarily small so that this rate takes any desired value less than $C_{1P}$.}


\subsection{$2$-privacy : Converse}

We only require a weaker secrecy condition to prove our converse. Specifically, we only need (\ref{eqn:ach_2p_wtap_1}) and (\ref{eqn:ach_2p_wtap_3}) to hold with a $1/n$ multiplied to their left-hand-sides.

\begin{lemma}
\label{lem:upper_bound_2p_wtap}
If $r_{2P}$ is an achievable $2$-private rate in the setup of Figure~\ref{fig:ot_hbc_wtap} when users are honest-but-curious, then
\[ r_{2P} \leq C_{2P}. \]
\end{lemma}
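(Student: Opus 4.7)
The plan is to adapt the converse argument of Ahlswede--Csisz\'ar \cite{ot2007} for two-party BEC oblivious transfer, exploiting the independence of the two BECs by conditioning on Eve's view $V_E$ to extract the extra factor of $\epsilon_2$ beyond the two-party bound $\min\{\epsilon_1, 1-\epsilon_1\}$.

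First I would start from $m = H(\boldsymbol{K}_U \mid U)$ and peel off Eve's view. The weakened secrecy (\ref{eqn:ach_2p_wtap_3}) gives $I(\boldsymbol{K}_U; V_E \mid U) \leq I(\boldsymbol{K}_0, \boldsymbol{K}_1, U; V_E) \leq n\delta_n$, so $m \leq H(\boldsymbol{K}_U \mid U, V_E) + n\delta_n$. Correctness (Fano applied to Bob's estimate) yields $H(\boldsymbol{K}_U \mid U, V_B, V_E) \leq H(\boldsymbol{K}_U \mid V_B) \leq n\delta_n$, so $m \leq I(\boldsymbol{K}_U; V_B \mid U, V_E) + 2n\delta_n$. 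This shifts the burden to bounding the single conditional mutual information on the right.

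Next I would bound $I(\boldsymbol{K}_U; V_B \mid U, V_E)$ by reducing to a two-party problem. Conditional on $V_E = (\boldsymbol{Z}, \boldsymbol{\Lambda})$, Eve's output $\boldsymbol{Z}$ reveals $\boldsymbol{X}|_{\overline{E_E}}$, where $E_E \sim \mathrm{Binomial}(n, \epsilon_2)$ is the set of Eve-erasures, and by independence of the two BECs the restriction of the Alice--Bob channel to $E_E$ remains an i.i.d.\ BEC$(\epsilon_1)$ of length $|E_E|$. The conditional versions of privacy carry over from 2-privacy: $I(\boldsymbol{K}_{\overline{U}}; V_B \mid V_E) \leq I(\boldsymbol{K}_{\overline{U}}; V_B, V_E) \leq n\delta_n$ and $I(U; V_A \mid V_E) \leq I(U; V_A, V_E) \leq n\delta_n$, while conditional correctness $H(\boldsymbol{K}_U \mid V_B, V_E) \leq n\delta_n$ follows by monotonicity. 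Applying the two-party BEC OT converse of Ahlswede--Csisz\'ar \cite{ot2007} conditionally yields $I(\boldsymbol{K}_U; V_B \mid U, V_E) \leq |E_E|\cdot \min\{\epsilon_1, 1-\epsilon_1\} + o(n)$. Taking expectation over $|E_E|$ (with mean $n\epsilon_2$ and sharp concentration) and combining with the previous paragraph gives $m \leq n\epsilon_2 \cdot \min\{\epsilon_1, 1-\epsilon_1\} + o(n)$, hence $r_{2P} \leq C_{2P}$.

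The main obstacle is making the conditional two-party reduction rigorous: after revealing $V_E$, one must argue that Alice's and Bob's conditional strategies still constitute a valid two-party OT protocol over a (random-length) BEC$(\epsilon_1)$, since the interactively generated transcript and their private randomness are now conditioned, and the effective blocklength is random. A cleaner alternative, which I would fall back on if the reduction becomes awkward, is to redo the two-party Ahlswede--Csisz\'ar converse manipulations with $V_E$ added as an extra conditioning variable throughout; each chain-rule step survives, and the $(1 - \epsilon_2)$ fraction of indices pinned down by $V_E$ contributes zero to the key entropy terms, yielding the $\epsilon_2$ multiplier directly.
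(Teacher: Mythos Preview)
Your approach is workable in principle but takes a genuinely different route from the paper, and the obstacle you identify is exactly where the paper's argument diverges to avoid it.

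The paper does \emph{not} condition on $V_E$ and reduce to a residual two-party problem. Instead it proves the two halves of $C_{2P}=\epsilon_2\min\{\epsilon_1,1-\epsilon_1\}$ by two separate reductions to known capacity results, each a one-line observation:
\begin{itemize}
\item For $r_{2P}\le \epsilon_1\epsilon_2$: any $2$-private OT protocol is in particular a two-party OT protocol between Alice and the merged receiver (Bob,\,Eve) over $p_{(Y,Z)\mid X}$. The Ahlswede--Csisz\'ar two-party bound $r\le \max_{p_X}H(X\mid Y,Z)$ then gives $\epsilon_1\epsilon_2$ directly.
\item For $r_{2P}\le \epsilon_2(1-\epsilon_1)$: append to the protocol the step ``Bob publicly reveals $U$''. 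Then $\boldsymbol{K}_U$ is a secret key shared by Alice and Bob against Eve (correctness gives agreement, (\ref{eqn:ach_2p_wtap_3}) gives secrecy), so the secret-key capacity bound $r\le \max_{p_X}I(X;Y\mid Z)$ from \cite{sec-key1993} yields $\epsilon_2(1-\epsilon_1)$.
\end{itemize}
Both reductions are at the level of \emph{protocols and capacities}, not information inequalities, so the issue of what survives after conditioning on the interactively-generated $\boldsymbol{\Lambda}\subset V_E$ never arises.

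Your route---condition on $V_E$, then invoke the two-party BEC converse on the length-$|E_E|$ residual channel---runs into precisely the difficulty you flag: once $\boldsymbol{\Lambda}$ is fixed there is no residual protocol to hand to the Ahlswede--Csisz\'ar converse as a black box. Your fallback (re-running each inequality of that converse with $V_E$ in the conditioning) would work, but it amounts to re-deriving, line by line, the secret-key bound and the merged-receiver equivocation bound that the paper simply cites. What the paper's approach buys is brevity and robustness (it works for any broadcast $p_{YZ|X}$, not just independent erasures); what your approach would buy, if carried through, is a self-contained argument that does not appeal to the secret-key literature.
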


\begin{proof}
We first show a general upper bound on $r_{2P}$. For the setup in Figure~\ref{fig:ot_hbc_wtap_bcast},
\[  r_{2P} \leq  \min\left \{ \max_{p_X} I(X ; Y | Z), \max_{p_X} H(X | Y, Z) \right \}. \]

It is straightforward to verify that any OT protocol for the setup in Figure~\ref{fig:ot_hbc_wtap_bcast} is a two-party OT protocol between Alice and Bob-Eve combined. Using an outerbound for OT capacity in \cite{ot2007}, we have
\[ r_{2P} \leq \max_{p_X} H(X | Y, Z).  \]

To see that $\max_{p_X} I(X ; Y | Z)$ is an upper bound on $r_{2P}$, we argue that using an OT protocol, Alice and Bob can agree on a secret key, secret from Eve, at the same rate as the OT.  Suppose we modify the OT protocol so that at the end of it, Bob reveals $U$ over the public channel. As a result, Alice learns $\boldsymbol{K}_U$. We show that this string $\boldsymbol{K}_U$ is a secret key between Alice and Bob, which Eve knows nothing about. Since Alice learns $\boldsymbol{K}_U$ and (\ref{eqn:ach_2p_wtap_0}) holds, both Alice and Bob learn $\boldsymbol{K}_U$. Further, (\ref{eqn:ach_2p_wtap_3}) implies that $(1/n) \cdot I(\boldsymbol{K}_0,\boldsymbol{K}_1,U,\boldsymbol{K}_U  ; V_E)  \longrightarrow 0$. This, in turn, implies that $(1/n) \cdot I(\boldsymbol{K}_U  ; V_E | U)  \longrightarrow 0$. Now:
\begin{align*}
\frac{1}{n}I(\boldsymbol{K}_U  ; V_E | U) & = \frac{1}{n}(I(\boldsymbol{K}_U  ; V_E, U) - I(\boldsymbol{K}_U ; U)) \\
                                          & \geq \frac{1}{n}I(\boldsymbol{K}_U ; V_E, U) - \frac{1}{n}.
\end{align*}

Hence, $(1/n) \cdot I(\boldsymbol{K}_U ; V_E, U) \longrightarrow 0$ as $n \longrightarrow \infty$. This shows that in the modified protocol, after Bob reveals $U$ at the end, Alice and Bob learn $\boldsymbol{K}_U$ and Eve learns only a vanishingly small amount of information about $\boldsymbol{K}_U$. Hence, $\boldsymbol{K}_U$ becomes a secret key between Alice and Bob, against Eve. Since $\max_{p_X} I(X ; Y | Z)$ is an upperbound on secret key capacity for the setup of Figure~\ref{fig:ot_hbc_wtap_bcast} \cite{sec-key1993}, we get:
\[ r_{2P} \leq \max_{p_X} I(X ; Y | Z). \]

Evaluating the upper bound for the setup of Figure~\ref{fig:ot_hbc_wtap}, we get:
\begin{align*}
r_{2P} & \leq \max_{p_X} I(X ; Y | Z)  = \epsilon_2  (1 - \epsilon_1) \\
r_{2P} & \leq \max_{p_X} H(X | (Y, Z)) =  \epsilon_2 \epsilon_1.
\end{align*}

As a result, $r_{2P} \leq C_{2P}$.

\end{proof}


\subsection{$1$-privacy : Converse}

As before, we show that the upper bounds hold even under weakened security conditions, where (\ref{eqn:ach_1p_wtap_1}) and (\ref{eqn:ach_1p_wtap_3}) hold with a $1/n$ multiplied to their left-hand-sides.

\begin{lemma}
If $r_{1P}$ is an achievable $1$-private rate, with honest-but-curious users, for the setup of Figure~\ref{fig:ot_hbc_wtap}, then
\[ r_{1P} \leq C_{1P}. \]
\end{lemma}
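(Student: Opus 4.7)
My plan is to prove three separate upper bounds on $r_{1P}$, namely $\epsilon_1$, $\epsilon_2/2$, and $\epsilon_2(1-\epsilon_1)$, whose minimum exactly matches the piecewise expression for $C_{1P}$ in every regime of $\epsilon_1$.  I would first get $r_{1P} \leq \epsilon_1$ for free, since (\ref{eqn:ach_1p_wtap_0})--(\ref{eqn:ach_1p_wtap_2}) by themselves describe a valid two-party OT between Alice and Bob, so the Ahlswede--Csisz\'ar converse of \cite{ot2007} gives $r_{1P} \leq \max_{p_X} H(X\mid Y) = \epsilon_1$.  Next, $r_{1P} \leq \epsilon_2(1-\epsilon_1)$ would follow by repeating the key-distillation argument used in the proof of Lemma~\ref{lem:upper_bound_2p_wtap}: I would append a step in which Bob announces $U$ publicly, after which (\ref{eqn:ach_1p_wtap_3}) implies $(1/n)\,I(\boldsymbol{K}_U; V_E, U) \to 0$, so that $\boldsymbol{K}_U$ is an Alice--Bob secret key of rate $r$ against Eve and the secret-key capacity bound $\max_{p_X} I(X; Y \mid Z) = \epsilon_2(1-\epsilon_1)$ applies.

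The main new piece is $r_{1P} \leq \epsilon_2/2$.  The plan here is to sandwich the quantity $I(\boldsymbol{K}_0, \boldsymbol{K}_1; \boldsymbol{X}, \boldsymbol{\Lambda})$ between $2m - o(n)$ from below and $n\epsilon_2 + o(n)$ from above.  For the lower side, I would apply Fano's inequality to (\ref{eqn:ach_1p_wtap_0}) separately for $U=0$ and $U=1$ to obtain $H(\boldsymbol{K}_u \mid V_B, U=u) = o(n)$; since $\boldsymbol{Y}$ is generated from $\boldsymbol{X}$ by BEC noise independent of $\boldsymbol{K}_u$, a data-processing step then yields $I(\boldsymbol{K}_u; \boldsymbol{X}, \boldsymbol{\Lambda} \mid U=u) \geq m - o(n)$.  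I would then use Alice-privacy (\ref{eqn:ach_1p_wtap_2}) to transfer the same bound to the opposite value of $U$ via KL-continuity of mutual information, and drop the conditioning on $U$ at a cost of at most $H(U) = 1$, so that $I(\boldsymbol{K}_u; \boldsymbol{X}, \boldsymbol{\Lambda}) \geq m - o(n)$ for each $u \in \{0,1\}$.  The a priori independence $\boldsymbol{K}_0 \perp \boldsymbol{K}_1$ then chains these via $I(\boldsymbol{K}_0, \boldsymbol{K}_1; \boldsymbol{X}, \boldsymbol{\Lambda}) = I(\boldsymbol{K}_0; \boldsymbol{X}, \boldsymbol{\Lambda}) + I(\boldsymbol{K}_1; \boldsymbol{X}, \boldsymbol{\Lambda} \mid \boldsymbol{K}_0) \geq 2m - o(n)$.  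For the upper side, Eve-secrecy (\ref{eqn:ach_1p_wtap_3}) gives $I(\boldsymbol{K}_0, \boldsymbol{K}_1; \boldsymbol{Z}, \boldsymbol{\Lambda}) = o(n)$, whence
\begin{equation*}
I(\boldsymbol{K}_0, \boldsymbol{K}_1; \boldsymbol{X}, \boldsymbol{\Lambda}) \;\leq\; I(\boldsymbol{K}_0, \boldsymbol{K}_1; \boldsymbol{Z}, \boldsymbol{\Lambda}) + I(\boldsymbol{K}_0, \boldsymbol{K}_1; \boldsymbol{X} \mid \boldsymbol{Z}, \boldsymbol{\Lambda}) \;\leq\; o(n) + H(\boldsymbol{X} \mid \boldsymbol{Z}),
\end{equation*}
and the elementary product-channel bound $H(\boldsymbol{X} \mid \boldsymbol{Z}) \leq \sum_i H(X_i \mid Z_i) \leq n\,\max_{p_X} H(X \mid Z) = n\epsilon_2$ closes the sandwich.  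Combining the two sides gives $2m \leq n\epsilon_2 + o(n)$ and hence $r_{1P} \leq \epsilon_2/2$.

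The hard part will be making the lower sandwich bound fully rigorous.  The data-processing step $I(\boldsymbol{K}_u; V_B \mid U=u) \leq I(\boldsymbol{K}_u; \boldsymbol{X}, \boldsymbol{\Lambda} \mid U=u)$ is delicate because $\boldsymbol{\Lambda}$ contains Bob's messages that themselves depend on $\boldsymbol{Y}$; the cleanest route is to fold Bob's private randomness (and the BEC noise producing $\boldsymbol{Y}$) into the conditioning so that $\boldsymbol{Y}$ becomes a deterministic function of $\boldsymbol{X}$ and then discard the extra conditioning at the end via the chain rule.  Transferring $I(\boldsymbol{K}_u; \boldsymbol{X}, \boldsymbol{\Lambda} \mid U=u)$ to the opposite value of $U$ also requires care, since mutual information is not continuous in general; however, $I(U; V_A) \to 0$ bounds the KL divergence between the two conditional laws of $(\boldsymbol{K}_u, \boldsymbol{X}, \boldsymbol{\Lambda})$ by $o(n)$, and on the bounded-entropy alphabet at hand this translates via Pinsker into an $o(n)$ discrepancy between the two mutual informations, which is all that the asymptotic converse needs.
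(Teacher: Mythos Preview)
Your proposal is correct and takes essentially the same approach as the paper: the same three bounds $\epsilon_1$, $\epsilon_2(1-\epsilon_1)$, $\epsilon_2/2$, with the last obtained by sandwiching $I(\boldsymbol{K}_0,\boldsymbol{K}_1;\boldsymbol{X},\boldsymbol{\Lambda})$ between $2m-o(n)$ (via Fano, the Markov identity $I(\boldsymbol{K}_0,\boldsymbol{K}_1;U,\boldsymbol{Y}\mid\boldsymbol{X},\boldsymbol{\Lambda})=0$ of Lemma~\ref{lem:cond_indep_hbc_wtap}, and Alice-privacy continuity as in Lemma~\ref{lem:lemma3_ahl_csis}) and $n\epsilon_2+o(n)$ (via Eve-secrecy). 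Your ``fold in the noise and discard'' step and your Pinsker-based transfer are re-derivations of precisely those two lemmas, and your chain-rule upper bound is a compact equivalent of the paper's explicit conditioning on Eve's erasure set.
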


\begin{proof}

We first show that $r_{1P} \leq   \min \left\{ \epsilon_1, \epsilon_2  (1 - \epsilon_1)  \right\}$ by means of the following more general statement: For the setup of Figure~\ref{fig:ot_hbc_wtap_bcast},
\begin{equation}
\label{eqn:c1p_general_outer}
 r_{1P} \leq \left \{ \max_{p_X} I(X ; Y | Z), \max_{p_X} H(X | Y) \right \}. 
\end{equation}

Proof for $r_{1P} \leq \max_{p_X} I(X ; Y | Z)$ is identical to the proof for $r_{2P} \leq \max_{p_X} I(X ; Y | Z)$ (in the proof of Lemma~\ref{lem:upper_bound_2p_wtap}).

$r_{1P} \leq \max_{p_X} H(X | Y)$ follows from the observation that any OT protocol achieving $1$-privacy in the setup of Figure~\ref{fig:ot_hbc_wtap_bcast}, is also a two-party OT protocol between two users connected by the channel $p_{Y|X}$. As a result, $\max_{p_X} H(X | Y)$, which is an upper bound on two-party OT rate \cite{ot2007}, is also an upper bound on $r_{1P}$. Evaluated for the setup of Figure~\ref{fig:ot_hbc_wtap}, these upperbounds are:
\begin{align*}
r_{1P} & \leq \max_{p_X} H(X | Y)  = \epsilon_1 \\
r_{1P} & \leq \max_{p_X} I(X ; Y | Z) = \epsilon_2(1 - \epsilon_1).
\end{align*}

We now prove that $r_{1P} \leq \epsilon_2/2$ for setup of Figure~\ref{fig:ot_hbc_wtap}, which will complete the proof of the converse. We use the following lemma (proved in Appendix \ref{appndx:proof_small_quant_1p_hbc_wtap}) which shows that both $\boldsymbol{K_0},\boldsymbol{K}_1$ can be decoded from $\boldsymbol{X}, \boldsymbol{\Lambda}$.

\begin{lemma} 
\label{lem:small_quant_1p_hbc_wtap}
\[ \frac{1}{n}H(\boldsymbol{K_0},\boldsymbol{K}_1 | \boldsymbol{X}, \boldsymbol{\Lambda} ) \longrightarrow 0 \text{ as } n \longrightarrow \infty \]
\end{lemma}

Intuitively, this lemma holds for the following reason: Alice cannot learn which of its two strings Bob requires (cf.(\ref{eqn:ach_1p_wtap_2})). Thus, anyone observing the cut separating Alice from the system, i.e. $\boldsymbol{X},\boldsymbol{\Lambda}$, must be able to decode both $\boldsymbol{K_0},\boldsymbol{K}_1$. If this were not the case and, say, $\boldsymbol{K}_0$ could not be decoded from $\boldsymbol{X},\boldsymbol{\Lambda}$, then Alice can infer that Bob wanted $\boldsymbol{K}_1$ which violates (\ref{eqn:ach_1p_wtap_2}).

We can convert this lemma into an upperbound as follows: Knowing $\boldsymbol{X},\boldsymbol{\Lambda}$, one can decode $\boldsymbol{K_0},\boldsymbol{K}_1$. Eve has access to almost all of $\boldsymbol{X},\boldsymbol{\Lambda}$, except about an $\epsilon_2$ fraction of  $\boldsymbol{X}$ which was erased by the channel. It is required that Eve learns no information about both $\boldsymbol{K_0},\boldsymbol{K}_1$. As a result, each string's length cannot exceed $n \epsilon_2/2$. More formally, let $\tilde{E} := \{ i \in \{1,2,\ldots,n\} : Z_i = \bot \}$. Let $\tilde{e}$ denote a realization of $\tilde{E}$ and let $\overline{\tilde{e}} = \{1,2,\ldots,n\} \backslash \tilde{e}$ be the complement of $\tilde{e}$. Then,
\begin{align*}
2m & =  H(\boldsymbol{K}_0,\boldsymbol{K}_1) \\
   & =  I(\boldsymbol{K}_0,\boldsymbol{K}_1 ; \boldsymbol{X},\boldsymbol{\Lambda}) + H(\boldsymbol{K}_0,\boldsymbol{K}_1 | \boldsymbol{X},\boldsymbol{\Lambda}) \\
   & \stackrel{\text{(a)}}{=}  I(\boldsymbol{K}_0,\boldsymbol{K}_1 ; \boldsymbol{X},\boldsymbol{\Lambda}) + o(n) \\
   & \stackrel{\text{(b)}}{=}  I(\boldsymbol{K}_0,\boldsymbol{K}_1 ; \boldsymbol{X},\boldsymbol{\Lambda} | \tilde{E}) + o(n) \\
   & =  \sum_{\tilde{e} \subseteq \{1,2,\ldots,n\} } p_{\tilde{E}}(\tilde{e}) I(\boldsymbol{K}_0,\boldsymbol{K}_1 ; \boldsymbol{X},\boldsymbol{\Lambda} | \tilde{E} = \tilde{e}) + o(n) \\
   & =  \sum_{\tilde{e} \subseteq \{1,2,\ldots,n\} } p_{\tilde{E}}(\tilde{e}) I(\boldsymbol{K}_0,\boldsymbol{K}_1 ; \boldsymbol{X}|_{\overline{\tilde{e}}}, \boldsymbol{\Lambda}  \;  \mathlarger{\mid}  \;    \tilde{E} = \tilde{e}) +\sum_{\tilde{e} \subseteq \{1,2,\ldots,n\} } p_{\tilde{E}}(\tilde{e}) I(\boldsymbol{K}_0,\boldsymbol{K}_1 ; \boldsymbol{X}|_{\tilde{e}} \; \mathlarger{\mid} \; \boldsymbol{X}|_{\overline{\tilde{e}}}, \boldsymbol{\Lambda}, \tilde{E} = \tilde{e}) + o(n)  \\
   & \leq   \sum_{\tilde{e} \subseteq \{1,2,\ldots,n\} } p_{\tilde{E}}(\tilde{e}) I(\boldsymbol{K}_0,\boldsymbol{K}_1 ; \boldsymbol{X}|_{\overline{\tilde{e}}}, \boldsymbol{\Lambda} \;  \mathlarger{\mid}  \; \tilde{E} = \tilde{e}) + \sum_{\tilde{e} \subseteq \{1,2,\ldots,n\} } p_{\tilde{E}}(\tilde{e}) H(\boldsymbol{X}|_{\tilde{e}} \; \mathlarger{\mid}  \; \tilde{E} = \tilde{e}) + o(n) \\
   & \leq    I(\boldsymbol{K}_0,\boldsymbol{K}_1 ; \boldsymbol{Z}, \boldsymbol{\Lambda}) +   \sum_{\tilde{e} \subseteq \{1,2,\ldots,n\} } p_{\tilde{E}}(\tilde{e}) |\tilde{e}|   + o(n) \\
 & = I(\boldsymbol{K}_0,\boldsymbol{K}_1 ; \boldsymbol{Z}, \boldsymbol{\Lambda}) +      n\epsilon_2 + o(n) \\
  & \stackrel{\text{(c)}}{=}   n\epsilon_2 + o(n)
\end{align*}
where (a) follows from Lemma~\ref{lem:small_quant_1p_hbc_wtap}, (b) from the independence of Eve's channel, and (c) from (\ref{eqn:ach_1p_wtap_3}). Therefore,
\begin{align*} 
r_{1P} & = \frac{m}{n} \\
          & \leq \frac{\epsilon_2}{2} + \frac{o(n)}{n}
\end{align*}

\end{proof}


\section{Oblivious transfer over a wiretapped channel in the malicious model : Proof of Theorem~\ref{thm:result_2p_malicious_wtap}}
\label{sec:proofs_1}

In this setup (see Figure~\ref{fig:ot_hbc_wtap}), Alice and Bob are required to achieve OT, with $2$-privacy, in the presence of Eve when both Alice and Bob can be malicious. We show that for this problem, any
\[ R < \left\{ \begin{array}{lr} \epsilon_1 \epsilon_2, & \epsilon_1 \leq \frac{1}{2}\\ \epsilon_1 \epsilon_2 (1 - \epsilon_1), & \epsilon_1 > \frac{1}{2} \end{array} \right. \]
is an achievable $2$-private rate. The protocol we give for proving the achievability of $R$ is described separately for the regimes $\epsilon_1 \leq 1/2$ and $\epsilon_1 > 1/2$, since the protocol differs substantially in these two regimes.


\subsection{Protocol when $\epsilon_1 \leq 1/2$}
\label{sec:protocol_malicious_lt}

In this regime, our protocol (Protocol~\ref{protocol:malicious_lt}) is an adaptation of the protocol described for two-party OT with malicious users in \cite{savvides_thesis}, \cite{PintoDowsMorozNasc2011} and \cite{DowsNasc2014-arxiv}. Bob forms the \emph{tuples} of indices $\boldsymbol{L}_0, \boldsymbol{L}_1$ and communicates these tuples to Alice. In contrast, recall that in the honest-but-curious case Bob communicated \emph{sets} $L_0,L_1$ to Alice. Also, unlike the honest-but-curious case, a small fraction of both tuples is now allocated for use in checks that Alice performs to detect possible malicious behavior by Bob. These checks are based on interactive hashing \cite{savvides_thesis}, which also provides a mechanism for Bob to detect possible malicious behavior by Alice. Interactive hashing is an interactive protocol between two users over a noiseless channel, one acting as a sender and the other acting as a receiver. The input to the protocol is a bit-string held by the sender. The output of the protocol are two bit-strings of the same lengths as the input bit-string, both available to the sender as well as the receiver. Of the two output strings, one is the same as the input string but the receiver cannot make out which one of the two it is. The sender, of course, knows which of the output strings was the input for the protocol but it cannot influence the choice of the other string being output by the protocol. Appendix~\ref{appndx:interactive_hashing} states the properties and describes a protocol for interactive hashing, as given in \cite{savvides_thesis}. While using interactive hashing in our protocol, Bob acts as the sender and Alice acts as the receiver. 
The following explains our protocol in more detail.

Alice initiates the protocol by transmitting a sequence $\boldsymbol{X}$ of $n$ i.i.d. Bernoulli($1/2$) bits over the channel. Bob and Eve receive the channel outputs $\boldsymbol{Y}$ and $\boldsymbol{Z}$ respectively. Let $E$ be the set of all the indices at which $\boldsymbol{Y}$ is erased and $\overline{E}$ that of all the indices at which $\boldsymbol{Y}$ is unerased. If $|E|$ or $|\overline{E}|$ are not sufficiently close to their respective expected values, Bob aborts the protocol since he does not have enough of either erasures or non-erasures to run the protocol. Bob now has to form two equal-sized, disjoint tuples of indices, $\boldsymbol{L}_U$ and $\boldsymbol{L}_{\overline{U}}$, each tuple consisting of about $n/2$ indices. The \emph{good tuple} $\boldsymbol{L}_U$ is picked uniformly at random from $\overline{E}$. To form the \emph{bad tuple} $\boldsymbol{L}_{\overline{U}}$, Bob first uniformly at random selects a subset $J \subset \{1,2,\ldots,|\{\boldsymbol{L}_{\overline{U}}\}|\}$, with $|J|$ being about $(1/2 - \epsilon_1)n$. The elements $\boldsymbol{L}_{\overline{U}}|_J$ are chosen uniformly at random from elements of $\overline{E} \backslash \{\boldsymbol{L}_U\}$ while the elements $\boldsymbol{L}_{\overline{U}}|_{J^c}$ are chosen uniformly at random from elements of $E$. Here, $J^c$ is the set $\{1,2,\ldots,|\{\boldsymbol{L}_{\overline{U}}\}|\} \backslash J$. Note that $|\overline{E}|$ will be nearly equal to $|\{\boldsymbol{L}_U\}| + |J|$. Bob reveals $\boldsymbol{L}_0,\boldsymbol{L}_1$ to Alice. Conditioned on Alice's view, $\boldsymbol{L}_0,\boldsymbol{L}_1$ are uniformly distributed tuples of indices. This uniform distribution prevents leakage of any information about $J$, and thereby $U$, to Alice, when Alice sees $\boldsymbol{L}_0,\boldsymbol{L}_1$. Thereafter, Bob and Alice initiate interactive hashing, with a bit-string $\boldsymbol{S}$ representing $J$ being the input to interactive hashing. As the output of interactive hashing, both Alice and Bob receive some pair of strings $\boldsymbol{S}_0, \boldsymbol{S}_1$ which represent some subsets $J_0,J_1$ respectively. Suppose $J_{\Phi} = J$ where the random variable $\Phi \in \{0,1\}$. In a crucial step designed to catch a malicious Bob, Bob is now required to reveal the bits of $\boldsymbol{X}$ at indices $\boldsymbol{L}_U|_{J_{\overline{\Phi}}}$ and at indices $\boldsymbol{L}_{\overline{U}}|_{J_{\Phi}}$. An honest Bob knows these bits fully. And we  prove later that if Bob cheats by populating $\boldsymbol{L}_{\overline{U}}$ with more than the designated number of elements from $\overline{E}$, Bob will fail to reveal all the bits of $\boldsymbol{X}$ at indices $\boldsymbol{L}_U|_{J_{\overline{\Phi}}}$ with high probability. As in the honest-but-curious setup, Alice forms two keys to encrypt her strings, where both the keys are secret from Eve. Towards this goal, Alice randomly and independently selects two functions $F_0,F_1$ from a universal$_2$ class of functions $\mathcal{F}$ (see Appendix~\ref{appndx:privacy_amplification} for its definition). The required keys are $F_0(\boldsymbol{X}|_{\boldsymbol{L}_0})$ and $F_1(\boldsymbol{X}|_{\boldsymbol{L}_1})$. Alice now sends $F_0, F_1, \boldsymbol{K}_0 \oplus F_0(\boldsymbol{X}|_{\boldsymbol{L}_0}), \boldsymbol{K}_1 \oplus F_1(\boldsymbol{X}|_{\boldsymbol{L}_1})$ to Bob over the public channel. An honest Bob knows $\boldsymbol{X}|_{\boldsymbol{L}_U}$ and can obtain $\boldsymbol{K}_U$. As we will show, a malicious Bob colluding with Eve, if not caught already, learns a vanishingly small amount of information about at least one of the two keys and, as a result, can learn only a vanishing amount of information about the corresponding string.

\begin{algorithm*}
\floatname{algorithm}{Protocol}
\caption{Malicious Users, $\epsilon_1 \leq 1/2$}
\label{protocol:malicious_lt}

Parameters : \begin{minipage}[t]{0.8\linewidth}
 \begin{itemize}
   \item $\gamma = (\frac{1}{2} - \epsilon_1 - \tilde{\delta})$, $\tilde{\delta} \in (-1,1)$ such that $\gamma > 0, \gamma \in \mathbb{Q}$
   \item $\beta = (\frac{1}{2} - \delta - \tilde{\delta})$, $\delta \in (0,1)$ such that $\beta > 0, \beta \in \mathbb{Q}$
   \item $\delta' \in (0,1)$ such that $(\epsilon_1 \epsilon_2 - 5\delta - 2\tilde{\delta} - \delta') > 0, (\epsilon_1 \epsilon_2 - 5\delta - 2\tilde{\delta} - \delta') \in \mathbb{Q}$
   \item $\beta n, \gamma n, (\epsilon_1 \epsilon_2 - 5\delta - 2\tilde{\delta} - \delta')n \in \mathbb{N}$
   \item The rate\footnotemark of the protocol is $(\epsilon_1 \epsilon_2 - 5\delta - 2\tilde{\delta} - \delta')$
   \item $\mathcal{I} = \{1,2,\ldots,\beta n\}$
   \item $m = \left \lceil \log (\comb{\beta n}{\gamma n}) \right \rceil$ 
   \item $\mathcal{T} = \{ (A) : A \subset \mathcal{I}, |A| = \gamma n\}$
   \item $Q : \{0,1\}^m \longrightarrow\mathcal{T}$ is an onto map
 \end{itemize}
\end{minipage}

\begin{multicols}{2}
\begin{algorithmic}[1]

\STATE Alice transmits an $n$-tuple $\boldsymbol{X}$ of i.i.d. Bernoulli($1/2$) bits over the channel.

\STATE \label{step:abort0_0} Bob receives the $n$-tuple $\boldsymbol{Y}$ from BEC($\epsilon_1$). Bob forms the sets
\begin{align*}
\overline{E} & := \{ i \in \{1,2,\ldots,n\}: Y_i \neq \bot\} \\
E & := \{ i \in \{1,2,\ldots,n\}: Y_i = \bot\}
\end{align*}

If $|\overline{E}| < (\beta + \gamma)n$ or $|E| < (\beta - \gamma)n$, Bob aborts the protocol.

\STATE Bob chooses a bit-string $\boldsymbol{S} \thicksim \text{Unif}(\{0,1\}^m)$. Let  $\boldsymbol{J} = Q(\boldsymbol{S})$ and $\boldsymbol{J}^c = (\mathcal{I} \backslash \{\boldsymbol{J}\})$.
Bob forms the tuples $\boldsymbol{L}_U, \boldsymbol{L}_{\overline{U}} \in \{1,2,\ldots,n\}^{\beta n}$ as follows:
\begin{align*}
L_{U,i} & \thicksim \text{Unif}\{ \overline{E} \backslash \{\boldsymbol{L}^{i-1}_U\} \} \\
L_{\overline{U},J_i} & \thicksim \text{Unif}\{ \overline{E} \backslash \{ \{\boldsymbol{L}_U\} \cup \{ \boldsymbol{L}_{\overline{U}}|_{\boldsymbol{J}^{i-1}} \} \} \} \\ L_{\overline{U},J^c_i} & \thicksim \text{Unif}\{ E \backslash \{ \boldsymbol{L}_{\overline{U}}|_{\boldsymbol{J}^{c,i-1}} \} \}
\end{align*}

Bob reveals the tuples $\boldsymbol{L}_0,\boldsymbol{L}_1$ over the public channel.

\STATE \label{step:abort0_1} Alice checks to see that $\{ \boldsymbol{L}_0\} \cap \{\boldsymbol{L}_1\} = \emptyset$, otherwise Alice aborts the protocol.

\STATE Bob and Alice initiate interactive hashing, with the input to interactive hashing being the string $\boldsymbol{S}$ held by Bob.  Interactive hashing returns strings $\boldsymbol{S}_0, \boldsymbol{S}_1 \in \{0,1\}^m$, which are available to both Alice and Bob. Let $\Phi \in \{0,1\}$ such that $\boldsymbol{S}_{\Phi} = \boldsymbol{S}$. Let 
\begin{align*}
\boldsymbol{J}_0 & = Q(\boldsymbol{S}_0) \\
\boldsymbol{J}_1 & = Q(\boldsymbol{S}_1)
\end{align*}

\STATE Bob now sends the following to Alice over the public channel: $\Theta = \Phi \oplus U$, $\boldsymbol{Y}|_{\boldsymbol{L}_0|_{ \boldsymbol{J}_{\overline{\Theta}} }  }$, $\boldsymbol{Y}|_{\boldsymbol{L}_1|_{ \boldsymbol{J}_{\Theta} } }$.

\STATE \label{step:abort0_2} Alice checks that $\boldsymbol{Y}|_{\boldsymbol{L}_0|_{ \boldsymbol{J}_{\overline{\Theta}} }  }$ completely matches $\boldsymbol{X}|_{\boldsymbol{L}_0|_{ \boldsymbol{J}_{\overline{\Theta}} }  }$ and $\boldsymbol{Y}|_{\boldsymbol{L}_1|_{ \boldsymbol{J}_{\Theta} } }$ completely matches $\boldsymbol{X}|_{\boldsymbol{L}_1|_{ \boldsymbol{J}_{\Theta} } }$, otherwise Alice aborts the protocol.

\STATE Alice chooses functions $F_0,F_1$, randomly and independently, from a family $\mathcal{F}$ of universal$_2$ hash functions :
\[ F_0,F_1: \{0,1\}^{\beta n} \longrightarrow \{0,1\}^{(\epsilon_1 \epsilon_2 - 5\delta - 2\tilde{\delta} - \delta')n } \]

Alice finally send the following to Bob over the public channel: 
\[ F_0, \; F_1, \; \boldsymbol{K}_0 \oplus F_0(\boldsymbol{X}|_{\boldsymbol{L}_0}), \; \boldsymbol{K}_1 \oplus F_1(\boldsymbol{X}|_{\boldsymbol{L}_1}) \]

\STATE Bob knows $F_U$ and $\boldsymbol{Y}|_{\boldsymbol{L}_U}$ and can, therefore, recover $\boldsymbol{K}_U$.

\end{algorithmic}
\end{multicols}
\end{algorithm*}

\footnotetext{The parameters $\delta, |\tilde{\delta}|, \delta'$ can be chosen to be arbitrarily small so that this rate can take any desired value less than $\epsilon_1 \epsilon_2$. Note that when $\epsilon_1 = 1/2$, we need $\tilde{\delta} < 0$ and we can choose $\delta = -2 \tilde{\delta}$. For $\epsilon_1 < 1/2$, we choose $\tilde{\delta} > 0$.}


\subsection{Protocol when $\epsilon_1 > 1/2$}
\label{sec:protocol_malicious_geq}

Our protocol for this regime (Protocol~\ref{protocol:malicious_gt}) is the main novelty in this section. The above approach, where Bob gets to choose both the tuples of indices $\boldsymbol{L}_0,\boldsymbol{L}_1$, does not work in this regime. To see why this is the case, consider the setup with $\epsilon_1 = 2/3, \epsilon_2 = 1/2$. In this setup, $\boldsymbol{Y}$ is unerased at about $n/3$ indices. An honest Bob, therefore, will construct the tuples $\boldsymbol{L}_0, \boldsymbol{L}_1$ with each tuple consisting of about $n/3$ indices approximately. The good tuple $\boldsymbol{L}_U$ will have nearly all the unerased indices in $\boldsymbol{Y}$. A malicious Bob who wishes to remain undetected by Alice will also, hence, form tuples having about $n/3$ indices each. However, a malicious Bob colluding with Eve has access to about $2n/3$ indices at which he knows the bits transmitted by Alice. As a result, malicious Bob can form the two tuples $\boldsymbol{L}_0, \boldsymbol{L}_1$ consisting only of those indices at which he knows the bits transmitted by Alice. In such a situation, Bob will pass any check that Alice may impose, without getting caught, and will get to learn both of Alice's strings. At the root of this problem is Bob's ability to choose both tuples $\boldsymbol{L}_0,\boldsymbol{L}_1$. Our protocol takes away this ability from Bob, allowing Bob to form only one of the tuples, \emph{with the other tuple being provided to Bob by interactive hashing}. Thus, interactive hashing is used to output tuples using which the keys that encrypt Alice's strings are derived. We describe the protocol in more detail below.

The steps in this protocol are the same as for the protocol in the regime $\epsilon_1 \leq 1/2$, upto and including the formation of sets $E, \overline{E}$ by Bob. There are two main differences thereafter. Firstly, $L_0, L_1$ are now random sets, not random tuples. Secondly, Bob is allowed to construct only the \emph{good set} $L_U$, not the \emph{bad set} $L_{\overline{U}}$. The set $L_{\overline{U}}$ is obtained as an output of interactive hashing when interactive hashing is invoked with the bit-string representing $L_U$ as its input. Specifically, suppose the sets $L_U, L_{\overline{U}}$ are required to be of cardinality $\beta n$, where $0 < \beta < 1/2$. Let $m$ be the smallest integer required so that each $\beta n$-sized subset of $\{1,2,\ldots,n\}$ can be represented with a unique $m$-bit string. Bob selects one of these $m$-bit strings, say $\boldsymbol{S}$, to represent its choice of $L_U$. Of course, the choice of the string $\boldsymbol{S}$ should be such that $L_U \subset \overline{E}$. Alice and Bob now initiate interactive hashing. Bob holds $\boldsymbol{S}$ as the input to interactive hashing and both Alice and Bob receive as outputs some strings $\boldsymbol{S}_0, \boldsymbol{S}_1$, one of which is the same as $\boldsymbol{S}$. The strings $\boldsymbol{S}_0, \boldsymbol{S}_1$ identify subsets $L_0,L_1$ one of which is $L_U$ and the other is used as $L_{\overline{U}}$. The key property used to guarantee privacy against malicious Bob is the following: The sets $L_0,L_1$ cannot simultaneously have more than a threshold number each of indices at which either $\boldsymbol{Y}$ or $\boldsymbol{Z}$ or both are unerased. That is, at least one of $L_0, L_1$  has fewer than this threshold number of such indices. And our protocol \emph{effectively removes} that threshold number of such indices each from $L_0,L_1$. This \emph{removal} happens in two steps: in the first step, Bob is asked to reveal bits $\boldsymbol{X}|_{L_0 \cap L_1}$ as a check by Alice and indices $L_0 \cap L_1$ are not used thereafter. In the second step, sufficient privacy amplification is used over the bits $\boldsymbol{X}|_{L_0 \backslash L_0 \cap L_1}$ and $\boldsymbol{X}|_{L_1 \backslash L_0 \cap L_1}$, with the outputs of privacy amplification used as the keys to encrypt Alice's strings. This two-step process guarantees that a malicious Bob, colluding with Eve, can gain only a vanishingly small amount of information about at least one of the keys. Privacy against a malicious Alice, colluding with Eve, is based on the fact that Alice cannot make out which of the strings $\boldsymbol{S}_0,\boldsymbol{S}_1$ was the input string $\boldsymbol{S}$. Specifically, suppose $\Phi$ is a binary random variable such that $\boldsymbol{S}_{\Phi} = \boldsymbol{S}$. Then, conditioned on the combined views of Alice and Eve, $\Phi$ is uniformly distributed. Bob, who knows $\Phi$, uses $\Phi$ to mask any leakage of information about $U$ to a malicious Alice. Note that unlike Protocol~\ref{protocol:malicious_lt}, there is no $J$ used in the current protocol. Thus, only random sets of indices $L_0,L_1$, as opposed to random tuples of indices, suffice to help prevent leaking $U$ to Alice.

\begin{algorithm*}
\floatname{algorithm}{Protocol}
\caption{Malicious Users, $\epsilon_1 > 1/2$}
\label{protocol:malicious_gt}

Parameters : \begin{minipage}[t]{0.8\linewidth}
 \begin{itemize}
   \item $\beta \in [0, 1-\epsilon_1) \cap \mathbb{Q}$, $H(\beta) \in \mathbb{R} \backslash \mathbb{Q}$. Such a $\beta$ value, arbitrarily close to $(1 - \epsilon_1)$, exists as a consequence of Lemma~\ref{lem:desired_beta_hbeta}.
   \item $\delta = (1 - \epsilon_1 - \beta)$
   \item $\delta' \in (0,1)$ is such that $(\epsilon_1\epsilon_2 - 3 \delta - \delta') > 0$, $(\epsilon_1\epsilon_2 - 3 \delta - \delta') \in \mathbb{Q}$
    \item $\beta n, \beta n(\epsilon_1\epsilon_2 - 3 \delta - \delta') \in \mathbb{N}$, $<\log \comb{n}{\beta n}> \; \longrightarrow 1$ as $n \longrightarrow \infty$. Such a sequence of natural numbers is possible using Lemma~\ref{lem:log_dense}
   \item The rate\footnotemark of the protocol is $\beta (\epsilon_1\epsilon_2 - 3 \delta - \delta')$
   \item $\mathcal{I} = \{1,2,\ldots,n\}$
   \item $m = \lceil \log (\comb{n}{\beta n}) \rceil$
   \item $\mathcal{T} = \{A \subset \mathcal{I}: |A| = \beta n\}$
   \item $\mathcal{B} \subset \{0,1\}^m$ such that $|\mathcal{B}| = |\mathcal{T}|$, $\mathcal{B}^c = \{0,1\}^m \backslash \mathcal{B}$
   \item $Q : \mathcal{B} \longrightarrow \mathcal{T}$ is a bijective map
 \end{itemize}
\end{minipage}

\begin{multicols}{2}
\begin{algorithmic}[1]

\STATE Alice transmits an $n$-tuple $\boldsymbol{X}$ of i.i.d. Bernoulli($1/2$) bits over the channel.

\STATE \label{step:abort1_0} Bob receives the $n$-tuple $\boldsymbol{Y}$ from BEC($\epsilon_1$). Bob forms the sets
\begin{align*}
\overline{E} & := \{ i \in \{1,2,\ldots,n\}: Y_i \neq \bot\} \\
E & := \{ i \in \{1,2,\ldots,n\}: Y_i = \bot\}
\end{align*}

If $|\overline{E}| < \beta n$, Bob aborts the protocol.

\STATE Bob defines the collection of \emph{good} sets as:
\[ \mathcal{T}_G := \{ G \subset \overline{E} : |G| = \beta n \} \]

Let $\mathcal{B}_G = Q^{-1}(\mathcal{T}_G)$, where 
\[  Q^{-1}(\mathcal{T}_G) := \{  b \in \mathcal{B}: Q(b) \in \mathcal{T}_G \} \]

\STATE Bob picks a $m$-bit string $\boldsymbol{S} \in \mathcal{B}_G \cup \mathcal{B}^c$ as follows:
\[ P[\boldsymbol{S} = \boldsymbol{s}] = \left\{  \begin{array}{ll} \frac{1}{|\mathcal{B}_G|} \cdot \frac{|\mathcal{B}|}{2^m}, & \boldsymbol{s} \in \mathcal{B}_G \\ \frac{1}{|\mathcal{B}^c|} \cdot \left(1 - \frac{|\mathcal{B}|}{2^m} \right), & \boldsymbol{s} \in \mathcal{B}^c \\ 0, & \text{otherwise} \end{array} \right.  \]

Note that the channel acts independently on each bit transmitted by Alice and Alice does not know the erasure pattern seen by Bob. Thus, conditioned on Alice's view, $\boldsymbol{S}$ is uniform over all $m$-bit strings.

\STATE \label{step:abort1_1} Bob and Alice initiate interactive hashing with the input to interactive hashing being the string $\boldsymbol{S}$ held by Bob. As a result, both Alice and Bob receive $\boldsymbol{S}_0,\boldsymbol{S}_1 \in \{0,1\}^m$ as the output of interactive hashing. If either $\bold{S}_0 \in \mathcal{B}^c$ or $\boldsymbol{S}_1 \in \mathcal{B}^c$, then Alice and Bob abort the protocol. Otherwise, let $\Phi \in \{0,1\}$ such that $\boldsymbol{S}_{\Phi} = \boldsymbol{S}$ and let
\begin{align*}
L_0 & = Q(\boldsymbol{S}_0) \\
L_1 & = Q(\boldsymbol{S}_1)
\end{align*}

\STATE \label{step:abort1_2} If we have  
\[ \beta - \delta \leq \frac{1}{\beta n}|L_0 \cap L_1| \leq \beta + \delta \] 
then the protocol moves to the next step. Otherwise, Alice and Bob abort the protocol.

\STATE \label{step:reveal_common_part1} Bob reveals $\Theta=\Phi \oplus U$ and $\boldsymbol{Y}|_{L_0 \cap L_1}$ over the public channel.

\STATE \label{step:abort1_3} Alice checks to see that $\boldsymbol{Y}|_{L_0 \cap L_1}$ exactly matches $\boldsymbol{X}|_{L_0 \cap L_1}$, otherwise Alice aborts the protocol. 

\STATE Alice randomly and independently selects functions $F_0,F_1$ from a family $\mathcal{F}$ of universal$_2$ hash functions:
\[ F_0,F_1: \{0,1\}^{\beta n - |L_0 \cap L_1|} \longrightarrow \{0,1\}^{\beta n (\epsilon_1\epsilon_2 - 3 \delta - \delta')} \]
Alice finally sends the following information over the public channel:
\[ F_0, \; F_1, \; \boldsymbol{K}_0 \oplus F_0(\boldsymbol{X}|_{L_{\Theta} \backslash L_0 \cap L_1}), \; \boldsymbol{K}_1 \oplus F_1(\boldsymbol{X}|_{L_{\overline{\Theta}} \backslash L_0 \cap L_1}) \]

\STATE Bob knows $F_U$ and $\boldsymbol{Y}|_{L_{\Phi} \backslash L_0 \cap L_1}$ and can, therefore, recover $\boldsymbol{K}_U$.

\end{algorithmic}
\end{multicols}
\end{algorithm*}

\footnotetext{The $\beta$ can be chosen arbitrarily close to $(1 - \epsilon_1)$ and the $\delta'$ can be chosen to be arbitrarily small so that this rate can take any desired value less than $\epsilon_1 \epsilon_2 (1 - \epsilon_1)$.}

We prove the following lemma which, essentially, proves Theorem~\ref{thm:result_2p_malicious_wtap}.

\begin{lemma}
\label{lem:ach_rate_malicious_wtap}

Let $R < \left\{ \begin{array}{lr} \epsilon_1 \epsilon_2, & \epsilon_1 \leq \frac{1}{2}\\ \epsilon_1 \epsilon_2 (1 - \epsilon_1), & \epsilon_1 > \frac{1}{2} \end{array} \right\}$. Then, there exists a sequence of protocols $(\mathcal{P}_n)_{n \in \mathbb{N}}$, with corresponding rates $(r_n)_{n \in \mathbb{N}}$ such that $r_n \longrightarrow R$ and:

\begin{enumerate}[(a)]
\item When Alice and Bob are both honest, $\mathcal{P}_n$ aborts with vanishing probability and (\ref{eqn:ach_2p_wtap_0})-(\ref{eqn:ach_2p_wtap_3}) are satisfied for $(\mathcal{P}_n)_{n \in \mathbb{N}}$, as $n \longrightarrow \infty$. \label{lem:honest_rate_malicious_wtap}

\item When Alice is malicious and colludes with Eve and Bob is honest, let $V_n$ be the final view of a malicious Alice colluding with Eve at the end of $\mathcal{P}_n$. Then, $I(U ; V_n) \longrightarrow 0$ for $(\mathcal{P}_n)_{n \in \mathbb{N}}$, as $n \longrightarrow \infty$. \label{lem:malicious_alice_lemma}

\item When Alice is honest and Bob is malicious and colludes with Eve, let $V_n$ be the final view of a malicious Bob colluding with Eve at the end of $\mathcal{P}_n$. Then, $\min \{ I(\boldsymbol{K}_0 ; V_n),  I(\boldsymbol{K}_1 ; V_n)\} \longrightarrow 0$ as $n \longrightarrow \infty$. \label{lem:malicious_bob_lemma}

\end{enumerate}
\end{lemma}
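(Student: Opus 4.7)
The plan is to prove Lemma~\ref{lem:ach_rate_malicious_wtap} separately in the two regimes (Protocol~\ref{protocol:malicious_lt} for $\epsilon_1 \leq 1/2$ and Protocol~\ref{protocol:malicious_gt} for $\epsilon_1 > 1/2$), choosing the small parameters $\delta, \tilde{\delta}, \delta'$ (and $\beta \to 1 - \epsilon_1$ in the second regime) so that the rate $r_n$ approaches any given $R$ below the target. Each of the three statements then becomes a separate argument.

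Part (\ref{lem:honest_rate_malicious_wtap}), the honest case, is largely routine. The abort steps \ref{step:abort0_0} and \ref{step:abort1_0} fail with exponentially small probability by a Chernoff bound on $|E|$, since the thresholds are chosen strictly inside the respective expectations. The remaining aborts (tuple disjointness, the bound on $|L_0 \cap L_1|$, and the consistency checks on revealed bits) cannot trigger for honest Bob by construction, because honest Bob only reveals $\boldsymbol{Y}$-bits at indices in $\overline{E}$, where $\boldsymbol{Y} = \boldsymbol{X}$. Correctness and conditions (\ref{eqn:ach_2p_wtap_0})--(\ref{eqn:ach_2p_wtap_3}) then follow exactly as in the honest-but-curious proof sketch of Lemma~\ref{lem:c2p_ach_wtap}: the universal$_2$ hash functions $F_0, F_1$ produce outputs strictly shorter than the conditional entropy of their inputs given Bob-Eve's (or Eve's alone) view, so privacy amplification drives the relevant mutual informations to zero.

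For part (\ref{lem:malicious_alice_lemma}), a malicious Alice-Eve coalition sees only Bob's public messages beyond its own randomness and channel output. The construction of $\boldsymbol{L}_0, \boldsymbol{L}_1$ in Protocol~\ref{protocol:malicious_lt} and of $\boldsymbol{S}$ in Protocol~\ref{protocol:malicious_gt} is deliberately symmetric in $U$, so that, conditioned on Alice-Eve's view, these messages are uniformly distributed over the appropriate ambient collection, exactly as in the honest-but-curious protocol. The hiding property of interactive hashing (Appendix~\ref{appndx:interactive_hashing}) then guarantees that $\Phi$ is uniform on $\{0,1\}$ conditioned on Alice-Eve's view at the end of hashing, so the bit $\Theta = \Phi \oplus U$ leaks nothing about $U$. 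The revealed $\boldsymbol{Y}$-bits are at indices from $\overline{E}$ and therefore coincide with the corresponding $\boldsymbol{X}$-bits that Alice already knows. Combining these gives $I(U; V_n) \to 0$.

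The main obstacle is part (\ref{lem:malicious_bob_lemma}), where malicious Bob and Eve jointly try to learn both strings. Let $\mathcal{K}$ be the random set of indices at which Bob-Eve jointly know $X_i$; by a Chernoff bound, $|\mathcal{K}| \leq (1 - \epsilon_1 \epsilon_2 + \delta)n$ except on a negligible event. I would then invoke the security property of interactive hashing from Appendix~\ref{appndx:interactive_hashing}: for any set $\mathcal{D} \subseteq \{0,1\}^m$ with $|\mathcal{D}|/2^m = 2^{-\Omega(n)}$, the probability that both outputs of hashing lie in $\mathcal{D}$ is $o(1)$. In Protocol~\ref{protocol:malicious_gt}, take $\mathcal{D}$ to be those $\boldsymbol{s} \in \mathcal{B}$ for which $Q(\boldsymbol{s})$ contains fewer than $(\epsilon_1 \epsilon_2 - \delta')\beta n$ indices outside $\mathcal{K}$; a standard binomial-tail estimate shows $|\mathcal{D}|/|\mathcal{B}|$ is exponentially small, so with high probability at least one of $L_0, L_1$ retains $\gtrsim (\epsilon_1 \epsilon_2 - \delta')\beta n$ indices outside $\mathcal{K}$. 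The check in step~\ref{step:abort1_3} forces $L_0 \cap L_1 \subseteq \mathcal{K}$ up to vanishing error, and hence removing $L_0 \cap L_1$ preserves this lower bound for at least one of $L_\Theta \setminus (L_0 \cap L_1)$ and $L_{\overline{\Theta}} \setminus (L_0 \cap L_1)$; privacy amplification with output length $\beta n(\epsilon_1 \epsilon_2 - 3\delta - \delta')$ then concludes the argument. An analogous argument handles Protocol~\ref{protocol:malicious_lt}: $\mathcal{D}$ is taken to be those $\boldsymbol{S}$-strings $\boldsymbol{s}$ for which passing the step~\ref{step:abort0_2} check would require Bob to correctly guess too many unknown $X$-bits; again $|\mathcal{D}|/2^m = 2^{-\Omega(n)}$, so by the hashing security bound, at least one of the two resulting hashing branches forces Bob to fail the check (or, equivalently, to retain nontrivial conditional entropy on one of the two keys) with overwhelming probability, and a final invocation of privacy amplification finishes the proof.
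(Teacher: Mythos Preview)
Your overall plan matches the paper's, and parts (\ref{lem:malicious_alice_lemma}) and (\ref{lem:malicious_bob_lemma}) for $\epsilon_1>1/2$ are essentially the paper's arguments. There are, however, two genuine gaps.

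\textbf{Part (\ref{lem:malicious_bob_lemma}), $\epsilon_1\le 1/2$: the case split is missing.} Your claim that $|\mathcal{D}|/2^m=2^{-\Omega(n)}$ is not true in general. The set $\mathcal{D}$ you describe (strings $\boldsymbol{s}$ for which passing step~\ref{step:abort0_2} would \emph{not} force Bob to guess many bits) has exponentially small density only when \emph{both} $\boldsymbol{L}_0$ and $\boldsymbol{L}_1$ already contain $\Omega(n)$ erasures in $\boldsymbol{\Psi}$. A malicious Bob may instead choose $\boldsymbol{L}_0$ to be almost entirely inside $\mathcal{K}$ (so $\#_e(\boldsymbol{\Psi}|_{\boldsymbol{L}_0})<\delta n$); then for essentially every $\boldsymbol{J}$ the string $\boldsymbol{\Psi}|_{\boldsymbol{L}_0|_{\boldsymbol{J}}}$ has few erasures, $\mathcal{D}$ can be a constant fraction of $\{0,1\}^m$, and Property~\ref{prop:ih_4} gives nothing. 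The paper handles this by an explicit two-case argument: if $\#_e(\boldsymbol{\Psi}|_{\boldsymbol{L}_0})<\delta n$ (or symmetrically for $\boldsymbol{L}_1$), a counting argument forces $\#_e(\boldsymbol{\Psi}|_{\boldsymbol{L}_1})\ge(\epsilon_1\epsilon_2-4\delta-2\tilde\delta)n$, and privacy amplification on $\boldsymbol{X}|_{\boldsymbol{L}_1}$ hides $\boldsymbol{K}_1$ directly---Bob may well pass the check in this case, but he still learns only one string. Only in the complementary case (both $\#_e(\boldsymbol{\Psi}|_{\boldsymbol{L}_i})\ge\delta n$) does one define $\mathcal{T}_e=\{\boldsymbol{a}:\#_e(\boldsymbol{\Psi}|_{\boldsymbol{L}_0|_{\boldsymbol{a}}})<\gamma\delta n\text{ or }\#_e(\boldsymbol{\Psi}|_{\boldsymbol{L}_1|_{\boldsymbol{a}}})<\gamma\delta n\}$, show $|\mathcal{T}_e|/|\mathcal{T}|\le 2e^{-2\gamma n\delta^2}$ via Hoeffding, and invoke Property~\ref{prop:ih_4} to conclude that whichever $\boldsymbol{J}_i$ Bob does not control forces him to reveal $\ge\gamma\delta n$ bits erased in $\boldsymbol{\Psi}$, regardless of $\Theta$. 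Your parenthetical ``or, equivalently, to retain nontrivial conditional entropy'' hints at the right disjunction, but the two halves require different arguments and you have supplied only one.

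\textbf{Part (\ref{lem:honest_rate_malicious_wtap}), $\epsilon_1>1/2$: two aborts are not ``by construction''.} Step~\ref{step:abort1_1} aborts when $\boldsymbol{S}\in\mathcal{B}^c$ or $\boldsymbol{S}_{\overline\Phi}\in\mathcal{B}^c$; since $m=\lceil\log\binom{n}{\beta n}\rceil$, the gap $|\mathcal{B}^c|/2^m$ need not vanish for arbitrary $n$, and the paper uses the number-theoretic Lemma~\ref{lem:log_dense} (requiring $H(\beta)$ irrational and a careful choice of the subsequence of $n$'s) to force $\langle\log\binom{n}{\beta n}\rangle\to 1$. Step~\ref{step:abort1_2} aborts when $|L_0\cap L_1|/\beta n\notin[\beta-\delta,\beta+\delta]$; this is a concentration statement about the overlap of $L_\Phi$ with a nearly uniform $L_{\overline\Phi}$ (Lemma~\ref{lem:overlap_within_bounds_corr}), not a structural fact. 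Neither is difficult, but neither is ``by construction''.
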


This lemma is proved in Appendix~\ref{appndx:malicious_wtap_all_proofs}. A short sketch of its proof now follows. The protocol sequence $(\mathcal{P}_n)_{n \in \mathbb{N}}$ we consider is a sequence of Protocol~\ref{protocol:malicious_lt} instances when $\epsilon_1 \leq 1/2$ and of Protocol~\ref{protocol:malicious_gt} instances otherwise.

\begin{itemize}
\item In the statement of Lemma~\ref{lem:ach_rate_malicious_wtap}(\ref{lem:honest_rate_malicious_wtap}), Alice and Bob are assumed to be honest. When that is the case, we show that the numerous checks in $\mathcal{P}_n$ pass with high probability. The arguments showing that the checks pass w.h.p. use the Chernoff's bound, properties of interactive hashing or disjoint construction of sets/tuples depending on the particular check. Since all checks pass with high probability, effectively, these checks cease to matter in $\mathcal{P}_n$. We argue that in this case, $\mathcal{P}_n$ is essentially the same as Protocol~\ref{protocol:C2P} and $r_n \longrightarrow R$ as $n \longrightarrow \infty$. Specifically, in $\mathcal{P}_n$, just like in Protocol~\ref{protocol:C2P}, Alice creates two keys both secret from Eve and only one known to Bob. Alice uses these keys to encrypt her strings. As shown previously for Protocol~\ref{protocol:C2P}, such keys are sufficient for the protocol sequence to satisfy (\ref{eqn:ach_2p_wtap_0})-(\ref{eqn:ach_2p_wtap_3}) in this setup with honest-but-curious users. 

\item In the statement of Lemma~\ref{lem:ach_rate_malicious_wtap}(\ref{lem:malicious_alice_lemma}), it is assumed that Bob is honest and Alice is malicious and possibly colludes with Eve. The goal in $\mathcal{P}_n$ is to prevent such an Alice from learning $U$. In $\mathcal{P}_n$, Bob and Alice initiate interactive hashing where Bob holds an input for interactive hashing and the two outputs of interactive hashing are received by both Alice and Bob. One of these outputs is the same as the input held by Bob. The guarantee is that Alice cannot make out which of the two outputs is the one Bob held as input to interactive hashing. Specifically, suppose $\Phi$ is a binary random variable indicating which of the two outputs was the input to interactive hashing. Then, we show that conditioned on the combined views of Alice and Eve, $\Phi$ is uniformly distributed. Bob, of course, knows $\Phi$ and uses it to mask any leakage of information about $U$. As a result, Alice and Eve together cannot learn anything about $U$. 

\item In the statement of Lemma~\ref{lem:ach_rate_malicious_wtap}(\ref{lem:malicious_bob_lemma}), it is assumed that Alice is honest and Bob is malicious and possibly colludes with Eve. The goal in $\mathcal{P}_n$ is to prevent such a Bob from learning a non-negligible amount of information about both of Alice's strings. When $\epsilon_1 \leq 1/2$, a malicious Bob may swap some of the \emph{good} indices (unerased in $\boldsymbol{Y}$ or $\boldsymbol{Z}$ or both) from $\boldsymbol{L}_U$ with the \emph{bad} indices (erased in both  $\boldsymbol{Y}$ and  $\boldsymbol{Z}$) of $\boldsymbol{L}_{\overline{U}}$, to gain information about both of Alice's strings. This leaves both $\boldsymbol{L}_U, \boldsymbol{L}_{\overline{U}}$ with a large number of \emph{bad} indices. After Bob and Alice complete the interactive hashing, both of them receive as outputs some subsets $J_{\Phi}, J_{\overline{\Phi}}$ where $J_{\Phi}$ is the same as the subset Bob held as the input to interactive hashing. In a check imposed by Alice, Bob is asked to reveal the bits of $\boldsymbol{X}$ at indices $\boldsymbol{L}_U|_{J_{\overline{\Phi}}}, \boldsymbol{L}_{\overline{U}}|_{J_{\Phi}}$. An honest Bob knows the required bits, by design in $\mathcal{P}_n$. However, this check creates a problem for malicious Bob for the following reason. By a property of interactive hashing, Bob cannot influence the choice of $J_{\overline{\Phi}}$. If Bob has behaved maliciously, we show that w.h.p several of the indices in $\boldsymbol{L}_U|_{J_{\overline{\Phi}}}$ will be  the \emph{bad} indices of $\boldsymbol{L}_U$. As a result, w.h.p. malicious Bob cannot reveal all the bits sent by Alice at the indices $\boldsymbol{L}_U|_{J_{\overline{\Phi}}}$ and, therefore, fails this check. When $\epsilon_1 > 1/2$, both Alice and Bob receive the sets $L_0, L_1$ as the output of interactive hashing. Interactive hashing guarantees that w.h.p. at least one of the sets $L_0,L_1$ has fewer than a threshold number of \emph{good} indices. Through a two-step process, this threshold number of \emph{good} indices are effectively removed from both $L_0,L_1$. As a result, at least one out of $L_0,L_1$ has, effectively, no \emph{good} indices left at the end. Thus, at least one of keys created by Alice will be unknown to a malicious Bob. Consequently, malicious Bob cannot gain any information about at least one of Alice's strings. 
\end{itemize}


\section{Independent Oblivious Transfers over a broadcast channel : Proof of Theorem~\ref{thm:result_2p_hbc_indep}}
\label{sec:proofs_2}

There are three users Alice, Bob and Cathy in this setup (see Figure~\ref{fig:ot_hbc_indep}). The goal is to achieve independent OTs, with $2$-privacy, between Alice-Bob and Alice-Cathy. Specifically, we show that the rate-region $\mathcal{R}$ of independent pairs of OTs, with $2$-privacy, for honest-but-curious Alice, Bob and Cathy is such that
\[  \mathcal{R}_{\text{inner}} \subseteq \mathcal{R} \subseteq \mathcal{R}_{\text{outer}}  \]

where 
\begin{align*}
\mathcal{R}_{\text{inner}} = \Big\{ (R_B , R_C ) & \in \mathbb{R}_+^2 : 
 R_B \leq \epsilon_2  \min \{ \epsilon_1, 1 - \epsilon_1\},\\
 R_C &\leq \epsilon_1  \min \{ \epsilon_2, 1 - \epsilon_2\}, \\
R_B + R_C  &\leq  \epsilon_2\cdot \min\{\epsilon_1, 1 - \epsilon_1\}  \\ & \quad + \epsilon_1\cdot \min\{\epsilon_2, 1 - \epsilon_2\} \\
& \quad - \min\{\epsilon_1, 1 - \epsilon_1\} \cdot \min\{\epsilon_2, 1 - \epsilon_2\} \Big\}.
\end{align*}

and 
\begin{align*}
\mathcal{R}_{\text{outer}} = \Big\{
(R_B,R_C) \in \mathbb{R}_+^2 : R_B &\leq  \epsilon_2 \cdot \min \{ \epsilon_1, 1 - \epsilon_1\}, \\
                                 R_C &\leq  \epsilon_1 \cdot \min \{ \epsilon_2, 1 - \epsilon_2\}, \\
                           R_B + R_C &\leq \epsilon_1 \cdot \epsilon_2 \Big\}.
\end{align*}


\subsection{Proof of inner bound : $\mathcal{R}_{\text{inner}} \subseteq \mathcal{R}$}

It suffices to show that any rate pair $(r_B,r_C)$, with $r_B < C_{2P}$ and $r_C < \max\{0, (2\epsilon_1 - 1) \cdot \min\{\epsilon_2, 1 - \epsilon_2\} \}$, is an achievable $2$-private rate-pair. An analogous argument, with the roles of Bob and Cathy reversed, will show that any rate-pair $(r_B,r_C)$, with $r_B <  \max\{0, (2\epsilon_2 - 1) \cdot \min\{\epsilon_1, 1 - \epsilon_1\} \}$ and $r_C < C_{2P}$, is also an achievable $2$-private rate-pair. Coupled with a time-sharing argument, this proves the inner bound $\mathcal{R}_{\text{inner}} \subseteq \mathcal{R}$. Consequently, we describe a protocol (Protocol~\ref{protocol:C2P_symmetric}) for achieving any rate-pair $(r_B,r_C)$ when $r_B < C_{2P}, r_C < \max\{0, (2\epsilon_1 - 1) \cdot \min\{\epsilon_2, 1 - \epsilon_2\} \}$, in the setup of Figure~\ref{fig:ot_hbc_indep}. A sequence of Protocol~\ref{protocol:C2P_symmetric} instances, with rate-pair approaching $(r_B,r_C)$, is shown to satisfy (\ref{eqn:ach_2p_indep_0})-(\ref{eqn:ach_2p_indep_7}). This establishes that $(r_B,r_C)$ is an achievable $2$-private rate.

Our protocol has two distinct phases. The first phase is the same as Protocol~\ref{protocol:C2P}, achieving a rate $r_B < C_{2P}$ of OT for Bob with $2$-privacy. If $\epsilon_1 > 1/2$, a second phase begins after the first phase ends. This second phase is the two-party OT protocol of \cite{ot2007} (the two users being Alice and Cathy), which runs over the segment of Alice's transmissions that remained unused during the first phase. Note that this unused segment of Alice's transmissions is completely erased for Bob and is about $(2\epsilon_1 - 1)n$ bits long. This second phase, thus, achieves an OT rate of $r_C < \max\{0, (2\epsilon_1 - 1) \cdot \min\{\epsilon_2, 1 - \epsilon_2\} \}$, with $2$-privacy, for Cathy.

\begin{algorithm*}
\floatname{algorithm}{Protocol}
\caption{Protocol for achieving any rate pair $(r_B,r_C)$ such that $r_B < C_{2P}, r_C < \max\{0, (2\epsilon_1 - 1)\min\{\epsilon_2, 1 - \epsilon_2\}\}$}
\label{protocol:C2P_symmetric}

Parameters : \begin{minipage}[t]{0.8\linewidth}
 \begin{itemize}
  \item $\delta \in (0,1) $ such that $r_B < (\epsilon_2 - \delta)(\min \{\epsilon_1, 1 - \epsilon_1\} - \delta)$ and $(\epsilon_2 - \delta) \in \mathbb{Q}$
  \item $0 < \tilde{\delta} < r_B$,  $\tilde{\delta} \in \mathbb{Q}$
  \item $\beta = \frac{r_B}{\epsilon_2 - \delta}$
  \item $\beta n, n(r_B - \tilde{\delta}) \in \mathbb{N}$
  \item Bob's rate\footnotemark in the Protocol is $(r_B - \tilde{\delta})$
 \end{itemize}
\end{minipage}

\begin{multicols}{2}
\begin{algorithmic}[1]

\STATE Alice transmits an $n$-tuple $\boldsymbol{X}$ of i.i.d. Bernoulli($1/2$) bits over the channel.

\STATE \label{step:2p_indep_abort0} Bob receives the $n$-tuple $\boldsymbol{Y}$ from BEC($\epsilon_1$). Bob forms the sets
\begin{align*}
\overline{E} & := \{ i \in \{1,2,\ldots,n\}: Y_i \neq \bot\} \\
E & := \{ i \in \{1,2,\ldots,n\}: Y_i = \bot\}
\end{align*}

If $|\overline{E}| < \beta n$ or $|E| < \beta n$, Bob aborts the protocol.

\STATE Bob creates the following sets:
\begin{align*}
& L_U  \thicksim \text{Unif}\{A \subset \overline{E} : |A| = \beta n\} \\
& L_{\overline{U}}  \thicksim \text{Unif}\{A \subset E : |A| = \beta n\} \\
\text{If   } & \epsilon_1 > \frac{1}{2} \\
& L \thicksim \text{Unif}\{ A \subset E \backslash L_{\overline{U}} : |A| = (\epsilon_1 - \delta - \beta)n \} \\
\text{Else } & \\
& L = \emptyset
\end{align*}

Bob reveals $L_0,L_1,L$ to Alice over the public channel.

\STATE Alice randomly and independently chooses functions $F_0,F_1$ from a family $\mathcal{F}$ of universal$_2$ hash functions:
\[ F_0,F_1 : \{0,1\}^{\beta n} \longrightarrow \{0,1\}^{n(r_B - \tilde{\delta})} \]

Alice finally sends the following information on the public channel:
\[ F_0, \; F_1, \; \boldsymbol{K}_0 \oplus F_0(\boldsymbol{X}|_{L_0}), \; \boldsymbol{K}_1 \oplus F_1(\boldsymbol{X}|_{L_1}) \]

\STATE Bob knows $F_U,\boldsymbol{X}|_{L_U}$ and can, therefore, recover $\boldsymbol{K}_U$.

\STATE If $L \neq \emptyset$, Alice and Cathy follow the two-party OT protocol \cite{ot2007} over $\boldsymbol{X}|_L$, to obtain OT for Cathy at rate $r_C$.

\end{algorithmic}
\end{multicols}
\end{algorithm*}

\footnotetext{The parameters $\delta, \tilde{\delta}$ can be chosen to be arbitrarily small so that Bob's rate can take any desired value less than $C_{2P}$. The two-party OT protocol ensures that Cathy's rate $r_C$ can take any desired value less than $\max\{0, (2\epsilon_1 - 1)\min\{\epsilon_2, 1 - \epsilon_2\}\}$.}

\begin{lemma}
\label{lem:c2p_ach_indep}
Any rate-pair $(r_B,r_C)$, such that $r_B < C_{2P}, r_C < \max\{0, (2\epsilon_1 - 1)\min\{\epsilon_2, 1 - \epsilon_2\}\}$, is an achievable $2$-private rate-pair, with honest-but-curious users, for the setup of Figure~\ref{fig:ot_hbc_indep}
\end{lemma}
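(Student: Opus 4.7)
The plan is to exploit the two-phase structure of Protocol~\ref{protocol:C2P_symmetric} and reduce each condition of Definition~\ref{defn:ach_rate_2p_indep} to results already established in the paper. Phase~1 consists of Steps~1--5, which are literally Protocol~\ref{protocol:C2P}, so Lemma~\ref{lem:c2p_ach_wtap} applies to Phase~1 with Cathy playing the role that Eve played there. Phase~2 is Step~6, which runs the two-party OT of \cite{ot2007} between Alice and Cathy over the $(\epsilon_1 - \delta - \beta)n$ already-transmitted channel inputs indexed by $L$. Because $L \subset E$ by construction, Bob's channel view on these positions is all erasures, so $\boldsymbol{X}|_L$ is independent of Bob's channel observations. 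When $\epsilon_1 \leq 1/2$, $L = \emptyset$, Phase~2 is absent, and the lemma collapses to Lemma~\ref{lem:c2p_ach_wtap}; the interesting case is $\epsilon_1 > 1/2$.

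I would first verify that $\mathcal{P}_n$ does not abort with high probability: the only abort is in Step~\ref{step:2p_indep_abort0}, and for $\beta < \min\{\epsilon_1, 1-\epsilon_1\}$ both $|\overline{E}|$ and $|E|$ concentrate above $\beta n$ by Chernoff, giving non-abort probability $1-o(1)$. Conditioned on no abort, (\ref{eqn:ach_2p_indep_0}) is immediate since Bob knows $F_U$ and $\boldsymbol{X}|_{L_U}$, while (\ref{eqn:ach_2p_indep_1}) follows from correctness of the two-party OT of \cite{ot2007}. The rate check is direct: Phase~1 delivers $n(r_B - \tilde{\delta})$ bits of OT for Bob in $n$ channel uses, while Phase~2 delivers OT bits for Cathy at a rate per total channel use approaching $(\epsilon_1 - \delta - \beta)\min\{\epsilon_2, 1 - \epsilon_2\}$. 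Pushing $r_B \to C_{2P}$ forces $\beta \to 1 - \epsilon_1$ in the regime $\epsilon_1 > 1/2$, and the limit recovers $(2\epsilon_1 - 1)\min\{\epsilon_2, 1-\epsilon_2\}$ as required.

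For the eight secrecy conditions the argument splits along phase lines. The Phase~1 parts of (\ref{eqn:ach_2p_indep_2}),~(\ref{eqn:ach_2p_indep_3}),~(\ref{eqn:ach_2p_indep_5}),~(\ref{eqn:ach_2p_indep_6}), which concern $\boldsymbol{K}_0, \boldsymbol{K}_1, U$, reduce to Lemma~\ref{lem:c2p_ach_wtap} after absorbing the Phase~2 public transcript into Cathy's view. This absorption is valid because Phase~2 uses fresh hash functions and subsets, and reads only bits $\boldsymbol{X}|_L$ with $L \cap (L_0 \cup L_1) = \emptyset$; hence the Phase~1 keying material is independent of the Phase~2 transcript given Alice's view. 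The Phase~2 $\boldsymbol{J}_{\overline{W}}$ component of (\ref{eqn:ach_2p_indep_2}) and (\ref{eqn:ach_2p_indep_6}) against Cathy is the security conclusion of \cite{ot2007}. The Phase~2 parts of (\ref{eqn:ach_2p_indep_7}) and (\ref{eqn:ach_2p_indep_2}) protecting $\boldsymbol{J}_0, \boldsymbol{J}_1, W$ against Bob are strictly easier than the two-party guarantee: since $L \subset E$, Bob observes no bits of $\boldsymbol{X}|_L$, so Alice's Phase~2 public messages to Cathy appear one-time-padded from Bob's viewpoint, and Cathy's public subsets are symmetric in $W$ conditioned on everything else Bob sees.

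The main obstacle is the cross-coalition conditions (\ref{eqn:ach_2p_indep_3}),~(\ref{eqn:ach_2p_indep_4}),~(\ref{eqn:ach_2p_indep_5}), which require $U$ to be hidden from $V_A \cup V_C$ and $W$ from $V_A \cup V_B$. The crux for $U$ is that Cathy's channel output $\boldsymbol{Z}$ is conditionally independent of Bob's erasure pattern $E$ given $\boldsymbol{X}$ (which Alice has), so the likelihoods of $U = 0$ and $U = 1$ conditioned on $(\boldsymbol{X}, \boldsymbol{Z}, L_0, L_1, L)$ both evaluate to $(1-\epsilon_1)^{\beta n}\,\epsilon_1^{\beta n + |L|}$ by the symmetry $|L_0| = |L_1| = \beta n$ of the sampling rule; hence $U$ remains uniform under the $0 \leftrightarrow 1$ swap. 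The analogous statement for $W$ uses the symmetry of Cathy's Phase~2 subsets conditioned on Bob's view, again exploiting $\boldsymbol{Y} \independent \boldsymbol{Z} \mid \boldsymbol{X}$. Assembling these pieces yields (\ref{eqn:ach_2p_indep_0})--(\ref{eqn:ach_2p_indep_7}) in the limit $n \to \infty$, establishing $(r_B, r_C)$ as an achievable $2$-private rate-pair.
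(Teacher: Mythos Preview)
Your proposal is correct and follows essentially the same approach as the paper's proof: both exploit the two-phase structure of Protocol~\ref{protocol:C2P_symmetric}, reducing Phase~1 guarantees to Lemma~\ref{lem:c2p_ach_wtap} (with Cathy as Eve) and Phase~2 guarantees to the two-party OT of \cite{ot2007}, using the disjointness $L \cap (L_0 \cup L_1) = \emptyset$ and $L \subset E$ to decouple the phases via Markov-chain arguments. The paper's appendix makes these Markov chains explicit term by term, whereas you argue the same independences more informally, but the underlying decomposition and key facts are identical.
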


This lemma is proved in Appendix~\ref{appndx:proof_ach_2p_hbc_indep}.



\subsection{Proof of outer bound : $\mathcal{R} \subseteq \mathcal{R}_{\text{outer}}$}

We show that our outer-bound holds under a weaker privacy requirement, wherein the left-hand-sides of (\ref{eqn:ach_2p_indep_2}), (\ref{eqn:ach_2p_indep_6}) and (\ref{eqn:ach_2p_indep_7}) are multiplied by $1/n$. Let $(r_B,r_C)$ be an achievable $2$-private rate pair, for the setup in Figure~\ref{fig:ot_hbc_indep}. Then, the following are straightforward upperbounds as a consequence of Theorem~\ref{thm:result_2p_1p_hbc_wtap}:
\begin{align*}
r_B & \leq \epsilon_2 \cdot \min \{\epsilon_1, 1 - \epsilon_1\} \\
r_C & \leq \epsilon_1 \cdot \min \{\epsilon_2, 1 - \epsilon_2\}
\end{align*} 

To prove that $r_B + r_C \leq \epsilon_1 \epsilon_2$, we use the following lemma (proved in Appendix \ref{appndx:proof_small_quant_2p_hbc_indep}):

\begin{lemma}
\label{lem:small_quant_2p_hbc_indep}
\[
\frac{1}{n}H(\boldsymbol{K}_0,\boldsymbol{K}_1,\boldsymbol{J}_0,\boldsymbol{J}_1 | \boldsymbol{X},\boldsymbol{\Lambda}) \longrightarrow 0 \text{ as } n \longrightarrow \infty
\]
\end{lemma}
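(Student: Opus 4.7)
The plan is to reduce the claim to two applications of the argument behind Lemma~\ref{lem:small_quant_1p_hbc_wtap}. By the chain rule and the fact that conditioning reduces entropy,
\begin{align*}
H(\boldsymbol{K}_0,\boldsymbol{K}_1,\boldsymbol{J}_0,\boldsymbol{J}_1 \mid \boldsymbol{X},\boldsymbol{\Lambda})
&= H(\boldsymbol{K}_0,\boldsymbol{K}_1 \mid \boldsymbol{X},\boldsymbol{\Lambda}) + H(\boldsymbol{J}_0,\boldsymbol{J}_1 \mid \boldsymbol{K}_0,\boldsymbol{K}_1,\boldsymbol{X},\boldsymbol{\Lambda}) \\
&\leq H(\boldsymbol{K}_0,\boldsymbol{K}_1 \mid \boldsymbol{X},\boldsymbol{\Lambda}) + H(\boldsymbol{J}_0,\boldsymbol{J}_1 \mid \boldsymbol{X},\boldsymbol{\Lambda}),
\end{align*}
so it suffices to show that each of the two summands on the right is $o(n)$.

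For the first summand, I would invoke the argument already used to prove Lemma~\ref{lem:small_quant_1p_hbc_wtap}, with Cathy now playing the role that Eve played there. The two ingredients needed are: (i) $I(U;\boldsymbol{X},\boldsymbol{\Lambda})=o(n)$, which follows from $(\boldsymbol{X},\boldsymbol{\Lambda})\subseteq V_A$ together with the privacy requirement~(\ref{eqn:ach_2p_indep_3}) that $I(U;V_A,V_C)\longrightarrow 0$; and (ii) Bob's vanishing-error decoding of $\boldsymbol{K}_U$ guaranteed by~(\ref{eqn:ach_2p_indep_0}). These are exactly the hypotheses of the single-receiver argument, so the same chain of inequalities delivers $\tfrac{1}{n}H(\boldsymbol{K}_0,\boldsymbol{K}_1\mid\boldsymbol{X},\boldsymbol{\Lambda})\longrightarrow 0$. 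Intuitively, if Alice could not recover both $\boldsymbol{K}_0$ and $\boldsymbol{K}_1$ from $(\boldsymbol{X},\boldsymbol{\Lambda})$ up to sub-linear uncertainty, her inability to decode one of them would signal $U$'s value, violating~(\ref{eqn:ach_2p_indep_3}).

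For the second summand, I would repeat the same argument with the roles of Bob and Cathy exchanged. The privacy requirement~(\ref{eqn:ach_2p_indep_4}) gives $I(W;\boldsymbol{X},\boldsymbol{\Lambda})=o(n)$, and the correctness requirement~(\ref{eqn:ach_2p_indep_1}) provides Cathy's vanishing-error decoding of $\boldsymbol{J}_W$. Re-running the single-receiver argument with Cathy in the role of Bob yields $\tfrac{1}{n}H(\boldsymbol{J}_0,\boldsymbol{J}_1\mid\boldsymbol{X},\boldsymbol{\Lambda})\longrightarrow 0$. Summing the two bounds closes the proof.

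I do not expect a genuinely new obstacle in this step. Because the Alice--Bob and Alice--Cathy channels are structurally symmetric and the privacy constraints of Definition~\ref{defn:ach_rate_2p_indep} impose, for each legitimate receiver, precisely the independence condition needed to re-run the single-receiver argument, the main work is bookkeeping rather than new insight. The most delicate point is to make sure that in each application I invoke the correct joint-view privacy constraint --- (\ref{eqn:ach_2p_indep_3}) for $U$ and (\ref{eqn:ach_2p_indep_4}) for $W$ --- both of which are strictly stronger than the single-receiver version and therefore imply what is required.
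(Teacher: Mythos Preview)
Your proposal is correct and follows essentially the same approach as the paper: split into the two summands and re-run the single-receiver argument of Lemma~\ref{lem:small_quant_1p_hbc_wtap} for each. The only cosmetic differences are that the paper invokes (\ref{eqn:ach_2p_indep_5}) rather than (\ref{eqn:ach_2p_indep_3})/(\ref{eqn:ach_2p_indep_4}) to obtain $I(U;V_A)\to 0$ and $I(W;V_A)\to 0$, and it states explicitly the three-party conditional-independence fact $I(\boldsymbol{K}_0,\boldsymbol{K}_1,\boldsymbol{J}_0,\boldsymbol{J}_1;U,\boldsymbol{Y},W,\boldsymbol{Z}\mid\boldsymbol{X},\boldsymbol{\Lambda})=0$ (Lemma~\ref{lem:cond_indep_hbc_indep}) that your reuse of the single-receiver argument implicitly requires.
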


Intuitively, this lemma says that anyone observing Alice's interface to the rest of the system, namely signals $\boldsymbol{X},\boldsymbol{\Lambda}$, should be able to recover all the four strings $\boldsymbol{K}_0,\boldsymbol{K}_1,\boldsymbol{J}_0,\boldsymbol{J}_1$. Suppose this was not true and, say, $\boldsymbol{K}_0$ cannot be decoded from $\boldsymbol{X},\boldsymbol{\Lambda}$. In this case, Alice will infer that Bob wanted $\boldsymbol{K}_1$, that is $U = 1$, which violates (\ref{eqn:ach_2p_indep_3}). Similarly, if $\boldsymbol{J}_1$ cannot be decoded from $\boldsymbol{X},\boldsymbol{\Lambda}$, Alice will infer that Cathy wanted $\boldsymbol{J}_0$, that is $W = 0$, which violates (\ref{eqn:ach_2p_indep_4}).

Let $\tilde{E} := \{ i \in \{1,2,\ldots,n\}: Y_i = \bot \text{ and } Z_i = \bot \}$. Let $\tilde{e}$ denote a realization of $\tilde{E}$ and let $\overline{\tilde{e}} = \{1,2,\ldots,n\} \backslash \tilde{e}$. Now,
\begin{align*}
m_B + m_C & = H(\boldsymbol{K}_{\overline{U}}, \boldsymbol{J}_{\overline{W}}) \\
          & = I(\boldsymbol{K}_{\overline{U}}, \boldsymbol{J}_{\overline{W}} ; \boldsymbol{X},\boldsymbol{\Lambda},U,W) + H(\boldsymbol{K}_{\overline{U}}, \boldsymbol{J}_{\overline{W}} \mid \boldsymbol{X},\boldsymbol{\Lambda},U,W) \\
          & \leq I(\boldsymbol{K}_{\overline{U}}, \boldsymbol{J}_{\overline{W}} ; \boldsymbol{X},\boldsymbol{\Lambda},U,W) + H(\boldsymbol{K}_U, \boldsymbol{K}_{\overline{U}}, \boldsymbol{J}_W, \boldsymbol{J}_{\overline{W}} \mid \boldsymbol{X},\boldsymbol{\Lambda},U,W) \\
          & = I(\boldsymbol{K}_{\overline{U}}, \boldsymbol{J}_{\overline{W}} ; \boldsymbol{X},\boldsymbol{\Lambda},U,W) + H(\boldsymbol{K}_0, \boldsymbol{K}_1, \boldsymbol{J}_0, \boldsymbol{J}_1 \mid \boldsymbol{X},\boldsymbol{\Lambda},U,W) \\
          & \leq I(\boldsymbol{K}_{\overline{U}}, \boldsymbol{J}_{\overline{W}} ; \boldsymbol{X},\boldsymbol{\Lambda},U,W) + H(\boldsymbol{K}_0, \boldsymbol{K}_1, \boldsymbol{J}_0, \boldsymbol{J}_1 \mid \boldsymbol{X},\boldsymbol{\Lambda}) \\
          & \stackrel{\text{(a)}}{=} I(\boldsymbol{K}_{\overline{U}}, \boldsymbol{J}_{\overline{W}} ; \boldsymbol{X},\boldsymbol{\Lambda},U,W) + o(n) \\
          & \leq I(\boldsymbol{K}_{\overline{U}}, \boldsymbol{J}_{\overline{W}} ; \boldsymbol{X},\boldsymbol{\Lambda},U,W, \tilde{E}) + o(n) \\
          & \stackrel{\text{(b)}}{=} I(\boldsymbol{K}_{\overline{U}}, \boldsymbol{J}_{\overline{W}} ; \boldsymbol{X},\boldsymbol{\Lambda},U,W \mid \tilde{E}) + o(n) \\
          & = \sum_{\tilde{e} \subseteq \{1,2, \ldots,n\} } p_{\tilde{E}}(\tilde{e}) I(\boldsymbol{K}_{\overline{U}}, \boldsymbol{J}_{\overline{W}} ; \boldsymbol{X},\boldsymbol{\Lambda}, U, W \mid \tilde{E} = \tilde{e}) + o(n) \\
          & = \sum_{\tilde{e} \subseteq \{1,2, \ldots,n\} } p_{\tilde{E}}(\tilde{e}) I(\boldsymbol{K}_{\overline{U}}, \boldsymbol{J}_{\overline{W}} ; \boldsymbol{X}|_{\overline{\tilde{e}}},\boldsymbol{\Lambda}, U, W \; \mathlarger{\mid} \; \tilde{E} = \tilde{e}) + \sum_{\tilde{e} \subseteq \{1,2,\ldots,n\} } p_{\tilde{E}}(\tilde{e}) I(\boldsymbol{K}_{\overline{U}}, \boldsymbol{J}_{\overline{W}} ; \boldsymbol{X}|_{\tilde{e}}   \;\; \mathlarger{\mid} \; \;  \boldsymbol{X}|_{\overline{\tilde{e}}}, \boldsymbol{\Lambda}, U, W, \tilde{E} = \tilde{e}) \\ & \quad + o(n) \\
          & \leq \sum_{\tilde{e} \subseteq \{1,2,\ldots,n\} } p_{\tilde{E}}(\tilde{e}) I(\boldsymbol{K}_{\overline{U}}, \boldsymbol{J}_{\overline{W}} ; \boldsymbol{X}|_{\overline{\tilde{e}}},\boldsymbol{\Lambda}, U, W \; \mathlarger{\mid} \; \tilde{E} = \tilde{e}) + \sum_{\tilde{e} \subseteq \{1,2,\ldots,n\} } p_{\tilde{E}}(\tilde{e}) H(\boldsymbol{X}|_{\tilde{e}} \; \mathlarger{\mid} \; \tilde{E} = \tilde{e}) + o(n) \\
          & \leq I(\boldsymbol{K}_{\overline{U}}, \boldsymbol{J}_{\overline{W}} ; \boldsymbol{Y},\boldsymbol{Z}, \boldsymbol{\Lambda}, U,W \mid \tilde{E}) + \sum_{\tilde{e} \subseteq \{1,2,\ldots,n\} } p_{\tilde{E}}(\tilde{e}) |\tilde{e}| + o(n) \\
          &  \stackrel{\text{(c)}}{=}  I(\boldsymbol{K}_{\overline{U}}, \boldsymbol{J}_{\overline{W}} ; \boldsymbol{Y},\boldsymbol{Z}, \boldsymbol{\Lambda}, U,W, \tilde{E}) +  n\epsilon_1\epsilon_2  + o(n) \\
          & \stackrel{\text{(d)}}{=}  I(\boldsymbol{K}_{\overline{U}}, \boldsymbol{J}_{\overline{W}} ; \boldsymbol{Y},\boldsymbol{Z}, \boldsymbol{\Lambda}, U,W) + n\epsilon_1\epsilon_2 + o(n) \\
          & \stackrel{\text{(e)}}{=} n\epsilon_1\epsilon_2 + o(n)
\end{align*}
where (a) follows from Lemma~\ref{lem:small_quant_2p_hbc_indep}, (b) and (c) follow since $\tilde{E}$ is independent of $(\boldsymbol{K}_0,\boldsymbol{K}_1,\boldsymbol{J}_0,\boldsymbol{J}_1,U,W)$, (d) follows since $\tilde{E}$ is a function of $(\boldsymbol{Y},\boldsymbol{Z})$ and (e) follows from a weakened version (multiplication by $1/n$) of (\ref{eqn:ach_2p_indep_2}). As a result,
\begin{align*}
r_B + r_C & = \frac{m_B}{n} + \frac{m_C}{n} \\
               & \leq \epsilon_1 \epsilon_2 + \frac{o(n)}{n}
\end{align*}


\section{Oblivious Transfer Over a Degraded Wiretapped Channel : Proof of Theorem~\ref{thm:result_1p_hbc_degraded}}
\label{sec:proofs_3}

In this setup (see Figure~\ref{fig:ot_hbc_degraded}), Alice is connected to Bob and Eve by a broadcast channel made up of a cascade of two independent BECs. There is a BEC($\epsilon_1$) connecting Alice to Bob and a BEC($\epsilon_2$) connecting Bob to Eve. The goal is to achieve OT between Alice and Bob, with $1$-privacy. For the $1$-private OT capacity $C_{1P}$, we show that :
\[  \min \left\{\frac{1}{3}\epsilon_2(1 - \epsilon_1), \epsilon_1 \right\} \leq C_{1P} \leq \min\{\epsilon_2(1 - \epsilon_1), \epsilon_1 \}  \]


\subsection{Proof of lower bound: $\min \left\{(1/3) \cdot \epsilon_2(1 - \epsilon_1), \epsilon_1 \right\}$}

We describe a protocol (Protocol~\ref{protocol:C1P_degraded}) for achieving any $1$-private rate $r < \min \left\{(1/3) \cdot \epsilon_2(1 - \epsilon_1), \epsilon_1 \right\}$, with honest-but-curious users, in the setup of Figure~\ref{fig:ot_hbc_degraded}. For a sequence of Protocol~\ref{protocol:C1P_degraded} instances of rate $r < \min \left\{(1/3) \cdot \epsilon_2(1 - \epsilon_1), \epsilon_1 \right\}$, we show that (\ref{eqn:ach_1p_wtap_0})-(\ref{eqn:ach_1p_wtap_3}) hold. This establishes that any $r < \min \left\{(1/3) \cdot \epsilon_2(1 - \epsilon_1), \epsilon_1 \right\}$ is an achievable $1$-private rate. All the protocols seen thus far for honest-but-curious users critically depended on the fact that the erasure patterns received by Bob and Eve (or Cathy) were independent. This is the reason why Bob could send the sets $L_0,L_1$ over the public channel and Eve (or Cathy) could not deduce $U$ from these sets. However, the present setup has a physically degraded channel, degraded in favor of Bob. If Bob sends sets $L_0,L_1$ as in previous protocols, then Eve will see that one of the sets of indices corresponds entirely to erasures in $\boldsymbol{Z}$ (the \emph{bad} set $L_{\overline{U}}$) while the other set corresponds only partially to erasures in $\boldsymbol{Z}$ (the \emph{good} set $L_U$). As a result, Eve will learn $U$ as soon as Bob sends $L_0,L_1$ over the public channel. Our protocol overcomes this problem by having Bob efficiently encrypt the sets $L_0,L_1$, using a long secret key shared with Alice and secret from Eve, before transmitting the sets on the public channel. Furthermore, one of these sets of indices corresponds entirely to erasures both in $\boldsymbol{Y}$ and $\boldsymbol{Z}$, something that was not true when Bob and Eve received independent erasure patterns. Our protocol makes use of this feature to reduce the length of the secret key it needs to encrypt Alice's strings before transmitting them to Bob over the public channel. A more detailed description of the protocol now follows.

\begin{algorithm*}
\floatname{algorithm}{Protocol}
\caption{Protocol for achieving any $r < \min \left\{(1/3) \cdot \epsilon_2(1 - \epsilon_1), \epsilon_1 \right\}$}
\label{protocol:C1P_degraded}

Parameters : \begin{minipage}[t]{0.8\linewidth}
 \begin{itemize}
  \item $\delta \in (0,1) $ such that $r < \min \{ \frac{1}{3}(\epsilon_2 - \delta)(1 - \epsilon_1 - \delta), (\epsilon_1 - \delta) \}$, $(\epsilon_2 - \delta) \in \mathbb{Q}$
  \item $0 < \tilde{\delta} < r$, $\tilde{\delta} \in \mathbb{Q}$
  \item $\beta = \frac{r - \tilde{\delta}}{\epsilon_2 - \delta}$ 
  \item $\beta n, \beta n(\epsilon_2 - \delta), 2 \beta n \left( \frac{r}{r - \tilde{\delta}} \right), \beta n (1 - (\epsilon_2 - \delta)), n(r - 2 \tilde{\delta}) \in \mathbb{N}$
  \item The rate\footnotemark of the protocol is $(r - 2\tilde{\delta})$
 \end{itemize}
\end{minipage}

\begin{multicols}{2}
\begin{algorithmic}[1]

\STATE Alice transmits an $n$-tuple $\boldsymbol{X}$ of i.i.d. Bernoulli($1/2$) bits over the channel.

\STATE \label{step:1p_degraded_abort0} Bob receives the $n$-tuple $\boldsymbol{Y}$ from BEC($\epsilon_1$). Bob forms the sets
\begin{align*}
\overline{E} & := \{ i \in \{1,2,\ldots,n\}: Y_i \neq \bot\} \\
E & := \{ i \in \{1,2,\ldots,n\}: Y_i = \bot\}
\end{align*}

If $|\overline{E}| < (1 - \epsilon_1 - \delta) n$ or $|E| < (\epsilon_1 - \delta) n$, Bob aborts the protocol.

\STATE Bob creates the following sets:
\begin{align*}
&L_U  \thicksim \text{Unif}\{A \subset \overline{E} : |A| = \beta n (\epsilon_2 - \delta) \} \\
&L_{\overline{U}}  \thicksim \text{Unif}\{A \subset E : |A| = \beta n (\epsilon_2 - \delta)\} \\
&\tilde{G}  := \overline{E} \backslash L_U \\
&\tilde{B}  := E \backslash L_{\overline{U}}
\end{align*}

Bob reveals $\tilde{G},\tilde{B}$ to Alice over the public channel.

\STATE Let $\tilde{\boldsymbol{L}} = (L_0 \cup L_1)$. Bob forms the tuple $\boldsymbol{Q} \in \{0,1\}^{2 \beta n (\epsilon_2 - \delta) }$ such that :
\begin{equation*}
Q_i = \left\{ \begin{array}{ll} 0, & \tilde{L}_i \in L_0 \\ 1, & \tilde{L}_i \in L_1 \end{array} \right.
\end{equation*}

\STATE  Bob forms the set $\tilde{G}_L$ consisting of the first $ 2\beta n \cdot  r/(r - \tilde{\delta})$ elements from $(\tilde{G})$. Bob forms the set $\tilde{G}_S$ consisting of the next $\beta n (1 - (\epsilon_2 - \delta))$ elements from $(\tilde{G})$.

\STATE Bob randomly selects a function $F_L$ from a family $\mathcal{F}_L$ of universal$_2$ hash functions, given as:
\[  F_L : \{0,1\}^{ \left( \frac{r}{r - \tilde{\delta}} \right) 2\beta n} \longrightarrow \{0,1\}^{2 \beta n (\epsilon_2 - \delta) }   \]

Bob now sends the following to Alice over the public channel : $F_L, \boldsymbol{Q} \oplus F_L(\boldsymbol{X}|_{\tilde{G}_L})$. 

\STATE Alice recovers $\boldsymbol{Q}$ from $F_L, \boldsymbol{Q} \oplus F_L(\boldsymbol{X}|_{\tilde{G}_L})$. Using $\boldsymbol{Q}, \tilde{G}$ and $\tilde{B}$, Alice recovers $L_0,L_1$. Alice now randomly and independently chooses functions $F_0,F_1$ from a family $\mathcal{F}$ of universal$_2$ hash functions, given as:
\[ F_0,F_1 : \{0,1\}^{\beta n} \longrightarrow \{0,1\}^{n(r - 2 \tilde{\delta})} \]

Alice finally sends the following information on the public channel:
\[ F_0, \; F_1, \; \boldsymbol{K}_0 \oplus F_0(\boldsymbol{X}|_{L_0 \cup \tilde{G}_S}), \; \boldsymbol{K}_1 \oplus F_1(\boldsymbol{X}|_{L_1 \cup \tilde{G}_S}) \]

\STATE Bob knows $F_U,\boldsymbol{X}|_{L_U \cup \tilde{G}_S}$ and can, therefore, recover $\boldsymbol{K}_U$.

\end{algorithmic}
\end{multicols}
\end{algorithm*}

\footnotetext{The parameters $\delta, \tilde{\delta}$ can be chosen to be arbitrarily small so that this rate can take any desired value less than $\min \left\{(1/3) \cdot \epsilon_2(1 - \epsilon_1), \epsilon_1 \right\}$.}

Alice initiates the protocol by transmitting a sequence $\boldsymbol{X}$ of $n$ i.i.d. uniform bits over the channel. Bob and Eve receive the channel outputs $\boldsymbol{Y}$ and $\boldsymbol{Z}$ respectively, where $\boldsymbol{Z}$ is an erased version of $\boldsymbol{Y}$. Bob denotes by $E$ the set of indices at which $\boldsymbol{Y}$ was erased and by $\overline{E}$ the complement of $E$. Out of $\overline{E}$, Bob uniformly at random picks up a \emph{good set} $L_U$ of cardinality about $nr$. In a similar manner, Bob picks the \emph{bad set} $L_{\overline{U}}$ out of $E$, with $|L_{\overline{U}}| = |L_U|$. Let $\tilde{G} = \overline{E} \backslash L_U$ and $\tilde{B} = E \backslash L_{\overline{U}}$. Note that $|\tilde{G}|$ is about $n(1 - \epsilon_1 - r)$. Bob reveals the set of indices $\tilde{G}$ and $\tilde{B}$ to Alice over the public channel. Out of an ordered version of the set of indices $\tilde{G}$, Bob takes the first approximately $|L_0 \cup L_1|/\epsilon_2$ elements and calls it the set $\tilde{G}_L$ and takes the next approximately $|L_0|(1 - \epsilon_2)/\epsilon_2$ elements and calls it the set $\tilde{G}_S$. For $r < \min \left\{(1/3) \cdot \epsilon_2(1 - \epsilon_1), \epsilon_1 \right\}$, $|\tilde{G}_L| + |\tilde{G}_S| \leq |\tilde{G}|$ and so the sets $\tilde{G}_L, \tilde{G}_S$ of the required sizes can be derived from the set $\tilde{G}$. The purpose of forming $\tilde{G}_L$ and $\tilde{G}_S$ is to use them to form two different secret keys, known to Alice and Bob but secret from Eve.

Bob's goal now is to transmit $L_0,L_1$ to Alice without revealing them to Eve. Towards this goal, Bob does two things: Firstly, Bob considers the ordered version $\boldsymbol{L}$ of $L_0 \cup L_1$ and forms the binary $|L_0 \cup L_1|$-tuple $\boldsymbol{Q}$ such that $Q_i = 0$ when $L_i \in L_0$ and $Q_i = 1$ when $L_i \in L_1$, $i=1,2,\ldots, |L_0 \cup L_1|$. Secondly, Bob forms a secret key using $\tilde{G}_L$ that is $|L_0 \cup L_1|$ bits long, which Alice knows and Eve does not know, as follows. Bob randomly selects a function $F_L$ from a universal$_2$ class $\mathcal{F}_L$, whose input is about $|L_0 \cup L_1|/\epsilon_2$ bits long and whose output is $|L_0 \cup L_1|$ bits long. Then, $F_L(\boldsymbol{X}|_{\tilde{G}_L})$ is the secret key Bob is looking for. Bob sends $F_L, \boldsymbol{Q} \oplus F(\boldsymbol{X}|_{\tilde{G}_L})$ to Alice over the public channel. Alice recovers $L_0,L_1$ from this message while Eve cannot separate out $L_0,L_1$ from $L_0 \cup L_1$.

Alice now forms two independent keys to encrypt its strings and send these encrypted strings to Bob. One of these keys is known to Bob and none of the keys is known to Eve. For this, Alice randomly selects two functions $F_0,F_1$ from a family $\mathcal{F}$ of universal$_2$ hash functions, whose input is about $(|L_0| + |\tilde{G}_S|)/\epsilon_2$ bits long and whose output is about $|L_0|$ bits long. Then, $F_0(\boldsymbol{X}|_{L_0 \cup \tilde{G}_S})$ and $F_1(\boldsymbol{X}|_{L_1 \cup \tilde{G}_S})$ are the keys Alice wants. Note that Bob does not know the key $F_{\overline{U}}(\boldsymbol{X}|_{L_{\overline{U}} \cup \tilde{G}_S})$ even though Bob knows $\boldsymbol{X}|_{\tilde{G}_S}$. This is a direct consequence of privacy amplification on  $\boldsymbol{X}|_{L_{\overline{U}} \cup \tilde{G}_S}$ by $F_{\overline{U}}$, coupled with the facts that $\boldsymbol{X}|_{L_{\overline{U}}}$ is erased for Bob and the key $F_{\overline{U}}(\boldsymbol{X}|_{L_{\overline{U}} \cup \tilde{G}_S})$ is about the same length as $|L_{\overline{U}}|$. Alice finally sends $F_0,F_1, \boldsymbol{K}_0 \oplus F_0(\boldsymbol{X}|_{L_0 \cup \tilde{G}_S}), \boldsymbol{K}_1 \oplus F_1(\boldsymbol{X}|_{L_1 \cup \tilde{G}_S})$ to Bob over the public channel. Bob knows $\boldsymbol{X}|_{L_U \cup \tilde{G}_S}$ and can recover $\boldsymbol{K}_U$ from Alice's message.

\begin{lemma}
\label{lem:c1p_ach_degraded}
Any rate $r < \min \left\{(1/3) \cdot \epsilon_2(1 - \epsilon_1), \epsilon_1 \right\}$ is an achievable $1$-private rate, with honest-but-curious users, for the setup of Figure~\ref{fig:ot_hbc_degraded}. 
\end{lemma}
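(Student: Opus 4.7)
The plan is to show that a sequence of Protocol~\ref{protocol:C1P_degraded} instances of rate $r-2\tilde{\delta}$ satisfies (\ref{eqn:ach_1p_wtap_0})--(\ref{eqn:ach_1p_wtap_3}). As a preliminary I would use Chernoff bounds on $|\overline{E}|$ and $|E|$ to show that the abort in Step~\ref{step:1p_degraded_abort0} occurs with vanishing probability, and a direct count verifies that $|\tilde{G}_L|+|\tilde{G}_S|\le|\tilde{G}|$ with high probability exactly when $r<(1/3)\epsilon_2(1-\epsilon_1)$, which (together with $r<\epsilon_1$, ensuring enough erasures in $E$ to seed $L_{\overline{U}}$) explains the form of the rate threshold. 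Conditioned on no abort, Bob's correctness (\ref{eqn:ach_1p_wtap_0}) is immediate: Bob knows $F_U$ and $\boldsymbol{X}|_{L_U\cup\tilde{G}_S}$, so he recovers $\boldsymbol{K}_U$ exactly from the ciphertext $\boldsymbol{K}_U\oplus F_U(\boldsymbol{X}|_{L_U\cup\tilde{G}_S})$.

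For (\ref{eqn:ach_1p_wtap_2}) I would argue by symmetry. From $V_A$, Alice can decrypt $\boldsymbol{Q}$ (she knows $F_L$ and $\boldsymbol{X}|_{\tilde{G}_L}$) and hence recover $(\tilde{G},\tilde{B},L_0,L_1)$; but because the BEC is memoryless and $L_U$, $L_{\overline{U}}$ are drawn uniformly from $\overline{E}$ and $E$ respectively, the conditional joint law of these four sets is invariant under the swap $U\leftrightarrow 1-U$, yielding $I(U;V_A)=0$. For (\ref{eqn:ach_1p_wtap_1}) I would invoke the leftover hash lemma (Appendix~\ref{appndx:privacy_amplification}) on the key $F_{\overline{U}}(\boldsymbol{X}|_{L_{\overline{U}}\cup\tilde{G}_S})$. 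Since $L_{\overline{U}}\subseteq E$, Bob's channel output contains no information about $\boldsymbol{X}|_{L_{\overline{U}}}$, so the conditional min-entropy of the hash input given Bob's view is about $\beta n(\epsilon_2-\delta)=n(r-\tilde{\delta})$, strictly exceeding the output length $n(r-2\tilde{\delta})$, and privacy amplification gives exponentially small leakage of $\boldsymbol{K}_{\overline{U}}$.

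The principal obstacle is (\ref{eqn:ach_1p_wtap_3}), because the degraded structure hands Eve a superset of Bob's unerased indices---Eve's $\boldsymbol{Z}$ is simply $\boldsymbol{Y}$ through a further BEC$(\epsilon_2)$---so Eve can partially observe $\boldsymbol{X}|_{\tilde{G}_L}$, $\boldsymbol{X}|_{\tilde{G}_S}$, and $\boldsymbol{X}|_{L_0\cup L_1}$, unlike in the independent-erasures setups. I plan a two-stage privacy-amplification chain that exploits the pairwise disjointness of $\tilde{G}_L$, $\tilde{G}_S$, and $L_0\cup L_1$ inside $\overline{E}$. In stage one, conditioned on Eve's view $(\boldsymbol{Z},\tilde{G},\tilde{B},F_L)$, the conditional min-entropy of $\boldsymbol{X}|_{\tilde{G}_L}$ concentrates around $\epsilon_2|\tilde{G}_L|\approx 2nr$, exceeding the output length $2n(r-\tilde{\delta})$ of $F_L$; the leftover hash lemma then makes $F_L(\boldsymbol{X}|_{\tilde{G}_L})$ near-uniform and near-independent of that view, so the ciphertext $\boldsymbol{Q}\oplus F_L(\boldsymbol{X}|_{\tilde{G}_L})$ is in total variation indistinguishable from an independent uniform string and effectively hides the $L_0$-vs-$L_1$ partition of $L_0\cup L_1$ from Eve. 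In stage two, under that replacement I would apply the leftover hash lemma separately to $F_0(\boldsymbol{X}|_{L_0\cup\tilde{G}_S})$ and $F_1(\boldsymbol{X}|_{L_1\cup\tilde{G}_S})$, each of which has conditional min-entropy about $\epsilon_2\beta n\approx n(r-\tilde{\delta})$ from Eve's augmented view and output length $n(r-2\tilde{\delta})$. Summing the three exponentially small leakages (from $F_L$, $F_0$, $F_1$) then gives $I(\boldsymbol{K}_0,\boldsymbol{K}_1,U;V_E)\to 0$.

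The main technical care required is the clean chaining of the two stages: at stage two one must verify that the conditional min-entropies of $\boldsymbol{X}|_{L_0\cup\tilde{G}_S}$ and $\boldsymbol{X}|_{L_1\cup\tilde{G}_S}$ are not significantly diminished by additionally conditioning on $F_L$ and on the stage-one ciphertext. The pairwise disjointness of $\tilde{G}_L$ from $\tilde{G}_S$ and from $L_0\cup L_1$, together with the fact that replacing the stage-one ciphertext by an independent uniform string incurs only exponentially small total-variation error, is exactly what decouples the two stages and makes the min-entropy accounting go through.
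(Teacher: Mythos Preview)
Your plan matches the paper's proof in structure: Chernoff for the abort, direct correctness, the symmetry argument for $I(U;V_A)$, privacy amplification on $F_{\overline{U}}$ for (\ref{eqn:ach_1p_wtap_1}), and a three-hash privacy-amplification argument for (\ref{eqn:ach_1p_wtap_3}). The paper carries out the latter via the entropy form of Lemma~\ref{lem:privacy_amplification} and the chain-rule decomposition $I(\boldsymbol{K}_0,\boldsymbol{K}_1,U;V_E)=I(U;V_E)+I(\boldsymbol{K}_{\overline{U}};U,V_E)+I(\boldsymbol{K}_U;U,\boldsymbol{K}_{\overline{U}},V_E)$ rather than the total-variation leftover-hash phrasing you use, but these are equivalent devices.

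There is one genuine gap in your stage-two accounting. You propose to ``apply the leftover hash lemma separately to $F_0(\boldsymbol{X}|_{L_0\cup\tilde{G}_S})$ and $F_1(\boldsymbol{X}|_{L_1\cup\tilde{G}_S})$,'' each with min-entropy about $\epsilon_2\beta n$, and you invoke the pairwise disjointness of $\tilde{G}_L$, $\tilde{G}_S$, and $L_0\cup L_1$. But the two hash \emph{inputs} are not disjoint: they share $\boldsymbol{X}|_{\tilde{G}_S}$. So ``separate'' application does not yield joint near-uniformity; you need a chain \emph{within} stage two as well, and a naive worst-case min-entropy drop by the first output length $n(r-2\tilde{\delta})$ would not leave enough for the second hash if you process $F_U$ second. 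The paper's device, which you should incorporate, is to order the chain as $F_U$ first and $F_{\overline{U}}$ second, and when bounding the second term to condition on \emph{all} of $\boldsymbol{X}|_{\tilde{G}_S}$ (indeed on $\boldsymbol{X}|_{L_U}$ as well). This is harmless precisely because of the degraded structure: $L_{\overline{U}}\subset E$ is fully erased for Eve, so even after revealing $\boldsymbol{X}|_{\tilde{G}_S}$ one still has R\'enyi entropy $|L_{\overline{U}}|=n(r-\tilde{\delta})>n(r-2\tilde{\delta})$ in the input to $F_{\overline{U}}$. Your final paragraph identifies the stage-one/stage-two chaining but not this intra-stage-two issue, and it is exactly where the asymmetry between $L_U\subset\overline{E}$ and $L_{\overline{U}}\subset E$ is used.
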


This lemma is proved in Appendix~\ref{appndx:proof_c1p_ach_degraded}. Here we give a sketch of this proof. Let $(\mathcal{P}_n)_{\{n \in \mathbb{N}\}}$ be a sequence of Protocol~\ref{protocol:C1P_degraded} instances, of rate $r - 2 \tilde{\delta}$. If the protocol does not abort, then Bob knows the key $\boldsymbol{S}_U = F_U(\boldsymbol{X}|_{L_U \cup \tilde{G}_S})$. This is because Bob knows $F_U,L_U,\tilde{G}_S,\boldsymbol{X}|_{L_U},\boldsymbol{X}|_{\tilde{G}_S}$. As a result, Bob can recover the string $\boldsymbol{K}_U$ from Alice's public message and so (\ref{eqn:ach_1p_wtap_0}) holds for $(\mathcal{P}_n)_{\{n \in \mathbb{N}\}}$. Bob does not know the key $\boldsymbol{S}_{\overline{U}} = F_{\overline{U}}(\boldsymbol{X}|_{L_{\overline{U}} \cup \tilde{G}_S})$, despite knowing $\boldsymbol{X}|_{\tilde{G}_S}$. This is because, by design, $|\boldsymbol{S}_{\overline{U}}| = |L_{\overline{U}}|$ and Bob does not know $\boldsymbol{X}|_{L_{\overline{U}}}$. Hence, privacy amplification (Lemma~\ref{lem:privacy_amplification}) by $F_{\overline{U}}$ on $\boldsymbol{X}|_{L_{\overline{U}} \cup \tilde{G}_S}$ ensures that $\boldsymbol{S}_{\overline{U}}$ appears nearly uniformly distributed to Bob. Thus, Bob does not learn anything about $\boldsymbol{K}_{\overline{U}}$ from Alice's message and so (\ref{eqn:ach_1p_wtap_1}) holds for $(\mathcal{P}_n)_{\{n \in \mathbb{N}\}}$. (\ref{eqn:ach_1p_wtap_2}) holds for $(\mathcal{P}_n)_{\{n \in \mathbb{N}\}}$ since Alice cannot learn $U$ upon receiving $L_0,L_1$ from Bob, as in previous protocols. Finally, note that the keys $\boldsymbol{S}_U, \boldsymbol{S}_{\overline{U}}$ are independent, despite $\boldsymbol{X}|_{\tilde{G}_S}$ being a common part of the inputs to functions $F_U,F_{\overline{U}}$ that generate these keys. Furthermore, privacy amplification by $F_U,F_{\overline{U}}$ ensures that Eve knows nothing about $\boldsymbol{S}_U, \boldsymbol{S}_{\overline{U}}$. Thus, for the same reasons as in previous protocols, (\ref{eqn:ach_1p_wtap_3}) holds for $(\mathcal{P}_n)_{\{n \in \mathbb{N}\}}$.


\subsection{Proof of upper bound: $C_{1P} \leq \min\{\epsilon_2(1 - \epsilon_1), \epsilon_1 \}$}

The upper bound follows by evaluating the upper bound in (\ref{eqn:c1p_general_outer}) for the setup of Figure~\ref{fig:ot_hbc_degraded}. Intuitively, the upper bound of $\epsilon_2(1 - \epsilon_1)$ follows from the fact that OT capacity is upper bounded by the secret key capacity of the wiretapped channel. This is because if Bob runs the protocol with the choice bit set deterministically to say 0, then $\boldsymbol{K}_0$ is a secret key between Alice and Bob. The upper bound follows from the fact that $\epsilon_2(1-\epsilon_1)$ is the secret key capacity of this wiretapped channel with public discussion~\cite{sec-key1993}. The upper bound of $\epsilon_1$ follows from the fact that this is an upper bound for two-party OT capacity of the binary erasure channel with erasure probability $\epsilon_1$~\cite{ot2007}.


\section{Summary}
\label{sec:conclusions}

In this work, we formulated and studied the problem of obtaining $1$-of-$2$ string OT between two users Alice and Bob in the presence of an eavesdropper Eve. The resource for OT is a broadcast channel from Alice to Bob and Eve. Apart from the usual OT constraints between the users, we additionally require that the eavesdropper learn nothing about any users' private data. The wiretapped channel model we introduced in this study (see Figure~\ref{fig:ot_hbc_wtap_bcast}) is a generalization of the two-pary OT model studied previously \cite{ot2007}, \cite{PintoDowsMorozNasc2011}. We studied the privacy requirements in our OT problem under two distinct privacy regimes : $2$-privacy, where Eve may collude with either user and $1$-privacy where no such collusion is allowed. When the broadcast channel in the model consists of two independent and parallel BECs (see Figure~\ref{fig:ot_hbc_wtap}), we derived the OT capacity both under $2$-privacy and under $1$-privacy for honest-but-curious users. These capacity results easily generalize for $1$-of-$N$ string OT. Our protocols were extensions of the scheme presented by Ahlswede and Csisz{\'a}r \cite{ot2007}, designed to additionally guarantee privacy against Eve. The corresponding converses were generalizations of the converse arguments in \cite{ot2007}. In the same model, we studied the problem of obtaining OT when Alice and Bob can behave maliciously and the malicious user can additionally collude with Eve. For this problem, we obtained an achievable rate which is optimal when $\epsilon_1 \leq 1/2$ and is no more than a fraction $\epsilon_1$ away from optimal when $\epsilon_1 > 1/2$. Our protocol for the regime $\epsilon_1 > 1/2$ makes novel use of interactive hashing to directly obtain the keys that encrypt Alice's strings. For $\epsilon_1 \leq 1/2$, our protocol is an extension of the protocol presented in \cite{savvides_thesis}, \cite{PintoDowsMorozNasc2011}, \cite{DowsNasc2014-arxiv} and is designed to maintain privacy against the malicious user colluding with Eve. We studied a generalization of the wiretapped OT model of Figure~\ref{fig:ot_hbc_wtap}, where the eavesdropper is replaced by a legitimate user Cathy (see Figure~\ref{fig:ot_hbc_indep}). Independent OT is required between Alice-Bob and Alice-Cathy. We derived inner and outer bounds for the region of achievable rate-pairs. These bounds match except when $\epsilon_1, \epsilon_2 > 1/2$. The final OT problem we studied considers a physically degraded broadcast channel as the OT resource (see Figure~\ref{fig:ot_hbc_degraded}). OT is required between Alice and Bob with $1$-privacy. Due to the degraded nature of the channel, Eve has more information about the noise process in the legitimate users' channel compared to previous models (where Bob, Eve got independent erasure patterns). This makes it harder to guarantee privacy for Bob but also presents an opportunity for reducing the amount of secret keys needed to encrypt Alice's strings. We obtain upper and lower bounds for the OT capacity for this problem. The bounds match when $\epsilon_1 \leq (1/3) \cdot \epsilon_2 (1 - \epsilon_1)$, otherwise the lower bound is within a factor of $1/3$ of the upper bound.


\section{Discussion and open problems}
\label{sec:open_problems}

\begin{itemize}

\item In the problem of obtaining OT over a wiretapped channel with malicious users (see Section~\ref{sec:prob_statement_malicious_wtap}), our achievable $2$-private rate is $\epsilon_1 \cdot C_{2P}$ when $\epsilon_1 > 1/2$. Here, $C_{2P}$ is the $2$-private OT capacity in the same setup when users are honest-but-curious. The main reason we loose rate in our protocol is that the sets $L_0,L_1$ obtained out of interactive hashing are not disjoint. In order to obtain disjoint sets, we never use the indices $L_0 \cap L_1$. This is quite a sizeable number of indices for an honest Bob to loose out of the good set $L_U$, leading to a rate loss. If we could get interactive hashing to provide us with disjoint $L_0,L_1$, then the achievable rate can be improved. Specifically, a useful version of interactive hashing would have a subset of $\{1,2,\ldots,n\}$ as its input and would provide two disjoint subsets of $\{1,2,\ldots,n\}$ as outputs. If that is possible without loosing any property of interactive hashing, we will only have to do the required privacy amplification on $L_0,L_1$ and a higher rate for our problem will become possible. 

\item In the problem of obtaining independent OTs over a broadcast erasure channel (see Section~\ref{sec:prob_statement_indep}), there is a gap between the achievable rate and the outer bound when $\epsilon_1,\epsilon_2 > 1/2$. Our converse technique of evaluating how much information must remain hidden from any user during OT, which is applicable to any broadcast channel, does not close this gap. We believe that a more channel-specific insight on the impossibility of meeting one of the OT requirements can tighten the converse. This kind of a channel-specific converse argument was successfully employed in characterizing the $1$-private OT capacity in the presence of Eve (see Section~\ref{sec:prob_statement_wtap}) in the regime $\epsilon_2/2 \leq \epsilon_1 < 1/2$. In the independent OTs problem itself, when all three users are malicious, it is quite tempting to consider a protocol where Alice invokes two independent interactive hashing based checks. One of the checks is for Bob and the other for Cathy, to catch malicious Bob or Cathy. This is unlike the wiretapper model (Alice and Bob can act maliciously, Eve remains passive) where Alice cannot get Eve to respond to such checks. This difference is what makes the problem of catching a malicious Bob, colluding with Eve, in the wiretapper model much harder to solve. However, the two-checks approach cannot prevent attacks such as a denial-of-service attack by Bob to prevent Cathy from getting any OT, or vice-versa. Evolving a technique to prevent such attacks in our setup has been deferred to a future study.

\item A single-server private information retrieval (PIR) problem, closely related to the independent OTs problem of this paper, was formulated and studied in \cite{MishraSDPisit15}. This problem has the same setup as the independent OTs problem, except that Alice holds only a single database of $N$ strings in the PIR problem. Bob and Cathy want a string of their respective choice from this single database, with $2$-privacy. For $N=2$, the data transfer capacity for this PIR problem was derived in \cite{MishraSDPisit15}. The result uses a novel achievable scheme which is quite different from the achievable scheme used in the independent OTs problem in this paper. In fact, the achievable scheme used in the independent OTs problem of this paper turns out to be sub-optimal for the PIR problem. However, for $N > 2$, it remains open to characterize the data transfer capacity for the PIR problem.

\item Considering OT using a physically degraded channel (see Section~\ref{sec:prob_statement_degraded}) presents several open problems. We do not know the $1$-private OT capacity, with honest-but-curious users, when $(1/3) \cdot \epsilon_2(1 - \epsilon_1) < \epsilon_1$. Characterizing the OT capacity in this regime seems to require a tighter converse, based on a more channel-specific impossibility argument. We suspect that the $2$-private OT capacity in this setup is zero. However, our brief attempt at showing this has not been successful. It would be quite interesting to devise a scheme for OT with $2$-privacy here. Finally, obtaining OT when users can behave maliciously in this setup appears to require newer techniques and this problem has been deferred to a future study. 

\item In all the problems studied in this work, we have assumed unlimited public discussion. It would be interesting to study these problems when the public discussion rate is also constrained.

\end{itemize}


\section{Acknowledgements}
\label{sec:acknowledgements}

M.~Mishra gratefully acknowledges the help received from Amitava Bhattacharya, Department of Mathematics, Tata Institute of Fundamental Research (TIFR) in proving Lemma~\ref{lem:dense}.




\appendices

\section{Universal Hash Functions, R\'enyi Entropy and Privacy Amplification}
\label{appndx:privacy_amplification}

\begin{definition}
A class $\mathcal{F}$ of functions mapping $\mathcal{A} \longrightarrow \mathcal{B}$ is \emph{universal}$_2$ if, for $F \thicksim \text{Unif}(\mathcal{F})$ and for any $a_0,a_1 \in \mathcal{A}, a_0 \neq a_1$, we have
\[  P[F(a_0) = F(a_1)] \leq \frac{1}{|\mathcal{B}|}   \]
\end{definition}

The class of all linear maps from $\{0,1\}^n$ to $\{0,1\}^r$ is a universal$_2$ class. Several other examples of universal$_2$ classes of functions are given in \cite{carter_wegman_1979, carter_wegman_1981}.

\begin{definition}
Let $A$ be a random variable with alphabet $\mathcal{A}$ and distribution $p_A$. The \emph{collision probability} $P_c(A)$ of $A$ is defined as the probability that $A$ takes the same value twice in two independent experiments. That is,
\[  P_c(A) = \underset{a \in \mathcal{A}}{\sum} p^2_A(a) \]
\end{definition}

\begin{definition}
The \emph{R\'enyi entropy of order two} of a random variable $A$ is
\[  R(A) = \log_2 \left( \frac{1}{P_c(A)}  \right)  \]
\end{definition}

For an event $\mathcal{E}$, the conditional distribution $p_{A|\mathcal{E}}$ is used to define the conditional collision probability $P_c(A|\mathcal{E})$ and the conditional R\'enyi entropy of order $2$, $R(A|\mathcal{E})$.

\begin{lemma}[Corollary 4 of \cite{generalized_privacy_ampl_1995}]
\label{lem:privacy_amplification}
Let $P_{AD}$ be an arbitrary probability distribution, with $A \in \mathcal{A}, D \in \mathcal{D}$, and let $d \in \mathcal{D}$. Suppose $R(A | D = d) \geq c$. Let $\mathcal{F}$ be a universal$_2$ class of functions mapping $\mathcal{A} \longrightarrow \{0,1\}^l$ and $F \thicksim \text{Unif}(\mathcal{F})$. Then,
\begin{align*}
  H(F(A) | F, D = d) & \geq l -  \log (1 + 2^{l - c}) \\
                             & \geq l - \frac{2^{l-c}}{\ln 2}
\end{align*}
\end{lemma}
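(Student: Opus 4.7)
The plan is to prove the lemma through the standard three-stage chain: bound the conditional collision probability of $F(A)$ using universality, convert that into a lower bound on the conditional R\'enyi entropy of order two, and then lift to Shannon entropy via concavity of $\log$. This is the classical Bennett--Brassard--Cr\'epeau--Maurer route, and I would structure the argument so that each inequality is justified by a single, clean property.

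First I would write the conditional collision probability of $F(A)$ by splitting into the diagonal and off-diagonal contributions. Let $P' := P_{A \mid D=d}$ and let $A, A'$ be i.i.d.\ according to $P'$, independent of $F \sim \text{Unif}(\mathcal{F})$. Then
\begin{align*}
P_c(F(A) \mid F, D=d) &= \Pr[F(A) = F(A')] \\
&= \Pr[A = A'] + \Pr[F(A)=F(A'),\, A\neq A'] \\
&\leq P_c(A \mid D=d) + \frac{1}{2^l},
\end{align*}
where the last step uses the universal$_2$ property $\Pr[F(a)=F(a')] \leq 2^{-l}$ for each fixed pair $a \neq a'$, followed by marginalizing. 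Since $R(A \mid D=d) \geq c$, we have $P_c(A \mid D=d) \leq 2^{-c}$, giving $P_c(F(A) \mid F, D=d) \leq 2^{-c} + 2^{-l}$.

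Next I would translate this collision-probability bound into an entropy bound. The conditional R\'enyi entropy of order two is $R(F(A) \mid F, D=d) = -\log P_c(F(A) \mid F, D=d)$, so the previous step yields $R(F(A) \mid F, D=d) \geq -\log(2^{-c} + 2^{-l}) = l - \log(1 + 2^{l-c})$. To pass to Shannon entropy, I would use that for any random variable $X$, the convexity of $-\log$ and Jensen's inequality give $H(X) \geq -\log \sum_x p(x)^2$; applying this pointwise in $f$ and then averaging over $F$, with one more application of Jensen to pull $-\log$ outside the expectation over $F$, yields
\begin{equation*}
H(F(A) \mid F, D=d) \geq -\log \mathbb{E}_F\!\left[P_c(F(A) \mid F=f, D=d)\right] \geq l - \log(1 + 2^{l-c}),
\end{equation*}
which is the first inequality in the lemma.

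The second inequality is a one-line analytic fact: for $x > 0$, $\ln(1+x) \leq x$, hence $\log(1+x) \leq x/\ln 2$; applied with $x = 2^{l-c}$ this gives $l - \log(1+2^{l-c}) \geq l - 2^{l-c}/\ln 2$. I do not anticipate any serious obstacle: the only delicate point is making sure the two uses of Jensen's inequality are oriented correctly (once to get $H \geq R$ pointwise in $f$, once to pull the $-\log$ outside the expectation over $F$), both of which follow from the convexity of $-\log$ in the same direction. Once those are in place, the rest is the direct computation above.
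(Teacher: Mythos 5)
Your proof is correct, and it is exactly the standard Bennett--Brassard--Cr\'epeau--Maurer argument (collision-probability bound from universality, then two applications of Jensen's inequality for the convex function $-\log$, then $\ln(1+x)\le x$) that underlies the cited Corollary~4 of \cite{generalized_privacy_ampl_1995}; the paper itself does not reprove the lemma but simply imports it by citation. Both Jensen steps and the splitting into diagonal and off-diagonal collision terms are oriented correctly, so there is nothing to fix.
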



\section{Interactive hashing}
\label{appndx:interactive_hashing}

Interactive hashing is an interactive protocol between two users,  a Sender and a Receiver. The input to interactive hashing is a string $\boldsymbol{S} \in \{0,1\}^k$ available with Sender. The output of interactive hashing are two strings $\boldsymbol{S}_0,\boldsymbol{S}_1 \in \{0,1\}^k$, available to both Sender and Receiver, satisfying the following properties :

\begin{property}
\label{prop:ih_0}
$\boldsymbol{S}_0 \neq \boldsymbol{S}_1$
\end{property}

\begin{property}
\label{prop:ih_1}
$\exists \Phi \in \{0,1\}$ such that $\boldsymbol{S}_{\Phi} = \boldsymbol{S}$.
\end{property}

\begin{property}
\label{prop:ih_2}
Suppose Sender and Receiver are both honest. Then, 
\[ \boldsymbol{S}_{\overline{\Phi}} \thicksim \text{ Unif} \{ \{0,1\}^m \backslash \{\boldsymbol{S}_{\Phi}\}   \}. \]
\end{property}

\begin{property}
\label{prop:ih_3}
Suppose Sender is honest and Receiver is malicious. Let $V_R,V^{IH}_R$ be Receiver's views at the beginning and end of interactive hashing, respectively. Then, for $\boldsymbol{s}_0,\boldsymbol{s}_1 \in \{0,1\}^k$, $\boldsymbol{s}_0 \neq \boldsymbol{s}_1$,
\[  P[ \boldsymbol{S} = \boldsymbol{s}_0 | V_R ] = P[ \boldsymbol{S} = \boldsymbol{s}_1 | V_R ]  \quad \Longrightarrow  \quad P[ \boldsymbol{S} = \boldsymbol{s}_0 | V^{IH}_R, \boldsymbol{S}_0 = \boldsymbol{s}_0, \boldsymbol{S}_1 = \boldsymbol{s}_1 ] = P[\boldsymbol{S} = \boldsymbol{s}_1 | V^{IH}_R, \boldsymbol{S}_0 = \boldsymbol{s}_0, \boldsymbol{S}_1 = \boldsymbol{s}_1] = \frac{1}{2} \]
\end{property}

\begin{property}
\label{prop:ih_4}
Suppose Sender is malicious and Receiver is honest. Let $\mathcal{G} \subset \{0,1\}^k$. Then,
\[  P[\boldsymbol{S}_0,\boldsymbol{S}_1 \in \mathcal{G}] \leq 15.6805 \times \frac{|\mathcal{G}|}{2^k}  \]
\end{property}

Protocol~\ref{protocol:IH} is a protocol for interactive hashing for which the above properties were proved in \cite{savvides_thesis}.

\begin{algorithm*}
\floatname{algorithm}{Protocol}
\caption{Interactive hashing}
\label{protocol:IH}

Let $\boldsymbol{S}$ be a $k$-bit string that Sender wishes to send to Receiver. All operations mentioned here are in the binary field $\mathbb{F}_2$.

\begin{multicols}{2}
\begin{algorithmic}[1]

\STATE Receiver chooses a $(k-1) \times k$ matrix $\boldsymbol{M}$ uniformly at random from amongst all binary matrices of rank $(k-1)$. Let $\boldsymbol{\Delta}_i$ denote the $i^{th}$ row of $\boldsymbol{M}$.

\STATE For $1 \leq i \leq k-1$ do:
  \begin{enumerate}[(a)]
     \item Receiver send $\boldsymbol{\Delta}_i$ to Sender.
     \item Sender responds back with the bit $\Pi_i = \boldsymbol{\Delta}_i \cdot \boldsymbol{S}$.
  \end{enumerate}

\STATE Given $\boldsymbol{M}$ and the vector $\boldsymbol{\Pi} = (\Pi_1,\Pi_2,\ldots,\Pi_{k-1})$, Sender and Receiver compute the two solutions of the linear system $\boldsymbol{M} \cdot \boldsymbol{\chi} = \boldsymbol{\Pi}$. These solutions are labelled $\boldsymbol{S}_0,\boldsymbol{S}_1$ according to lexicographic order.
\end{algorithmic}
\end{multicols}
\end{algorithm*}


\section{Oblivious transfer over a wiretapped channel with honest-but-curious users : Proofs of Lemmas~\ref{lem:c2p_ach_wtap},~\ref{lem:c1p_ach_wtap},~\ref{lem:small_quant_1p_hbc_wtap}}

\subsection{Notations and definitions}

\begin{itemize}

\item Recall that for both Protocol~\ref{protocol:C2P} and Protocol~\ref{protocol:C1P}:
\begin{align*}
V_A & = \boldsymbol{K}_0, \boldsymbol{K}_1, \boldsymbol{X}, \boldsymbol{\Lambda} \\
V_B & = U, \boldsymbol{Y}, \boldsymbol{\Lambda} \\
V_E & = \boldsymbol{Z}, \boldsymbol{\Lambda}
\end{align*}
where 
\[ \boldsymbol{\Lambda} = L_0,L_1,F_0,F_1,\boldsymbol{K}_0 \oplus F_0(\boldsymbol{X}|_{L_0}), \boldsymbol{K}_1 \oplus F_1(\boldsymbol{X}|_{L_1}) \]

\item Let $\boldsymbol{\Psi} = (\Psi_i : i = 1,2,\ldots,n)$, where, for $i=1,2,\ldots,n$:
\begin{equation}
\label{eqn:defn_of_psi}
\Psi_i := \left\{  \begin{array}{ll} Y_i, & Y_i \neq \bot \\ Z_i, & Z_i \neq \bot \\ \bot, & \text{otherwise}  \end{array} \right.
\end{equation}

\end{itemize}


\subsection{Proof of Lemma~\ref{lem:c2p_ach_wtap}}
\label{appndx:proof_ach_2p_hbc_wtap}

In this proof, we use a sequence $(\mathcal{P}_n)_{n \in \mathbb{N}}$ of Protocol~\ref{protocol:C2P} instances of rate $(r - \tilde{\delta})$ and we show that (\ref{eqn:ach_2p_wtap_0}) - (\ref{eqn:ach_2p_wtap_3}) are satisfied for $(\mathcal{P}_n)_{n \in \mathbb{N}}$.

Let $\Upsilon$ be the event that $\mathcal{P}_n$ aborts in Step~\ref{step:2p_wtap_abort0}. Then, due to Chernoff's bound, $P[\Upsilon = 1] \longrightarrow 0$ exponentially fast as $n \longrightarrow \infty$.

\begin{enumerate}

\item To show that (\ref{eqn:ach_2p_wtap_0}) is satisfied for $(\mathcal{P}_n)_{n \in \mathbb{N}}$, we first note that
\begin{align*}
P[\hat{\boldsymbol{K}}_U \neq \boldsymbol{K}_U] & = P[\Upsilon=0] \cdot P[\hat{\boldsymbol{K}}_U \neq \boldsymbol{K}_U | \Upsilon = 0] + P[\Upsilon = 1] \cdot P[\hat{\boldsymbol{K}}_C \neq \boldsymbol{K}_U | \Upsilon = 1]
\end{align*} 

Since $P[\Upsilon=1] \rightarrow 0$ exponentially fast, it is sufficient to show that $P[\hat{\boldsymbol{K}}_U \neq \boldsymbol{K}_U | \Upsilon = 0] \longrightarrow 0$ as $n \longrightarrow \infty$.

When $\Upsilon=0$, Bob knows $\boldsymbol{X}|_{L_U}$. Since Bob also knows $F_U$, Bob can compute the key $F_U(\boldsymbol{X}|_{L_U})$. As a result, Bob learns $\boldsymbol{K}_U$ from $\boldsymbol{K}_U \oplus F_U(\boldsymbol{X}|_{L_U})$ sent by Alice. Hence, $P[\hat{\boldsymbol{K}}_U \neq \boldsymbol{K}_U | \Upsilon = 0] = 0$.

\item To show that (\ref{eqn:ach_2p_wtap_1}) is satisfied for $(\mathcal{P}_n)_{n \in \mathbb{N}}$, we note that 
\begin{align*}
I(\boldsymbol{K}_{\overline{U}} ; V_B,V_E ) & \leq I(\boldsymbol{K}_{\overline{U}} ; V_B,V_E, \Upsilon) \\
& = \sum_{j=0,1}P[\Upsilon=j] I(\boldsymbol{K}_{\overline{U}} ; V_B,V_E| \Upsilon=j) + I(\boldsymbol{K}_{\overline{U}} ; \Upsilon) .
\end{align*}
Since $P[\Upsilon=1] \rightarrow 0$ exponentially fast and $I(\boldsymbol{K}_{\overline{U}} ; \Upsilon)=0$, it is sufficient to show that $I(\boldsymbol{K}_{\overline{U}}  ;  V_B,V_E | \Upsilon = 0) \longrightarrow 0$ as $n \longrightarrow \infty$. The rest of this argument is implicitly conditioned on the event $\Upsilon = 0$, though we do not explicitly write it in the expressions below.
\begin{align*}
I(\boldsymbol{K}_{\overline{U}} ; V_B,V_E) & = I(\boldsymbol{K}_{\overline{U}} ; U,\boldsymbol{Y},\boldsymbol{Z},\boldsymbol{\Lambda}) \\
& = I(\boldsymbol{K}_{\overline{U}} ; U,\boldsymbol{Y},\boldsymbol{Z},L_0,L_1,F_0,F_1,\boldsymbol{K}_0 \oplus F_0(\boldsymbol{X}|_{L_0}), \boldsymbol{K}_1 \oplus F_1(\boldsymbol{X}|_{L_1})) \\
& = I(\boldsymbol{K}_{\overline{U}} ; U,\boldsymbol{Y},\boldsymbol{Z},L_U,L_{\overline{U}},F_U,F_{\overline{U}},\boldsymbol{K}_U \oplus F_U(\boldsymbol{X}|_{L_U}), \boldsymbol{K}_{\overline{U}} \oplus F_{\overline{U}}(\boldsymbol{X}|_{L_{\overline{U}}})) \\
& \stackrel{\text{(a)}}{=}  I(\boldsymbol{K}_{\overline{U}} ; \boldsymbol{K}_{\overline{U}} \oplus F_{\overline{U}}(\boldsymbol{X}|_{L_{\overline{U}}}) | U,\boldsymbol{Y},\boldsymbol{Z},L_U,L_{\overline{U}},F_U,F_{\overline{U}}, \boldsymbol{K}_U \oplus F_U(\boldsymbol{X}|_{L_U})) \\
& = H(\boldsymbol{K}_{\overline{U}} \oplus F_{\overline{U}}(\boldsymbol{X}|_{L_{\overline{U}}}) | U,\boldsymbol{Y},\boldsymbol{Z},L_U,L_{\overline{U}},F_U,F_{\overline{U}}, \boldsymbol{K}_U \oplus F_U(\boldsymbol{X}|_{L_U})) \\ & \quad - H( F_{\overline{U}}(\boldsymbol{X}|_{L_{\overline{U}}}) |\boldsymbol{K}_{\overline{U}}, U,\boldsymbol{Y}, \boldsymbol{Z},L_U,L_{\overline{U}},F_U,F_{\overline{U}}, \boldsymbol{K}_U \oplus F_U(\boldsymbol{X}|_{L_U})) \\
& \stackrel{\text{(b)}}{\leq} n(r - \tilde{\delta}) - H( F_{\overline{U}}(\boldsymbol{X}|_{L_{\overline{U}}}) |\boldsymbol{K}_{\overline{U}}, U,\boldsymbol{Y},\boldsymbol{Z},L_U,L_{\overline{U}},F_U,F_{\overline{U}}, \boldsymbol{K}_U \oplus F_U(\boldsymbol{X}|_{L_U})) \\
& = n(r - \tilde{\delta}) - H( F_{\overline{U}}(\boldsymbol{X}|_{L_{\overline{U}}}) | \boldsymbol{\Psi}|_{L_{\overline{U}}}, \boldsymbol{K}_{\overline{U}}, U,\boldsymbol{Y},\boldsymbol{Z},L_U,L_{\overline{U}},F_U, F_{\overline{U}},\boldsymbol{K}_U \oplus F_U(\boldsymbol{X}|_{L_U})) \\
& \stackrel{\text{(c)}}{=} n(r - \tilde{\delta}) - H( F_{\overline{U}}(\boldsymbol{X}|_{L_{\overline{U}}}) | F_{\overline{U}}, \boldsymbol{\Psi}|_{L_{\overline{U}}})
\end{align*}

where (a) follows since $\boldsymbol{K}_{\overline{U}} \independent ( U,\boldsymbol{Y},\boldsymbol{Z},L_U,L_{\overline{U}},F_U,F_{\overline{U}},       \boldsymbol{K}_U \oplus F_U(\boldsymbol{X}|_{L_U}))$, (b) follows since $F_{\overline{U}}(\boldsymbol{X}|_{L_{\overline{U}}})$ is $n(r - \tilde{\delta})$ bits long and (c) follows since $ F_{\overline{U}}(\boldsymbol{X}|_{L_{\overline{U}}}) - F_{\overline{U}},\boldsymbol{\Psi}|_{L_{\overline{U}}} - \boldsymbol{K}_{\overline{U}}, U,\boldsymbol{Y},\boldsymbol{Z},L_U,L_{\overline{U}},F_U,\boldsymbol{K}_U \oplus F_U(\boldsymbol{X}|_{L_U})$ is a Markov chain.

Now, $R(\boldsymbol{X}|_{L_{\overline{U}}} \;  \mathlarger{\mid}  \; \boldsymbol{\Psi}|_{L_{\overline{U}}}  = \boldsymbol{\psi}|_{l_{\overline{u}}})  = \#_e(\boldsymbol{\psi}|_{l_{\overline{u}}})$.  Also, whenever $\#_e(\boldsymbol{\psi}|_{l_{\overline{u}}}) \geq (\epsilon_2 - \delta) |l_{\overline{u}}| = nr$, applying Lemma~\ref{lem:privacy_amplification} we get:
\begin{align*}
H(F_{\overline{U}}(\boldsymbol{X}|_{L_{\overline{U}}}) \;  \mathlarger{\mid}  \; F_{\overline{U}}, \boldsymbol{\Psi}|_{L_{\overline{U}}} = \boldsymbol{\psi}|_{l_{\overline{u}}}) & \geq n(r - \tilde{\delta}) - \frac{2^{n(r - \tilde{\delta}) - nr}}{\ln 2} \\
& = n(r - \tilde{\delta}) - \frac{2^{-\tilde{\delta}n }}{\ln 2}
\end{align*}

We know by Chernoff's bound that $P[\#_e(\boldsymbol{\Psi}|_{L_{\overline{U}}}) \geq (\epsilon_2 - \delta) |L_{\overline{U}}|] \geq 1 - \xi$, where $\xi \longrightarrow 0$ exponentially fast as $n \longrightarrow \infty$.  Note that there is an implicit conditioning on the event $\Upsilon = 0$ here too. Thus,
\begin{align*}
 I(\boldsymbol{K}_{\overline{U}} ; V_B,V_E) & \leq n(r - \tilde{\delta}) - H( F_{\overline{U}}(\boldsymbol{X}|_{L_{\overline{U}}}) \;  \mathlarger{\mid}  \; F_{\overline{U}}, \boldsymbol{\Psi}|_{L_{\overline{U}}}) \\
& \leq  n (r - \tilde{\delta}) - (1 - \xi) \left(  n(r - \tilde{\delta}) -  \frac{2^{-\tilde{\delta}n}}{\ln 2} \right) \\
& = \xi n(r - \tilde{\delta}) + (1 - \xi) \cdot \frac{2^{-\tilde{\delta}n}}{\ln 2}
\end{align*}

Thus, $I(\boldsymbol{K}_{\overline{U}} ; V_B,V_E) \longrightarrow 0$ as $n \longrightarrow \infty$.

\item To show that (\ref{eqn:ach_2p_wtap_2}) is satisfied for $(\mathcal{P}_n)_{n \in \mathbb{N}}$, we note that
\begin{align*}
I(U ; V_A,V_E ) & \leq I(U ; V_A, V_E, \Upsilon) \\
    & = \sum_{j=0,1}P[\Upsilon = j] I(U ; V_A,V_E| \Upsilon = j) + I(U ; \Upsilon)
\end{align*}
Since $P[\Upsilon = 1] \rightarrow 0$ exponentially fast and $I(U ; \Upsilon) = 0$, it is sufficient
to show that $I(U  ;  V_A,V_E | \Upsilon = 0) \longrightarrow 0$ as $n \longrightarrow \infty$. The rest of this argument is implicitly conditioned on the event $\Upsilon = 0$, though we do not explicitly write it in the expressions below.
\begin{align*}
I(U ; V_A,V_E) & = I(U ; \boldsymbol{K}_0, \boldsymbol{K}_1, \boldsymbol{X}, \boldsymbol{Z}, \boldsymbol{\Lambda}) \\
& = I(U ; \boldsymbol{K}_0, \boldsymbol{K}_1, \boldsymbol{X}, \boldsymbol{Z}, L_0,L_1,F_0,F_1,\boldsymbol{K}_0 \oplus F_0(\boldsymbol{X}|_{L_0}), \boldsymbol{K}_1 \oplus F_1(\boldsymbol{X}|_{L_1})) \\
& = I(U ; \boldsymbol{K}_0, \boldsymbol{K}_1, \boldsymbol{X}, \boldsymbol{Z}, L_0,L_1,F_0,F_1, F_0(\boldsymbol{X}|_{L_0}), F_1(\boldsymbol{X}|_{L_1})) \\
& \stackrel{\text{(a)}}{=} I(U ; \boldsymbol{X}, \boldsymbol{Z}, L_0,L_1,F_0,F_1, F_0(\boldsymbol{X}|_{L_0}), F_1(\boldsymbol{X}|_{L_1})) \\
& \stackrel{\text{(b)}}{=} I(U ; \boldsymbol{X}, \boldsymbol{Z}, L_0,L_1) \\
& \stackrel{\text{(c)}}{=} I(U ; L_0,L_1) \\
& \stackrel{\text{(d)}}{=} 0
\end{align*}

where (a) follows since $\boldsymbol{K}_0, \boldsymbol{K}_1 \independent (U, \boldsymbol{X}, \boldsymbol{Z}, L_0,L_1,F_0,F_1, F_0(\boldsymbol{X}|_{L_0}), F_1(\boldsymbol{X}|_{L_1}))$, (b) follows since $U - \boldsymbol{X}, \boldsymbol{Z}, L_0,L_1 - F_0,F_1, F_0(\boldsymbol{X}|_{L_0}), F_1(\boldsymbol{X}|_{L_1})$ is a Markov chain, (c) follows since $U - L_0,L_1 - \boldsymbol{X}, \boldsymbol{Z}$ is a Markov chain and (d) follows since the channel acts independently on each input bit and $|L_0| = |L_1|$.

\item To show that (\ref{eqn:ach_2p_wtap_3}) is satisfied for $(\mathcal{P}_n)_{n \in \mathbb{N}}$, we note that
\begin{align*}
I(\boldsymbol{K}_0,\boldsymbol{K}_1,U ; V_E) & \leq I(\boldsymbol{K}_0,\boldsymbol{K}_1,U ; V_E, \Upsilon) \\
 & = \sum_{j=0,1}P[\Upsilon=j] I(\boldsymbol{K}_0,\boldsymbol{K}_1,U ; V_E | \Upsilon = j) \\ & \qquad + I(\boldsymbol{K}_0,\boldsymbol{K}_1,U ; \Upsilon)
\end{align*}

Since $P[\Upsilon = 1] \longrightarrow 0$ exponentially fast and $I(\boldsymbol{K}_0,\boldsymbol{K}_1,U ; \Upsilon) = 0$, it is sufficient to show that $I(\boldsymbol{K}_0,\boldsymbol{K}_1,U  ;  V_E | \Upsilon = 0) \longrightarrow 0$ as $n \longrightarrow \infty$. The rest of this argument is implicitly conditioned on the event $\Upsilon = 0$, though we do not explicitly write it in the expressions below.
\begin{align*}
 I(\boldsymbol{K}_0,\boldsymbol{K}_1,U  ; V_E) & = I(U ; V_E) +  I(\boldsymbol{K}_0,\boldsymbol{K}_1 ; V_E | U) \\
& = I(U ; V_E) +  I(\boldsymbol{K}_0,\boldsymbol{K}_1 ; \boldsymbol{Z}, \boldsymbol{\Lambda} | U) \\
& = I(U ; V_E) +  I(\boldsymbol{K}_0,\boldsymbol{K}_1 ; \boldsymbol{Z}, L_0, L_1, F_0, F_1,  \boldsymbol{K}_0 \oplus F_0(\boldsymbol{X}|_{L_0}), \boldsymbol{K}_1 \oplus F_1(\boldsymbol{X}|_{L_1}) | U) \\
& = I(U ; V_E) +  I(\boldsymbol{K}_0,\boldsymbol{K}_1 ; \boldsymbol{K}_0 \oplus F_0(\boldsymbol{X}|_{L_0}), \boldsymbol{K}_1 \oplus F_1(\boldsymbol{X}|_{L_1})  |   U ,  \boldsymbol{Z}, L_0, L_1, F_0, F_1) \\
& = I(U ; V_E) +  H( \boldsymbol{K}_0 \oplus F_0(\boldsymbol{X}|_{L_0}), \boldsymbol{K}_1 \oplus F_1(\boldsymbol{X}|_{L_1}) | U, \boldsymbol{Z}, L_0, L_1, F_0, F_1) \\ & \quad -  H(F_0(\boldsymbol{X}|_{L_0}), F_1(\boldsymbol{X}|_{L_1}) | \boldsymbol{K}_0, \boldsymbol{K}_1, U, \boldsymbol{Z}, L_0, L_1, F_0, F_1) \\
& \stackrel{\text{(a)}}{\leq} I(U ; V_E) +  2n(r - \tilde{\delta}) -  H(F_0(\boldsymbol{X}|_{L_0}), F_1(\boldsymbol{X}|_{L_1}) | \boldsymbol{K}_0, \boldsymbol{K}_1,  U, \boldsymbol{Z}, L_0, L_1, F_0, F_1) \\
& = I(U ; V_E) +  2n(r - \tilde{\delta}) -  H(F_0(\boldsymbol{X}|_{L_0}) | \boldsymbol{K}_0, \boldsymbol{K}_1, U, \boldsymbol{Z}, L_0, L_1, F_0, F_1) \\ & \quad - H(F_1(\boldsymbol{X}|_{L_1}) | F_0(\boldsymbol{X}|_{L_0}), \boldsymbol{K}_0, \boldsymbol{K}_1, U, \boldsymbol{Z}, L_0, L_1, F_0, F_1) \\
& \stackrel{\text{(b)}}{=} I(U ; V_E) +  2n(r - \tilde{\delta}) -  H(F_0(\boldsymbol{X}|_{L_0}) | F_0, \boldsymbol{Z}|_{L_0}) - H(F_1(\boldsymbol{X}|_{L_1}) | F_1, \boldsymbol{Z}|_{L_1})
\end{align*}

where (a) follows since both $F_0(\boldsymbol{X}|_{L_0})$ and $F_1(\boldsymbol{X}|_{L_1})$ are $n(r - \tilde{\delta})$ bits each and (b) follows since $F_0(\boldsymbol{X}|_{L_0}) - F_0, \boldsymbol{Z}|_{L_0} - \boldsymbol{K}_0, \boldsymbol{K}_1, U, \boldsymbol{Z}, L_0, L_1, F_1$ and $F_1(\boldsymbol{X}|_{L_1}) - F_1, \boldsymbol{Z}|_{L_1} - F_0(\boldsymbol{X}|_{L_0}), \boldsymbol{K}_0, \boldsymbol{K}_1, U, \boldsymbol{Z}, L_0, L_1, F_0$ are Markov chains.

Now, $R(\boldsymbol{X}|_{L_0} \;  \mathlarger{\mid}  \; \boldsymbol{Z}|_{L_0} = \boldsymbol{z}|_{l_0})  = \#_e(\boldsymbol{z}|_{l_0})$. Also, whenever $\#_e(\boldsymbol{z}|_{l_0})  \geq (\epsilon_2 - \delta)|l_0|  = nr$,  then applying Lemma~\ref{lem:privacy_amplification}, we get:
\begin{align*}
H(F_0(\boldsymbol{X}|_{L_0}) \;  \mathlarger{\mid}  \; F_0, \boldsymbol{Z}|_{L_0} = \boldsymbol{z}|_{l_0}) & \geq n(r - \tilde{\delta}) - \frac{2^{n(r - \tilde{\delta}) - nr}}{\ln 2} \\
& = n(r - \tilde{\delta}) - \frac{2^{-\tilde{\delta}n}}{\ln 2}
\end{align*}

We know from Chernoff's bound that $P[\#_e(\boldsymbol{Z}|_{L_0}) \geq (\epsilon_2 - \delta)|L_0|] \geq 1 - \xi$, where $\xi \longrightarrow 0$ exponentially fast as $n \longrightarrow \infty$. Note that there is an implicit conditioning on the event $\Upsilon = 0$ here too. Thus ,
\begin{align*}
 I(K_0,K_1,U ; V_E)  & \leq I(U ; V_E) +  2n(r - \tilde{\delta}) -  H(F_0(\boldsymbol{X}|_{L_0}) \;  \mathlarger{\mid}  \; F_0, \boldsymbol{Z}|_{L_0}) - H(F_1(\boldsymbol{X}|_{L_1}) \;  \mathlarger{\mid}  \; F_1, \boldsymbol{Z}|_{L_1}) \\
& \leq I(U ; V_E) +     2n(r - \tilde{\delta})  -   (1 - \xi) \cdot 2 \left( n(r - \tilde{\delta})  - \frac{2^{-\tilde{\delta}n}}{\ln 2} \right) \\
& = I(U ; V_E) +    2 \xi n (r - \tilde{\delta})  +  2 (1 - \xi) \frac{2^{-\tilde{\delta}n}}{\ln 2}
\end{align*}

The first term above goes to $0$ since (\ref{eqn:ach_2p_wtap_2}) holds. Hence, $I(K_0,K_1,U ; V_E) \longrightarrow 0$ as $n \longrightarrow \infty$.

\end{enumerate}


\subsection{Proof of Lemma~\ref{lem:c1p_ach_wtap}}
\label{appndx:proof_ach_1p_hbc_wtap}

In this proof, we use a sequence $(\mathcal{P}_n)_{n \in \mathbb{N}}$ of Protocol~\ref{protocol:C1P} instances of rate $(r - \tilde{\delta})$ and we show that (\ref{eqn:ach_1p_wtap_0}) - (\ref{eqn:ach_1p_wtap_3}) are satisfied for $(\mathcal{P}_n)_{n \in \mathbb{N}}$.

Let $\Upsilon$ be the event that $\mathcal{P}_n$ aborts in Step~\ref{step:1p_wtap_abort0}. Then, due to Chernoff's bound, $P[\Upsilon = 1] \longrightarrow 0$ exponentially fast as $n \longrightarrow \infty$.

\begin{enumerate}

\item To show that (\ref{eqn:ach_1p_wtap_0}) is satisfied for $(\mathcal{P}_n)_{n \in \mathbb{N}}$, the proof is the same as that for showing that (\ref{eqn:ach_2p_wtap_0}) holds for Protocol~\ref{protocol:C2P} and is, therefore, omitted.

\item To show that (\ref{eqn:ach_1p_wtap_1}) is satisfied for $(\mathcal{P}_n)_{n \in \mathbb{N}}$, it suffices to show (as in the proof of Lemma~\ref{lem:c2p_ach_wtap}) that $I(\boldsymbol{K}_{\overline{U}}  ;  V_B | \Upsilon = 0) \longrightarrow 0$ as $n \longrightarrow \infty$. The rest of this argument is implicitly conditioned on the event $\Upsilon = 0$.
\begin{align*}
 I(K_{\overline{U}} ; V_B) & = I(K_{\overline{U}} ; U, \boldsymbol{Y}, \boldsymbol{\Lambda}) \\
& = I(K_{\overline{U}} ; U, \boldsymbol{Y}, L_0,L_1,F_0,F_1,\boldsymbol{K}_0 \oplus F_0(\boldsymbol{X}|_{L_0}), \boldsymbol{K}_1 \oplus F_1(\boldsymbol{X}|_{L_1})) \\
& = I(K_{\overline{U}} ; U, \boldsymbol{Y}, L_U,L_{\overline{U}},F_U,F_{\overline{U}},\boldsymbol{K}_U \oplus F_U(\boldsymbol{X}|_{L_U}),  \boldsymbol{K}_{\overline{U}} \oplus F_{\overline{U}}(\boldsymbol{X}|_{L_{\overline{U}}})) \\
& \stackrel{\text{(a)}}{=} I(K_{\overline{U}} ; \boldsymbol{K}_{\overline{U}} \oplus F_{\overline{U}}(\boldsymbol{X}|_{L_{\overline{U}}}) | U, \boldsymbol{Y}, L_U,L_{\overline{U}},F_U,F_{\overline{U}},  \boldsymbol{K}_U \oplus F_U(\boldsymbol{X}|_{L_U})) \\
& = H(\boldsymbol{K}_{\overline{U}} \oplus F_{\overline{U}}(\boldsymbol{X}|_{L_{\overline{U}}}) | U, \boldsymbol{Y}, L_U,L_{\overline{U}},F_U,F_{\overline{U}}, \boldsymbol{K}_U \oplus F_U(\boldsymbol{X}|_{L_U}) ) \\ & \quad - H(F_{\overline{U}}(\boldsymbol{X}|_{L_{\overline{U}}}) | K_{\overline{U}}, U, \boldsymbol{Y}, L_U,L_{\overline{U}}, F_U,F_{\overline{U}},\boldsymbol{K}_U \oplus F_U(\boldsymbol{X}|_{L_U})) \\
& \stackrel{\text{(b)}}{\leq} n(r - \tilde{\delta}) - H(F_{\overline{U}}(\boldsymbol{X}|_{L_{\overline{U}}}) | K_{\overline{U}}, U, \boldsymbol{Y}, L_U,L_{\overline{U}},F_U,F_{\overline{U}}, \boldsymbol{K}_U \oplus F_U(\boldsymbol{X}|_{L_U})) \\
& \stackrel{\text{(c)}}{=} n(r - \tilde{\delta}) - H(F_{\overline{U}}(\boldsymbol{X}|_{L_{\overline{U}}}) | F_{\overline{U}}, \boldsymbol{Y}|_{L_{\overline{U}}}) \\
\end{align*}

where (a) follows since $ K_{\overline{U}} \independent (U, \boldsymbol{Y}, L_U,L_{\overline{U}},F_U,F_{\overline{U}},  \boldsymbol{K}_U \oplus F_U(\boldsymbol{X}|_{L_U})) $, (b) follows since $F_{\overline{U}}(\boldsymbol{X}|_{L_{\overline{U}}})$ is $n(r - \tilde{\delta})$ bits long and (c) follows since $F_{\overline{U}}(\boldsymbol{X}|_{L_{\overline{U}}}) - F_{\overline{U}}, \boldsymbol{Y}|_{L_{\overline{U}}} - K_{\overline{U}}, U, \boldsymbol{Y}, L_U,L_{\overline{U}},F_U, \boldsymbol{K}_U \oplus F_U(\boldsymbol{X}|_{L_U}) $ is a Markov chain. 

Now,
\begin{align*}
R(\boldsymbol{X}|_{L_{\overline{U}}} | \boldsymbol{Y}|_{L_{\overline{U}}} = \boldsymbol{y}|_{l_{\overline{u}}} ) & = \#_e(\boldsymbol{y}|_{l_{\overline{u}}}) \\
& \geq nr
\end{align*}

since the construction of $L_{\overline{U}}$ contains at least $nr$ positions that are erased for Bob. Applying Lemma~\ref{lem:privacy_amplification}, we get:
\begin{align*}
H(F_{\overline{U}}(\boldsymbol{X}|_{L_{\overline{U}}}) | F_{\overline{U}}, \boldsymbol{Y}|_{L_{\overline{U}}} = \boldsymbol{y}|_{l_{\overline{u}}} ) & \geq n(r - \tilde{\delta}) - \frac{2^{n(r - \tilde{\delta}) - nr}}{\ln 2} \\
& = n(r - \tilde{\delta}) - \frac{2^{-\tilde{\delta} n}}{\ln 2}
\end{align*}

As a result,
\begin{align*}
 I(K_{\overline{U}} ; V_B) & \leq n(r - \tilde{\delta}) - H(F_{\overline{U}}(\boldsymbol{X}|_{L_{\overline{U}}}) | F_{\overline{U}}, \boldsymbol{Y}|_{L_{\overline{U}}}) \\
& \leq \frac{2^{-\tilde{\delta} n}}{\ln 2}
\end{align*}

Thus, $I(K_{\overline{U}} ; V_B) \longrightarrow 0$ as $n \longrightarrow \infty$.

\item To show that (\ref{eqn:ach_1p_wtap_2}) is satisfied for $(\mathcal{P}_n)_{n \in \mathbb{N}}$, it suffices to show (as in the proof of Lemma~\ref{lem:c2p_ach_wtap}) that $I(U  ;  V_A | \Upsilon = 0) \longrightarrow 0$ as $n \longrightarrow \infty$. The rest of this argument is implicitly conditioned on the event $\Upsilon = 0$.
\begin{align*}
 I(U ; V_A) & = I(U ; \boldsymbol{K}_0, \boldsymbol{K}_1, \boldsymbol{X}, \boldsymbol{\Lambda}) \\
& = I(U ; \boldsymbol{K}_0, \boldsymbol{K}_1, \boldsymbol{X}, L_0,L_1,F_0,F_1,\boldsymbol{K}_0 \oplus F_0(\boldsymbol{X}|_{L_0}), \boldsymbol{K}_1 \oplus F_1(\boldsymbol{X}|_{L_1})) \\
& = I(U ; \boldsymbol{K}_0, \boldsymbol{K}_1, \boldsymbol{X}, L_0,L_1,F_0,F_1, F_0(\boldsymbol{X}|_{L_0}), F_1(\boldsymbol{X}|_{L_1})) \\
& = I(U ; \boldsymbol{K}_0, \boldsymbol{K}_1, \boldsymbol{X}, L_0,L_1,F_0,F_1) \\
& \stackrel{\text{(a)}}{=} I(U ; \boldsymbol{X}, L_0,L_1) \\
& \stackrel{\text{(b)}}{=} I(U ; L_0,L_1) \\
& \stackrel{\text{(c)}}{=} 0
\end{align*}

where (a) follows since $\boldsymbol{K}_0, \boldsymbol{K}_1, F_0, F_1 \independent (U , \boldsymbol{X}, L_0,L_1)$, (b) follows since $\boldsymbol{X} \independent (U,L_0,L_1)$ and (c) follows since the channel acts independently on each input bit and $|L_0| = |L_1|$ .

\item To show that (\ref{eqn:ach_1p_wtap_3}) is satisfied for $(\mathcal{P}_n)_{n \in \mathbb{N}}$, the proof is the same as that for showing (\ref{eqn:ach_2p_wtap_3}) holds for Protocol~\ref{protocol:C2P}.

\end{enumerate}


\subsection{Proof of Lemma~\ref{lem:small_quant_1p_hbc_wtap}}
\label{appndx:proof_small_quant_1p_hbc_wtap}

We need two lemmas from \cite{ot2007}, which are stated here for completeness.

\begin{lemma}[\cite{ot2007}]
\label{lem:lemma3_ahl_csis}
Let A,B,C denote random variables with values in finite sets $\mathcal{A}$, $\mathcal{B}$ and $\mathcal{C}$ respectively. Suppose $c_1,c_2 \in \mathcal{C}$ with $P[C = c_1] = p > 0$ and $P[C = c_2] = q > 0$. Then,
\begin{IEEEeqnarray*}{rCl}
 | H(A|B, C = c_1) & - & H(A|B, C = c_2) |  \leq 3 \sqrt{\frac{(p+q)\ln 2}{2pq}I(A,B ; C)} log|\mathcal{A}| + 1.
\end{IEEEeqnarray*}
\end{lemma}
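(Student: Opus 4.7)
The plan is to prove the bound in two stages. Stage one controls the $\ell_1$ distance between the two conditional joint laws $P_1 := P_{AB|C=c_1}$ and $P_2 := P_{AB|C=c_2}$ in terms of $I(A,B;C)$; stage two converts closeness of these laws into closeness of $H_i(A|B) := H(A|B,C=c_i)$, using only $\log|\mathcal{A}|$ (not $\log|\mathcal{A}||\mathcal{B}|$) as the dimension factor.

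For stage one, I would start from the identity $I(A,B;C) = \sum_c P[C=c]\, D(P_{AB|C=c}\|P_{AB})$ to obtain the one-sided bound $I(A,B;C) \geq p\, D_1 + q\, D_2$, where $D_i := D(P_i \| P_{AB})$. Pinsker's inequality in bits, $\|P-Q\|_1 \leq \sqrt{2 \ln 2 \cdot D(P\|Q)}$, combined with the triangle inequality yields $\|P_1 - P_2\|_1 \leq \sqrt{2 \ln 2}\,(\sqrt{D_1} + \sqrt{D_2})$. A Cauchy--Schwarz step, $\sqrt{D_1} + \sqrt{D_2} = p^{-1/2}\sqrt{p D_1} + q^{-1/2}\sqrt{q D_2} \leq \sqrt{(1/p + 1/q)(p D_1 + q D_2)}$, then gives $\|P_1 - P_2\|_1 \leq 2\sqrt{\tfrac{(p+q)\ln 2}{2pq}\, I(A,B;C)}$, which is twice the quantity appearing inside the square root in the statement.

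For stage two, I would decompose $H_i(A|B) = \sum_b P_i(b)\, e_i(b)$ where $e_i(b) := H(A|B=b)$ under $P_i$, and split the difference as
\[
H_1(A|B) - H_2(A|B) = \sum_b P_1(b)\,[e_1(b) - e_2(b)] + \sum_b [P_1(b) - P_2(b)]\, e_2(b).
\]
The second sum is at most $\log|\mathcal{A}| \cdot \|P_{1,B} - P_{2,B}\|_1 \leq \log|\mathcal{A}| \cdot \|P_1 - P_2\|_1$ after using $e_2(b) \leq \log|\mathcal{A}|$. For the first sum, I apply Fannes' inequality on the alphabet $\mathcal{A}$ for each fixed $b$: $|e_1(b) - e_2(b)| \leq \theta_b \log|\mathcal{A}| + h(\theta_b)$, where $\theta_b := \tfrac{1}{2}\|P_1(\cdot|b) - P_2(\cdot|b)\|_1$ and $h$ denotes binary entropy. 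The key step is the inequality $\sum_b P_1(b)\, \theta_b \leq \|P_1 - P_2\|_1$, which I establish by writing $P_1(a,b) - P_2(a,b) = P_1(b)[P_1(a|b) - P_2(a|b)] + P_2(a|b)[P_1(b) - P_2(b)]$, taking absolute values, summing over $a$ and $b$ using $|x+y| \geq |x| - |y|$, and absorbing the marginal-drift term back into $\|P_1 - P_2\|_1$. Concavity of $h$ then gives $\sum_b P_1(b)\, h(\theta_b) \leq h\!\left(\sum_b P_1(b)\theta_b\right) \leq 1$. Combining the two sums produces $|H_1(A|B) - H_2(A|B)| \leq 2 \|P_1 - P_2\|_1 \log|\mathcal{A}| + 1$, and substituting the stage-one bound yields the claim.

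The main obstacle I anticipate is keeping the coefficient of $\log|\mathcal{A}|$ small: a naive application of Fannes to the joint law $(A,B)$ would produce $\log|\mathcal{A}||\mathcal{B}|$, which is too weak, so the per-$b$ decomposition together with the inequality $\sum_b P_1(b)\theta_b \leq \|P_1 - P_2\|_1$ is what keeps only $\log|\mathcal{A}|$ on the right-hand side. The direct argument above gives a leading constant of $4$, whereas the stated bound has $3$; matching $3$ exactly will require a slightly sharper continuity-of-conditional-entropy step (for instance a one-shot Fannes-type bound for $H(A|B)$ used in place of the two-term split), but the overall shape of the bound, its scaling in $p$, $q$, and $I(A,B;C)$, and the $\log|\mathcal{A}|$ factor all emerge from the plan above.
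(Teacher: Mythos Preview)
The paper does not prove this lemma; it is quoted verbatim from \cite{ot2007} and stated ``for completeness'' without argument. So there is no in-paper proof to compare against, and the relevant question is only whether your proposal is sound.

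Your two-stage plan is the standard route and is correct. Stage one is clean: the identity $I(A,B;C)=\sum_c P[C=c]\,D(P_{AB|C=c}\|P_{AB})$, Pinsker, the triangle inequality through $P_{AB}$, and Cauchy--Schwarz give exactly $\|P_1-P_2\|_1\le 2\sqrt{\frac{(p+q)\ln 2}{2pq}I(A,B;C)}$. Stage two is also fine: the per-$b$ split together with Fannes on $\mathcal{A}$ and your inequality $\sum_b P_1(b)\theta_b\le \|P_1-P_2\|_1$ (which you derive correctly via $P_1(a,b)-P_2(a,b)=P_1(b)[P_1(a|b)-P_2(a|b)]+P_2(a|b)[P_1(b)-P_2(b)]$) does keep the dimension factor at $\log|\mathcal{A}|$ rather than $\log|\mathcal{A}||\mathcal{B}|$, which is the main point. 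One small technicality you should flag: for $b$ with $P_1(b)>0$ but $P_2(b)=0$ the conditional $P_2(\cdot|b)$ is undefined; handle this by extending $P_2(\cdot|b)$ arbitrarily (e.g.\ as a point mass), which does not affect $H_2(A|B)$ and keeps the Fannes step valid.

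You are honest that your constant is $4$, not $3$. This is not a structural gap: the lemma is used in the paper only to conclude that certain differences of conditional entropies are $o(n)$ when $I(A,B;C)\to 0$, for which any finite constant suffices. If you want to hit $3$, the slack in your argument is that you bound both the ``entropy-drift'' term and the ``weight-drift'' term separately by $\|P_1-P_2\|_1\log|\mathcal{A}|$; a single Fannes/Alicki--Fannes style continuity bound for conditional entropy, applied directly to $H(A|B)$ as a function of the joint law with dimension factor $\log|\mathcal{A}|$, avoids this double counting and tightens the leading constant.
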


\begin{lemma}[\cite{ot2007}, Lemma 2.2 of \cite{sec-key1993}]
\label{lem:cond_indep_hbc_wtap}
\[ I(\boldsymbol{K}_0, \boldsymbol{K}_1 ;  U, \boldsymbol{Y} | \boldsymbol{X},\boldsymbol{\Lambda}) = 0 \]
\end{lemma}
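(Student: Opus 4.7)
The plan is to prove the equivalent conditional independence $(\boldsymbol{K}_0, \boldsymbol{K}_1) \independent (U, \boldsymbol{Y}) \,|\, (\boldsymbol{X}, \boldsymbol{\Lambda})$ by exhibiting a factorization of the conditional joint distribution, given $(\boldsymbol{X}, \boldsymbol{\Lambda})$, into an ``Alice-side'' factor depending only on $(\boldsymbol{K}_0,\boldsymbol{K}_1)$ and a ``Bob-side'' factor depending only on $(U,\boldsymbol{Y})$ (each being allowed to depend on the conditioning variables).

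First I would explicitly introduce the latent sources of randomness of the protocol: Alice's private randomness $M_A$ and Bob's private randomness $M_B$, so that every public message and channel input becomes a deterministic function of the initial inputs and these random seeds. By the problem setup, $(\boldsymbol{K}_0,\boldsymbol{K}_1, U, M_A, M_B)$ are jointly independent. The crucial structural feature of any $(m,n)$-protocol (Definition~\ref{defn:protocol_hbc_wtap}) is that at each time step each user's next output is computed from their own inputs, their own private randomness, and what they have directly observed so far. That is, Alice's next public message (or next channel input $X_t$) is a function of $(\boldsymbol{K}_0,\boldsymbol{K}_1, M_A, \boldsymbol{\Lambda}^{t-1})$, while Bob's next public message is a function of $(U, M_B, \boldsymbol{Y}^{t}, \boldsymbol{\Lambda}^{t-1})$. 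In particular, Alice never sees $(U, M_B, \boldsymbol{Y})$ and Bob never sees $(\boldsymbol{K}_0, \boldsymbol{K}_1, M_A)$.

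Next I would write the full joint pmf, using the causal order of the protocol steps together with the memoryless broadcast channel, as
\begin{align*}
p(\boldsymbol{k}_0,\boldsymbol{k}_1,u,m_A,m_B,\boldsymbol{x},\boldsymbol{y},\boldsymbol{\lambda}) &= p(\boldsymbol{k}_0)\,p(\boldsymbol{k}_1)\,p(u)\,p(m_A)\,p(m_B)\,p(\boldsymbol{y}\,|\,\boldsymbol{x}) \\
&\quad\times \mathbf{1}_A(\boldsymbol{k}_0,\boldsymbol{k}_1,m_A,\boldsymbol{x},\boldsymbol{\lambda})\,\mathbf{1}_B(u,m_B,\boldsymbol{y},\boldsymbol{\lambda}),
\end{align*}
where $\mathbf{1}_A$ is the indicator that, when Alice runs the protocol with inputs $(\boldsymbol{k}_0,\boldsymbol{k}_1,m_A)$ against the observed public transcript $\boldsymbol{\lambda}$, she emits exactly $\boldsymbol{x}$ and her designated portions of $\boldsymbol{\lambda}$; and $\mathbf{1}_B$ is the analogous indicator for Bob with inputs $(u,m_B)$ and channel observations $\boldsymbol{y}$. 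The point of the preceding step is precisely that $\mathbf{1}_A$ is a function only of $(\boldsymbol{k}_0,\boldsymbol{k}_1,m_A,\boldsymbol{x},\boldsymbol{\lambda})$ and $\mathbf{1}_B$ only of $(u,m_B,\boldsymbol{y},\boldsymbol{\lambda})$, because neither user's algorithm takes the other's private randomness or the other's private channel-side information as an argument.

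Finally I would condition on $(\boldsymbol{x},\boldsymbol{\lambda})$ and marginalize out $m_A$ and $m_B$. The product form is preserved, giving
\begin{equation*}
p(\boldsymbol{k}_0,\boldsymbol{k}_1,u,\boldsymbol{y}\,|\,\boldsymbol{x},\boldsymbol{\lambda}) \;=\; g(\boldsymbol{k}_0,\boldsymbol{k}_1;\boldsymbol{x},\boldsymbol{\lambda}) \cdot h(u,\boldsymbol{y};\boldsymbol{x},\boldsymbol{\lambda})
\end{equation*}
for suitable nonnegative functions $g,h$ (the factor $p(\boldsymbol{y}\,|\,\boldsymbol{x})$ being absorbed into $h$ since $\boldsymbol{x}$ is fixed, and the normalizing constant depending only on $(\boldsymbol{x},\boldsymbol{\lambda})$). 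This product factorization is equivalent to $(\boldsymbol{K}_0,\boldsymbol{K}_1)\independent(U,\boldsymbol{Y})\,|\,(\boldsymbol{X},\boldsymbol{\Lambda})$, which is the claim $I(\boldsymbol{K}_0,\boldsymbol{K}_1;U,\boldsymbol{Y}\,|\,\boldsymbol{X},\boldsymbol{\Lambda})=0$. The only real obstacle is a bookkeeping one: making precise that the consistency indicator for the entire protocol run genuinely splits as $\mathbf{1}_A\cdot\mathbf{1}_B$. This is exactly where the honest-but-curious assumption and the explicit description of the protocol's information flow do all the work, and it is the reason that a weaker structural assumption (say, allowing Alice to peek at Bob's channel output) would break the lemma.
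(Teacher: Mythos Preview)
Your argument is correct and is precisely the standard ``cut-set'' factorization underlying Lemma~2.2 of \cite{sec-key1993}, which the paper merely cites without reproducing a proof. The paper therefore has no independent proof to compare against; your write-up is essentially the classical argument, presented globally (one factorization of the full joint law) rather than inductively over protocol rounds as in the original reference.

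One minor point of hygiene: the factor you denote $p(\boldsymbol{y}\mid\boldsymbol{x})$ should strictly be written as the product of memoryless channel transitions $\prod_t p_{Y|X}(y_t\mid x_t)$, since with interactive feedback the marginal conditional $p(\boldsymbol{y}\mid\boldsymbol{x})$ need not equal this product. This does not affect your conclusion, because all that matters is that the factor depends only on $(\boldsymbol{x},\boldsymbol{y})$ and can be absorbed into $h$ once $\boldsymbol{x}$ is fixed.
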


Note that (\ref{eqn:ach_1p_wtap_2}) and Lemma~\ref{lem:lemma3_ahl_csis} together imply
\begin{align*}
H(\boldsymbol{K}_0 | \boldsymbol{X},\boldsymbol{\Lambda}, U = 0) - H(\boldsymbol{K}_0 | \boldsymbol{X},\boldsymbol{\Lambda}, U = 1) & = o(n) \\
H(\boldsymbol{K}_1 | \boldsymbol{X},\boldsymbol{\Lambda}, U = 0) - H(\boldsymbol{K}_1 | \boldsymbol{X},\boldsymbol{\Lambda}, U = 1) & = o(n)
\end{align*}

Multiplying both equations by $1/2$ and subtracting, we get
\begin{equation}
\label{eqn:lemma3_infr3}
H(\boldsymbol{K}_U | \boldsymbol{X},\boldsymbol{\Lambda}, U) - H(\boldsymbol{K}_{\overline{U}} | \boldsymbol{X},\boldsymbol{\Lambda}, U) = o(n).
\end{equation}

Lemma~\ref{lem:cond_indep_hbc_wtap} implies that $I(\boldsymbol{K}_0,\boldsymbol{K}_1 ; U | \boldsymbol{X},\boldsymbol{\Lambda}) = 0$. Hence,
\begin{align*}
H(\boldsymbol{K}_0,\boldsymbol{K}_1 | \boldsymbol{X}, \boldsymbol{\Lambda}) &=  H(\boldsymbol{K}_0,\boldsymbol{K}_1 | \boldsymbol{X}, \boldsymbol{\Lambda}, U) \\
                     &=  H(\boldsymbol{K}_U,\boldsymbol{K}_{\overline{U}} | \boldsymbol{X}, \boldsymbol{\Lambda}, U) \\
                     &=  H(\boldsymbol{K}_U | \boldsymbol{X}, \boldsymbol{\Lambda}, U) + H(\boldsymbol{K}_{\overline{U}} | \boldsymbol{X},\boldsymbol{\Lambda}, U, \boldsymbol{K}_U) \\
                     &\leq  H(\boldsymbol{K}_U | \boldsymbol{X}, \boldsymbol{\Lambda}, U) + H(\boldsymbol{K}_{\overline{U}} | \boldsymbol{X},\boldsymbol{\Lambda}, U).
\end{align*}

In light of (\ref{eqn:lemma3_infr3}), this lemma will be proved if we show either $H(\boldsymbol{K}_U | \boldsymbol{X}, \boldsymbol{\Lambda}, U)$ or $H(\boldsymbol{K}_{\overline{U}} | \boldsymbol{X},\boldsymbol{\Lambda}, U)$ to be $o(n)$.

For this we note that Lemma~\ref{lem:cond_indep_hbc_wtap} implies \[I(\boldsymbol{K}_0,\boldsymbol{K}_1 ; \boldsymbol{Y} | \boldsymbol{X},\boldsymbol{\Lambda}, U) = 0.\] This, in turn, implies that \[I(\boldsymbol{K}_U,\boldsymbol{K}_{\overline{U}} ; \boldsymbol{Y} | \boldsymbol{X},\boldsymbol{\Lambda}, U) = 0.\] Hence, $I(\boldsymbol{K}_U ; \boldsymbol{Y} | \boldsymbol{X}, \boldsymbol{\Lambda}, U) = 0$. Therefore,
\begin{align*}
H(\boldsymbol{K}_U | \boldsymbol{X}, \boldsymbol{\Lambda}, U) &= H(\boldsymbol{K}_U | \boldsymbol{X},\boldsymbol{\Lambda}, U, \boldsymbol{Y}) \\
                   &\stackrel{\text{(a)}}{=}  H(\boldsymbol{K}_U | \boldsymbol{X},\boldsymbol{\Lambda}, U, \boldsymbol{Y}, \hat{\boldsymbol{K}}_U) \\
                   &\leq  H(\boldsymbol{K}_U | \hat{\boldsymbol{K}}_U) \\
                   &\stackrel{\text{(b)}}{=}  o(n), 
\end{align*}

where (a) follows from the fact that $\hat{\boldsymbol{K}}_U$ is a function of $(U, \boldsymbol{Y}, \boldsymbol{\Lambda})$, and (b) from (\ref{eqn:ach_1p_wtap_0}) and Fano's inequality.


\section{Oblivious transfer over a wiretapped channel with malicious users : Proof of Lemma~\ref{lem:ach_rate_malicious_wtap}}
\label{appndx:malicious_wtap_all_proofs}

For this proof, the protocol sequence $(\mathcal{P}_n)_{n \in \mathbb{N}}$ we consider is a sequence of Protocol~\ref{protocol:malicious_lt} instances when $\epsilon_1 \leq 1/2$ and of Protocol~\ref{protocol:malicious_gt} instances otherwise. The rate of each Protocol~\ref{protocol:malicious_lt} instance is $(\epsilon_1 \epsilon_2 - 5 \delta - 2 \tilde{\delta} - \delta')$ and the rate of each  Protocol~\ref{protocol:malicious_gt} instance is $(1 - \epsilon_1 - \delta)(\epsilon_1 \epsilon_2 - 3 \delta - \delta')$. We show that $(\mathcal{P}_n)_{n \in \mathbb{N}}$ satisfies the conditions required in the statement of Lemma~\ref{lem:ach_rate_malicious_wtap}.

\subsection{Notation and definitions}

\begin{itemize}

\item The definition of $\boldsymbol{\Psi}$ is given in (\ref{eqn:defn_of_psi}).

\item Let $\boldsymbol{M}$ be the matrix chosen by Alice and let $\boldsymbol{\Pi}$ be the corresponding bit sequence Bob sends to Alice during interactive hashing.

\item Let the view of Eve, just before Alice sends the encrypted strings and hash functions, be $V_E$. Then,
\[ V_E = \left\{  \hspace{-0.2cm} \begin{array}{ll}   (\boldsymbol{Z},\boldsymbol{L}_0,\boldsymbol{L}_1, \boldsymbol{M}, \boldsymbol{\Pi},\Theta,\boldsymbol{Y}|_{\boldsymbol{L}_0|_{\boldsymbol{J}_{\overline{\Theta}}}}, \boldsymbol{Y}|_{\boldsymbol{L}_1|_{\boldsymbol{J}_{\Theta}}}), & \epsilon_1 \leq \frac{1}{2} \\ (\boldsymbol{Z}, \boldsymbol{M},\boldsymbol{\Pi},\Theta,\boldsymbol{Y}|_{L_0 \cap L_1}), & \epsilon_1 > \frac{1}{2}      \end{array}     \right. \]

\item Let the combined views of Bob and Eve, just before Alice sends the encrypted strings and hash functions, be $V_{BE}$. Then,
\[  V_{BE} = \left\{  \hspace{-0.2cm}   \begin{array}{ll}   (U,\boldsymbol{Y},\boldsymbol{Z},\boldsymbol{L}_0,\boldsymbol{L}_1,\boldsymbol{S},\boldsymbol{M}), & \epsilon_1 \leq \frac{1}{2}, \text{Bob honest} \\   (\boldsymbol{Y},\boldsymbol{Z},\boldsymbol{L}_0,\boldsymbol{L}_1,\boldsymbol{M},\boldsymbol{\Pi},\Theta), & \epsilon_1 \leq \frac{1}{2}, \text{Bob malicious} \\ (U,\boldsymbol{Y},\boldsymbol{Z},\boldsymbol{S},\boldsymbol{M}), & \epsilon_1 > \frac{1}{2}, \text{Bob honest} \\  (\boldsymbol{Y},\boldsymbol{Z},\boldsymbol{M},\boldsymbol{\Pi},\Theta), & \epsilon_1 > \frac{1}{2}, \text{Bob malicious} \\ \end{array}   \right.   \]
Note that if the output of interactive hashing corresponding to the input string $\boldsymbol{S}$ are the strings $\boldsymbol{S}_0,\boldsymbol{S}_1$, then $\boldsymbol{S}_0,\boldsymbol{S}_1$ are functions of $\boldsymbol{S}, \boldsymbol{M}$ as well as functions of $\boldsymbol{M}, \boldsymbol{\Pi}$.

\end{itemize}


\subsection{Proof of Lemma~\ref{lem:ach_rate_malicious_wtap}(\ref{lem:honest_rate_malicious_wtap})}


\subsubsection{$\mathcal{P}_n$ aborts with vanishing probability} \*
\label{sec:protocol_doesnt_abort_for_hbc}

\underline{$\mathbf{\epsilon_1 \leq 1/2}$}: 

The protocol can abort at steps (\ref{step:abort0_0}), (\ref{step:abort0_1}) or (\ref{step:abort0_2}). We show that when Alice and Bob are honest, each of these aborts happens only with vanishing probability. 

\begin{itemize}

\item In step (\ref{step:abort0_0}), we note that $\beta + \gamma = 1 - \epsilon_1 - \delta - \tilde{\delta}$. As a consequence of Chernoff's bound, $\#_{\overline{e}}(\boldsymbol{Y})  \geq (\beta + \gamma)n$ w.h.p.. Similarly, since $\beta -\gamma = \epsilon_1 - \delta$, then w.h.p., $\#_e(\boldsymbol{Y}) \geq (\beta - \gamma)n$. Thus, an abort happens in this step with only vanishing probability.

\item In step (\ref{step:abort0_1}), an abort never happens since $\{\boldsymbol{L}_0\},\{\boldsymbol{L}_1\}$ are disjoint by construction.

\item In step (\ref{step:abort0_2}), we note that the strings $\boldsymbol{Y}|_{\boldsymbol{L}_0|_{\boldsymbol{J}_{\overline{\Theta}}}}$, $\boldsymbol{Y}|_{\boldsymbol{L}_1|_{\boldsymbol{J}_{\Theta}}}$ are, in fact, the strings $\boldsymbol{Y}|_{\boldsymbol{L}_U|_{\boldsymbol{J}_{\overline{\Phi}}}}$, $\boldsymbol{Y}|_{\boldsymbol{L}_{\overline{U}}|_{\boldsymbol{J}_{\Phi}}}$. By construction, $\#_e(\boldsymbol{Y}|_{\boldsymbol{L}_U|_{\boldsymbol{J}_{\overline{\Phi}}}}) = 0$ and  $\#_e(\boldsymbol{Y}|_{\boldsymbol{L}_{\overline{U}}|_{\boldsymbol{J}_{\Phi}}}) = 0$. So, Bob correctly reveals the bits $\boldsymbol{X}|_{\boldsymbol{L}_U|_{\boldsymbol{J}_{\overline{\Phi}}}}$ and $\boldsymbol{X}|_{\boldsymbol{L}_{\overline{U}}|_{\boldsymbol{J}_{\Phi}}}$. Hence, an abort never happens in this step since Alice's check in this step always passes.

\end{itemize}

\underline{$\mathbf{\epsilon_1 > 1/2}$}: 

The protocol can abort at steps (\ref{step:abort1_0}), (\ref{step:abort1_1}), (\ref{step:abort1_2}) or (\ref{step:abort1_3}). We show that when Alice and Bob are honest, each of these aborts happens only with vanishing probability. 

\begin{itemize}
\item In step (\ref{step:abort1_0}), we note that $\beta = 1 - \epsilon_1 - \delta$. As a consequence of Chernoff's bound, $\#_{\overline{e}}(\boldsymbol{Y})  \geq \beta n$ w.h.p. Thus, an abort happens in this step with only vanishing probability.

\item In step (\ref{step:abort1_1}), an abort happens when either $\boldsymbol{S} \in \mathcal{B}^c$ or $\boldsymbol{S}_{\overline{\Phi}} \in \mathcal{B}^c$. 
Now, $P[\boldsymbol{S} \in \mathcal{B}^c] = 1 - |\mathcal{B}|/2^m$. As a consequence of Lemma~\ref{lem:log_dense} (in Appendix~\ref{appndx:supporting_lemmas}), the fractional part of $\log |\mathcal{B}| = \log |\mathcal{T}| = \log (\comb{n}{\beta n})$ converges to $1$ over an appropriate choice of a sequence of natural numbers. As a result, $1 - |\mathcal{B}|/2^m$ can be made as small as desired by choosing a sufficiently large $n$ from this sequence.
Similarly, due to Property~\ref{prop:ih_2} of interactive hashing, the string $\boldsymbol{S}_{\overline{\Phi}}$ is uniformly distributed over all strings other than $\boldsymbol{S}$. As a result, $P[\boldsymbol{S}_{\overline{\Phi}} \in \mathcal{B}^c] = |\mathcal{B}^c|/(2^m - 1) = (2^m - |\mathcal{B}|)/(2^m - 1)$ which can be made arbitrarily small, again as a consequence of Lemma~\ref{lem:log_dense}.

\item To see that the protocol aborts in step (\ref{step:abort1_2}) only with vanishing probability, we begin by noting that when Alice and Bob are honest, interactive hashing guarantees that $\boldsymbol{S}_{\overline{\Phi}} \thicksim \text{Unif} \{ \boldsymbol{s} \in \{0,1\}^m : \boldsymbol{s} \neq  \boldsymbol{S} \}$. Since the protocol did not abort in step (\ref{step:abort1_1}), this implies that $\boldsymbol{S}_{\overline{\Phi}} \thicksim \text{Unif}\{ \boldsymbol{s} \in \mathcal{B} : \boldsymbol{s} \neq \boldsymbol{S}\}$. Since $Q$ is a bijective map, $L_{\overline{\Phi}}$ is uniform over $\mathcal{T} \backslash \{ L_{\Phi} \}$. Let $l_{\phi}$ be a specific realization of $L_{\Phi}$. Lemma~\ref{lem:overlap_within_bounds_corr} (in Appendix~\ref{appndx:supporting_lemmas}, applied with $k = n, \varphi = \beta, \rho = \delta, \upsilon_0 = l_{\phi}, \Upsilon_1 = L_{\overline{\Phi}}$) proves that the probability with which the overlap size $|L_{\Phi} \cap L_{\overline{\Phi}}| = |L_0 \cap L_1|$ is outside the specified bounds falls exponentially in $n$.

\item The protocol never aborts in step (\ref{step:abort1_3}) when users are honest. This is because $\#_e(\boldsymbol{Y}|_{L_{\Phi}}) = 0$. Since $L_0 \cap L_1 = L_{\Phi} \cap L_{\overline{\Phi}} \subset L_{\Phi} $, $\boldsymbol{Y}|_{L_0 \cap L_1} = \boldsymbol{X}|_{L_0 \cap L_1}$. Hence, Alice's check in step  (\ref{step:abort1_3}) always passes when users are honest.
\end{itemize}


\subsubsection{(\ref{eqn:ach_2p_wtap_0})-(\ref{eqn:ach_2p_wtap_3}) hold} \*

The rate $r_n$ of Protocol $\mathcal{P}_n$ is:
\[  r_n = \left\{ \begin{array}{ll} \epsilon_1 \epsilon_2 - 5\delta - 2\tilde{\delta}  - \delta', & \epsilon_1 \leq \frac{1}{2} \\ (1 - \epsilon_1 - \delta)(\epsilon_1 \epsilon_2 - 3\delta - \delta'), & \epsilon_1 > \frac{1}{2} \end{array} \right. \]

To show that (\ref{eqn:ach_2p_wtap_0})-(\ref{eqn:ach_2p_wtap_3}) hold over $(\mathcal{P}_n)_{n \in \mathbb{N}}$, we first note that when Alice and Bob are honest, all the checks in $\mathcal{P}_n$ pass with high probability (as proved above). Thus, these checks cease to matter when Alice and Bob are honest. In such a setting, we show that $\mathcal{P}_n$ then is fundamentally no different from Protocol~\ref{protocol:C2P} which satisfies (\ref{eqn:ach_2p_wtap_0})-(\ref{eqn:ach_2p_wtap_3}) for honest Bob and Alice in this setup.

\underline{$\mathbf{\epsilon_1 \leq 1/2}$}: 

In this case of the protocol, Bob sends disjoint tuples $\boldsymbol{L}_0,\boldsymbol{L}_1$ to Alice over the public channel. One of these tuples comprises of unerased positions and the other comprises of mostly erased positions from $\boldsymbol{Y}$. Alice uses the bits $\boldsymbol{X}|_{\boldsymbol{L}_0}, \boldsymbol{X}|_{\boldsymbol{L}_1}$ to form keys (using functions $F_0, F_1$) that she uses to encrypt her strings $\boldsymbol{K}_0,\boldsymbol{K}_1$. We show that both keys are secret from Eve (even if Eve additionally know $U$) and there is one key not known to colluding Bob and Eve. The steps involved in proving these statements are not very different from those proving a similar property for Protocol~\ref{protocol:C2P}, except that $V_E, V_{BE}$ here have some additional variables in them including the ones used during interactive hashing.

\begin{itemize}

\item We show in Lemma~\ref{lem:eve_knows_little_lt} (in Appendix~\ref{appndx:supporting_lemmas})  that 
\begin{align*}
H(F_0(\boldsymbol{X}|_{\boldsymbol{L}_0}) \;  \mathlarger{\mid} \; F_0, U, V_E) & \geq  (1 - \xi) \cdot \left(     (\epsilon_1 \epsilon_2 - 5 \delta - 2\tilde{\delta} - \delta')n - \frac{2^{-(\delta + \delta') n}}{\ln 2}     \right) \\
H(F_1(\boldsymbol{X}|_{\boldsymbol{L}_1}) \;  \mathlarger{\mid} \; F_1, U, V_E) & \geq  (1 - \xi) \cdot \left(     (\epsilon_1 \epsilon_2 - 5 \delta - 2\tilde{\delta} - \delta')n - \frac{2^{-(\delta + \delta') n}}{\ln 2}     \right) 
\end{align*}
where $\xi \longrightarrow 0$ exponentially fast as $n \longrightarrow \infty$.

Thus, even if Eve knows $U$, Eve gains only about $2^{-(\delta + \delta') n}/\ln 2$ bits of information about either of the keys that encrypt $\boldsymbol{K}_0$ and $\boldsymbol{K}_1$. 

 \item We show in Lemma~\ref{lem:bob_honest_bobeve_know_little_lt} (in Appendix~\ref{appndx:supporting_lemmas}) that 
\[  H( F_{\overline{U}}(\boldsymbol{X}|_{\boldsymbol{L}_{\overline{U}}}) \;  \mathlarger{\mid} \; F_{\overline{U}}, V_{BE} ) \geq (1 - \xi) \cdot \left(   (\epsilon_1 \epsilon_2 - 5 \delta - 2\tilde{\delta} - \delta')n - \frac{2^{-(\delta + \delta') n}}{\ln 2}  \right)  \]
where $\xi \longrightarrow 0$ exponentially fast as $n \longrightarrow \infty$. 

Thus, colluding Bob and Eve learn only about $2^{-(\delta + \delta') n}/\ln 2$ bits of information about the key $F_{\overline{U}}(\boldsymbol{X}|_{\boldsymbol{L}_{\overline{U}}})$. 

\end{itemize}

Hence, both keys are secret from Eve and one of the keys is secret from colluding Bob and Eve. Such keys encrypting Alice's strings are sufficient to satisfy (\ref{eqn:ach_2p_wtap_0})-(\ref{eqn:ach_2p_wtap_3}) in this setup, as seen previously in Protocol~\ref{protocol:C2P}. As a result, (\ref{eqn:ach_2p_wtap_0})-(\ref{eqn:ach_2p_wtap_3}) are satisfied for $(\mathcal{P}_n)_{n \in \mathbb{N}}$ as well.

\underline{$\mathbf{\epsilon_1 > 1/2}$}: 

This case of the protocol is the same as that described for $\epsilon_1 \leq 1/2$, except that the disjoint tuples being used are now $(L_0 \backslash L_0 \cap L_1)$ and $(L_1 \backslash L_0 \cap L_1)$. We again show that both keys are secret from Eve (even if Eve additionally knows $U$) and there is one key not known to colluding Bob and Eve.

\begin{itemize}

\item By Lemma~\ref{lem:eve_knows_little_gt} (in Appendix~\ref{appndx:supporting_lemmas}), we have 
\begin{align*}
H(F_U( \boldsymbol{X}|_{L_{\Phi} \backslash L_0 \cap L_1}  ) \;  \mathlarger{\mid} \; F_U, U, V_E) & \geq (1 - \xi) \cdot \left(    \beta n(\epsilon_1 \epsilon_2 - 3 \delta - \delta') - \frac{2^{-(\delta + \delta')\beta n}}{\ln 2}    \right) \\
H(F_{\overline{U}}( \boldsymbol{X}|_{L_{\overline{\Phi}} \backslash L_0 \cap L_1}  ) \;  \mathlarger{\mid} \; F_{\overline{U}}, U, V_E)  & \geq (1 - \xi) \cdot \left(    \beta n(\epsilon_1 \epsilon_2 - 3 \delta - \delta') - \frac{2^{-(\delta + \delta')\beta n}}{\ln 2}    \right)
\end{align*}
where $\xi \longrightarrow 0$ exponentially fast as $n \longrightarrow \infty$.

Thus, even if Eve knows $U$, Eve gains only about $2^{-(\delta + \delta')\beta n}/\ln 2$ bit of information about either of the keys that encrypt $\boldsymbol{K}_0$ and $\boldsymbol{K}_1$.

\item  We show in Lemma~\ref{lem:bob_honest_bobeve_know_little_gt} (in Appendix~\ref{appndx:supporting_lemmas}) that
\[  H( F_{\overline{U}}(\boldsymbol{X}|_{L_{\overline{\Phi}} \backslash L_0 \cap L_1}) \;  \mathlarger{\mid} \; F_{\overline{U}}, V_{BE} ) \geq (1 - \xi) \cdot \left(   \beta n(\epsilon_1 \epsilon_2 - 3 \delta - \delta') - \frac{2^{-(\delta + \delta')\beta n}}{\ln 2}  \right)   \]
where $\xi \longrightarrow 0$ exponentially fast as $n \longrightarrow \infty$.

Thus, colluding Bob and Eve learn only about $2^{-(\delta + \delta')\beta n}/\ln 2$ bits of information about the key $F_{\overline{U}}(\boldsymbol{X}|_{L_{\overline{\Phi}} \backslash L_0 \cap L_1})$.

\end{itemize}

This scheme, like Protocol~\ref{protocol:C2P}, produces two keys both of which are not known to Eve and one of which is not known to colluding Bob and Eve. Thus, for the same reasons as for Protocol~\ref{protocol:C2P}, (\ref{eqn:ach_2p_wtap_0})-(\ref{eqn:ach_2p_wtap_3}) are satisfied for $(\mathcal{P}_n)_{n \in \mathbb{N}}$.


\subsection{Proof of Lemma~\ref{lem:ach_rate_malicious_wtap}(\ref{lem:malicious_alice_lemma})}

A malicious Alice, colluding with Eve, can present arbitrary values for $\boldsymbol{X}$ and can adopt an arbitrary strategy during interactive hashing, in $\mathcal{P}_n$. Let $V_{AE}$ be the combined views of Alice and Eve at the start of interactive hashing and let $V^{IH}_{AE}$ be the combined views of Alice and Eve at the end of interactive hashing. Then, Property~\ref{prop:ih_3} of interactive hashing guarantees that if, $\forall \boldsymbol{s} \in \{0,1\}^m, P \left[ \boldsymbol{S} = \boldsymbol{s} | V_{AE} \right] = 1/2^m$, then $\forall \boldsymbol{s}_0, \boldsymbol{s}_1 \in \{0,1\}^m$, $P \left[  \boldsymbol{S} = \boldsymbol{s}_0 | V^{IH}_{AE}, \boldsymbol{S}_0 = \boldsymbol{s}_0, \boldsymbol{S}_1 = \boldsymbol{s}_1 \right] = P \left[  \boldsymbol{S} = \boldsymbol{s}_1 | V^{IH}_{AE}, \boldsymbol{S}_0 = \boldsymbol{s}_0, \boldsymbol{S}_1 = \boldsymbol{s}_1 \right] = 1/2$. In other words, 
\[P \left[  \Phi = 0 | V^{IH}_{AE}, \boldsymbol{S}_0 = \boldsymbol{s}_0, \boldsymbol{S}_1 = \boldsymbol{s}_1 \right] = P \left[  \Phi = 1 | V^{IH}_{AE}, \boldsymbol{S}_0 = \boldsymbol{s}_0, \boldsymbol{S}_1 = \boldsymbol{s}_1 \right] = 1/2.\] 
This is the main property that we use to guarantee privacy for honest Bob against malicious Alice who is potentially colluding with Eve.  Specifically, we show that malicious Alice cannot influence or guess honest Bob's choices and, as a result, does not learn $U$.


\subsubsection{\underline{$\epsilon_1 \leq 1/2$}} \*

In this regime, $(\boldsymbol{X}, \boldsymbol{Z}, \boldsymbol{K}_0, \boldsymbol{K}_1) - (E,\overline{E}, \boldsymbol{S}, \boldsymbol{J}) - (\boldsymbol{L}_0,\boldsymbol{L}_1)$ is a Markov chain. Furthermore,

\begin{itemize}
\item $(E,\overline{E}, \boldsymbol{S}, \boldsymbol{J}) \independent (\boldsymbol{X}, \boldsymbol{Z}, \boldsymbol{K}_0, \boldsymbol{K}_1)$
\item $(\boldsymbol{S},\boldsymbol{J}) \independent (E,\overline{E})$, since honest Bob chose the string $\boldsymbol{S}$ independently of $E,\overline{E}$.
\item $(\boldsymbol{L}_0,\boldsymbol{L}_1) \independent (\boldsymbol{S},\boldsymbol{J})$, since $\boldsymbol{L}_0,\boldsymbol{L}_1$ are randomly ordered tuples, conveying no information about $\boldsymbol{S},\boldsymbol{J}$.
\end{itemize}

Thus, when malicious Alice receives $\boldsymbol{L}_0,\boldsymbol{L}_1$, it gains no information about $\boldsymbol{S},\boldsymbol{J}$ and certainly no information about $U$.

Now, $V_{AE} = (\boldsymbol{X},\boldsymbol{Z},\boldsymbol{K}_0,\boldsymbol{K}_1,\boldsymbol{L}_0,\boldsymbol{L}_1)$. Hence,
\begin{align*}
P \left[  \boldsymbol{S} = \boldsymbol{s} | V_{AE} \right] & = P \left[  \boldsymbol{S} = \boldsymbol{s} | \boldsymbol{X},\boldsymbol{Z},\boldsymbol{K}_0,\boldsymbol{K}_1,\boldsymbol{L}_0,\boldsymbol{L}_1 \right] \\
                                                                                        & = P \left[  \boldsymbol{S} = \boldsymbol{s} | \boldsymbol{X},\boldsymbol{Z},\boldsymbol{K}_0,\boldsymbol{K}_1 \right] \\
                                                                                        & = P \left[  \boldsymbol{S} = \boldsymbol{s} \right] \\
                                                                                        & = \frac{1}{2^m}
\end{align*}

As a result, $P \left[  \Phi = 0 | V^{IH}_{AE} \right] = P \left[  \Phi = 1 | V^{IH}_{AE} \right] = 1/2$. Thus, when Bob communicates $\Theta = \Phi \oplus U$, malicious Alice does not learn $U$.


\subsubsection{\underline{$\epsilon_1 > 1/2$}} \*

In this regime, $(\boldsymbol{X},\boldsymbol{Z},\boldsymbol{K}_0,\boldsymbol{K}_1) - (E,\overline{E}) - \boldsymbol{S}$ is a Markov chain. Since $(E,\overline{E}) \independent (\boldsymbol{X}, \boldsymbol{Z}, \boldsymbol{K}_0, \boldsymbol{K}_1)$, we have $\boldsymbol{S} \independent (\boldsymbol{X},\boldsymbol{Z},\boldsymbol{K}_0,\boldsymbol{K}_1)$. Importantly, $V_{AE} = (\boldsymbol{X},\boldsymbol{Z},\boldsymbol{K}_0,\boldsymbol{K}_1)$ and, thus, $\boldsymbol{S} \independent V_{AE}$. As a result,
\begin{align*}  
  P[\boldsymbol{S}  = \boldsymbol{s} | V_{AE}]  = P[\boldsymbol{S} = \boldsymbol{s}]  & = \left\{   \begin{array}{ll}  \underset{b_g}{\sum} P[\boldsymbol{S} = \boldsymbol{s} | \;\; |\mathcal{B}_G| = b_g] \cdot P[|\mathcal{B}_G| = b_g], & \boldsymbol{s} \in \mathcal{B} \\ \frac{1}{|\mathcal{B}^c|} \cdot \left(1 - \frac{|\mathcal{B}|}{2^m} \right), & \boldsymbol{s} \in \mathcal{B}^c    \end{array}   \right. \\
   & = \left\{   \begin{array}{ll}  \underset{b_g}{\sum} P[\boldsymbol{S} = \boldsymbol{s} | \boldsymbol{s} \in \mathcal{B}_G, |\mathcal{B}_G| = b_g] & \\ \quad \cdot P[\boldsymbol{s} \in \mathcal{B}_G | \;\; |\mathcal{B}_G| = b_g] \cdot P[|\mathcal{B}_G| = b_g], & \boldsymbol{s} \in \mathcal{B} \\ \frac{1}{2^m} , & \boldsymbol{s} \in \mathcal{B}^c    \end{array}   \right. \\
\end{align*}

Note that :

\begin{itemize}
\item $P[\boldsymbol{S} =\boldsymbol{s} | \boldsymbol{s} \in \mathcal{B}_G, |\mathcal{B}_G| = b_g]  =  \frac{1}{b_g} \cdot \frac{|\mathcal{B}|}{2^m}$
\item $\displaystyle\begin{aligned}[t]
             P[\boldsymbol{s} \in \mathcal{B}_G | \;\; |\mathcal{B}_G| = b_g]  & = 1 -  P[\boldsymbol{s} \notin \mathcal{B}_G | \;\; |\mathcal{B}_G| = b_g] \\
                                                & =  1 - \frac{ \comb{|\mathcal{B}|-1}{b_g}  }{\comb{|\mathcal{B}|}{b_g}} \\
                                                & = 1 - \frac{|\mathcal{B}| - b_g}{|\mathcal{B}|} \\
                                                & = \frac{ b_g }{|\mathcal{B}|} \end{aligned}$
\end{itemize}

Thus, 
\begin{align*}
   P[\boldsymbol{S} = \boldsymbol{s} | V_{AE}]  & =  \left\{  \begin{array}{ll}   \underset{b_g}{\sum}  \frac{1}{b_g} \frac{|\mathcal{B}|}{2^m} \cdot \frac{ b_g }{|\mathcal{B}|} \cdot  P[|\mathcal{B}_G| = b_g], & \boldsymbol{s} \in \mathcal{B} \\  \frac{1}{2^m}, & \boldsymbol{s} \in \mathcal{B}^c   \end{array}    \right. \\
                         &  = \left\{  \begin{array}{ll}   \frac{1}{2^m}, & \boldsymbol{s} \in \mathcal{B} \\  \frac{1}{2^m}, & \boldsymbol{s} \in \mathcal{B}^c   \end{array}    \right. 
\end{align*}

That is, $\forall \boldsymbol{s} \in \{0,1\}^m, P \left[  \boldsymbol{S} = \boldsymbol{s} | V_{AE} \right] = P[\boldsymbol{S} = \boldsymbol{s}] = 1/2^m$. As a result, $P \left[  \Phi = 0 | V^{IH}_{AE} \right] = P \left[  \Phi = 1 | V^{IH}_{AE} \right] = 1/2$. Hence, when Alice receives $\Theta = \Phi \oplus U$, Alice learns nothing about $U$.


\subsection{Proof of Lemma~\ref{lem:ach_rate_malicious_wtap}(\ref{lem:malicious_bob_lemma})}



\subsubsection{\underline{$\epsilon_1 \leq 1/2$}} \*

A malicious Bob, in collusion with Eve, can produce arbitrary values for $(\boldsymbol{L}_0,\boldsymbol{L}_1,\Theta)$ during $\mathcal{P}_n$. In order to pass the check in step~\ref{step:abort0_1}, $\{\boldsymbol{L}_0\}, \{\boldsymbol{L}_1\}$ have to be disjoint. Importantly, Bob has to reveal $\boldsymbol{L}_0,\boldsymbol{L}_1$ before it initiates interactive hashing.

We consider the following two exhaustive cases on $\#_e(\boldsymbol{\Psi}|_{\boldsymbol{L}_0}), \#_e(\boldsymbol{\Psi}|_{\boldsymbol{L}_1})$ (see (\ref{eqn:defn_of_psi}) for the definition of $\boldsymbol{\Psi}$). 


\underline{Case 1}: \underline{$(\#_e(\boldsymbol{\Psi}|_{\boldsymbol{L}_0}) < \delta n)$ OR  $(\#_e(\boldsymbol{\Psi}|_{\boldsymbol{L}_1}) < \delta n)$} \*

W.l.o.g. let  $\#_e(\boldsymbol{\Psi}|_{\boldsymbol{L}_0}) < \delta n$. A lower bound on $\#_e(\boldsymbol{\Psi}|_{\boldsymbol{L}_1})$ is computed as follows :
\begin{align*}
\#_{\overline{e}}(\boldsymbol{\Psi}|_{\boldsymbol{L}_1}) & \leq \#_{\overline{e}}(\boldsymbol{\Psi}) - \#_{\overline{e}}(\boldsymbol{\Psi}|_{\boldsymbol{L}_0}) \\
                              & \leq (1 - \epsilon_1 \epsilon_2 + \delta)n - \#_{\overline{e}}(\boldsymbol{\Psi}|_{\boldsymbol{L}_0}) \text{  [due to Chernoff's bound, w.h.p.]}\\
                                    & \leq (1 - \epsilon_1 \epsilon_2 + \delta)n - (\beta n - \delta n)
\end{align*}

Therefore,
\begin{align*}
\#_e(\boldsymbol{\Psi}|_{\boldsymbol{L}_1}) & = \beta n - \#_{\overline{e}}(\boldsymbol{\Psi}|_{\boldsymbol{L}_1}) \\
                 & \geq \beta n - (1 - \epsilon_1 \epsilon_2 + \delta)n + (\beta n - \delta n) \\
                                               & = 2 \beta n - (1 - \epsilon_1 \epsilon_2 + 2 \delta)n \\
                                               & = 2 \left( \frac{1}{2} - \delta  - \tilde{\delta}\right) n -  (1 - \epsilon_1 \epsilon_2 + 2 \delta)n \\
                                               & = (\epsilon_1 \epsilon_2 - 4 \delta - 2\tilde{\delta})n
\end{align*}

We show in Lemma~\ref{lem:bob_malicious_bobeve_know_little_lt} (in Appendix~\ref{appndx:supporting_lemmas}) that whenever $\#_e(\boldsymbol{\Psi}|_{\boldsymbol{L}_1}) \geq (\epsilon_1 \epsilon_2 - 4 \delta - 2\tilde{\delta})n$, 
 \[ R(\boldsymbol{X}|_{\boldsymbol{L}_1} \;  \mathlarger{\mid} \; V_{BE} = v_{BE}) \geq (\epsilon_1 \epsilon_2 - 4 \delta - 2\tilde{\delta})n. \]

As a consequence of Lemma~\ref{lem:privacy_amplification}, we get:
\begin{align*}  
 H(F_1(\boldsymbol{X}|_{\boldsymbol{L}_1}) \;  \mathlarger{\mid} \; F_1, V_{BE} = v_{BE}) & \geq  (\epsilon_1 \epsilon_2 - 5 \delta - 2\tilde{\delta} - \delta')n - \frac{2^{ (\epsilon_1 \epsilon_2 - 5 \delta - 2\tilde{\delta} - \delta')n - (\epsilon_1 \epsilon_2 - 4 \delta - 2\tilde{\delta})n }}{\ln 2} \\
   & = (\epsilon_1 \epsilon_2 - 5 \delta - 2\tilde{\delta} - \delta')n - \frac{2^{-(\delta + \delta') n }}{\ln 2}. 
\end{align*}

That is, colluding Bob and Eve learn no more than $2^{-(\delta + \delta') n}/\ln 2$ bits of information about $F_1(\boldsymbol{X}|_{\boldsymbol{L}_1})$. Hence, malicious Bob colluding with Eve learns only a vanishingly small amount of information about $\boldsymbol{K}_1$ from $\boldsymbol{K}_1 \oplus F_1(\boldsymbol{X}|_{\boldsymbol{L}_1})$.


\underline{Case 2} : \underline{$(\#_e(\boldsymbol{\Psi}|_{\boldsymbol{L}_0}) \geq \delta n)$ AND  $(\#_e(\boldsymbol{\Psi}|_{\boldsymbol{L}_1}) \geq \delta n)$} \*

The key idea in this part of the proof is the following: Bob cannot control the tuple $\boldsymbol{J}_{\overline{\Phi}}$ produced by interactive hashing. However, Bob has to correctly reveal to Alice either $\boldsymbol{X}_{\boldsymbol{L}_0|_{\boldsymbol{J}_{\overline{\Phi}}}}$ or $\boldsymbol{X}_{\boldsymbol{L}_1|_{J_{\overline{\Phi}}}}$ (depending on $\Theta$). We show that both $\boldsymbol{\Psi}_{\boldsymbol{L}_0|_{\boldsymbol{J}_{\overline{\Phi}}}}$ and $\boldsymbol{\Psi}_{\boldsymbol{L}_1|_{\boldsymbol{J}_{\overline{\Phi}}}}$ have a substantial number of erasures and Bob can reveal these erased bits correctly with only exponentially small probability.

Define 
\[  \mathcal{T}_{e} := \left\{   \boldsymbol{a} \in \mathcal{T} : \#_e(\boldsymbol{\Psi}|_{\boldsymbol{L}_0|_{\boldsymbol{a}}}) < \gamma \delta n  \text{ OR }  \#_e(\boldsymbol{\Psi}|_{\boldsymbol{L}_1|_{\boldsymbol{a}}}) <  \gamma \delta n \right\}  \]
 
Let $\boldsymbol{\psi}|_{\boldsymbol{l}_0}, \boldsymbol{\psi}|_{\boldsymbol{l}_1}$ be specific realizations of $\boldsymbol{\Psi}|_{\boldsymbol{L}_0}, \boldsymbol{\Psi}|_{\boldsymbol{L}_1}$ respectively. Applying Lemma~\ref{lem:rare_property0_1} (with $\boldsymbol{\mathfrak{p}} = \boldsymbol{\psi}|_{\boldsymbol{l}_0}, \boldsymbol{\mathfrak{q}} = \boldsymbol{\psi}|_{\boldsymbol{l}_1}, k = \beta n, \varphi = \delta/\beta, \alpha = \gamma/\beta, \rho = \delta$), we get:
\[  \frac{| \mathcal{T}_e |}{|\mathcal{T}|} \leq 2 e^{-2 \gamma n \delta^2}  \]

Now, let $\mathcal{B}_e = Q^{-1}(\mathcal{T}_e)$. Then we have:
\begin{align*}
\frac{ |\mathcal{B}_e| }{2^m} & \leq \frac{ |\mathcal{B}_e| }{|\mathcal{T}|} \\
                                             & \leq \frac{ 2 |\mathcal{T}_e| }{|\mathcal{T}|} \\
                                             & \leq 4 e^{-2 \gamma n \delta^2}
\end{align*}

Property~\ref{prop:ih_4} of interactive hashing now gives:
\begin{align*}  
P\left[ \boldsymbol{S}_0, \boldsymbol{S}_1 \in \mathcal{B}_e  \right] & \leq 16 \times \frac{ |\mathcal{B}_e| }{2^m} \\
                                                                 & \leq 64 \times  e^{-2 \gamma n \delta^2}
\end{align*}

Thus, with high probability either  $\boldsymbol{S}_0 \notin \mathcal{B}_e$ or  $\boldsymbol{S}_1 \notin \mathcal{B}_e$. Let us assume that $\boldsymbol{S}_0 \notin \mathcal{B}_e$. Recall that $\boldsymbol{J}_0 = Q(\boldsymbol{S}_0)$. Therefore, $\boldsymbol{J}_0 \notin \mathcal{T}_e$. This means $\#_e(\boldsymbol{\Psi}|_{\boldsymbol{L}_0|_{\boldsymbol{J}_0}}) \geq \gamma \delta n$ and $\#_e(\boldsymbol{\Psi}|_{\boldsymbol{L}_1|_{\boldsymbol{J}_0}}) \geq \gamma \delta n$. Since Bob has to reveal one of these bit strings correctly to Alice, Bob has to guess at least $\gamma \delta n$ unknown i.i.d. bits correctly. Bob can make the correct guess with probability $2^{-\gamma \delta n}$. A similar argument holds if we assume $\boldsymbol{S}_1 \notin \mathcal{B}_e$. Hence, Bob fails the test in step~\ref{step:abort0_2} with very high probability.


\subsubsection{\underline{$\epsilon_1 > 1/2$}} \*

In this case, malicious Bob in collusion with Eve, can present an arbitrary value for $\Theta$ during $\mathcal{P}_n$. Bob initiates interactive hashing and gets the output strings $\boldsymbol{S}_0,\boldsymbol{S}_1$. Recall that $L_0 = Q(\boldsymbol{S}_0)$ and $L_1 = Q(\boldsymbol{S}_1)$. We show that it is only with negligibly small probability that both $\#_{\overline{e}}(\boldsymbol{\Psi}|_{L_0})$ and  $\#_{\overline{e}}(\boldsymbol{\Psi}|_{L_1})$ exceed a certain threshold. That is, with high probability, at least one of $\#_{\overline{e}}(\boldsymbol{\Psi}|_{L_0})$ or $\#_{\overline{e}}(\boldsymbol{\Psi}|_{L_1})$ is below that threshold.

We condition the following arguments on no abort happening in step~\ref{step:abort1_1}, which means that $\boldsymbol{S}_0,\boldsymbol{S}_1 \in \mathcal{B}$.

Define 
\[  \mathcal{T}_{\overline{e}} := \left\{   A \in \mathcal{T} : \#_{\overline{e}}(\boldsymbol{\Psi}|_A) \geq \beta n (1 - \epsilon_1 \epsilon_2 + 2\delta)  \right\}  \]

Note that w.h.p. (due to Chernoff's bound), $\#_{\overline{e}}(\boldsymbol{\Psi}) \leq (1 - \epsilon_1 \epsilon_2 + \delta)n$. Let $\boldsymbol{\psi}$ be a typical realization of $\boldsymbol{\Psi}$. Applying Lemma~\ref{lem:rare_property0_0} (with $\boldsymbol{\mathfrak{p}} = \boldsymbol{\psi}, k = n, 1 - \varphi = (1 - \epsilon_1 \epsilon_2 + \delta), \alpha = \beta, \rho = \delta$), we have w.h.p.:
\[  \frac{|\mathcal{T}_{\overline{e}}|}{|\mathcal{T}|} \leq e^{-2 \beta n \delta^2}   \]

Now, let $\mathcal{B}_{\overline{e}} = Q^{-1}(\mathcal{T}_{\overline{e}})$. Then we have:
\begin{align*}
\frac{ |\mathcal{B}_{\overline{e}}| }{|\mathcal{B}|}  & = \frac{ |\mathcal{T}_{\overline{e}}| }{|\mathcal{B}|} \\
                                                                             & = \frac{ |\mathcal{T}_{\overline{e}}| }{|\mathcal{T}|} \\
                                                                             & \leq e^{-2 \beta n \delta^2}
\end{align*}

Property~\ref{prop:ih_4} of interactive hashing now gives:
\begin{align*}  
P\left[  \boldsymbol{S}_0, \boldsymbol{S}_1 \in \mathcal{B}_{\overline{e}}  \right] & \leq 16 \times \frac{ |\mathcal{B}_{\overline{e}}| }{\mathcal{B}} \\
                                                                 & \leq 16 \times  e^{-2 \beta n \delta^2}
\end{align*}

This implies that w.h.p. either $\boldsymbol{S}_0 \notin \mathcal{B}_{\overline{e}}$ or $\boldsymbol{S}_1 \notin \mathcal{B}_{\overline{e}}$. That is, w.h.p. either $L_0 \notin \mathcal{T}_{\overline{e}}$ or $L_1 \notin \mathcal{T}_{\overline{e}}$. As a result, w.h.p. either $\#_{\overline{e}}(\boldsymbol{\Psi}|_{L_0}) < \beta n (1 - \epsilon_1 \epsilon_2 + 2\delta)$ or $\#_{\overline{e}}(\boldsymbol{\Psi}|_{L_1}) < \beta n (1 - \epsilon_1 \epsilon_2 + 2\delta)$.

Since the protocol uses $\boldsymbol{X}|_{L_0 \backslash L_0 \cap L_1}$ and $\boldsymbol{X}|_{L_1 \backslash L_0 \cap L_1}$ for creating the keys (using the hash functions $F_0,F_1$), the keys obtained are independent. To ensure that at least one of these keys remains unknown to malicious Bob, the protocol has the following two steps:

\begin{enumerate}

\item In step~\ref{step:reveal_common_part1}, Bob has to correctly reveal $\boldsymbol{X}|_{L_0 \cap L_1}$. Asking Bob to reveal $\boldsymbol{X}|_{L_0 \cap L_1}$ prevents malicious Bob from manipulating interactive hashing to have $\boldsymbol{\Psi}|_{L_0 \cap L_1}$ comprise of erasures, thereby packing $\boldsymbol{\Psi}|_{L_0 \backslash L_0 \cap L_1}$ with non-erasures of of $\boldsymbol{\Psi}|_{L_0}$ and packing $\boldsymbol{\Psi}|_{L_1 \backslash L_0 \cap L_1}$ with non-erasures of $\boldsymbol{\Psi}|_{L_1}$. Such a packing would allow malicious Bob to learn non-negligible information about the keys being created using $\boldsymbol{X}|_{L_0 \backslash L_0 \cap L_1}$ and $\boldsymbol{X}|_{L_1 \backslash L_0 \cap L_1}$. As a result,
\begin{align*}
\#_{\overline{e}}(\boldsymbol{\Psi}|_{L_0 \backslash L_0 \cap L_1}) & = \#_{\overline{e}}(\boldsymbol{\Psi}|_{L_0}) - |L_0 \cap L_1| \\
\#_{\overline{e}}(\boldsymbol{\Psi}|_{L_1 \backslash L_0 \cap L_1}) & = \#_{\overline{e}}(\boldsymbol{\Psi}|_{L_1}) - |L_0 \cap L_1|
\end{align*}

\item We know that after the previous step,
\begin{align*}
\#_e(\boldsymbol{\Psi}|_{L_0 \backslash L_0 \cap L_1}) & = |L_0 \backslash L_0 \cap L_1| - \#_{\overline{e}}(\boldsymbol{\Psi}|_{L_0 \backslash L_0 \cap L_1}) \\
                                                                 & = \beta n - |L_0 \cap L_1| - \#_{\overline{e}}(\boldsymbol{\Psi}|_{L_0 \backslash L_0 \cap L_1}) \\
                                                                 & = \beta n - \#_{\overline{e}}(\boldsymbol{\Psi}|_{L_0})
\end{align*}

and 
\begin{align*}
\#_e(\boldsymbol{\Psi}|_{L_1 \backslash L_0 \cap L_1}) & = |L_1 \backslash L_0 \cap L_1| - \#_{\overline{e}}(\boldsymbol{\Psi}|_{L_1 \backslash L_0 \cap L_1}) \\
                                                                 & = \beta n - |L_0 \cap L_1| - \#_{\overline{e}}(\boldsymbol{\Psi}|_{L_1 \backslash L_0 \cap L_1}) \\
                                                                 & = \beta n - \#_{\overline{e}}(\boldsymbol{\Psi}|_{L_1})
\end{align*}

Thus, w.h.p. either $\#_e(\boldsymbol{\Psi}|_{L_0 \backslash L_0 \cap L_1}) \geq \beta n (\epsilon_1 \epsilon_2 - 2 \delta)$ or $\#_e(\boldsymbol{\Psi}|_{L_1 \backslash L_0 \cap L_1}) \geq \beta n (\epsilon_1 \epsilon_2 - 2 \delta)$. W.l.o.g. suppose $\#_e(\boldsymbol{\Psi}|_{L_0 \backslash L_0 \cap L_1}) \geq \beta n (\epsilon_1 \epsilon_2 - 2 \delta)$. We show in Lemma~\ref{lem:bob_malicious_bobeve_know_little_gt} (in Appendix~\ref{appndx:supporting_lemmas}) that whenever $\#_e(\boldsymbol{\Psi}|_{L_0 \backslash L_0 \cap L_1}) \geq \beta n (\epsilon_1 \epsilon_2 - 2 \delta)$, we have
\[ R(\boldsymbol{X}|_{L_0 \backslash L_0 \cap L_1} | V_{BE} = v_{BE}) \geq \beta n (\epsilon_1 \epsilon_2 - 2 \delta). \]

Furthermore, suppose that $\boldsymbol{X}|_{L_0 \backslash L_0 \cap L_1}$ is the input to $F_1$. Then, applying Lemma~\ref{lem:privacy_amplification} gives us:
\begin{align*}
H(F_1(\boldsymbol{X}|_{L_0 \backslash L_0 \cap L_1}) | F_1, V_{BE} = v_{BE}) & \geq \beta n (\epsilon_1 \epsilon_2 - 3 \delta - \delta') - \frac{2^{ \beta n (\epsilon_1 \epsilon_2 - 3 \delta - \delta') - \beta n (\epsilon_1 \epsilon_2 - 2 \delta) }}{\ln 2} \\
  & = \beta n (\epsilon_1 \epsilon_2 - 3 \delta - \delta') - \frac{2^{ -(\delta + \delta')\beta n }}{\ln 2}
\end{align*}

\end{enumerate}

This shows that malicious Bob, in collusion with Eve, cannot learn more than $2^{ -(\delta + \delta')\beta n }/\ln 2$ bits of information for at least one of the keys created using $F_0,F_1$. That is, for at least one of strings $\boldsymbol{K}_0,\boldsymbol{K}_1$, colluding Bob and Eve gain only a vanishingly small amount of information, regardless of what value of $\Theta$ Bob chooses to reveal during $\mathcal{P}_n$.


\subsection{Supporting Lemmas}
\label{appndx:supporting_lemmas}


\begin{lemma}[Hoeffding's Inequality, \protect{\cite[Proposition 1.2]{hoeffding_source}}]
\label{lem:hoeffding}
Let $\mathcal{A} = ( a_1,a_2,\ldots,a_N )$ be a finite population of $N$ points and $A_1,A_2,\ldots,A_k$ be a random sample drawn without  replacement from $\mathcal{A}$. Let 
\[  a = \underset{1 \leq i \leq N}{\min} a_i \quad \text{and} \quad  \tilde{a} = \underset{1 \leq i \leq N}{\max} a_i \]
Then, for all $\rho > 0$,
\[   P \left[   \frac{1}{k} \overset{k}{\underset{i=1}{\sum}} A_i  - \mu \geq \rho   \right] \leq  e^{  - \frac{2k\rho^2}{(\tilde{a}-a)^2}  }    \]
where $\mu = (1/N) \cdot \overset{N}{\underset{i=1}{\sum}} a_i$ is the mean of $\mathcal{A}$.
\end{lemma}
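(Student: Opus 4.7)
\medskip

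\noindent\textbf{Proof proposal.} The plan is to follow the classical Chernoff-style route combined with Hoeffding's reduction from sampling-without-replacement to sampling-with-replacement. Let $S_k = \sum_{i=1}^{k} A_i$, let $\mu = (1/N)\sum_{i=1}^{N} a_i$, and denote the centered variables $\tilde A_i = A_i - \mu$. First I would apply Markov's inequality to the exponential moment: for any $s>0$,
\[
P\left[\tfrac{1}{k} S_k - \mu \geq \rho\right] = P\left[e^{s\sum_{i=1}^{k}\tilde A_i} \geq e^{sk\rho}\right] \leq e^{-sk\rho}\, \mathbb{E}\!\left[e^{s\sum_{i=1}^{k}\tilde A_i}\right].
\]
So the core task is to control the moment generating function of $S_k$ under sampling without replacement.

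Next I would invoke Hoeffding's domination lemma: if $B_1,\dots,B_k$ are drawn with replacement from the same population $\mathcal{A}$ (hence i.i.d.\ with mean $\mu$ and bounded in $[a,\tilde a]$), then for every continuous convex $\phi$,
\[
\mathbb{E}\!\left[\phi\!\left(\sum_{i=1}^{k} A_i\right)\right] \;\leq\; \mathbb{E}\!\left[\phi\!\left(\sum_{i=1}^{k} B_i\right)\right].
\]
Applied with $\phi(x) = e^{s(x-k\mu)}$, this reduces the problem to the i.i.d.\ case. The proof of this domination is the step I expect to be the main obstacle; the standard argument writes the without-replacement sum as a conditional expectation of the with-replacement sum given the multiset of drawn values and uses Jensen's inequality, but making this rigorous requires a careful exchangeability / coupling argument (essentially the argument given in Hoeffding's 1963 paper and reproduced in \cite{hoeffding_source}).

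Once the reduction is in hand, the rest is routine. For each i.i.d.\ $\tilde B_i = B_i - \mu \in [a-\mu,\tilde a - \mu]$ of mean zero and range $\tilde a - a$, Hoeffding's lemma for a single bounded random variable gives
\[
\mathbb{E}\!\left[e^{s\tilde B_i}\right] \;\leq\; e^{s^{2}(\tilde a - a)^{2}/8}.
\]
By independence this yields $\mathbb{E}\!\left[e^{s\sum_{i=1}^{k}\tilde B_i}\right] \leq e^{k s^{2}(\tilde a - a)^{2}/8}$, and combining with the domination lemma and the Chernoff bound gives
\[
P\!\left[\tfrac{1}{k}S_k - \mu \geq \rho\right] \;\leq\; \exp\!\left(-sk\rho + \tfrac{k s^{2}(\tilde a - a)^{2}}{8}\right).
\]
Finally, optimizing over $s>0$ by setting $s = 4\rho/(\tilde a - a)^{2}$ produces the stated bound $\exp\!\bigl(-2k\rho^{2}/(\tilde a - a)^{2}\bigr)$. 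The only nontrivial ingredient is the convex-domination step; the rest is standard Chernoff-with-Hoeffding-lemma calculus.
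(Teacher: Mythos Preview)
Your proof proposal is correct and follows the classical route due to Hoeffding (1963): Chernoff's method, the convex-domination reduction from sampling without replacement to sampling with replacement, Hoeffding's lemma for a single bounded variable, and optimization over $s$. However, the paper does not actually prove this lemma; it is stated with a citation to \cite[Proposition~1.2]{hoeffding_source} (Bardenet--Maillard, \emph{Bernoulli}, 2015) and used as a black box. So there is no proof in the paper to compare against --- your argument is essentially the standard one that the cited reference also relies on, and it is sound. The step you flag as the main obstacle (the convex-domination inequality) is indeed the only non-routine ingredient, and your sketch of it via conditional expectation and Jensen is the right idea.
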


\begin{lemma}
\label{lem:dense}
Let $\vartheta \in \mathbb{Q} \cap [0, \infty)$, $\alpha \in \mathbb{R} \backslash \mathbb{Q} \cap [0, \infty)$ and $\varphi \in \mathbb{R} \cap [0, \infty)$ be constants. Then, $(<n \alpha - \varphi \log n>)_{\{n \in \mathbb{N} : \vartheta n \in \mathbb{N}\}}$ is dense in $[0,1]$.
\end{lemma}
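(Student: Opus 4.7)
The plan is to combine density of an irrational rotation with the fact that $\varphi \log m$ varies arbitrarily slowly, so that this slowly varying perturbation cannot destroy density.

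First, I would reduce to a clean special case. If $\vartheta > 0$, writing $\vartheta = p/q$ in lowest terms, the constraint $\vartheta n \in \mathbb{N}$ is equivalent to $q \mid n$, so I restrict to $n = qm$; if $\vartheta = 0$ the same argument goes through with $q = 1$. Setting $\beta := q\alpha$, which is still a positive irrational, and $c := \varphi \log q$, I have $n\alpha - \varphi \log n = m\beta - \varphi \log m - c$. Because an additive constant preserves density on the circle $\mathbb{R}/\mathbb{Z}$, and because $m\beta - \varphi \log m > 0$ for all sufficiently large $m$ (so the operator $\langle \cdot \rangle$ coincides with the ordinary fractional part for such $m$), it suffices to show that $\{ m\beta - \varphi \log m \bmod 1 : m \in \mathbb{N}\}$ is dense in $[0,1]$.

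The argument proper uses two facts in concert. Density of the irrational rotation orbit: since $\beta$ is irrational, $\{k\beta \bmod 1 : k \in \mathbb{N}\}$ is dense in $[0,1]$, so for any $\epsilon > 0$ there is some $L = L(\epsilon) \in \mathbb{N}$ such that the finite set $\{k\beta \bmod 1 : 0 \le k \le L\}$ is $\epsilon/2$-dense on the circle; because translation is an isometry of $\mathbb{R}/\mathbb{Z}$, the translated set $\{(M+k)\beta \bmod 1 : 0 \le k \le L\}$ is also $\epsilon/2$-dense for every integer $M \ge 0$. Slow variation of $\varphi \log m$: for $0 \le k \le L$, $|\varphi \log(M+k) - \varphi \log M| = \varphi \log(1 + k/M) \le \varphi L/M$, which can be made smaller than $\epsilon/2$ by taking $M$ large once $L$ is fixed.

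Given any target $t \in [0,1]$, the two facts combine cleanly: by the translated $\epsilon/2$-density, I can choose $k \in \{0, 1, \ldots, L\}$ with $(M+k)\beta \bmod 1$ within $\epsilon/2$ of $(t + \varphi \log M) \bmod 1$, and then the triangle inequality on the circle forces $(M+k)\beta - \varphi \log(M+k)$ to lie within $\epsilon$ of $t$ on the circle, which proves density. The only place to be careful is the order of quantifiers --- $\epsilon$ selects $L$, and only afterwards is $M$ taken large depending on $L$ --- but once that ordering is respected the argument uses nothing deeper than density of $\{k\beta \bmod 1\}$ and $\log(1+x) \le x$.
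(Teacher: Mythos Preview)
Your proof is correct and follows essentially the same approach as the paper: both combine the density of an irrational rotation with the slow variation of $\varphi\log m$ over a bounded window $\{M,\ldots,M+L\}$ to conclude. The only difference is organizational---you reduce to the subsequence $n=qm$ at the outset (a cleaner choice), whereas the paper first proves the unrestricted case and then reruns the entire argument along multiples of $q$.
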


\begin{proof}
We prove the claim in three parts:
\begin{enumerate}
\item We show that $(<n \alpha>)_{n \in \mathbb{N}}$ is dense in $[0,1]$.
\item Using the fact that $\log n$ increases very slowly at large values of $n$, we show that $(<n \alpha - \varphi \log n>)_{n \in \mathbb{N}}$ is dense in $[0,1]$.
\item Finally, we show that the statements above hold even with the restriction that $\vartheta n$ should be an integer, i.e. we show that $(<n \alpha - \varphi \log n>)_{\{n \in \mathbb{N} : \vartheta n \in \mathbb{N}\}}$ is dense in $[0,1]$.
\end{enumerate}

\underline{Proof of 1):} We first note that the sequence $(<n \alpha>)_{n \in \mathbb{N}}$ comprises of distinct numbers. If not and suppose $<n_1 \alpha> \; = \; <n_2 \alpha>$, $\; n_2 > n_1$, then:

\begin{align*}
n_2 \alpha  - n_1 \alpha & = (\lfloor n_2 \alpha \rfloor + <n_2 \alpha>)  - (\lfloor n_1 \alpha \rfloor + <n_1 \alpha>) \\
                                       & = \lfloor n_2 \alpha \rfloor - \lfloor n_1 \alpha \rfloor
\end{align*}

which implies
\[ \alpha = \frac{\lfloor n_2 \alpha \rfloor - \lfloor n_1 \alpha \rfloor}{n_2 - n_1} \]
 which is clearly a contradiction since $\alpha$ is irrational.

Then, we note that for any $\gamma > 0$, $\exists k_1,k_2 \in \mathbb{N}$, $k_2 > k_1$, such that 
\[ | <k_2 \alpha> - <k_1 \alpha> | \leq \gamma \]

If not, then the sequence $(<n \alpha>)_{n \in \mathbb{N}}$ cannot be an infinite sequence of distinct elements, again a contradiction. 

Let $K = k_2 - k_1$ and $ | <k_2 \alpha> - <k_1 \alpha> | = \tilde{\gamma} \leq \gamma $. Consider the sequence $( <jK \alpha> )_{j=1}^{\lfloor 1/\tilde{\gamma} \rfloor}$. This is the sequence $(j \tilde{\gamma})_{j=1}^{\lfloor 1/\tilde{\gamma} \rfloor}$ if $<k_2 \alpha> \quad > \quad <k_1 \alpha>$ or the sequence $(1 - j \tilde{\gamma})_{j=1}^{\lfloor 1/\tilde{\gamma} \rfloor}$ if $<k_2 \alpha> \quad < \quad <k_1 \alpha>$. 

Thus, for any $x \in [0,1]$, $\exists j_0 \in \{1,2,\ldots, \lfloor 1/\tilde{\gamma} \rfloor \}$ such that $x \in \mathcal{N}_{\tilde{\gamma}}(<j_0 K \alpha>)$. Clearly then, $x \in \mathcal{N}_{\gamma}(<j_0 K \alpha>)$. Since $\gamma$ was chosen arbitrarily, the sequence $(<jK \alpha > )_{j=1}^{\lfloor 1/\tilde{\gamma} \rfloor}$ is dense in $[0,1]$. This, in turn, implies that $(<n \alpha>)_{n \in \mathbb{N}}$ is dense in $[0,1]$.

\underline{Proof of 2):} Let $M \in \mathbb{N}$ such that:
\[ \varphi \log (M + \lfloor \frac{1}{\tilde{\gamma}} \rfloor K ) - \varphi \log M  < \gamma \]
and
\begin{equation}
\label{eqn:positivity}
M \alpha - \varphi \log M > 0.
\end{equation}

For example, consider any $M > \varphi \cdot \lfloor 1/\tilde{\gamma} \rfloor K / \gamma \ln 2$ for which (\ref{eqn:positivity}) holds. Clearly, the sequence $( <(M + jK)\alpha>)_{j=1}^{\lfloor 1/\tilde{\gamma} \rfloor}$ still approximates any number in $[0,1]$ to within a precision of $\gamma$. As a result, the elements of the sequence $( < (M + jK)\alpha - \varphi \log (M + jK) >)_{j=1}^{\lfloor 1/\tilde{\gamma} \rfloor}$ approximate any element of $[0,1]$ to within a precision of $2 \gamma$. Again, since $\gamma$ was arbitrary,  $( < (M + jK)\alpha - \varphi \log (M + jK) >)_{j=1}^{\lfloor 1/\tilde{\gamma} \rfloor}$ is dense in $[0,1]$. As a result, the sequence $( < n \alpha - \varphi \log n >)_{n \in \mathbb{N}}$ is dense in $[0,1]$.

\underline{Proof of 3):} Let $\vartheta = p/q$. We show that the reasoning developed so far holds even when all the integers, sequences and offsets considered are multiplied by $q$. For clarity, we repeat the previous arguments with this change included in them.

 We first note that the sequence $( <qn \alpha>)_{n \in \mathbb{N}}$ comprises of distinct numbers. If not and, say, $<qn_1 \alpha> \; = \; <qn_2 \alpha>$, $\; n_2 > n_1$, then:
\begin{align*}
qn_2 \alpha  - qn_1 \alpha  & = (\lfloor qn_2 \alpha \rfloor + <qn_2 \alpha>)  - (\lfloor qn_1 \alpha \rfloor + <qn_1 \alpha>) \\
  & = \lfloor qn_2 \alpha \rfloor - \lfloor qn_1 \alpha \rfloor
\end{align*}
which implies
\[ \alpha = \frac{\lfloor qn_2 \alpha \rfloor - \lfloor qn_1 \alpha \rfloor}{qn_2 - qn_1} \]
 which is clearly a contradiction since $\alpha$ is irrational.

Then, we note that for any $\gamma > 0$, $\exists k_1,k_2 \in \mathbb{N}$, $k_2 > k_1$, such that 
\[ | <qk_2 \alpha> - <qk_1 \alpha> | \leq \gamma \]

If not, then the sequence $(<qn \alpha>)_{n \in \mathbb{N}}$ cannot be an infinite sequence of distinct elements, again a contradiction.

Let $K = qk_2 - qk_1$ and $ | <qk_2 \alpha> - <qk_1 \alpha> | = \tilde{\gamma} \leq \gamma $. Consider the sequence $(<jK \alpha> )_{j=1}^{\lfloor 1/\tilde{\gamma} \rfloor}$. This is the sequence $(j \tilde{\gamma})_{j=1}^{\lfloor 1/\tilde{\gamma} \rfloor}$ if $<qk_2 \alpha> \quad  > \quad <qk_1 \alpha>$ or the sequence $(1 - j \tilde{\gamma})_{j=1}^{\lfloor 1/\tilde{\gamma} \rfloor}$ if $<qk_2 \alpha> \quad  < \quad <qk_1 \alpha>$.

Thus, for any $x \in [0,1]$, $\exists j_0 \in \{1,2,\ldots, \lfloor 1/\tilde{\gamma} \rfloor \}$ such that $x \in \mathcal{N}_{\tilde{\gamma}}(<j_0 K \alpha>)$. Clearly then, $x \in \mathcal{N}_{\gamma}(<j_0 K \alpha>)$. Since $\gamma$ was chosen arbitrarily, the sequence $(<jK \alpha> )_{j=1}^{\lfloor 1/\tilde{\gamma} \rfloor}$ is dense in $[0,1]$. Importantly, $\vartheta j K = p j (k_2 - k_1) \in \mathbb{N}$ This now implies that $(<n \alpha>)_{ \{ n \in \mathbb{N} : \vartheta n \in \mathbb{N} \} }$ is dense in $[0,1]$.

Now consider the offset $M \in \mathbb{N}$ such that:
\[ \varphi \log (qM + \lfloor \frac{1}{\tilde{\gamma}} \rfloor K ) - \varphi \log(qM)  < \gamma \]
and
\begin{equation}
\label{eqn:more_positivity}
qM \alpha - \varphi \log (qM) > 0.
\end{equation}

For example, consider any $M > (1/q) \cdot (\varphi \lfloor 1/\tilde{\gamma} \rfloor K/\gamma \ln 2)$ for which (\ref{eqn:more_positivity}) holds. Clearly, the sequence $(<(qM + jK)\alpha>)_{j=1}^{\lfloor 1/\tilde{\gamma} \rfloor}$ still approximates any number in $[0,1]$ to within a precision of $\gamma$. As a result, the elements of the sequence $( < (qM + jK)\alpha - \varphi \log (qM + jK) >)_{j=1}^{\lfloor 1/\tilde{\gamma} \rfloor}$ approximate any element of $[0,1]$ to within a precision of $2 \gamma$. Again, since $\gamma$ was arbitrary,  $( < (qM + jK)\alpha - \varphi \log (qM + jK) >)_{j=1}^{\lfloor 1/\tilde{\gamma} \rfloor}$ is dense in $[0,1]$. Importantly again, $\vartheta(qM + jK) = p(M + j(k_2-k_1)) \in \mathbb{N}$. As a result, the sequence $( < n \alpha - \varphi \log n >)_{\{n \in \mathbb{N} : \vartheta n \in \mathbb{N}\}}$ is dense in $[0,1]$.

\end{proof}


\begin{lemma}
\label{lem:log_dense}
For $H(\beta) \in \mathbb{R} \backslash \mathbb{Q} \cap [0, \infty)$, the sequence $(<\log (\comb{n}{\beta n} )>)_{ \{ n \in \mathbb{N} : \beta n \in \mathbb{N} \}}$ is dense in $[0,1]$.
\end{lemma}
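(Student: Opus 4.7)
The plan is to reduce this lemma to Lemma~\ref{lem:dense} by means of Stirling's approximation. Specifically, I would start by noting that for $\beta n \in \mathbb{N}$ and $n$ sufficiently large, Stirling's formula $\log n! = n\log n - n\log e + \tfrac{1}{2}\log(2\pi n) + O(1/n)$ gives
\begin{equation*}
\log \binom{n}{\beta n} = nH(\beta) - \tfrac{1}{2}\log n - \tfrac{1}{2}\log\bigl(2\pi\beta(1-\beta)\bigr) + \eta_n,
\end{equation*}
where $\eta_n \longrightarrow 0$ as $n \longrightarrow \infty$. Write $c := -\tfrac{1}{2}\log(2\pi\beta(1-\beta))$, which is a fixed real constant (depending only on $\beta$).

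Next, I would invoke Lemma~\ref{lem:dense} with the parameter choices $\alpha = H(\beta)$ (irrational by hypothesis), $\varphi = 1/2$, and $\vartheta = \beta$ (rational, and inducing exactly the same divisibility constraint $\beta n \in \mathbb{N}$). This establishes that the sequence $(<n H(\beta) - \tfrac{1}{2} \log n>)_{\{n \in \mathbb{N}: \beta n \in \mathbb{N}\}}$ is dense in $[0,1]$.

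The remaining step is to transfer this density through the additive perturbation by $c + \eta_n$. Density of a sequence modulo $1$ is trivially preserved under shifts by a fixed constant $c$, so $(<nH(\beta) - \tfrac{1}{2}\log n + c>)_{\{n \in \mathbb{N}: \beta n \in \mathbb{N}\}}$ is also dense in $[0,1]$. To absorb the vanishing error $\eta_n$, I would argue: given any target $t \in [0,1]$ and any $\gamma > 0$, choose $N$ large enough that $|\eta_n| < \gamma/2$ for all $n \geq N$ with $\beta n \in \mathbb{N}$; then, since density is a tail property, the shifted sequence restricted to $n \geq N$ is still dense in $[0,1]$, so some such $n$ makes $<nH(\beta) - \tfrac{1}{2}\log n + c>$ within $\gamma/2$ of $t$ (modulo $1$), and adding $\eta_n$ keeps it within $\gamma$ of $t$. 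Since $\log\binom{n}{\beta n}$ is positive for $n$ large, $<\cdot>$ coincides with the usual fractional part, so this approximation transfers directly to the target sequence.

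The main technical point is Lemma~\ref{lem:dense}, which is already established; once Stirling is in hand, the argument above is essentially a matter of bookkeeping. The only minor subtlety is that $<\cdot>$ is discontinuous at integer boundaries, but since we are only showing density (not convergence of individual terms), this causes no trouble: any open neighborhood of any $t \in [0,1]$ that avoids integer boundaries is hit by the unshifted sequence infinitely often, and the perturbation argument preserves this.
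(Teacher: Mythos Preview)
Your proposal is correct and follows essentially the same route as the paper: apply Stirling's approximation to write $\log\binom{n}{\beta n} = nH(\beta) - \tfrac{1}{2}\log n - \tfrac{1}{2}\log(2\pi\beta(1-\beta)) + o(1)$, invoke Lemma~\ref{lem:dense} with $\alpha = H(\beta)$, $\varphi = 1/2$, $\vartheta = \beta$, then absorb the constant shift and the vanishing error term. Your treatment is slightly more explicit than the paper's about the $\varepsilon/2$ argument for the $o(1)$ term and the behavior of $<\cdot>$ near integers, but the underlying proof is the same.
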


\begin{proof}
By Sterling's approximation,
\begin{align*} 
  \log (\comb{n}{\beta n} ) & = nH(\beta) - \frac{1}{2} \log (2 \pi \beta (1 - \beta)n) - \log (1 + O(\frac{1}{n})) \\
  & = nH(\beta) - \frac{1}{2} \log n - \frac{1}{2} \log (2 \pi \beta (1 - \beta)) - \log (1 + O(\frac{1}{n}))
\end{align*}

By Lemma~\ref{lem:dense}, the sequence $(<nH(\beta) - (1/2) \log n>)_{ \{ n \in \mathbb{N} : \beta n \in \mathbb{N}\} }$ is dense in $[0,1]$. Since $(1/2) \cdot \log (2 \pi \beta (1 - \beta))$ is a constant, the sequence $(<nH(\beta) - (1/2) \log n - (1/2) \log (2 \pi \beta (1 - \beta))> )_{ \{ n \in \mathbb{N} : \beta n \in \mathbb{N} \} }$ is also dense in $[0,1]$. The claim now follows since for large $n$, the term $\log (1 + O(1/n))$ is negligibly small.

\end{proof}


\begin{lemma}
\label{lem:rare_property0_0}

Let $\boldsymbol{\mathfrak{p}} \in \{0,1, \bot\}^k$ be fixed. Let $\varphi \in (0,1]$ be such that $\#_e(\boldsymbol{\mathfrak{p}}) \geq \varphi k$. Let $\overline{\varphi} = 1 - \varphi$, $\rho > 0$ and $\alpha \in [0,1]$. Define $\mathcal{T}_{\alpha} := \{ A \subset \{ 1,2,\ldots,k \} : |A| = \alpha k \}$, $\mathcal{T}_e := \{ A \in \mathcal{T}_{\alpha} : \#_e(\boldsymbol{\mathfrak{p}}|_A) \leq \alpha k (\varphi - \rho)  \}$ and $\mathcal{T}_{\overline{e}} := \{ A \in \mathcal{T}_{\alpha} : \#_{\overline{e}}(\boldsymbol{\mathfrak{p}}|_A) \geq \alpha k (\overline{\varphi} + \rho)  \}$. Then,
\[  \frac{|\mathcal{T}_{\overline{e}}|}{|\mathcal{T}_{\alpha}|} =  \frac{|\mathcal{T}_e|}{|\mathcal{T}_{\alpha}|} \leq  e^{-2 \alpha k \rho^2} \]

\end{lemma}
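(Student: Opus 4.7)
The plan is to recognize this as a sampling-without-replacement concentration statement and reduce directly to Hoeffding's inequality (Lemma~\ref{lem:hoeffding}), which is already recorded earlier in the appendix.

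First, I would dispense with the stated equality $|\mathcal{T}_e| = |\mathcal{T}_{\overline{e}}|$: for any $A \in \mathcal{T}_\alpha$ we have $\#_e(\boldsymbol{\mathfrak{p}}|_A) + \#_{\overline{e}}(\boldsymbol{\mathfrak{p}}|_A) = \alpha k$, so the inequality $\#_{\overline{e}}(\boldsymbol{\mathfrak{p}}|_A) \geq \alpha k(\overline{\varphi}+\rho)$ is equivalent to $\#_e(\boldsymbol{\mathfrak{p}}|_A) \leq \alpha k(\varphi-\rho)$. Hence $\mathcal{T}_e = \mathcal{T}_{\overline{e}}$, and only one tail bound needs to be proved.

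Next, I would view the ratio $|\mathcal{T}_e|/|\mathcal{T}_\alpha|$ probabilistically. Let $A$ be uniform over $\mathcal{T}_\alpha$ and let $A_1,\ldots,A_{\alpha k}$ be the indicator sequence defined by $A_i = \mathbb{1}[\mathfrak{p}_{j_i} = \bot]$, where $j_1,\ldots,j_{\alpha k}$ is a uniformly random ordering of the elements of $A$. Equivalently, $(A_i)$ is a sample of size $\alpha k$ drawn without replacement from the population $(\mathbb{1}[\mathfrak{p}_j = \bot])_{j=1}^{k} \in \{0,1\}^k$, whose mean is $\mu = \#_e(\boldsymbol{\mathfrak{p}})/k \geq \varphi$. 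Then
\[
\frac{|\mathcal{T}_e|}{|\mathcal{T}_\alpha|} = P\!\left[\frac{1}{\alpha k}\sum_{i=1}^{\alpha k} A_i \leq \varphi - \rho\right] \leq P\!\left[\mu - \frac{1}{\alpha k}\sum_{i=1}^{\alpha k} A_i \geq \rho\right],
\]
where the last inequality uses $\mu \geq \varphi$.

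Finally, I would apply Hoeffding's inequality (Lemma~\ref{lem:hoeffding}) to the complementary sample $\tilde A_i := 1 - A_i$, which is a without-replacement sample from the $\{0,1\}$-valued population with mean $1 - \mu$ and range $\tilde a - a \leq 1$. The event above coincides with $\{(1/\alpha k)\sum \tilde A_i - (1-\mu) \geq \rho\}$, so the lemma yields the bound $e^{-2\alpha k \rho^2}$ directly. There is no real obstacle here; the only subtlety is that the Hoeffding statement as cited is one-sided (upper tail), which is why I pass to the complement $1 - A_i$ rather than invoking a lower-tail version.
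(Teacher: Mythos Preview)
Your proposal is correct and follows essentially the same route as the paper: both arguments identify $\mathcal{T}_e=\mathcal{T}_{\overline{e}}$ via $\#_e+\#_{\overline{e}}=\alpha k$, cast $|\mathcal{T}_e|/|\mathcal{T}_\alpha|$ as a sampling-without-replacement tail probability, and apply Lemma~\ref{lem:hoeffding} to the non-erasure indicators (your $\tilde A_i=1-A_i$ is exactly the paper's $\Omega_i$). The only cosmetic difference is that the paper proves the bound first and then notes $\mathcal{T}_e=\mathcal{T}_{\overline{e}}$, whereas you do these in the opposite order.
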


\begin{proof}
Let $\upsilon := \{ j \in \{1,2,\ldots,k\} : \mathfrak{p}_j = \bot \}$ and let $\Upsilon \thicksim \text{Unif}(\mathcal{T}_{\alpha})$. Then, $ \#_e(\boldsymbol{\mathfrak{p}}|_{\Upsilon}) =  \left|  \upsilon \cap \Upsilon \right|$.

Let 
\[ \boldsymbol{\omega} := \left( \omega_j \in \{0,1\}, j = 1,2,\ldots, k : \begin{array}{ll} \omega_j = 0, & \mathfrak{p}_j = \bot \\ \omega_j = 1, & \mathfrak{p}_j \neq \bot \end{array} \right) \]

Let $\mu  = 1 - (1/k) \cdot \overset{k}{\underset{j=0}{\sum}} \omega_j$. Note that $\mu \geq \varphi$.

Let $\Omega_i, i=1,2,\ldots,\alpha k$ be random samples drawn from $\boldsymbol{\omega}$ without replacement. Clearly then,  $\#_e(\boldsymbol{\mathfrak{p}}|_{\Upsilon}) =  \left|  \upsilon \cap \Upsilon \right| \sim  \alpha k - \overset{\alpha k}{\underset{i=1}{\sum}} \Omega_i$.

Using Lemma~\ref{lem:hoeffding}, we get:
\[  P \left[  \frac{1}{\alpha k} \overset{\alpha k}{\underset{i=1}{\sum}} \Omega_i \geq (1 - \mu) + \rho  \right] \leq e^{-2 \alpha k \rho^2}  \]

This implies:
\[  P \left[  \frac{1}{\alpha k}  \left| \upsilon \cap \Upsilon \right| \leq \mu - \rho  \right] \leq e^{-2 \alpha k \rho^2}   \]

Since $\mu \geq \varphi$, we get:
\[  P \left[  \left| \upsilon \cap \Upsilon \right| \leq \alpha k (\varphi - \rho)  \right] \leq e^{-2 \alpha k \rho^2}   \]

Since $\Upsilon$ was a random choice from $\mathcal{T}_{\alpha}$, it follows that :
\[  \frac{|\mathcal{T}_e|}{|\mathcal{T}_{\alpha}|} \leq  e^{-2 \alpha k \rho^2}  \]

Furthermore, for $A \in \mathcal{T}_{\alpha}$, $\#_{\overline{e}}(\boldsymbol{\mathfrak{p}}|_A) = \alpha k - \#_e(\boldsymbol{\mathfrak{p}}|_A)$. As a result, $\mathcal{T}_{\overline{e}} = \mathcal{T}_e$ and the result follows.

\end{proof}


\begin{lemma}
\label{lem:rare_property0_1}

Let $\boldsymbol{\mathfrak{p}}, \boldsymbol{\mathfrak{q}} \in \{0,1, \bot\}^k$ be fixed. Let $\varphi > 0$ be such that $\#_e(\boldsymbol{\mathfrak{p}}) \geq \varphi k$ and $\#_e(\boldsymbol{\mathfrak{q}}) \geq \varphi k$. Let $\rho > 0$ and $\alpha \in [0,1]$. Define $\mathcal{T}_{\alpha} := \{ A \subset \{ 1,2,\ldots,k \} : |A| = \alpha k \}$ and $\mathcal{T}_e := \{ A \in \mathcal{T}_{\alpha} : \#_e(\boldsymbol{\mathfrak{p}}|_A) < \alpha k (\varphi - \rho)  \text{ OR }  \#_e(\boldsymbol{\mathfrak{q}}|_A) < \alpha k (\varphi - \rho) \}$. Then,
\[ \frac{|\mathcal{T}_e|}{|\mathcal{T}_{\alpha}|} \leq  2e^{-2 \alpha k \rho^2} \]

\end{lemma}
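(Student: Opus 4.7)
The plan is to derive this from Lemma~\ref{lem:rare_property0_0} by a simple union bound. The event defining $\mathcal{T}_e$ is a disjunction of two events, one for each of the fixed strings $\boldsymbol{\mathfrak{p}}$ and $\boldsymbol{\mathfrak{q}}$, so it is natural to split $\mathcal{T}_e$ accordingly and bound each piece separately using the single-string version of the lemma just proved.

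First I would introduce the two subcollections
\[
\mathcal{T}_e^{\boldsymbol{\mathfrak{p}}} := \{A \in \mathcal{T}_{\alpha} : \#_e(\boldsymbol{\mathfrak{p}}|_A) < \alpha k (\varphi - \rho)\}, \qquad \mathcal{T}_e^{\boldsymbol{\mathfrak{q}}} := \{A \in \mathcal{T}_{\alpha} : \#_e(\boldsymbol{\mathfrak{q}}|_A) < \alpha k (\varphi - \rho)\},
\]
and observe that, by the OR in the definition of $\mathcal{T}_e$, we have $\mathcal{T}_e = \mathcal{T}_e^{\boldsymbol{\mathfrak{p}}} \cup \mathcal{T}_e^{\boldsymbol{\mathfrak{q}}}$. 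Since $\boldsymbol{\mathfrak{p}}$ is a fixed string in $\{0,1,\bot\}^k$ with $\#_e(\boldsymbol{\mathfrak{p}}) \geq \varphi k$, Lemma~\ref{lem:rare_property0_0} applies to give $|\mathcal{T}_e^{\boldsymbol{\mathfrak{p}}}|/|\mathcal{T}_{\alpha}| \leq e^{-2\alpha k \rho^2}$, and identically for $\boldsymbol{\mathfrak{q}}$.

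Finally, a union bound on cardinalities yields
\[
\frac{|\mathcal{T}_e|}{|\mathcal{T}_{\alpha}|} \;\leq\; \frac{|\mathcal{T}_e^{\boldsymbol{\mathfrak{p}}}|}{|\mathcal{T}_{\alpha}|} + \frac{|\mathcal{T}_e^{\boldsymbol{\mathfrak{q}}}|}{|\mathcal{T}_{\alpha}|} \;\leq\; 2e^{-2\alpha k \rho^2},
\]
which is the claimed bound. There is no real obstacle here: the single-string concentration estimate in Lemma~\ref{lem:rare_property0_0} does all the work, and the only extra idea is the elementary observation that a union of two bad events has probability at most the sum. The lemma is essentially a convenience wrapper so that the calling protocol analyses (e.g.\ Case 2 in the $\epsilon_1 \le 1/2$ malicious-Bob argument, where it is applied with $\boldsymbol{\mathfrak{p}} = \boldsymbol{\psi}|_{\boldsymbol{l}_0}$ and $\boldsymbol{\mathfrak{q}} = \boldsymbol{\psi}|_{\boldsymbol{l}_1}$) can cite one bound instead of two.
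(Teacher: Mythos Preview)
Your proposal is correct and matches the paper's proof essentially line for line: the paper also splits $\mathcal{T}_e$ into the two pieces corresponding to $\boldsymbol{\mathfrak{p}}$ and $\boldsymbol{\mathfrak{q}}$, applies Lemma~\ref{lem:rare_property0_0} to each, and combines via a union bound on cardinalities. The only cosmetic difference is that the paper writes the two subcollections inline rather than naming them $\mathcal{T}_e^{\boldsymbol{\mathfrak{p}}}$ and $\mathcal{T}_e^{\boldsymbol{\mathfrak{q}}}$.
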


\begin{proof}
\begin{align*}
|\mathcal{T}_e| & \leq \left|  \{  A \in \mathcal{T}_{\alpha} : \#_e(\boldsymbol{\mathfrak{p}}|_A) < \alpha k (\varphi - \rho)   \} \right|  \\ & \quad+ \left|  \{  A \in \mathcal{T}_{\alpha} : \#_e(\boldsymbol{\mathfrak{q}}|_A) < \alpha k (\varphi - \rho)   \}  \right| \\
                        & \leq e^{-2 \alpha k \rho^2} \cdot \left| \mathcal{T}_{\alpha} \right| + e^{-2 \alpha k \rho^2} \cdot \left| \mathcal{T}_{\alpha} \right|   \text {  [using Lemma~\ref{lem:rare_property0_0}] }
\end{align*}

As a result,
\[  \frac{|\mathcal{T}_e|}{|\mathcal{T}_{\alpha}|} \leq  2e^{-2 \alpha k \rho^2}  \]

\end{proof}


\begin{lemma}
\label{lem:overlap_within_bounds}
Let $\varphi, \alpha \in (0,1], \rho > 0$. Let $\mathcal{T}_{\varphi} := \{ A \subset \{1,2,\ldots,k\} : |A| = \varphi k \}$ and $\mathcal{T}_{\alpha} := \{ A \subset \{1,2,\ldots,k\} : |A| = \alpha k \}$. Let $\upsilon \in \mathcal{T}_{\varphi}, \Upsilon \thicksim \text{Unif}(\mathcal{T}_{\alpha} )$. 

Then, 
\[ P \left[  \left| \frac{1}{\alpha k} \left| \upsilon \cap \Upsilon \right| - \varphi \right| > \rho \right] \leq 2 e^{-2 \alpha k \rho^2}  \]
\end{lemma}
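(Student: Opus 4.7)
The plan is to recognize $|\upsilon \cap \Upsilon|$ as a sum of indicators from sampling without replacement, and then invoke the Hoeffding inequality of Lemma~\ref{lem:hoeffding} twice (once on each tail) together with a union bound. This mirrors the structure of the proof of Lemma~\ref{lem:rare_property0_0}, adapted to yield a two-sided concentration instead of a one-sided one.

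Concretely, I would first set up the indicator population. Define $\boldsymbol{\omega} \in \{0,1\}^k$ by $\omega_j = \mathbf{1}[j \in \upsilon]$, so that $\mu := (1/k) \sum_{j=1}^k \omega_j = \varphi$ exactly, and $\min_j \omega_j = 0$, $\max_j \omega_j = 1$. Since $\Upsilon$ is uniform over all $\alpha k$-subsets of $\{1,2,\ldots,k\}$, we can draw $\Omega_1, \Omega_2, \ldots, \Omega_{\alpha k}$ by sampling from $\boldsymbol{\omega}$ without replacement, in which case $|\upsilon \cap \Upsilon|$ has the same distribution as $\sum_{i=1}^{\alpha k} \Omega_i$.

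Next I would apply Lemma~\ref{lem:hoeffding} with $k$ replaced by $\alpha k$, $\tilde{a} - a = 1$, and $\mu = \varphi$, which immediately gives the upper tail
\[ P\!\left[ \frac{1}{\alpha k}\,|\upsilon \cap \Upsilon| - \varphi \geq \rho \right] \leq e^{-2 \alpha k \rho^2}. \]
For the lower tail, I would apply the same inequality to the complementary population $\boldsymbol{\omega}' = \mathbf{1} - \boldsymbol{\omega}$, whose mean is $1-\varphi$. Since $\sum_i \Omega'_i = \alpha k - \sum_i \Omega_i$, the event $\{(1/\alpha k)\,|\upsilon \cap \Upsilon| - \varphi \leq -\rho\}$ coincides with $\{(1/\alpha k)\sum_i \Omega'_i - (1-\varphi) \geq \rho\}$, and Hoeffding again yields the bound $e^{-2 \alpha k \rho^2}$.

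Finally, a union bound over the two tails gives
\[ P\!\left[ \left| \frac{1}{\alpha k}\,|\upsilon \cap \Upsilon| - \varphi \right| > \rho \right] \leq 2 e^{-2 \alpha k \rho^2}, \]
which is the claim. There is no real obstacle here; the only subtlety is that Lemma~\ref{lem:hoeffding} as stated is one-sided, so I must invoke it on the complementary indicator vector to capture the lower tail, but this is a standard symmetrization trick and the bound $(\tilde{a}-a)^2 = 1$ remains valid for $\boldsymbol{\omega}'$ as well.
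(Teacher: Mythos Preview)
Your proposal is correct and essentially identical to the paper's proof: the paper likewise defines the indicator vector $\boldsymbol{\omega}$ with $\omega_i = \mathbf{1}[i \in \upsilon]$, identifies $|\upsilon \cap \Upsilon|$ with a sum of samples drawn without replacement, applies Lemma~\ref{lem:hoeffding} for the upper tail, flips all $0$'s and $1$'s in $\boldsymbol{\omega}$ to obtain the lower tail, and finishes with a union bound. Your explicit observation that $\mu = \varphi$ exactly (since $|\upsilon| = \varphi k$) is a nice clarification the paper leaves implicit.
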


\begin{proof}
Let 
\[  \boldsymbol{\omega} := \left( \omega_i \in \{0,1\}, i = 1,2,\ldots, k : \begin{array}{ll} \omega_i = 1, & i \in \upsilon \\ \omega_i = 0, & i \notin \upsilon \end{array} \right) \]
Let $\Omega_j, j=1,2,\ldots,\alpha k$ be random samples drawn from $\boldsymbol{\omega}$ without replacement. Clearly then, $\overset{\alpha k}{\underset{j=1}{\sum}} \Omega_j \sim \left|  \upsilon \cap \Upsilon \right|$.

Applying Lemma~\ref{lem:hoeffding}, we have:
\[ P \left[  \frac{1}{\alpha k} \overset{\alpha k}{\underset{j=1}{\sum}} \Omega_j  - \varphi \geq \rho  \right] \leq  e^{-2 \alpha k\rho^2} \]

If, in $\boldsymbol{\omega}$, we now change all $1$'s to $0$'s and vice-versa and proceed as above, we get:
\[  P \left[  \frac{1}{\alpha k} \overset{\alpha k}{\underset{j=1}{\sum}} \Omega_j  - \varphi \leq -\rho  \right] \leq  e^{-2 \alpha k\rho^2} \]

Combining these two inequalities using the union bound, we get:
\[ P \left[  \left| \frac{1}{\alpha k} \left|  \upsilon \cap \Upsilon \right|  - \varphi \right|  \geq \rho  \right] \leq  2e^{-2 \alpha k \rho^2} \]

\end{proof}


\begin{lemma}
\label{lem:overlap_within_bounds_corr}
Let $\varphi \in (0,1], \rho > 0$. Let $\mathcal{T}_{\varphi} := \{ A \subset \{1,2,\ldots,k\} : |A| = \varphi k \}, \upsilon \in \mathcal{T}_{\varphi}, \Upsilon \thicksim \text{Unif}( \mathcal{T}_{\varphi} \backslash \{\upsilon\} )$.

 Then,
\[   P \left[  \left| \frac{1}{\varphi k} \left|\upsilon \cap \Upsilon \right| - \varphi \right| > \rho \right] \leq \frac{2 e^{-2 \varphi k \rho^2}}  {1 -  \frac{1}{| \mathcal{T}_{\varphi} |} }  \]
\end{lemma}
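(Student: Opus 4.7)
The plan is to deduce this lemma from Lemma~\ref{lem:overlap_within_bounds} by a short conditioning argument, since the only difference between the two statements is that here $\Upsilon$ is drawn uniformly from $\mathcal{T}_{\varphi}\setminus\{\upsilon\}$ instead of from all of $\mathcal{T}_{\varphi}$ (and we have specialized $\alpha = \varphi$).

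First, I would introduce an auxiliary random variable $\tilde{\Upsilon}\sim\text{Unif}(\mathcal{T}_{\varphi})$, i.e., a uniformly chosen $\varphi k$-subset of $\{1,2,\ldots,k\}$ without the exclusion. Applying Lemma~\ref{lem:overlap_within_bounds} with $\alpha = \varphi$ and the same $\upsilon$ immediately yields
\[
P\!\left[\,\left|\tfrac{1}{\varphi k}|\upsilon\cap\tilde{\Upsilon}|-\varphi\right|>\rho\,\right]\le 2e^{-2\varphi k\rho^{2}}.
\]

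Next I would observe that the distribution of $\Upsilon$ is exactly the conditional distribution of $\tilde{\Upsilon}$ given $\tilde{\Upsilon}\neq\upsilon$, and that $P[\tilde{\Upsilon}=\upsilon]=1/|\mathcal{T}_{\varphi}|$. Hence for any event $\mathcal{E}$ in the $\sigma$-algebra of $\mathcal{T}_{\varphi}\setminus\{\upsilon\}$,
\[
P[\Upsilon\in\mathcal{E}]=\frac{P[\tilde{\Upsilon}\in\mathcal{E},\,\tilde{\Upsilon}\neq\upsilon]}{P[\tilde{\Upsilon}\neq\upsilon]}\le\frac{P[\tilde{\Upsilon}\in\mathcal{E}]}{1-1/|\mathcal{T}_{\varphi}|}.
\]
Taking $\mathcal{E}$ to be the event $\{A\in\mathcal{T}_{\varphi}\setminus\{\upsilon\}:|\,|\upsilon\cap A|/(\varphi k)-\varphi|>\rho\}$ and combining with the first display gives the claimed bound. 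There is essentially no obstacle here; the only minor point worth checking is that the event in question is well-defined as a subset of $\mathcal{T}_{\varphi}\setminus\{\upsilon\}$ (for $\rho$ small enough that $\upsilon$ itself, with $|\upsilon\cap\upsilon|/(\varphi k)=1$, is not in the event, this is automatic; for larger $\rho$ the conditioning argument still applies verbatim since $\upsilon$ simply contributes nothing to the numerator after restriction).
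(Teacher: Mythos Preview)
Your proposal is correct and is essentially the same argument as the paper's: both apply Lemma~\ref{lem:overlap_within_bounds} with $\alpha=\varphi$ and then pass from the uniform distribution on $\mathcal{T}_{\varphi}$ to the uniform distribution on $\mathcal{T}_{\varphi}\setminus\{\upsilon\}$, incurring the factor $(1-1/|\mathcal{T}_{\varphi}|)^{-1}$. The paper phrases this as a counting argument (bounding the number of bad sets and dividing by $|\mathcal{T}_{\varphi}|-1$), while you phrase it via conditioning on $\tilde{\Upsilon}\neq\upsilon$; these are the same computation.
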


\begin{proof}
Using Lemma~\ref{lem:overlap_within_bounds},
\[ \left| \left\{ A \in \mathcal{T}_{\varphi} : \left| \frac{1}{\varphi k} \left|\upsilon \cap A \right| - \varphi \right| > \rho  \right\} \right|  \leq  |\mathcal{T}_{\varphi} | \cdot 2e^{-2 \varphi k \rho^2}  \]

So,
\[ \left| \left\{ A \in \mathcal{T}_{\varphi} \backslash \{\upsilon\} : \left| \frac{1}{\varphi k } \left|\upsilon \cap A \right| - \varphi \right| > \rho  \right\} \right|  \leq  |\mathcal{T}_{\varphi}| \cdot 2e^{-2 \varphi k \rho^2}  \]

The result now follows, considering that $\Upsilon$ is uniform over $|\mathcal{T}_{\varphi}| - 1$ possibilities.

\end{proof}

\begin{lemma}
\label{lem:desired_beta_hbeta}

Let $\alpha_0 \in (0,1]$. Then, $\exists (\alpha_n)_{n \in \mathbb{N}}$ such that  $\forall n \in \mathbb{N}, \alpha_n\in [0, \alpha_0) \cap \mathbb{Q}$, $\alpha_n \longrightarrow \alpha_0$ as $n \longrightarrow \infty$ and $H(\alpha_n) \in \mathbb{R} \backslash \mathbb{Q}$. 

\end{lemma}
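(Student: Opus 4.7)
My plan is to construct the $\alpha_n$'s as dyadic rationals with odd numerator, namely $\alpha_n = a_n/2^{k_n}$ with $a_n$ odd and $k_n \geq 2$. The argument then has two parts: (i) any such rational has irrational binary entropy, and (ii) one can pick them strictly below and converging to any $\alpha_0 \in (0,1]$.

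For (i) I would argue by contradiction. Suppose $r := H(a/2^k) \in \mathbb{Q}$ for odd $a$ with $1 \leq a \leq 2^k - 1$ and $k \geq 2$. Starting from the identity
\[
H(a/2^k) = k - \frac{a}{2^k}\log_2 a - \frac{2^k - a}{2^k}\log_2(2^k - a)
\]
(binary $\log$, matching the base used throughout the paper), I multiply through by $2^k$ and exponentiate base $2$ to obtain
\[
a^a\,(2^k - a)^{2^k - a} = 2^{(k-r)2^k}.
\]
The left-hand side is a positive integer, so the exponent on the right must in fact be a non-negative integer $s$. But $a$ odd and $k \geq 1$ make both $a$ and $2^k - a$ odd, so the left-hand side is odd; hence $2^s$ is odd, forcing $s = 0$ and the product to equal $1$. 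Since $a, 2^k - a$ are positive integers, this is possible only if $a = 1$ and $2^k - a = 1$, i.e.\ $k = 1$, contradicting $k \geq 2$.

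For (ii), the set $\{a/2^k : a \text{ odd},\ 1 \leq a \leq 2^k - 1\}$ has spacing $1/2^{k-1}$ inside $(0,1)$, so it is dense there. For $\alpha_0 \in (0,1)$ I would take $a_n$ to be the largest odd integer with $a_n/2^n < \alpha_0$ (which exists once $2^n \alpha_0 > 2$), giving $\alpha_n := a_n/2^n \in [0, \alpha_0) \cap \mathbb{Q}$ with $\alpha_0 - \alpha_n < 2/2^n$. For $\alpha_0 = 1$, the explicit choice $\alpha_n = (2^n - 1)/2^n$, $n \geq 2$, works. By (i), $H(\alpha_n) \in \mathbb{R}\setminus\mathbb{Q}$ in every case, and $\alpha_n \to \alpha_0$.

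The essential step is (i), and the crux, which I expect to be the only substantive observation, is that $a^a(2^k - a)^{2^k - a}$ is odd whenever $a$ is odd, so the rational-entropy hypothesis collapses to the trivial $k = 1$ case. This is a pure unique-factorization argument and avoids any appeal to transcendence results such as Baker's theorem; everything else is a density observation for odd-numerator dyadics.
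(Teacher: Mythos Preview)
Your proof is correct and follows the same overall strategy as the paper's: pick rationals of a special arithmetic form and use unique factorization in $\mathbb{Z}$ to rule out $H$ being rational. The paper takes $\alpha_n = a_n/3^{k_n}$ with arbitrary numerator and, assuming $H(\alpha_n)\in\mathbb{Q}$, arrives at $b_n^q\cdot 2^p = 3^{qj_n}$, whose left side is even (since $H(\alpha_n)>0$ forces $p\geq 1$) while the right side is odd. You instead take odd-numerator dyadics $a/2^k$ with $k\geq 2$ and reach $a^a(2^k-a)^{2^k-a}=2^s$, whose left side is odd; this is the same parity obstruction run from the other end. Your version is marginally tidier in that the contradiction appears in a single line, while the paper's construction works for every numerator $a_n$ without an oddness restriction. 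One small gap to patch: your $a_n$ is only defined once $2^n\alpha_0>2$, so to meet the statement as written (for all $n\in\mathbb{N}$) you should pad the initial terms with any fixed admissible value.
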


\begin{proof}

For any $n \in \mathbb{N}$, let $k_n \in \mathbb{N}$ be such that $1/3^{k_n} < 1/n$. Then, it is easy to check that $\exists a_n \in \mathbb{N}$ such that $\alpha_0  - 1/n < a_n / 3^{k_n} < \alpha_0$. Let $\alpha_n = a_n/3^{k_n}$, $n \in \mathbb{N}$ such that $\alpha_0  - 1/n < \alpha_n < \alpha_0$. Clearly, $\alpha_n \in [0, \alpha_0) \cap \mathbb{Q}$ and $\alpha_n \longrightarrow \alpha_0$ as $n \longrightarrow \infty$. Also,
\begin{align*}
-H(\alpha_n) & = -H \left( \frac{a_n}{3^{k_n}} \right) \\
                    & =  \frac{a_n}{3^{k_n}} \log \left( \frac{a_n}{3^{k_n}} \right) + \left( 1 - \frac{a_n}{3^{k_n}} \right) \log \left( 1 - \frac{a_n}{3^{k_n}} \right) \\
                    & = \log \left(   \left( \frac{a_n}{3^{k_n}} \right)^{\frac{a_n}{3^{k_n}}}  \cdot  \left( \frac{3^{k_n} - a_n}{3^{k_n}} \right)^{\frac{3^{k_n} - a_n}{3^{k_n}}}   \right)  \\
                    & = \frac{1}{3^{k_n}} \log \left(    \left( \frac{a_n}{3^{k_n}} \right)^{a_n}  \cdot  \left( \frac{3^{k_n} - a_n}{3^{k_n}} \right)^{3^{k_n} - a_n}    \right) \\
                    & = \frac{1}{3^{k_n}} \log \left(    \frac{b_n}{3^{j_n}}  \right) 
\end{align*}

where $b_n,j_n \in \mathbb{N}$. Thus,
\begin{equation*}
H(\alpha_n) = \frac{1}{3^{k_n}} \log \left(    \frac{3^{j_n}}{b_n}  \right) 
\end{equation*}

Suppose that $\log \left(    3^{j_n}/b_n    \right)$ is rational. That is, $\exists p,q \in \mathbb{N}, q \neq 0$ such that :
\[   \log \left(    \frac{3^{j_n}}{b_n}    \right) = \frac{p}{q}  \]

This implies that $2^{\frac{p}{q}} = 3^{j_n}/b_n$. That is, $2^p = 3^{qj_n}/b^q_n$. Hence,
\[ b^q_n = \frac{3^{q j_n}}{2^p}  \]

This is a contradiction since the RHS cannot be an integer, as its numerator is an odd number while the denominator is an even number. Thus, $\log \left(    3^{j_n}/b_n    \right)$ is irrational. As a result, $H(\alpha_n) = (1/3^{k_n}) \cdot \log \left(    3^{j_n}/b_n  \right)$ is also irrational.

\end{proof}

\begin{lemma}
\label{lem:bob_honest_bobeve_know_little_lt}
Suppose $\epsilon_1 \leq 1/2$ and Bob is honest. Then, 
\begin{equation*}
H( F_{\overline{U}}(\boldsymbol{X}|_{\boldsymbol{L}_{\overline{U}}}) | F_{\overline{U}}, V_{BE} ) \geq (1 - \xi) \cdot \left(   (\epsilon_1 \epsilon_2 - 5 \delta - 2\tilde{\delta} - \delta')n - \frac{2^{-(\delta + \delta')  n}}{\ln 2}  \right)
\end{equation*}
where $\xi \longrightarrow 0$ exponentially fast as $n \longrightarrow \infty$.
\end{lemma}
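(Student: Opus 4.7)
My plan is to use privacy amplification (Lemma~\ref{lem:privacy_amplification}) after establishing a high-probability lower bound on the conditional R\'enyi entropy $R(\boldsymbol{X}|_{\boldsymbol{L}_{\overline{U}}} \mid V_{BE} = v_{BE})$. First I would fix a realization $v_{BE} = (u, \boldsymbol{y}, \boldsymbol{z}, \boldsymbol{l}_0, \boldsymbol{l}_1, \boldsymbol{s}, \boldsymbol{m})$ of the honest-Bob view and argue that, conditioned on $V_{BE} = v_{BE}$, the bits $\{X_i : i \in \boldsymbol{l}_{\overline{u}}\}$ at positions where either $Y_i \neq \bot$ or $Z_i \neq \bot$ are deterministically known, while the remaining bits (exactly the positions $i \in \boldsymbol{l}_{\overline{u}}$ with $\Psi_i = \bot$) remain i.i.d.\ uniform. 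The key independence is the chain $\boldsymbol{X} - (\boldsymbol{Y},\boldsymbol{Z}) - (U,\boldsymbol{L}_0,\boldsymbol{L}_1,\boldsymbol{S},\boldsymbol{M})$, which holds because $(U,\boldsymbol{S},\boldsymbol{M})$ are Bob's private randomness and $(\boldsymbol{L}_0,\boldsymbol{L}_1)$ are derived from $\boldsymbol{Y}$ together with this private randomness. This immediately gives
\[
R(\boldsymbol{X}|_{\boldsymbol{L}_{\overline{U}}} \mid V_{BE} = v_{BE}) = \#_e(\boldsymbol{\psi}|_{\boldsymbol{l}_{\overline{u}}}).
\]

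Next I would lower-bound $\#_e(\boldsymbol{\Psi}|_{\boldsymbol{L}_{\overline{U}}})$ with high probability. By construction, $\boldsymbol{L}_{\overline{U}}$ contains $(\beta - \gamma)n = (\epsilon_1 - \delta)n$ indices drawn from $E$ (Bob's erasures); at each such position $Y_i = \bot$, so $\Psi_i = \bot$ iff $Z_i = \bot$, which occurs independently with probability $\epsilon_2$ since Eve's BEC acts independently of Bob's BEC. A Chernoff bound on these $(\epsilon_1-\delta)n$ Bernoulli($\epsilon_2$) variables yields, for some $\xi \to 0$ exponentially in $n$,
\[
P\bigl[\#_e(\boldsymbol{\Psi}|_{\boldsymbol{L}_{\overline{U}}}) \geq (\epsilon_1 \epsilon_2 - 4\delta - 2\tilde{\delta})n\bigr] \geq 1 - \xi,
\]
where I absorb a $\delta$-slack to go from $\epsilon_2(\epsilon_1-\delta)n$ to $(\epsilon_1\epsilon_2 - 4\delta - 2\tilde{\delta})n$; this is also where the Chernoff error probability for Step~\ref{step:abort0_0} gets folded in.

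With the R\'enyi bound in hand, I would apply Lemma~\ref{lem:privacy_amplification} with input length $\beta n$, output length $l = (\epsilon_1\epsilon_2 - 5\delta - 2\tilde{\delta} - \delta')n$, and $c = (\epsilon_1\epsilon_2 - 4\delta - 2\tilde{\delta})n$, obtaining for every ``good'' $v_{BE}$
\[
H\bigl(F_{\overline{U}}(\boldsymbol{X}|_{\boldsymbol{L}_{\overline{U}}}) \,\bigm|\, F_{\overline{U}}, V_{BE} = v_{BE}\bigr) \geq l - \frac{2^{\,l - c}}{\ln 2} = (\epsilon_1\epsilon_2 - 5\delta - 2\tilde{\delta} - \delta')n - \frac{2^{-(\delta+\delta')n}}{\ln 2}.
\]
Averaging over $V_{BE}$, splitting into the good event of probability at least $1-\xi$ and the complementary bad event where I lower-bound entropy by $0$, and using the Markov chain $F_{\overline{U}}(\boldsymbol{X}|_{\boldsymbol{L}_{\overline{U}}}) - (F_{\overline{U}}, \boldsymbol{\Psi}|_{\boldsymbol{L}_{\overline{U}}}) - V_{BE}$ so that conditioning on the full view is no worse than conditioning on $\boldsymbol{\Psi}|_{\boldsymbol{L}_{\overline{U}}}$ alone, yields the claimed inequality.

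\textbf{Main obstacle.} The delicate step is verifying cleanly that the conditional distribution of $\boldsymbol{X}|_{\boldsymbol{L}_{\overline{U}}}$ given the full combined view $V_{BE}$, including the interactive-hashing transcript $(\boldsymbol{M},\boldsymbol{\Pi})$ (equivalently $(\boldsymbol{S},\boldsymbol{M})$ for honest Bob), really is uniform on a set of size $2^{\#_e(\boldsymbol{\Psi}|_{\boldsymbol{L}_{\overline{U}}})}$. This requires checking that every component of $V_{BE}$ is a function of $(\boldsymbol{Y},\boldsymbol{Z})$ together with randomness independent of $\boldsymbol{X}$, which is true because Bob's construction of $\boldsymbol{L}_U, \boldsymbol{L}_{\overline{U}}, \boldsymbol{S}$ uses only $\boldsymbol{Y}$ and fresh private randomness, and interactive hashing between honest parties uses only $\boldsymbol{S}$ and Alice's matrix $\boldsymbol{M}$. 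The remaining bookkeeping---tracking which $\delta$, $\tilde{\delta}$, $\delta'$ terms absorb the Chernoff slack---is routine but must be done carefully to land on exactly the stated constants.
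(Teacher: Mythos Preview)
Your proposal is correct and follows essentially the same approach as the paper: reduce $R(\boldsymbol{X}|_{\boldsymbol{L}_{\overline{U}}}\mid V_{BE}=v_{BE})$ to $\#_e(\boldsymbol{\psi}|_{\boldsymbol{l}_{\overline{u}}})$ via the Markov chains $\boldsymbol{X}|_{\boldsymbol{L}_{\overline{U}}} - (U,\boldsymbol{Y},\boldsymbol{Z},\boldsymbol{L}_{\overline{U}}) - (\boldsymbol{L}_U,\boldsymbol{S},\boldsymbol{M})$ and $\boldsymbol{X}|_{\boldsymbol{L}_{\overline{U}}} - \boldsymbol{\Psi}|_{\boldsymbol{L}_{\overline{U}}} - (U,\boldsymbol{Y},\boldsymbol{Z},\boldsymbol{L}_{\overline{U}})$, then apply a Chernoff bound to the $(\beta-\gamma)n=(\epsilon_1-\delta)n$ positions drawn from $E$ and finish with Lemma~\ref{lem:privacy_amplification}. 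The only cosmetic difference is that the paper takes the Chernoff threshold $(\epsilon_2-\delta)(\epsilon_1-\delta)n$ rather than your $(\epsilon_1\epsilon_2-4\delta-2\tilde{\delta})n$; both choices leave the exponent $l-c\le -(\delta+\delta')n$, so the conclusion is identical.
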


\begin{proof}
Suppose $v_{BE} = (u, \boldsymbol{y}, \boldsymbol{z}, \boldsymbol{l}_0, \boldsymbol{l}_1, \boldsymbol{s}, \boldsymbol{m})$. Then,
\begin{align*}
R(\boldsymbol{X}|_{\boldsymbol{L}_{\overline{U}}} | V_{BE} = v_{BE}) & = R(\boldsymbol{X}|_{\boldsymbol{l}_{\overline{U}}} | u, \boldsymbol{y}, \boldsymbol{z}, \boldsymbol{l}_0, \boldsymbol{l}_1, \boldsymbol{s}, \boldsymbol{m}) \\
& = R(\boldsymbol{X}|_{\boldsymbol{l}_{\overline{U}}} | u, \boldsymbol{y}, \boldsymbol{z}, \boldsymbol{l}_u, \boldsymbol{l}_{\overline{u}}, \boldsymbol{s}, \boldsymbol{m}) \\
& \stackrel{\text{(a)}}{=} R(\boldsymbol{X}|_{\boldsymbol{l}_{\overline{U}}} | u, \boldsymbol{y}, \boldsymbol{z}, \boldsymbol{l}_{\overline{u}}) \\
& = R(\boldsymbol{X}|_{\boldsymbol{l}_{\overline{U}}} | u, \boldsymbol{y}, \boldsymbol{z}, \boldsymbol{l}_{\overline{u}}, \boldsymbol{\psi}|_{\boldsymbol{l}_{\overline{u}}}) \\
& \stackrel{\text{(b)}}{=} R(\boldsymbol{X}|_{\boldsymbol{l}_{\overline{U}}} | \boldsymbol{\psi}|_{\boldsymbol{l}_{\overline{u}}}) \\
& = \#_e(\boldsymbol{\psi}|_{\boldsymbol{l}_{\overline{u}}})
\end{align*}

where (a) follows since $\boldsymbol{X}|_{\boldsymbol{L}_{\overline{U}}} - U,\boldsymbol{X},\boldsymbol{Y},\boldsymbol{L}_{\overline{U}} - \boldsymbol{L}_U,\boldsymbol{S},\boldsymbol{M}$ is a Markov chain and (b) follows since $\boldsymbol{X}|_{\boldsymbol{L}_{\overline{U}}} -  \boldsymbol{\Psi}|_{\boldsymbol{L}_{\overline{U}}} - U,\boldsymbol{Y},\boldsymbol{Z},\boldsymbol{L}_{\overline{U}}$ is a Markov chain.

Whenever $\#_e(\boldsymbol{\psi}|_{\boldsymbol{l}_{\overline{u}}}) \geq (\epsilon_2 - \delta)(\beta n -\gamma n) = (\epsilon_2 - \delta)(\epsilon_1 -\delta)n$, then by applying Lemma~\ref{lem:privacy_amplification} we get:
\begin{align*}
  H( F_{\overline{U}}(\boldsymbol{X}|_{\boldsymbol{L}_{\overline{U}}}) | F_{\overline{U}}, V_{BE} = v_{BE} ) & \geq (\epsilon_1 \epsilon_2 - 5 \delta - 2\tilde{\delta} - \delta')n - \frac{ 2^{( \epsilon_1 \epsilon_2 - 5 \delta - 2\tilde{\delta} - \delta' - (\epsilon_2 - \delta)(\epsilon_1 - \delta)  )n} }{\ln 2}  \\
 & \geq (\epsilon_1 \epsilon_2 - 5 \delta - 2\tilde{\delta} - \delta')n - \frac{2^{-(\delta + \delta') n}}{\ln 2}
\end{align*}

Also, by Chernoff's bound $P[\#_e(\boldsymbol{\Psi}|_{\boldsymbol{L}_{\overline{U}}}) \geq (\epsilon_2 - \delta)(\beta n -\gamma n)] \geq 1 - \xi$, where $\xi \longrightarrow 0$ exponentially fast as $n \longrightarrow \infty$. Thus, we have 
\begin{equation*}
H( F_{\overline{U}}(\boldsymbol{X}|_{\boldsymbol{L}_{\overline{U}}}) | F_{\overline{U}}, V_{BE} ) \geq (1 - \xi) \cdot \left(   (\epsilon_1 \epsilon_2 - 5 \delta - 2\tilde{\delta} - \delta')n - \frac{2^{-(\delta + \delta') n}}{\ln 2}  \right)
\end{equation*}

\end{proof}

\begin{lemma}
\label{lem:bob_honest_bobeve_know_little_gt}
Suppose $\epsilon_1 > 1/2$ and Bob is honest. Then,
\[       H( F_{\overline{U}}(\boldsymbol{X}|_{L_{\overline{\Phi}} \backslash L_0 \cap L_1}) | F_{\overline{U}}, V_{BE} ) \geq (1 - \xi) \cdot \left(   \beta n(\epsilon_1 \epsilon_2 - 3 \delta - \delta') - \frac{2^{-(\delta + \delta') \beta n}}{\ln 2}  \right)     \]
where $\xi \longrightarrow 0$ exponentially fast as $n \longrightarrow \infty$.
\end{lemma}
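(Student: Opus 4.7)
The plan is to mirror the template used in Lemma~\ref{lem:bob_honest_bobeve_know_little_lt}: compute the R\'enyi entropy of the hash input conditioned on the view, lower bound that quantity with high probability, and then invoke Lemma~\ref{lem:privacy_amplification} with $c = \beta n(\epsilon_1\epsilon_2 - 2\delta)$ and $l = \beta n(\epsilon_1\epsilon_2 - 3\delta - \delta')$ so that $l - c = -(\delta + \delta')\beta n$, producing the claimed bound. The view here is $V_{BE} = (U, \boldsymbol{Y}, \boldsymbol{Z}, \boldsymbol{S}, \boldsymbol{M})$, and this determines $\boldsymbol{\Pi}$, hence $\boldsymbol{S}_0, \boldsymbol{S}_1$ and therefore $L_0, L_1, \Phi, L_\Phi, L_{\overline{\Phi}}$. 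Using exactly the two Markov-chain simplifications used in Lemma~\ref{lem:bob_honest_bobeve_know_little_lt} (the input bits $\boldsymbol{X}|_{L_{\overline{\Phi}}\backslash L_0\cap L_1}$ depend on the view only through $\boldsymbol{\Psi}$ restricted to those indices, and each such bit is an independent $\mathrm{Bernoulli}(1/2)$ whenever $\Psi_i = \bot$), one obtains
\[
R\bigl(\boldsymbol{X}|_{L_{\overline{\Phi}}\backslash L_0\cap L_1}\,\bigl|\, V_{BE} = v_{BE}\bigr) \;=\; \#_e\bigl(\boldsymbol{\psi}|_{l_{\overline{\phi}}\backslash l_0\cap l_1}\bigr).
\]

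The next step I would take exploits Bob's honesty to simplify the count: since $L_\Phi = Q(\boldsymbol{S}) \subseteq \overline{E}$ by construction, every index in $L_\Phi$ satisfies $Y_i \neq \bot$ and hence $\Psi_i \neq \bot$. Because $L_0 \cap L_1 = L_\Phi \cap L_{\overline{\Phi}} \subseteq L_\Phi$, no index of $L_0 \cap L_1$ contributes an erasure to $\boldsymbol{\Psi}$, so
\[
\#_e\bigl(\boldsymbol{\psi}|_{l_{\overline{\phi}}\backslash l_0\cap l_1}\bigr) \;=\; \#_e\bigl(\boldsymbol{\psi}|_{l_{\overline{\phi}}}\bigr),
\]
and it suffices to lower bound the right-hand side. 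By Chernoff, $\#_e(\boldsymbol{\Psi}) \geq n(\epsilon_1 \epsilon_2 - \delta)$ with probability at least $1 - \xi_1$ where $\xi_1$ decays exponentially. Conditioning on no abort at Step~\ref{step:abort1_1} (which by the argument already given in Section~\ref{sec:protocol_doesnt_abort_for_hbc} holds w.h.p.\ via Lemma~\ref{lem:log_dense}) and invoking Property~\ref{prop:ih_2} of interactive hashing, $\boldsymbol{S}_{\overline{\Phi}}$ is uniform over $\mathcal{B} \setminus \{\boldsymbol{S}_\Phi\}$, hence $L_{\overline{\Phi}}$ is uniform over $\mathcal{T} \setminus \{L_\Phi\}$.

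I would then conclude with a Hoeffding-style concentration of the overlap, adapting Lemma~\ref{lem:overlap_within_bounds}. Taking $\upsilon = \{i : \Psi_i = \bot\}$ (of size at least $(\epsilon_1\epsilon_2 - \delta)n$ on a typical realization) and $\Upsilon = L_{\overline{\Phi}}$, the Lemma yields
\[
P\Bigl[\tfrac{1}{\beta n}\,\#_e\bigl(\boldsymbol{\Psi}|_{L_{\overline{\Phi}}}\bigr) - (\epsilon_1\epsilon_2 - \delta) \leq -\delta\Bigr] \leq 2e^{-2\beta n \delta^2},
\]
modulo an $O(1/|\mathcal{T}|)$ correction coming from the exclusion of $L_\Phi$ in the uniform distribution over $\mathcal{T}\setminus\{L_\Phi\}$, which is negligible. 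Combining, $\#_e(\boldsymbol{\Psi}|_{L_{\overline{\Phi}}}) \geq \beta n(\epsilon_1\epsilon_2 - 2\delta)$ outside an event of exponentially small probability $\xi$. For views in this good set, Lemma~\ref{lem:privacy_amplification} with the above $c, l$ gives the per-view bound $\beta n(\epsilon_1\epsilon_2 - 3\delta - \delta') - 2^{-(\delta+\delta')\beta n}/\ln 2$; averaging over views (lower bounding the contribution of bad views by zero) yields the claimed $(1-\xi)$ prefactor.

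The subtlest step, and the one I would devote the most care to, is justifying that the conditioning on $V_{BE}$ used in the privacy amplification step is compatible with the marginal-over-$(\boldsymbol{S}, \boldsymbol{M})$ distribution used to concentrate the overlap: the Chernoff/overlap bounds control $\#_e(\boldsymbol{\Psi}|_{L_{\overline{\Phi}}})$ over the randomness of $(\boldsymbol{Y},\boldsymbol{Z},\boldsymbol{S},\boldsymbol{M})$, but the privacy-amplification bound holds pointwise in $v_{BE}$, so I would partition views into ``good'' ones (where the erasure count is adequate) and ``bad'' ones, a partition determined by the view itself, and conclude by the law of total expectation as above.
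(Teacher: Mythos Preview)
Your proposal is correct and follows essentially the same approach as the paper: reduce the conditional R\'enyi entropy to $\#_e(\boldsymbol{\psi}|_{l_{\overline{\phi}}\backslash l_0\cap l_1})$ via the two Markov chains, use Bob's honesty ($L_0\cap L_1\subseteq L_\Phi\subseteq\overline{E}$) to equate this with $\#_e(\boldsymbol{\psi}|_{l_{\overline{\phi}}})$, invoke Property~\ref{prop:ih_2} so that $L_{\overline{\Phi}}$ is uniform over $\mathcal{T}\setminus\{L_\Phi\}$, apply Hoeffding plus Chernoff to get the $(\epsilon_1\epsilon_2-2\delta)\beta n$ erasure count with exponentially high probability, and finish with Lemma~\ref{lem:privacy_amplification}. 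Your closing paragraph on the good/bad view partition is slightly more explicit than the paper's treatment but is exactly the right justification.
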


\begin{proof}
Suppose $ v_{BE} = (u,\boldsymbol{y},\boldsymbol{z},\boldsymbol{s},\boldsymbol{m})$. Then,
\begin{align*}
R(\boldsymbol{X}|_{L_{\overline{\Phi}} \backslash L_0 \cap L_1} | V_{BE} = v_{BE}) & = R(\boldsymbol{X}|_{l_{\overline{\Phi}} \backslash l_0 \cap l_1} | u,\boldsymbol{y},\boldsymbol{z},\boldsymbol{s},\boldsymbol{m}) \\
& \stackrel{\text{(a)}}{=} R(\boldsymbol{X}|_{l_{\overline{\Phi}} \backslash l_0 \cap l_1} | u,\boldsymbol{y},\boldsymbol{z},\boldsymbol{s},\boldsymbol{m}, l_{\overline{\Phi}} \backslash l_0 \cap l_1) \\
& \stackrel{\text{(b)}}{=} R(\boldsymbol{X}|_{l_{\overline{\Phi}} \backslash l_0 \cap l_1} | \boldsymbol{y},\boldsymbol{z},l_{\overline{\Phi}} \backslash l_0 \cap l_1) \\
& = R(\boldsymbol{X}|_{l_{\overline{\Phi}} \backslash l_0 \cap l_1} | \boldsymbol{y},\boldsymbol{z},l_{\overline{\Phi}} \backslash l_0 \cap l_1, \boldsymbol{\psi}|_{L_{\overline{\phi}} \backslash l_0 \cap l_1}) \\
& \stackrel{\text{(c)}}{=}  R(\boldsymbol{X}|_{l_{\overline{\Phi}} \backslash l_0 \cap l_1} | \boldsymbol{\psi}|_{l_{\overline{\phi}} \backslash l_0 \cap l_1}) \\
 & = \#_e(\boldsymbol{\psi}|_{l_{\overline{\phi}} \backslash l_0 \cap l_1})
\end{align*}

where (a) follows since $L_0,L_1,\Phi$ are functions of $(\boldsymbol{S},\boldsymbol{M})$, (b) follows since $\boldsymbol{X}|_{L_{\overline{\Phi}} \backslash L_0 \cap L_1} - \boldsymbol{Y},\boldsymbol{Z},L_{\overline{\Phi}} \backslash L_0 \cap L_1 - U,\boldsymbol{S},\boldsymbol{M}$ is a Markov chain and (c) follows since $\boldsymbol{X}|_{L_{\overline{\Phi}} \backslash L_0 \cap L_1}  - \boldsymbol{\Psi}|_{L_{\overline{\Phi}} \backslash L_0 \cap L_1} - \boldsymbol{Y},\boldsymbol{Z},L_{\overline{\Phi}} \backslash L_0 \cap L_1$ is a Markov chain.

Whenever $\#_e(\boldsymbol{\psi}|_{l_{\overline{\phi}} \backslash l_0 \cap l_1}) \geq (\epsilon_1 \epsilon_2 - 2\delta)\beta n$ , then by applying Lemma~\ref{lem:privacy_amplification}, we get
\begin{align*}
  H( F_{\overline{U}}(\boldsymbol{X}|_{L_{\overline{\Phi}} \backslash L_0 \cap L_1}) | F_{\overline{U}}, V_{BE} = v_{BE} ) & \geq \beta n(\epsilon_1 \epsilon_2 - 3 \delta - \delta') - \frac{ 2^{ \beta n(\epsilon_1 \epsilon_2 - 3 \delta - \delta') - (\epsilon_1 \epsilon_2 - 2\delta)\beta n  } }{\ln 2}  \\
    & = \beta n(\epsilon_1 \epsilon_2 - 3 \delta - \delta') - \frac{2^{-(\delta + \delta') \beta n}}{\ln 2}
\end{align*}

Recall that $L_{\overline{\Phi}} \thicksim \text{Unif}(\mathcal{T} \backslash L_{\Phi})$.  By a simple application of Lemma~\ref{lem:hoeffding} alongwith Chernoff's bound, we get $P[\#_e(\boldsymbol{\Psi}|_{L_{\overline{\Phi}}}) \geq (\epsilon_1 \epsilon_2 - 2\delta)\beta n] \geq 1 - \xi$, where $\xi \longrightarrow 0$ exponentially fast as $n \longrightarrow \infty$. Since $L_0 \cap L_1$ positions are unerased in $\boldsymbol{Y}$, we have $P[\#_e(\boldsymbol{\Psi}|_{L_{\overline{\Phi}} \backslash L_0 \cap L_1}) \geq (\epsilon_1 \epsilon_2 - 2\delta)\beta n] \geq 1 - \xi$. As a result,
\[   H( F_{\overline{U}}(\boldsymbol{X}|_{L_{\overline{\Phi}} \backslash L_0 \cap L_1}) | F_{\overline{U}}, V_{BE} ) \geq (1 - \xi) \cdot \left(   \beta n(\epsilon_1 \epsilon_2 - 3 \delta - \delta') - \frac{2^{-(\delta + \delta') \beta n}}{\ln 2}  \right) \]

\end{proof}

\begin{lemma}
\label{lem:eve_knows_little_lt}
Suppose $\epsilon_1 \leq 1/2$ and suppose Alice, Bob are honest. Then,
\begin{enumerate}
  \item $H(F_0(\boldsymbol{X}|_{\boldsymbol{L}_0}) | F_0, U, V_E) \geq  (1 - \xi) \cdot \left(     (\epsilon_1 \epsilon_2 - 5 \delta - 2\tilde{\delta} - \delta')n - \frac{2^{-(\delta + \delta') n}}{\ln 2}     \right) $
  \item $H(F_1(\boldsymbol{X}|_{\boldsymbol{L}_1}) | F_1, U, V_E) \geq  (1 - \xi) \cdot \left(     (\epsilon_1 \epsilon_2 - 5 \delta - 2\tilde{\delta} - \delta')n - \frac{2^{-(\delta + \delta') n}}{\ln 2}     \right) $
\end{enumerate}
where $\xi \longrightarrow 0$ exponentially fast as $n \longrightarrow \infty$.
\end{lemma}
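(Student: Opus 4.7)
The plan is to closely parallel the proof of Lemma~\ref{lem:bob_honest_bobeve_know_little_lt}: first reduce the conditional R\'enyi entropy $R(\boldsymbol{X}|_{\boldsymbol{L}_0} \mid U, V_E = v_E)$ by using independence/Markov relations to strip away all variables in $V_E$ that carry no extra information about $\boldsymbol{X}|_{\boldsymbol{L}_0}$ beyond Eve's channel output and the checks that Bob publicly reveals, and then invoke Lemma~\ref{lem:privacy_amplification} to convert the surviving R\'enyi entropy into the desired lower bound on $H(F_0(\boldsymbol{X}|_{\boldsymbol{L}_0}) \mid F_0, U, V_E)$. The argument for $\boldsymbol{L}_1$ is symmetric, so I will only write the one for $\boldsymbol{L}_0$.

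Fix a realization $v_E = (\boldsymbol{z}, \boldsymbol{l}_0, \boldsymbol{l}_1, \boldsymbol{m}, \boldsymbol{\pi}, \theta, \boldsymbol{y}|_{\boldsymbol{l}_0|_{\boldsymbol{j}_{\overline{\theta}}}}, \boldsymbol{y}|_{\boldsymbol{l}_1|_{\boldsymbol{j}_{\theta}}})$ and a value $u$ of $U$. First I would observe that, conditioned on $(\boldsymbol{L}_0, \boldsymbol{Z}|_{\boldsymbol{L}_0}, \boldsymbol{X}|_{\boldsymbol{L}_0|_{\boldsymbol{J}_{\overline{\Theta}}}})$, the variable $\boldsymbol{X}|_{\boldsymbol{L}_0}$ is independent of everything else in $V_E$ together with $U$: the channel acts independently on each input bit, $\boldsymbol{L}_1$ and the interactive-hashing transcript $(\boldsymbol{M}, \boldsymbol{\Pi}, \Theta)$ are functions of randomness independent of $\boldsymbol{X}|_{\boldsymbol{L}_0}$ given $\boldsymbol{L}_0$, and the only leakage on $\boldsymbol{X}|_{\boldsymbol{L}_0}$ via the revealed tuples is precisely the sub-tuple $\boldsymbol{X}|_{\boldsymbol{L}_0|_{\boldsymbol{J}_{\overline{\Theta}}}}$ (which equals $\boldsymbol{Y}|_{\boldsymbol{L}_0|_{\boldsymbol{J}_{\overline{\Theta}}}}$ when Bob is honest). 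Hence the conditional R\'enyi entropy simplifies to
\[
R(\boldsymbol{X}|_{\boldsymbol{L}_0} \mid U=u, V_E = v_E) = R\bigl(\boldsymbol{X}|_{\boldsymbol{l}_0} \,\big|\, \boldsymbol{Z}|_{\boldsymbol{l}_0}=\boldsymbol{z}|_{\boldsymbol{l}_0}, \boldsymbol{X}|_{\boldsymbol{l}_0|_{\boldsymbol{j}_{\overline{\theta}}}}\bigr) = \#_e\bigl(\boldsymbol{z}|_{\boldsymbol{l}_0 \setminus \{\boldsymbol{l}_0|_{\boldsymbol{j}_{\overline{\theta}}}\}}\bigr),
\]
since the channel bits outside the revealed positions are conditionally i.i.d.\ uniform on the erased coordinates of $\boldsymbol{z}$.

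Next, since $|\boldsymbol{L}_0| = \beta n$, the revealed sub-tuple has size $\gamma n$, and the $(\beta - \gamma) n = (\epsilon_1 - \delta) n$ remaining coordinates of $\boldsymbol{L}_0$ are passed through BEC($\epsilon_2$) independently of everything Bob does (he constructs $\boldsymbol{L}_0, \boldsymbol{J}$ from $\boldsymbol{Y}$ and his own randomness, not from $\boldsymbol{Z}$). Thus a standard Chernoff argument gives that, with probability at least $1 - \xi$ where $\xi$ decays exponentially in $n$, the count of $\boldsymbol{Z}$-erasures among those coordinates is at least $(\epsilon_2 - \delta)(\epsilon_1 - \delta) n \geq (\epsilon_1 \epsilon_2 - 2\delta) n$. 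On this high-probability event, we can apply Lemma~\ref{lem:privacy_amplification} with $l = n(\epsilon_1 \epsilon_2 - 5\delta - 2\tilde{\delta} - \delta')$ and $c \geq n(\epsilon_1 \epsilon_2 - 2\delta)$, obtaining
\[
H\bigl(F_0(\boldsymbol{X}|_{\boldsymbol{L}_0}) \mid F_0, U, V_E = v_E\bigr) \geq n(\epsilon_1\epsilon_2 - 5\delta - 2\tilde{\delta} - \delta') - \frac{2^{-(3\delta + 2\tilde{\delta} + \delta')n}}{\ln 2} \geq n(\epsilon_1\epsilon_2 - 5\delta - 2\tilde{\delta} - \delta') - \frac{2^{-(\delta + \delta')n}}{\ln 2}.
\]
Averaging over $v_E$ via the $(1-\xi)$ bound on the good event then yields the stated inequality. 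The argument for $F_1(\boldsymbol{X}|_{\boldsymbol{L}_1})$ is identical after swapping the roles of $\boldsymbol{L}_0$ and $\boldsymbol{L}_1$ (and of $\boldsymbol{J}_{\overline{\Theta}}$ with $\boldsymbol{J}_\Theta$).

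The main obstacle I expect is book-keeping the Markov-chain reduction cleanly: $V_E$ contains the interactive-hashing transcript and the revealed bits, which indirectly depend on $\boldsymbol{X}$. The point to be careful about is that, given $\boldsymbol{L}_0$, $\boldsymbol{L}_1$, and the revealed positions, no other piece of $V_E$ carries additional information about $\boldsymbol{X}|_{\boldsymbol{L}_0}$ beyond what $\boldsymbol{Z}|_{\boldsymbol{L}_0}$ already provides; this should follow from the fact that Bob's selection of $\boldsymbol{L}_0, \boldsymbol{L}_1, \boldsymbol{J}$ depends only on $\boldsymbol{Y}$ and his own private randomness, and $\boldsymbol{Y}, \boldsymbol{Z}$ are conditionally independent given $\boldsymbol{X}$. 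Everything else is a direct adaptation of the proof of Lemma~\ref{lem:bob_honest_bobeve_know_little_lt}.
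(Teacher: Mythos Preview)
Your proposal is correct and follows essentially the same approach as the paper: fix a realization of $(U,V_E)$, use Markov/independence relations to reduce the conditional R\'enyi entropy of $\boldsymbol{X}|_{\boldsymbol{L}_0}$ to the number of $\boldsymbol{Z}$-erasures on the $(\beta-\gamma)n=(\epsilon_1-\delta)n$ unrevealed coordinates of $\boldsymbol{L}_0$, apply Chernoff to lower-bound that count by $(\epsilon_2-\delta)(\epsilon_1-\delta)n$ with probability $1-\xi$, and then invoke Lemma~\ref{lem:privacy_amplification}. The paper writes the Markov reductions more explicitly (as chains labeled (a)--(c)) and uses the notation $\boldsymbol{l}_0|_{\boldsymbol{j}^c_{\overline{\theta}}}$ where you write $\boldsymbol{l}_0\setminus\{\boldsymbol{l}_0|_{\boldsymbol{j}_{\overline{\theta}}}\}$, but the substance is identical.
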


\begin{proof}
Suppose $ v_E = \left(  \boldsymbol{z},\boldsymbol{l}_0,\boldsymbol{l}_1,\boldsymbol{m},\boldsymbol{\pi},\theta,\boldsymbol{y}|_{\boldsymbol{l}_0|_{\boldsymbol{j}_{\overline{\theta}}}}, \boldsymbol{y}|_{\boldsymbol{l}_1|_{\boldsymbol{j}_{\theta}}} \right)$. Then,

\begin{enumerate}

\item  
\begin{align*}
R(\boldsymbol{X}|_{\boldsymbol{L}_0} | U = u, V_E = v_E) & = R \left( \boldsymbol{X}|_{\boldsymbol{l}_0} | u,\boldsymbol{z},\boldsymbol{l}_0,\boldsymbol{l}_1,\boldsymbol{m},\boldsymbol{\pi},\theta,\boldsymbol{y}|_{\boldsymbol{l}_0|_{\boldsymbol{j}_{\overline{\theta}}}}, \boldsymbol{y}|_{\boldsymbol{l}_1|_{\boldsymbol{j}_{\theta}}} \right) \\
& =  R \left( \boldsymbol{X}|_{\boldsymbol{l}_0} | u,\boldsymbol{z},\boldsymbol{l}_0,\boldsymbol{l}_1,\boldsymbol{m},\boldsymbol{\pi},\theta,\boldsymbol{y}|_{\boldsymbol{l}_0|_{\boldsymbol{j}_{\overline{\theta}}}}, \boldsymbol{y}|_{\boldsymbol{l}_1|_{\boldsymbol{j}_{\theta}}}, \boldsymbol{j}_{\overline{\theta}} \right) \\
& \stackrel{\text{(a)}}{=}  R \left( \boldsymbol{X}|_{\boldsymbol{l}_0} | \boldsymbol{z},\boldsymbol{l}_0,\boldsymbol{y}|_{\boldsymbol{l}_0|_{\boldsymbol{j}_{\overline{\theta}}}}, \boldsymbol{j}_{\overline{\theta}} \right) \\
& \stackrel{\text{(b)}}{=} R \left( \boldsymbol{X}|_{\boldsymbol{l}_0|_{\boldsymbol{j}^c_{\overline{\theta}} } } | \boldsymbol{z},\boldsymbol{l}_0,\boldsymbol{y}|_{\boldsymbol{l}_0|_{\boldsymbol{j}_{\overline{\theta}}}}, \boldsymbol{j}_{\overline{\theta}} \right) \\
& \stackrel{\text{(c)}}{=} R \left( \boldsymbol{X}|_{\boldsymbol{l}_0|_{\boldsymbol{j}^c_{\overline{\theta}}} } | \boldsymbol{z}|_{\boldsymbol{l}_0|_{\boldsymbol{j}^c_{\overline{\theta}}} } \right) \\
& = \#_e \left( \boldsymbol{z}|_{\boldsymbol{l}_0|_{\boldsymbol{j}^c_{\overline{\theta}}} } \right)
\end{align*}

where (a) follows since $\boldsymbol{X}|_{\boldsymbol{L}_0} - \boldsymbol{Z},\boldsymbol{L}_0,\boldsymbol{J}_{\overline{\theta}}, \boldsymbol{Y}|_{\boldsymbol{L}_0|_{\boldsymbol{J}_{\overline{\Theta}}}} - U,\boldsymbol{L}_1,\boldsymbol{M},\boldsymbol{\Pi},\Theta,\boldsymbol{Y}|_{\boldsymbol{L}_1|_{\boldsymbol{J}_{\Theta}}}$ is a Markov chain, (b) follows since $\boldsymbol{Y}|_{\boldsymbol{L}_0|_{\boldsymbol{J}_{\overline{\Theta}}}}$ is the same as $\boldsymbol{X}|_{\boldsymbol{L}_0|_{\boldsymbol{J}_{\overline{\Theta}}}}$ and (c) follows since $\boldsymbol{X}|_{\boldsymbol{L}_0|_{\boldsymbol{J}^c_{\overline{\Theta}}} } - \boldsymbol{Z}|_{\boldsymbol{L}_0|_{\boldsymbol{J}^c_{\overline{\Theta}}} } - \boldsymbol{Z},\boldsymbol{L}_0,\boldsymbol{J}_{\overline{\Theta}},\boldsymbol{Y}|_{\boldsymbol{L}_0|_{\boldsymbol{J}_{\overline{\Theta}}}}$ is a Markov chain.

Whenever $\#_e \left( \boldsymbol{z}|_{\boldsymbol{l}_0|_{\boldsymbol{j}^c_{\overline{\theta}}} } \right) \geq (\epsilon_2 - \delta) \left| \boldsymbol{l}_0|_{\boldsymbol{j}^c_{\overline{\theta}}} \right| = (\epsilon_2 - \delta)(\beta n - \gamma n)$, then applying Lemma~\ref{lem:privacy_amplification}, we get:
\begin{align*}
H(F_0(\boldsymbol{X}|_{\boldsymbol{L}_0}) | F_0, U = u, V_E = v_E) & \geq (\epsilon_1 \epsilon_2 - 5 \delta - 2\tilde{\delta} - \delta')n - \frac{2^{(\epsilon_1 \epsilon_2 - 5 \delta - 2\tilde{\delta} - \delta')n -  (\epsilon_2 - \delta)(\epsilon_1 - \delta)n}}{\ln 2} \\
 & \geq (\epsilon_1 \epsilon_2 - 5 \delta - 2\tilde{\delta} - \delta')n - \frac{2^{-(\delta + \delta') n}}{\ln 2}
\end{align*}

Also, by Chernoff's bound, $P \left[\#_e \left( \boldsymbol{Z}|_{\boldsymbol{L}_0|_{\boldsymbol{J}^c_{\overline{\Theta}}} } \right) \geq (\epsilon_2 - \delta) \left| \boldsymbol{L}_0|_{\boldsymbol{J}^c_{\overline{\Theta}}} \right| \right] \geq 1 - \xi$, where $\xi \longrightarrow 0$ exponentially fast as $n \longrightarrow \infty$. Thus, we get 
\[    H(F_0(\boldsymbol{X}|_{\boldsymbol{L}_0}) | F_0, U, V_E) \geq  (1 - \xi) \cdot \left(     (\epsilon_1 \epsilon_2 - 5 \delta - 2\tilde{\delta} - \delta')n - \frac{2^{-(\delta + \delta') n}}{\ln 2}     \right)   \]

\item The argument to show that 
\[    H(F_1(\boldsymbol{X}|_{\boldsymbol{L}_1}) | F_1, U, V_E) \geq  (1 - \xi) \cdot \left(     (\epsilon_1 \epsilon_2 - 5 \delta - 2\tilde{\delta} - \delta')n - \frac{2^{-(\delta + \delta') n}}{\ln 2}     \right)   \]
is very similar to the above argument.

\end{enumerate}

\end{proof}

\begin{lemma}
\label{lem:eve_knows_little_gt}
Suppose $\epsilon_1 > 1/2$ and suppose Alice, Bob are honest. Then,
\begin{align*}
H(F_U( \boldsymbol{X}|_{L_{\Phi} \backslash L_0 \cap L_1}  ) | F_U, U, V_E) & \geq (1 - \xi) \cdot \left(    \beta n(\epsilon_1 \epsilon_2 - 3 \delta - \delta') - \frac{2^{-(\delta + \delta') \beta n}}{\ln 2}    \right) \\
H(F_{\overline{U}}( \boldsymbol{X}|_{L_{\overline{\Phi}} \backslash L_0 \cap L_1}  ) | F_{\overline{U}}, U, V_E)  & \geq (1 - \xi) \cdot \left(    \beta n(\epsilon_1 \epsilon_2 - 3 \delta - \delta') - \frac{2^{-(\delta + \delta') \beta n}}{\ln 2}    \right)
\end{align*}
where $\xi \longrightarrow 0$ exponentially fast as $n \longrightarrow \infty$.
\end{lemma}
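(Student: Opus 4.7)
The plan is to mimic the structure of the proof of Lemma~\ref{lem:bob_honest_bobeve_know_little_gt}, but with Eve's restricted view $V_E = (\boldsymbol{Z}, \boldsymbol{M}, \boldsymbol{\Pi}, \Theta, \boldsymbol{Y}|_{L_0 \cap L_1})$ in place of the combined view $V_{BE}$. The key difference is that Eve no longer sees $\boldsymbol{Y}$ directly, only the further-degraded $\boldsymbol{Z}$ obtained by applying BEC($\epsilon_2$) independently to each bit of $\boldsymbol{X}$; consequently the erasure count that governs the R\'enyi entropy will concentrate around $\epsilon_2$ times the size of $L_\Phi \backslash L_0 \cap L_1$, rather than around the size itself.

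First I would fix a ``good'' realization $v_E = (\boldsymbol{z}, \boldsymbol{m}, \boldsymbol{\pi}, \theta, \boldsymbol{y}|_{l_0 \cap l_1})$ and $u$ of $U$, and use the Markov structure to reduce the conditional R\'enyi entropy. Observe that $L_0, L_1$ (hence $L_0 \cap L_1$) are deterministic functions of $(\boldsymbol{M}, \boldsymbol{\Pi})$, and $\Phi$ is obtained from $\Theta$ and $U$; so after conditioning on $(u, v_E)$, the sets $l_\phi$ and $l_\phi \backslash l_0 \cap l_1$ are fixed. Since the pair of channels acts independently on each input bit and is independent of $U$, of the interactive hashing randomness $(\boldsymbol{S}, \boldsymbol{M})$, and of Bob's random choice of $\boldsymbol{S}$, one gets the Markov chains
\[
\boldsymbol{X}|_{L_\Phi \backslash L_0 \cap L_1} \;-\; \boldsymbol{Z}, \boldsymbol{M}, \boldsymbol{\Pi}, \Theta, \boldsymbol{Y}|_{L_0 \cap L_1}, U \;-\; \text{(redundant variables)},
\]
and then
\[
\boldsymbol{X}|_{L_\Phi \backslash L_0 \cap L_1} \;-\; \boldsymbol{Z}|_{L_\Phi \backslash L_0 \cap L_1} \;-\; \big( \boldsymbol{Z}|_{(L_\Phi \backslash L_0 \cap L_1)^c}, \boldsymbol{M}, \boldsymbol{\Pi}, \Theta, \boldsymbol{Y}|_{L_0 \cap L_1}, U \big),
\]
where the second chain uses that $\boldsymbol{Y}|_{L_0 \cap L_1} = \boldsymbol{X}|_{L_0 \cap L_1}$ (positions in $L_\Phi$) concerns coordinates disjoint from $L_\Phi \backslash L_0 \cap L_1$, and that the BEC is memoryless. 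Arguing as in Lemma~\ref{lem:bob_honest_bobeve_know_little_gt}, this reduces the conditional R\'enyi entropy to $\#_e(\boldsymbol{z}|_{l_\phi \backslash l_0 \cap l_1})$.

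Next I would show, using Chernoff's bound combined with Lemma~\ref{lem:overlap_within_bounds_corr} (which already governs $|L_0 \cap L_1|$ in the honest case and is invoked in Appendix~\ref{sec:protocol_doesnt_abort_for_hbc}), that $|L_\Phi \backslash L_0 \cap L_1| \geq \beta n(1 - \beta - \delta) \geq \beta n (\epsilon_1 - 2\delta)$ with probability $1 - \xi$ exponentially close to $1$. Since each of these bit positions is erased in $\boldsymbol{Z}$ independently with probability $\epsilon_2$, another Chernoff application (on a conditionally i.i.d.\ Bernoulli process, given $L_\Phi$ and $L_0 \cap L_1$) yields
\[
\#_e(\boldsymbol{Z}|_{L_\Phi \backslash L_0 \cap L_1}) \geq \beta n (\epsilon_1 \epsilon_2 - 2\delta)
\]
with probability $1 - \xi$. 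A straightforward application of Lemma~\ref{lem:privacy_amplification} with output length $\beta n (\epsilon_1 \epsilon_2 - 3\delta - \delta')$ then produces
\[
H\big(F_U(\boldsymbol{X}|_{L_\Phi \backslash L_0 \cap L_1}) \mid F_U, U, V_E = v_E\big) \geq \beta n (\epsilon_1 \epsilon_2 - 3\delta - \delta') - \frac{2^{-(\delta + \delta')\beta n}}{\ln 2}
\]
on the good realizations, and averaging over $V_E$ with the $1 - \xi$ fraction of good realizations gives the first inequality in the lemma. The second inequality, for $F_{\overline{U}}$ on $L_{\overline{\Phi}} \backslash L_0 \cap L_1$, is proved by the exact same steps with $\Phi$ replaced by $\overline{\Phi}$.

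The only step that requires slight care is the Markov-chain reduction, because Eve's view contains the interactive-hashing transcript and the partially revealed string $\boldsymbol{Y}|_{L_0 \cap L_1}$; the honesty of both parties and the independence of the channel noise from the hashing/choice randomness make this step routine rather than hard. The rest is a direct transcription of the template used in Lemmas~\ref{lem:bob_honest_bobeve_know_little_gt} and~\ref{lem:eve_knows_little_lt}, now with Eve's single-channel erasure count $\epsilon_1 \epsilon_2$ playing the role of the joint erasure count.
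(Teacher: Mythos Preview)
Your proposal is correct and follows essentially the same approach as the paper's proof: reduce the conditional R\'enyi entropy to $\#_e(\boldsymbol{z}|_{l_i \backslash l_0 \cap l_1})$ via the Markov chain $\boldsymbol{X}|_{L_i \backslash L_0 \cap L_1} - \boldsymbol{Z}, L_i \backslash L_0 \cap L_1 - U, \boldsymbol{M}, \boldsymbol{\Pi}, \Theta, \boldsymbol{Y}|_{L_0 \cap L_1}$, control the erasure count by Chernoff (using that $\boldsymbol{Z}$'s erasure pattern is independent of $\boldsymbol{Y}$ and hence of $L_\Phi$), combine with the protocol's bound $|L_0 \cap L_1| \leq \beta n(\beta + \delta)$ so that $|L_i \backslash L_0 \cap L_1| \geq \beta n \epsilon_1$, and then apply Lemma~\ref{lem:privacy_amplification}. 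The paper carries out the computation simultaneously for $i=0,1$ rather than for $\Phi$ and $\overline{\Phi}$ separately, but this is only a cosmetic difference.
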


\begin{proof}
Suppose $v_E = (\boldsymbol{z},\boldsymbol{m},\boldsymbol{\pi},\theta,\boldsymbol{y}|_{l_0 \cap l_1})$. Then, for $i = 0,1$ :
\begin{align*}
R(\boldsymbol{X}|_{L_i \backslash L_0 \cap L_1} | U = u, V_E = v_E) & = R(\boldsymbol{X}|_{l_i \backslash l_0 \cap l_1} | u,\boldsymbol{z},\boldsymbol{m},\boldsymbol{\pi},\theta,\boldsymbol{y}|_{l_0 \cap l_1}) \\
& = R(\boldsymbol{X}|_{l_i \backslash l_0 \cap l_1} | u,\boldsymbol{z},\boldsymbol{m},\boldsymbol{\pi},\theta,\boldsymbol{y}|_{l_0 \cap l_1}, l_i \backslash l_0 \cap l_1) \\
& \stackrel{\text{(a)}}{=} R(\boldsymbol{X}|_{l_i \backslash l_0 \cap l_1} | \boldsymbol{z},l_i \backslash l_0 \cap l_1) \\
& = R(\boldsymbol{X}|_{l_i \backslash l_0 \cap l_1} | \boldsymbol{z}|_{l_i \backslash l_0 \cap l_1}) \\
& = \#_e(\boldsymbol{z}|_{l_i \backslash l_0 \cap l_1})
\end{align*}

where (a) follows since $\boldsymbol{X}|_{L_i \backslash L_0 \cap L_1} - \boldsymbol{Z},L_i \backslash L_0 \cap L_1 - U,\boldsymbol{M},\boldsymbol{\Pi},\Theta,\boldsymbol{Y}|_{L_0 \cap L_1}$ is a Markov chain.

Whenever $\#_e(\boldsymbol{z}|_{l_i \backslash l_0 \cap l_1}) \geq (\epsilon_2 - \delta)|l_i \backslash l_0 \cap l_1| = (\epsilon_2 - \delta)(\beta n - |l_0 \cap l_1|)$, then by applying Lemma~\ref{lem:privacy_amplification} we get:
\begin{align*}
H(F_U( \boldsymbol{X}|_{L_{\Phi} \backslash L_0 \cap L_1}  ) | F_U, U = u, V_E = v_E) & \geq  \beta n(\epsilon_1 \epsilon_2 - 3 \delta - \delta') - \frac{2^{ \beta n(\epsilon_1 \epsilon_2 - 3 \delta - \delta') -  (\epsilon_2 - \delta)(\beta n - |l_0 \cap l_1|) }}{\ln 2} \\
 & \geq \beta n(\epsilon_1 \epsilon_2 - 3 \delta - \delta') - \frac{2^{-(\delta + \delta') \beta n}}{\ln 2}
\end{align*}

By Chernoff's bound we know that, for $i=0,1$, $P[\#_e(\boldsymbol{Z}|_{L_i \backslash L_0 \cap L_1}) \geq (\epsilon_2 - \delta)|L_i \backslash L_0 \cap L_1|] \geq 1 - \xi$, where $\xi \longrightarrow 0$ exponentially fast as $n \longrightarrow \infty$. As a result, 
\[  H(F_U( \boldsymbol{X}|_{L_{\Phi} \backslash L_0 \cap L_1}  ) | F_U, U, V_E) \geq (1 - \xi) \cdot \left(    \beta n(\epsilon_1 \epsilon_2 - 3 \delta - \delta') - \frac{2^{-(\delta + \delta')\beta n}}{\ln 2}    \right).   \]

By a similar argument,
\[   H(F_{\overline{U}}( \boldsymbol{X}|_{L_{\overline{\Phi}} \backslash L_0 \cap L_1}  ) | F_{\overline{U}}, U, V_E)  \geq (1 - \xi) \cdot \left(    \beta n(\epsilon_1 \epsilon_2 - 3 \delta - \delta') - \frac{2^{-(\delta + \delta')\beta n}}{\ln 2}    \right).  \]

\end{proof}


\begin{lemma}
\label{lem:bob_malicious_bobeve_know_little_lt}
Suppose $\epsilon_1 \leq 1/2$, Bob is malicious and $\#_e(\boldsymbol{\Psi}|_{\boldsymbol{L}_1}) \geq (\epsilon_1 \epsilon_2 - 4\delta - 2\tilde{\delta})n$. Then,
\[  R(\boldsymbol{X}|_{\boldsymbol{L}_1} | V_{BE} = v_{BE})  \geq (\epsilon_1 \epsilon_2 - 4\delta - 2\tilde{\delta})n \]
\end{lemma}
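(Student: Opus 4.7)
My plan is to show that conditioned on $V_{BE} = v_{BE}$, the bits $\boldsymbol{X}|_{\boldsymbol{L}_1}$ are distributed uniformly over all binary strings consistent with the non-erased positions of $\boldsymbol{\Psi}|_{\boldsymbol{L}_1}$. This would make the collision probability of $\boldsymbol{X}|_{\boldsymbol{L}_1}$ conditioned on $V_{BE} = v_{BE}$ exactly $2^{-\#_e(\boldsymbol{\psi}|_{\boldsymbol{l}_1})}$, so the R\'enyi entropy of order 2 equals $\#_e(\boldsymbol{\psi}|_{\boldsymbol{l}_1}) \geq (\epsilon_1\epsilon_2 - 4\delta - 2\tilde{\delta})n$ by hypothesis, which is precisely what we want.

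The first step is to establish the Markov chain $\boldsymbol{X} - (\boldsymbol{Y}, \boldsymbol{Z}) - (\boldsymbol{L}_0, \boldsymbol{L}_1, \boldsymbol{M}, \boldsymbol{\Pi}, \Theta)$. Here $\boldsymbol{M}$ is Alice's interactive-hashing matrix, drawn from her private randomness independently of $(\boldsymbol{X}, \boldsymbol{Y}, \boldsymbol{Z})$. A malicious Bob's outputs $(\boldsymbol{L}_0, \boldsymbol{L}_1, \boldsymbol{\Pi}, \Theta)$ are, by causality, computable from $(\boldsymbol{Y}, \boldsymbol{M})$ together with his private randomness (independent of everything else), and hence are conditionally independent of $(\boldsymbol{X}, \boldsymbol{Z})$ given $\boldsymbol{Y}$. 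Combined with the fact that $\boldsymbol{Z}$ is obtained from $\boldsymbol{X}$ through Eve's independent BEC, this yields
\[
P(\boldsymbol{X} = \boldsymbol{x} \mid V_{BE} = v_{BE}) = P(\boldsymbol{X} = \boldsymbol{x} \mid \boldsymbol{Y} = \boldsymbol{y}, \boldsymbol{Z} = \boldsymbol{z}).
\]

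Next, I would exploit the memorylessness and symmetry of the broadcast channel. Since $\boldsymbol{X}$ is i.i.d. uniform and each $(Y_i, Z_i)$ is generated from $X_i$ via two independent BECs, the posterior factorizes across coordinates. At coordinates where $\psi_i \neq \bot$, the bit $x_i$ is pinned to $\psi_i$; at coordinates where $\psi_i = \bot$ (both $Y_i$ and $Z_i$ erased), $X_i$ is uniform on $\{0,1\}$ independently of everything. Restricting to indices in $\boldsymbol{l}_1$, the conditional distribution of $\boldsymbol{X}|_{\boldsymbol{L}_1}$ is therefore uniform on a set of size $2^{\#_e(\boldsymbol{\psi}|_{\boldsymbol{l}_1})}$. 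A short calculation of the collision probability gives $P_c(\boldsymbol{X}|_{\boldsymbol{L}_1} \mid V_{BE} = v_{BE}) = 2^{-\#_e(\boldsymbol{\psi}|_{\boldsymbol{l}_1})}$, and therefore $R(\boldsymbol{X}|_{\boldsymbol{L}_1} \mid V_{BE} = v_{BE}) = \#_e(\boldsymbol{\psi}|_{\boldsymbol{l}_1}) \geq (\epsilon_1\epsilon_2 - 4\delta - 2\tilde{\delta})n$, completing the proof.

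The main obstacle is the Markov chain step under a malicious Bob: because Bob may deviate arbitrarily, one must be careful that none of his outputs can depend on $\boldsymbol{X}$ or $\boldsymbol{Z}$ except through $\boldsymbol{Y}$. This is fundamentally a causality argument --- Bob sees $\boldsymbol{Y}$ (and, through public discussion, $\boldsymbol{M}$), but never $\boldsymbol{X}$ or $\boldsymbol{Z}$ directly --- but it needs to be made explicit so that $\boldsymbol{L}_1$, which is chosen by malicious Bob, remains conditionally independent of $\boldsymbol{X}$ given $(\boldsymbol{Y}, \boldsymbol{Z})$. Once this is nailed down, the remaining computation is essentially mechanical.
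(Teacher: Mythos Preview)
Your approach is essentially the same as the paper's: reduce the conditioning on $V_{BE}$ to conditioning on $(\boldsymbol{Y},\boldsymbol{Z})$ via a Markov chain, then observe that given $(\boldsymbol{Y},\boldsymbol{Z})$ the bits $\boldsymbol{X}|_{\boldsymbol{l}_1}$ are uniform on the coordinates where $\boldsymbol{\Psi}|_{\boldsymbol{l}_1}$ is erased, giving R\'enyi entropy exactly $\#_e(\boldsymbol{\psi}|_{\boldsymbol{l}_1})$. The paper phrases this as two Markov chains, $\boldsymbol{X}|_{\boldsymbol{L}_1} - (\boldsymbol{Y},\boldsymbol{Z},\boldsymbol{L}_1) - (\boldsymbol{L}_0,\boldsymbol{M},\boldsymbol{\Pi},\Theta)$ and $\boldsymbol{X}|_{\boldsymbol{L}_1} - \boldsymbol{\Psi}|_{\boldsymbol{L}_1} - (\boldsymbol{Y},\boldsymbol{Z},\boldsymbol{L}_1)$, but the content is identical.

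One point to tighten: in this setting a malicious Bob \emph{colludes with Eve}, so during the protocol Bob has access to $\boldsymbol{Z}$ as well as $\boldsymbol{Y}$ (this is exactly why $\boldsymbol{\Psi}$, not $\boldsymbol{Y}$, governs what Bob--Eve know). Hence your intermediate claim that Bob's outputs are ``computable from $(\boldsymbol{Y},\boldsymbol{M})$'' and therefore ``conditionally independent of $(\boldsymbol{X},\boldsymbol{Z})$ given $\boldsymbol{Y}$'' is not accurate; they may depend on $\boldsymbol{Z}$. The correct causality statement is that $(\boldsymbol{L}_0,\boldsymbol{L}_1,\boldsymbol{\Pi},\Theta)$ are (randomized) functions of $(\boldsymbol{Y},\boldsymbol{Z},\boldsymbol{M})$, which still yields the Markov chain $\boldsymbol{X} - (\boldsymbol{Y},\boldsymbol{Z}) - (\boldsymbol{L}_0,\boldsymbol{L}_1,\boldsymbol{M},\boldsymbol{\Pi},\Theta)$ you want. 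With that fix, your argument goes through and matches the paper.
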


\begin{proof}
Suppose $v_{BE} = (\boldsymbol{y},\boldsymbol{z},\boldsymbol{l}_0,\boldsymbol{l}_1,\boldsymbol{m},\boldsymbol{\pi},\theta)$. Then,
\begin{align*}
R(\boldsymbol{X}|_{\boldsymbol{L}_1} | V_{BE} = v_{BE}) & = R(\boldsymbol{X}|_{\boldsymbol{l}_1} | \boldsymbol{y},\boldsymbol{z},\boldsymbol{l}_0,\boldsymbol{l}_1,\boldsymbol{m},\boldsymbol{\pi},\theta) \\
& \stackrel{\text{(a)}}{=} R(\boldsymbol{X}|_{\boldsymbol{l}_1} | \boldsymbol{y},\boldsymbol{z},\boldsymbol{l}_1) \\
& = R(\boldsymbol{X}|_{\boldsymbol{l}_1} | \boldsymbol{y},\boldsymbol{z},\boldsymbol{l}_1, \boldsymbol{\psi}|_{\boldsymbol{l}_1}) \\
& \stackrel{\text{(b)}}{=} R(\boldsymbol{X}|_{\boldsymbol{l}_1} | \boldsymbol{\psi}|_{\boldsymbol{l}_1}) \\
& = \#_e(\boldsymbol{\psi}|_{\boldsymbol{l}_1}) \\
& \geq (\epsilon_1 \epsilon_2 - 4\delta - 2\tilde{\delta})n
\end{align*}

where (a) follows since $\boldsymbol{X}|_{\boldsymbol{L}_1} - \boldsymbol{Y},\boldsymbol{Z},\boldsymbol{L}_1 - \boldsymbol{L}_0,\boldsymbol{M},\boldsymbol{\Pi},\Theta$ is a Markov chain and (b) follows since $\boldsymbol{X}|_{\boldsymbol{L}_1} - \boldsymbol{\Psi}|_{\boldsymbol{L}_1} -  \boldsymbol{Y},\boldsymbol{Z},\boldsymbol{L}_1$ is a Markov chain.

\end{proof}


\begin{lemma}
\label{lem:bob_malicious_bobeve_know_little_gt}
Suppose $\epsilon_1 > 1/2$, Bob is malicious and $\#_e(\boldsymbol{\Psi}|_{L_0 \backslash L_0 \cap L_1}) \geq \beta n (\epsilon_1\epsilon_2 - 2\delta)$. Then,
\[  R(\boldsymbol{X}|_{L_0 \backslash L_0 \cap L_1} | V_{BE} = v_{BE})  \geq \beta n(\epsilon_1 \epsilon_2 - 2\delta) \]
\end{lemma}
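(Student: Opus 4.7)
The proof will closely parallel Lemma~\ref{lem:bob_malicious_bobeve_know_little_lt}, the analogous statement for the $\epsilon_1 \leq 1/2$ regime. The plan is to peel off everything in the view $V_{BE}=(\boldsymbol{Y},\boldsymbol{Z},\boldsymbol{M},\boldsymbol{\Pi},\Theta)$ that is conditionally irrelevant to $\boldsymbol{X}|_{L_0\backslash L_0\cap L_1}$ via a short chain of Markov arguments, and then observe that $\boldsymbol{\Psi}|_{L_0\backslash L_0\cap L_1}$ is the only channel-derived information about $\boldsymbol{X}|_{L_0\backslash L_0\cap L_1}$, so the conditional R\'enyi entropy of order two equals the erasure count on that set.

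Concretely, fix $v_{BE}=(\boldsymbol{y},\boldsymbol{z},\boldsymbol{m},\boldsymbol{\pi},\theta)$. The first step is to note that in Protocol~\ref{protocol:malicious_gt} the strings $\boldsymbol{S}_0,\boldsymbol{S}_1$ produced by interactive hashing are deterministic functions of $(\boldsymbol{M},\boldsymbol{\Pi})$, and hence so are $L_0=Q(\boldsymbol{S}_0)$, $L_1=Q(\boldsymbol{S}_1)$, and in particular the subset $L_0\backslash L_0\cap L_1$. Thus we may legitimately insert $L_0\backslash L_0\cap L_1$ into the conditioning without changing the conditional distribution. The second step is the Markov chain $\boldsymbol{X}|_{L_0\backslash L_0\cap L_1}-\boldsymbol{Y},\boldsymbol{Z},L_0\backslash L_0\cap L_1-\boldsymbol{M},\boldsymbol{\Pi},\Theta$, which follows because given the positions in $L_0\backslash L_0\cap L_1$ together with the channel outputs $\boldsymbol{Y},\boldsymbol{Z}$, the bits $\boldsymbol{X}|_{L_0\backslash L_0\cap L_1}$ are independent of any further information derived from the interactive-hashing transcript or $\Theta$.

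The third step introduces $\boldsymbol{\Psi}|_{L_0\backslash L_0\cap L_1}$, which by its definition in (\ref{eqn:defn_of_psi}) is a deterministic function of $(\boldsymbol{Y},\boldsymbol{Z})$ at the corresponding coordinates and records, for each position, either the transmitted bit (if unerased in at least one of Bob's or Eve's channels) or $\bot$ (if erased in both). Since $\boldsymbol{\Psi}|_{L_0\backslash L_0\cap L_1}$ is a sufficient statistic for $(\boldsymbol{Y},\boldsymbol{Z},L_0\backslash L_0\cap L_1)$ about $\boldsymbol{X}|_{L_0\backslash L_0\cap L_1}$, we get the Markov chain $\boldsymbol{X}|_{L_0\backslash L_0\cap L_1}-\boldsymbol{\Psi}|_{L_0\backslash L_0\cap L_1}-\boldsymbol{Y},\boldsymbol{Z},L_0\backslash L_0\cap L_1$, reducing the R\'enyi entropy computation to $R(\boldsymbol{X}|_{L_0\backslash L_0\cap L_1}\mid\boldsymbol{\Psi}|_{L_0\backslash L_0\cap L_1}=\boldsymbol{\psi}|_{l_0\backslash l_0\cap l_1})$. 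Since $\boldsymbol{X}$ is an i.i.d.\ uniform sequence, conditioned on $\boldsymbol{\Psi}|_{L_0\backslash L_0\cap L_1}$ the only unknown coordinates are those marked $\bot$, each carrying one independent uniform bit, so this R\'enyi entropy equals $\#_e(\boldsymbol{\psi}|_{l_0\backslash l_0\cap l_1})$, which by hypothesis is at least $\beta n(\epsilon_1\epsilon_2-2\delta)$.

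The only point requiring any care will be step one: since Bob is malicious, his choice of the interactive-hashing input $\boldsymbol{S}$ may depend arbitrarily on $(\boldsymbol{Y},\boldsymbol{Z})$, so one must not accidentally claim that $L_0,L_1$ are independent of the erasure pattern. Fortunately this independence is never needed here; we only use that $L_0,L_1$ are measurable with respect to $(\boldsymbol{M},\boldsymbol{\Pi})$ and that the channel's action on $\boldsymbol{X}$ is independent coordinate-wise, both of which hold irrespective of Bob's strategy. I expect no substantive obstacle beyond stating these three Markov reductions cleanly.
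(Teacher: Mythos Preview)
Your proposal is correct and follows essentially the same approach as the paper's proof: insert $L_0\backslash L_0\cap L_1$ (a function of $(\boldsymbol{M},\boldsymbol{\Pi})$) into the conditioning, apply the Markov chain $\boldsymbol{X}|_{L_0\backslash L_0\cap L_1}-\boldsymbol{Y},\boldsymbol{Z},L_0\backslash L_0\cap L_1-\boldsymbol{M},\boldsymbol{\Pi},\Theta$, then the Markov chain $\boldsymbol{X}|_{L_0\backslash L_0\cap L_1}-\boldsymbol{\Psi}|_{L_0\backslash L_0\cap L_1}-\boldsymbol{Y},\boldsymbol{Z},L_0\backslash L_0\cap L_1$, and finally read off the R\'enyi entropy as the erasure count. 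Your explicit remark that Bob's malicious choice of $\boldsymbol{S}$ does not invalidate the argument because only measurability of $L_0,L_1$ in $(\boldsymbol{M},\boldsymbol{\Pi})$ is used is a welcome clarification that the paper leaves implicit.
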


\begin{proof}
Suppose $v_{BE} = (\boldsymbol{y},\boldsymbol{z},\boldsymbol{m},\boldsymbol{\pi},\theta)$. Then,
\begin{align*}
R(\boldsymbol{X}|_{L_0 \backslash L_0 \cap L_1} | V_{BE} = v_{BE}) & = R(\boldsymbol{X}|_{l_0 \backslash l_0 \cap l_1} | \boldsymbol{y},\boldsymbol{z},\boldsymbol{m},\boldsymbol{\pi},\theta) \\
& = R(\boldsymbol{X}|_{l_0 \backslash l_0 \cap l_1} | \boldsymbol{y},\boldsymbol{z},\boldsymbol{m},\boldsymbol{\pi},\theta, l_0 \backslash l_0 \cap l_1) \\
& \stackrel{\text{(a)}}{=} R(\boldsymbol{X}|_{l_0 \backslash l_0 \cap l_1} | \boldsymbol{y},\boldsymbol{z},l_0 \backslash l_0 \cap l_1) \\
& =  R(\boldsymbol{X}|_{l_0 \backslash l_0 \cap l_1} | \boldsymbol{y},\boldsymbol{z},l_0 \backslash l_0 \cap l_1, \boldsymbol{\psi}|_{l_0 \backslash l_0 \cap l_1}) \\
& \stackrel{\text{(b)}}{=} R(\boldsymbol{X}|_{l_0 \backslash l_0 \cap l_1} | \boldsymbol{\psi}|_{l_0 \backslash l_0 \cap l_1}); \\
& = \#_e(\boldsymbol{\psi}|_{l_0 \backslash l_0 \cap l_1}) \\
& \geq \beta n (\epsilon_1\epsilon_2 - 2\delta)
\end{align*}

where (a) follows since $\boldsymbol{X}|_{L_0 \backslash L_0 \cap L_1} - Y^,\boldsymbol{Z},L_0 \backslash L_0 \cap L_1 - \boldsymbol{M},\boldsymbol{\Pi},\Theta$ is a Markov chain and (b) follows since $\boldsymbol{X}|_{L_0 \backslash L_0 \cap L_1} - \boldsymbol{\Psi}|_{L_0 \backslash L_0 \cap L_1} - \boldsymbol{Y},\boldsymbol{Z},L_0 \backslash L_0 \cap L_1$ is a Markov chain.

\end{proof}


\section{Independent oblivious transfers over a broadcast channel: Proofs of Lemmas~\ref{lem:c2p_ach_indep},~\ref{lem:small_quant_2p_hbc_indep}}


\subsection{Proof of Lemma~\ref{lem:c2p_ach_indep}}
\label{appndx:proof_ach_2p_hbc_indep}

When $\epsilon_1 \leq 1/2$, Protocol~\ref{protocol:C2P_symmetric} is the same as Protocol~\ref{protocol:C2P}. And so, this proof is the same as the proof of Lemma~\ref{lem:c2p_ach_wtap}. As a result, we consider only the case when $\epsilon_1 > 1/2$ in this proof. We use a sequence $(\mathcal{P}_n)_{n \in \mathbb{N}}$ of Protocol~\ref{protocol:C2P_symmetric} instances and we show that (\ref{eqn:ach_2p_indep_0}) - (\ref{eqn:ach_2p_indep_7}) are satisfied for $(\mathcal{P}_n)_{n \in \mathbb{N}}$.

Note that:
\begin{align*}
V_A & = \{ \boldsymbol{K}_0, \boldsymbol{K}_1, \boldsymbol{J}_0, \boldsymbol{J}_1, \boldsymbol{X}, \boldsymbol{\Lambda} \} \\
V_B & = \{ U, \boldsymbol{Y}, \boldsymbol{\Lambda} \} \\
V_C & = \{ W, \boldsymbol{Z}, \boldsymbol{\Lambda} \}
\end{align*}

where $\boldsymbol{\Lambda} = \{\tilde{\boldsymbol{\Lambda}}, \boldsymbol{\Lambda}_{\text{twoparty}} \}$, with $\tilde{\boldsymbol{\Lambda}} = \{L_0,L_1,L,F_0,F_1,\boldsymbol{K}_0 \oplus F_0(\boldsymbol{X}|_{L_0}, \boldsymbol{K}_1 \oplus F_1(\boldsymbol{X}|_{L_1})\}$ and $\boldsymbol{\Lambda}_{\text{twoparty}}$ denoting the public messages exchanged during the execution of the two-party OT protocol \cite{ot2007} between Alice and Cathy. 

Let $\Upsilon$ be the event that $\mathcal{P}_n$ aborts in Step~\ref{step:2p_indep_abort0}. Then, due to Chernoff's bound, $P[\Upsilon = 1] \longrightarrow 0$ exponentially fast as $n \longrightarrow \infty$. As in proofs of Lemma~\ref{lem:c2p_ach_wtap} and Lemma~\ref{lem:c1p_ach_wtap}, it suffices to prove that the conditional versions of (\ref{eqn:ach_2p_indep_0}) - (\ref{eqn:ach_2p_indep_7}), conditioned on the event $\Upsilon = 0$, hold for $(\mathcal{P}_n)_{n \in \mathbb{N}}$. The arguments in rest of this proof are all implicitly conditioned on $\Upsilon = 0$.

\begin{enumerate} 

\item (\ref{eqn:ach_2p_indep_0}) holds for $(\mathcal{P}_n)_{n \in \mathbb{N}}$ for the same reasons that (\ref{eqn:ach_2p_wtap_0}) holds for Protocol~\ref{protocol:C2P}.

\item (\ref{eqn:ach_2p_indep_1}) holds for $(\mathcal{P}_n)_{n \in \mathbb{N}}$ due to the correctness of the two-party OT protocol \cite{ot2007}.

\item To show that (\ref{eqn:ach_2p_indep_2}) holds for $(\mathcal{P}_n)_{n \in \mathbb{N}}$, we proceed as follows:
\begin{align*}
I(\boldsymbol{K}_{\overline{U}}, \boldsymbol{J}_{\overline{W}} ; V_B,V_C) & = I(\boldsymbol{K}_{\overline{U}}, \boldsymbol{J}_{\overline{W}} ; U,W, \boldsymbol{Y},  \boldsymbol{Z}, \boldsymbol{\Lambda}) \\
& = I(\boldsymbol{K}_{\overline{U}}, \boldsymbol{J}_{\overline{W}} ; U,W, \boldsymbol{Y},  \boldsymbol{Z}, \tilde{\boldsymbol{\Lambda}}, \boldsymbol{\Lambda}_{\text{twoparty}}) \\
& = I(\boldsymbol{K}_{\overline{U}} ; U,W, \boldsymbol{Y},  \boldsymbol{Z}, \tilde{\boldsymbol{\Lambda}}, \boldsymbol{\Lambda}_{\text{twoparty}}) \\ & \quad + I(\boldsymbol{J}_{\overline{W}} ; U,W, \boldsymbol{Y},  \boldsymbol{Z}, \tilde{\boldsymbol{\Lambda}}, \boldsymbol{\Lambda}_{\text{twoparty}} | \boldsymbol{K}_{\overline{U}}) \\
& \stackrel{\text{(a)}}{=} I(\boldsymbol{K}_{\overline{U}} ; U, \boldsymbol{Y},  \boldsymbol{Z}, \tilde{\boldsymbol{\Lambda}}) + I(\boldsymbol{J}_{\overline{W}} ; U,W, \boldsymbol{Y},  \boldsymbol{Z}, \tilde{\boldsymbol{\Lambda}}, \boldsymbol{\Lambda}_{\text{twoparty}} | \boldsymbol{K}_{\overline{U}}) \\
& = I(\boldsymbol{K}_{\overline{U}} ; U, \boldsymbol{Y},  \boldsymbol{Z}, \tilde{\boldsymbol{\Lambda}}) + I(\boldsymbol{J}_{\overline{W}} ; U,W, \boldsymbol{Y},  \boldsymbol{Z}, \tilde{\boldsymbol{\Lambda}}, \boldsymbol{\Lambda}_{\text{twoparty}}, \boldsymbol{K}_{\overline{U}}) \\
& \stackrel{\text{(b)}}{=} I(\boldsymbol{K}_{\overline{U}} ; U, \boldsymbol{Y},  \boldsymbol{Z}, \tilde{\boldsymbol{\Lambda}}) + I(\boldsymbol{J}_{\overline{W}} ; W, \boldsymbol{Y},  \boldsymbol{Z}, L, \boldsymbol{\Lambda}_{\text{twoparty}}) \\
& \stackrel{\text{(c)}}{=} I(\boldsymbol{K}_{\overline{U}} ; U, \boldsymbol{Y},  \boldsymbol{Z}, \tilde{\boldsymbol{\Lambda}}) + I(\boldsymbol{J}_{\overline{W}} ; W, \boldsymbol{Y}|_L,  \boldsymbol{Z}|_L, \boldsymbol{\Lambda}_{\text{twoparty}}) \\
& = I(\boldsymbol{K}_{\overline{U}} ; U, \boldsymbol{Y},  \boldsymbol{Z}, \tilde{\boldsymbol{\Lambda}}) + I(\boldsymbol{J}_{\overline{W}} ; W, \boldsymbol{Z}|_L,  \boldsymbol{\Lambda}_{\text{twoparty}})
\end{align*}

where (a) follows since $\boldsymbol{K}_{\overline{U}} - U, \boldsymbol{Y},  \boldsymbol{Z}, \tilde{\boldsymbol{\Lambda}} - W,\boldsymbol{\Lambda}_{\text{twoparty}}$ is a Markov chain, (b) follows since $\boldsymbol{J}_{\overline{W}} - W, \boldsymbol{Y},  \boldsymbol{Z}, L, \boldsymbol{\Lambda}_{\text{twoparty}} -  U,\tilde{\boldsymbol{\Lambda}}, \boldsymbol{K}_{\overline{U}}$ is a Markov chain and (c) follows since $\boldsymbol{J}_{\overline{W}} - W, \boldsymbol{Y}|_L,  \boldsymbol{Z}|_L, \boldsymbol{\Lambda}_{\text{twoparty}} - \boldsymbol{Y}, \boldsymbol{Z}, L$ is a Markov chain.

The first term above goes to zero for the same reasons that (\ref{eqn:ach_2p_wtap_1}) holds for Protocol~\ref{protocol:C2P}. The second term goes to zero due to the OT requirements being satisfied by the two-party OT protocol between Alice and Cathy over $\boldsymbol{X}|_L$.

\item To show that (\ref{eqn:ach_2p_indep_3}) holds for $(\mathcal{P}_n)_{n \in \mathbb{N}}$, we proceed as follows:
\begin{align*}
I(U ; V_A,V_C) & = I(U ; \boldsymbol{K}_0, \boldsymbol{K}_1, \boldsymbol{J}_0, \boldsymbol{J}_1, W, \boldsymbol{X}, \boldsymbol{Z}, \boldsymbol{\Lambda} ) \\
& = I(U ; \boldsymbol{K}_0, \boldsymbol{K}_1, \boldsymbol{J}_0, \boldsymbol{J}_1, W, \boldsymbol{X}, \boldsymbol{Z}, \tilde{\boldsymbol{\Lambda}}, \boldsymbol{\Lambda}_{\text{twoparty}} ) \\
& \stackrel{\text{(a)}}{=} I(U ; \boldsymbol{K}_0, \boldsymbol{K}_1, \boldsymbol{X}, \boldsymbol{Z}, \tilde{\boldsymbol{\Lambda}} ) 
\end{align*}

where (a) follows since $U - \boldsymbol{K}_0, \boldsymbol{K}_1, \boldsymbol{X}, \boldsymbol{Z}, \tilde{\boldsymbol{\Lambda}} - W,\boldsymbol{J}_0, \boldsymbol{J}_1, \boldsymbol{\Lambda}_{\text{twoparty}} $ is a Markov chain.

The above term goes to zero for the same reason that (\ref{eqn:ach_2p_wtap_2}) holds for Protocol~\ref{protocol:C2P}.

\item To show that (\ref{eqn:ach_2p_indep_4}) holds for $(\mathcal{P}_n)_{n \in \mathbb{N}}$, we proceed as follows:
\begin{align*}
I(W ; V_A,V_B) & = I(W ; \boldsymbol{K}_0, \boldsymbol{K}_1, \boldsymbol{J}_0, \boldsymbol{J}_1, U, \boldsymbol{X}, \boldsymbol{Y}, \boldsymbol{\Lambda} ) \\
& = I(W ; \boldsymbol{K}_0, \boldsymbol{K}_1, \boldsymbol{J}_0, \boldsymbol{J}_1, U, \boldsymbol{X}, \boldsymbol{Y}, \tilde{\boldsymbol{\Lambda}}, \boldsymbol{\Lambda}_{\text{twoparty}} ) \\
& \stackrel{\text{(a)}}{=} I(W ; \boldsymbol{J}_0, \boldsymbol{J}_1, \boldsymbol{X}, \boldsymbol{Y}, L, \boldsymbol{\Lambda}_{\text{twoparty}} ) \\
& \stackrel{\text{(b)}}{=} I(W ; \boldsymbol{J}_0, \boldsymbol{J}_1, \boldsymbol{X}|_L, \boldsymbol{Y}|_L, \boldsymbol{\Lambda}_{\text{twoparty}} ) \\
& = I(W ; \boldsymbol{J}_0, \boldsymbol{J}_1, \boldsymbol{X}|_L, \boldsymbol{\Lambda}_{\text{twoparty}} ) 
\end{align*}

where (a) follows since $W - \boldsymbol{J}_0, \boldsymbol{J}_1, \boldsymbol{X}, \boldsymbol{Y}, L, \boldsymbol{\Lambda}_{\text{twoparty}} - U, \boldsymbol{K}_0, \boldsymbol{K}_1, \tilde{\boldsymbol{\Lambda}}$ is a Markov chain and (b) follows since $W - \boldsymbol{J}_0, \boldsymbol{J}_1, \boldsymbol{X}|_L, \boldsymbol{Y}|_L, \boldsymbol{\Lambda}_{\text{twoparty}} - \boldsymbol{X}, \boldsymbol{Y}, L$ is a Markov chain.

The above term goes to zero due to the OT requirements being satisfied by the two-party OT protocol between Alice and Cathy over $\boldsymbol{X}|_L$.
 
\item To show that (\ref{eqn:ach_2p_indep_5}) holds for $(\mathcal{P}_n)_{n \in \mathbb{N}}$, we proceed as follows:
\begin{align*}
I(U,W ; V_A) & = I(U ; V_A) + I(W ; V_A | U) \\
& = I(U ; V_A) + I(W ; V_A,U) \\
& \leq I(U ; V_A,V_C) + I(W ; V_A,V_B)
\end{align*}

The two terms above go to zero since (\ref{eqn:ach_2p_indep_3}) and (\ref{eqn:ach_2p_indep_4}) hold.

\item To show that (\ref{eqn:ach_2p_indep_6}) holds for $(\mathcal{P}_n)_{n \in \mathbb{N}}$, we proceed as follows:
\begin{align*}
I(\boldsymbol{K}_0,\boldsymbol{K}_1,U,\boldsymbol{J}_{\overline{W}} ; V_C) & = I(\boldsymbol{K}_0,\boldsymbol{K}_1,U ; V_C) + I(\boldsymbol{J}_{\overline{W}} ; V_C | \boldsymbol{K}_0, \boldsymbol{K}_1, U) \\
& = I(\boldsymbol{K}_0,\boldsymbol{K}_1,U ; W, \boldsymbol{Z}, \boldsymbol{\Lambda} ) + I(\boldsymbol{J}_{\overline{W}} ; V_C, \boldsymbol{K}_0, \boldsymbol{K}_1, U) \\
& \stackrel{\text{(a)}}{=} I(\boldsymbol{K}_0,\boldsymbol{K}_1,U ; W, \boldsymbol{Z}, \tilde{\boldsymbol{\Lambda}} ) + I(\boldsymbol{J}_{\overline{W}} ; V_C, \boldsymbol{K}_0, \boldsymbol{K}_1, U) \\
& \stackrel{\text{(b)}}{=} I(\boldsymbol{K}_0,\boldsymbol{K}_1,U ; W, \boldsymbol{Z}, \tilde{\boldsymbol{\Lambda}} ) + I(\boldsymbol{J}_{\overline{W}} ; V_C, U) \\
& \leq I(\boldsymbol{K}_0,\boldsymbol{K}_1,U ; W, \boldsymbol{Z}, \tilde{\boldsymbol{\Lambda}} ) + I(\boldsymbol{J}_{\overline{W}} ; V_C, V_B)
\end{align*}

where (a) follows since $\boldsymbol{K}_0,\boldsymbol{K}_1,U - W, \boldsymbol{Z}, \tilde{\boldsymbol{\Lambda}} - \boldsymbol{\Lambda}_{\text{twoparty}}$ is a Markov chain and (b) follows since $\boldsymbol{J}_{\overline{W}} - V_C,U - \boldsymbol{K}_0,\boldsymbol{K}_1$ is a Markov chain.

The first term above goes to zero for the same reason that (\ref{eqn:ach_2p_wtap_3}) holds for Protocol~\ref{protocol:C2P}. The second term above goes to zero since (\ref{eqn:ach_2p_indep_2}) holds.

\item The proof for showing that (\ref{eqn:ach_2p_indep_7}) holds is similar to that of showing that (\ref{eqn:ach_2p_indep_6}) holds and is, therefore, omitted.

\end{enumerate}  


\subsection{Proof of Lemma~\ref{lem:small_quant_2p_hbc_indep}}
\label{appndx:proof_small_quant_2p_hbc_indep}

For the proof, we use the following lemma:

\begin{lemma}
\label{lem:cond_indep_hbc_indep}
\[  I(\boldsymbol{K}_0, \boldsymbol{K}_1, \boldsymbol{J}_0, \boldsymbol{J}_1;  U, \boldsymbol{Y}, W, \boldsymbol{Z} | \boldsymbol{X},\boldsymbol{\Lambda}) = 0  \]
\end{lemma}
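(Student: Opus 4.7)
\medskip
\noindent\textbf{Proof proposal.} The plan is to prove the lemma by induction on the time steps of the protocol, showing that the conditional independence
\[ (\boldsymbol{K}_0, \boldsymbol{K}_1, \boldsymbol{J}_0, \boldsymbol{J}_1) \;\perp\; (U, \boldsymbol{Y}^t, W, \boldsymbol{Z}^t) \;\Big|\; (\boldsymbol{X}^t, \boldsymbol{\Lambda}^t) \]
holds at every intermediate step, where $\boldsymbol{\Lambda}^t$ denotes the public transcript accumulated up to time $t$. The base case ($t=0$) is immediate, since by the problem setup $\boldsymbol{K}_0,\boldsymbol{K}_1,\boldsymbol{J}_0,\boldsymbol{J}_1,U,W$ are mutually independent and there are no channel outputs or public messages yet.

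For the inductive step, I would classify each single elementary action in the protocol into one of four types and verify that each preserves the conditional independence. The four types are: (i) Alice appends a public message that is a function of $(\boldsymbol{K}_0,\boldsymbol{K}_1,\boldsymbol{J}_0,\boldsymbol{J}_1,\boldsymbol{X}^t,\boldsymbol{\Lambda}^t)$ and a fresh private randomness $R_A$; (ii) Bob appends a public message that is a function of $(U,\boldsymbol{Y}^t,\boldsymbol{\Lambda}^t,R_B)$; (iii) Cathy appends a public message that is a function of $(W,\boldsymbol{Z}^t,\boldsymbol{\Lambda}^t,R_C)$; (iv) Alice transmits $X_{t+1}$ (a function of her view) and the memoryless broadcast channel emits $(Y_{t+1},Z_{t+1})=h(X_{t+1},N_{t+1})$ with noise $N_{t+1}$ independent of everything seen so far. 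The key fact I will repeatedly invoke is that if $A \perp B \mid C$ and $D=\phi(A,C,R)$ with $R$ independent of $(A,B,C)$, then $(A,D) \perp B \mid C$, and therefore $A \perp B \mid (C,D)$; an analogous statement holds with the roles of $A$ and $B$ swapped.

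For actions (i), (ii), (iii) the new message is a function of either the ``Alice side,'' the ``Bob side,'' or the ``Cathy side,'' with independent fresh randomness, so this generic fact directly implies that appending the message to $\boldsymbol{\Lambda}^{t+1}$ preserves the conditional independence. For action (iv), I would first absorb $X_{t+1}$ into the conditioning (it lies on the Alice side), and then absorb $(Y_{t+1},Z_{t+1})$ into the Bob/Cathy side using the memorylessness of the broadcast channel: given $X_{t+1}$, the new outputs depend only on the fresh independent noise $N_{t+1}$, so they contribute nothing new about $(\boldsymbol{K}_0,\boldsymbol{K}_1,\boldsymbol{J}_0,\boldsymbol{J}_1)$.

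The main obstacle, and the place the argument has to be handled with care, is action (iv): $X_{t+1}$ sits on both sides of the Markov picture in some sense, because it is chosen by Alice using her private data yet it is the common cause of the noisy outputs that belong to Bob's and Cathy's views. I would resolve this by performing action (iv) in two substeps as described above, so that at no point do we need to exchange the positions of $X_{t+1}$ (in the conditioning) and $(Y_{t+1},Z_{t+1})$ (on the Bob/Cathy side); the memorylessness of the channel, i.e.\ $N_{t+1} \perp (\boldsymbol{K}_0,\boldsymbol{K}_1,\boldsymbol{J}_0,\boldsymbol{J}_1,U,\boldsymbol{Y}^t,W,\boldsymbol{Z}^t,\boldsymbol{X}^t,\boldsymbol{\Lambda}^t)$, is what makes this clean. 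Iterating through all rounds yields the claimed identity at the end of the protocol.
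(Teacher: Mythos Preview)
Your proposal is correct and follows exactly the standard induction-on-rounds argument that the paper defers to: the paper's own proof is omitted with a pointer to Lemma~6 of \cite{ot2007} and Lemma~2.2 of \cite{sec-key1993}, both of which establish analogous Markov relations by precisely the step-by-step preservation argument you outline. Your handling of the four action types, and in particular the two-substep treatment of a channel use (first absorb $X_{t+1}$ into the conditioning via the Alice side, then append $(Y_{t+1},Z_{t+1})$ to the Bob/Cathy side using memorylessness), matches the structure of those referenced proofs.
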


\begin{proof}
Proof is similar to that for Lemma~$6$ of~\cite{ot2007} or Lemma~$2.2$ of~\cite{sec-key1993} and is, therefore, omitted.
\end{proof}

Now,
\begin{align*}
\frac{1}{n}H(\boldsymbol{K}_0, \boldsymbol{K}_1  , \boldsymbol{J}_0, \boldsymbol{J}_1 | \boldsymbol{X},\boldsymbol{\Lambda})  & \leq \frac{1}{n}H(\boldsymbol{K}_0,\boldsymbol{K}_1 | \boldsymbol{X}, \boldsymbol{\Lambda}) +  \frac{1}{n}H(\boldsymbol{J}_0,\boldsymbol{J}_1 | \boldsymbol{X}, \boldsymbol{\Lambda})
\end{align*}

This lemma will be proved if we show that each of the two terms on the RHS above is small. We begin by showing that $(1/n) \cdot H(\boldsymbol{K}_0,\boldsymbol{K}_1 | \boldsymbol{X}, \boldsymbol{\Lambda})$ is small.

For this, we note that Lemma~\ref{lem:cond_indep_hbc_indep} implies $I(\boldsymbol{K}_0,\boldsymbol{K}_1; U | \boldsymbol{X}, \boldsymbol{\Lambda}) = 0$. This further implies :
\begin{align}
H(\boldsymbol{K}_0,\boldsymbol{K}_1 | \boldsymbol{X},\boldsymbol{\Lambda}) & = H(\boldsymbol{K}_0,\boldsymbol{K}_1 | \boldsymbol{X}, \boldsymbol{\Lambda}, U) \nonumber \\
                     & = H(\boldsymbol{K}_U,\boldsymbol{K}_{\overline{U}} | \boldsymbol{X}, \boldsymbol{\Lambda}, U) \nonumber \\
                     & \leq H(\boldsymbol{K}_U | \boldsymbol{X}, \boldsymbol{\Lambda}, U) + H(\boldsymbol{K}_{\overline{U}} | \boldsymbol{X}, \boldsymbol{\Lambda}, U) \label{eqn:quant_sum}
\end{align}

Lemma~\ref{lem:cond_indep_hbc_indep} also implies $I(\boldsymbol{K}_0,\boldsymbol{K}_1 ; \boldsymbol{Y} | \boldsymbol{X},\boldsymbol{\Lambda}, U) = 0$. This, in turn, implies $I(\boldsymbol{K}_U,\boldsymbol{K}_{\overline{U}} ; \boldsymbol{Y} | \boldsymbol{X},\boldsymbol{\Lambda}, U) = 0$. As a result, we get
\begin{equation*}
I(\boldsymbol{K}_U ; \boldsymbol{Y} | \boldsymbol{X},\boldsymbol{\Lambda}, U) = 0
\end{equation*}

Therefore, we have
\begin{align}
H(\boldsymbol{K}_U | \boldsymbol{X},\boldsymbol{\Lambda}, U) & = H(\boldsymbol{K}_U | \boldsymbol{Y}, \boldsymbol{X},\boldsymbol{\Lambda}, U) \nonumber \\
                    & \stackrel{\text{(a)}}{=} H(\boldsymbol{K}_U | \boldsymbol{Y}, \boldsymbol{X},\boldsymbol{\Lambda}, U, \hat{\boldsymbol{K}}_U) \nonumber \\
                    & \leq H(\boldsymbol{K}_U | \hat{\boldsymbol{K}}_U) \nonumber \\
                    & \stackrel{\text{(b)}}{=} o(n) \label{eqn:quant0_small}
\end{align}

where (a) follows since $\hat{\boldsymbol{K}}_U$ is a function of $V_B = (U,\boldsymbol{Y},\boldsymbol{\Lambda})$ and (b) follows from (\ref{eqn:ach_2p_indep_0}) and Fano's inequality. 

Finally, we note that (\ref{eqn:ach_2p_indep_5}) implies that $ I(U ; V_A) \longrightarrow 0$, where $V_A = $ $(\boldsymbol{K}_0, \boldsymbol{K}_1, \boldsymbol{J}_0, \boldsymbol{J}_1, \boldsymbol{X}, \boldsymbol{\Lambda})$. Together with Lemma~\ref{lem:lemma3_ahl_csis}, this implies that
\begin{align*}
H(\boldsymbol{K}_0|  \boldsymbol{X}, \boldsymbol{\Lambda}, U = 0) - H(\boldsymbol{K}_0|  \boldsymbol{X}, \boldsymbol{\Lambda}, U = 1) & = o(n) \\
H(\boldsymbol{K}_1|  \boldsymbol{X}, \boldsymbol{\Lambda}, U = 0) - H(\boldsymbol{K}_1|  \boldsymbol{X}, \boldsymbol{\Lambda}, U = 1) & = o(n)
\end{align*}

We multiply both equations above by $1/2$ and subtract, to get
\begin{equation}
\label{eqn:diff_small}
H(\boldsymbol{K}_U |  \boldsymbol{X}, \boldsymbol{\Lambda}, U) - H(\boldsymbol{K}_{\overline{U}} |  \boldsymbol{X}, \boldsymbol{\Lambda}, U)  = o(n)
\end{equation}

Hence, (\ref{eqn:quant0_small}) and (\ref{eqn:diff_small}) together give :
\begin{equation}
\label{eqn:quant1_small}
H(\boldsymbol{K}_{\overline{U}} |  \boldsymbol{X}, \boldsymbol{\Lambda}, U) = o(n)
\end{equation}

Using (\ref{eqn:quant0_small}) and (\ref{eqn:quant1_small}) in (\ref{eqn:quant_sum}) gives us:
\begin{equation}
\label{eqn:quant_small}
H(\boldsymbol{K}_0,\boldsymbol{K}_1 | \boldsymbol{X},\boldsymbol{\Lambda}) = o(n)
\end{equation}

An exactly analogous argument shows that $H(\boldsymbol{J}_0,\boldsymbol{J}_1 | \boldsymbol{X}, \boldsymbol{\Lambda}) = o(n)$ and, hence, this lemma is proved.


\section{Oblivious transfer over a degraded wiretapped channel: Proof of Lemma~\ref{lem:c1p_ach_degraded}}
\label{appndx:proof_c1p_ach_degraded}

In order to prove Lemma~\ref{lem:c1p_ach_degraded}, we use a sequence $(\mathcal{P}_n)_{n \in \mathbb{N}}$ of Protocol~\ref{protocol:C1P_degraded} instances of rate $(r - 2 \tilde{\delta})$, where $r < \min \left\{ \frac{1}{3} \cdot \epsilon_2(1 - \epsilon_1), \epsilon_1 \right\}$, and show that (\ref{eqn:ach_1p_wtap_0}) - (\ref{eqn:ach_1p_wtap_3}) hold for $(\mathcal{P}_n)_{n \in \mathbb{N}}$. We note that for $\mathcal{P}_n$, the transcript of the public channel is
\begin{equation}
\boldsymbol{\Lambda} = \{ \tilde{G}, \tilde{B}, F_L, F_L(\boldsymbol{X}|_{\tilde{G}_L}) \oplus \boldsymbol{Q}, F_0, F_1, \boldsymbol{K}_0 \oplus F_0(\boldsymbol{X}|_{L_0 \cup \tilde{G}_S}), \boldsymbol{K}_1 \oplus F_1(\boldsymbol{X}|_{L_1 \cup \tilde{G}_S}) \}.
\end{equation}

Let $\Upsilon$ be the indicator random variable for the event that Bob aborts the protocol $\mathcal{P}_n$. Using Chernoff bound, we see that $P[\Upsilon=1] \longrightarrow 0$ exponentially fast as $n \longrightarrow \infty$.

\begin{enumerate}

\item \label{part:1} In order to show (\ref{eqn:ach_1p_wtap_0}) holds for $\{\mathcal{P}_n\}_{n \in \mathbb{N}}$, given that $P[\Upsilon = 1] \longrightarrow 0$, it suffices to show that $P[\hat{\boldsymbol{K}}_U \neq \boldsymbol{K}_U | \Upsilon = 0] \longrightarrow 0$.

When $\Upsilon = 0$, Bob knows $L_U, \boldsymbol{X}|_{L_U}, \tilde{G}_S, \boldsymbol{X}|_{\tilde{G}_S}$. Hence, Bob knows $\boldsymbol{X}|_{L_U \cup \tilde{G}_S}$. As a result, Bob knows the key $F_U(\boldsymbol{X}|_{L_U \cup \tilde{G}_S})$. Hence, Bob can get $\boldsymbol{K}_U$ using $\boldsymbol{K}_U \oplus F_U(\boldsymbol{X}|_{L_U \cup \tilde{G}_S})$ sent by Alice. Thus, $P[\hat{K}_U \neq \boldsymbol{K}_U | \Upsilon = 0] = 0$.

\item \label{part:2} In order to show (\ref{eqn:ach_1p_wtap_1}) holds for $\{\mathcal{P}_n\}_{n \in \mathbb{N}}$, it suffices to show that $I(\boldsymbol{K}_{\overline{U}} ; V_B | \Upsilon = 0) \longrightarrow 0$. All terms and assertions below are conditioned on the event $\Upsilon = 0$, but we suppress this conditioning for ease of writing.
\begin{align*}
I(\boldsymbol{K}_{\overline{U}} ; V_B) &= I(\boldsymbol{K}_{\overline{U}} ; U, \boldsymbol{Y}, \boldsymbol{\Lambda}) \\
&= I(\boldsymbol{K}_{\overline{U}} ; U,\boldsymbol{Y}, \tilde{G}, \tilde{B}, F_L, F_L(\boldsymbol{X}|_{\tilde{G}_L}) \oplus \boldsymbol{Q}, F_0, F_1, \boldsymbol{K}_0 \oplus F_0(\boldsymbol{X}|_{L_0 \cup \tilde{G}_S}), \boldsymbol{K}_1 \oplus F_1(\boldsymbol{X}|_{L_1 \cup \tilde{G}_S})) \\
&= I(\boldsymbol{K}_{\overline{U}} ; U,\boldsymbol{Y}, \tilde{G}, \tilde{B}, F_L, F_L(\boldsymbol{X}|_{\tilde{G}_L}) \oplus \boldsymbol{Q}, F_U, F_{\overline{U}}, \boldsymbol{K}_U \oplus F_U(\boldsymbol{X}|_{L_U \cup \tilde{G}_S}), \boldsymbol{K}_{\overline{U}} \oplus F_{\overline{U}}(\boldsymbol{X}|_{L_{\overline{U}} \cup \tilde{G}_S})) \\
& \stackrel{\text{(a)}}{=} I(\boldsymbol{K}_{\overline{U}} ; U,\boldsymbol{Y}, \tilde{G}, \tilde{B}, F_L, \boldsymbol{Q}, F_U, F_{\overline{U}}, \boldsymbol{K}_U \oplus F_U(\boldsymbol{X}|_{L_U \cup \tilde{G}_S}), \boldsymbol{K}_{\overline{U}} \oplus F_{\overline{U}}(\boldsymbol{X}|_{L_{\overline{U}} \cup \tilde{G}_S})) \\
& \stackrel{\text{(b)}}{=} I(\boldsymbol{K}_{\overline{U}} ; U,\boldsymbol{Y}, \tilde{G}, \tilde{B}, F_L, L_U,L_{\overline{U}}, F_U, F_{\overline{U}}, \boldsymbol{K}_U \oplus F_U(\boldsymbol{X}|_{L_U \cup \tilde{G}_S}), \boldsymbol{K}_{\overline{U}} \oplus F_{\overline{U}}(\boldsymbol{X}|_{L_{\overline{U}} \cup \tilde{G}_S})) \\
& \stackrel{\text{(c)}}{=} I(\boldsymbol{K}_{\overline{U}} ; U,\boldsymbol{Y}, \tilde{G}, \tilde{B}, F_L, L_U,L_{\overline{U}}, F_U, F_{\overline{U}}, \boldsymbol{K}_U, \boldsymbol{K}_{\overline{U}} \oplus F_{\overline{U}}(\boldsymbol{X}|_{L_{\overline{U}} \cup \tilde{G}_S})) \\
& \stackrel{\text{(d)}}{=}  I(\boldsymbol{K}_{\overline{U}} ; U,\boldsymbol{Y}, \tilde{G}, \tilde{B}, F_L, L_U,L_{\overline{U}}, F_U, F_{\overline{U}},  \boldsymbol{K}_{\overline{U}} \oplus F_{\overline{U}}(\boldsymbol{X}|_{L_{\overline{U}} \cup \tilde{G}_S})) \\
&= I(\boldsymbol{K}_{\overline{U}} ;  \boldsymbol{K}_{\overline{U}} \oplus F_{\overline{U}}(\boldsymbol{X}|_{L_{\overline{U}} \cup \tilde{G}_S}) | U,\boldsymbol{Y},\tilde{G},\tilde{B}, F_L, L_U, L_{\overline{U}}, F_U, F_{\overline{U}} ) \\
& = H(\boldsymbol{K}_{\overline{U}} \oplus F_{\overline{U}}(\boldsymbol{X}|_{L_{\overline{U}} \cup \tilde{G}_S}) | U,\boldsymbol{Y},\tilde{G},\tilde{B}, F_L, L_U, L_{\overline{U}}, F_U, F_{\overline{U}}) \\ & \quad - H(F_{\overline{U}}(\boldsymbol{X}|_{L_{\overline{U}} \cup \tilde{G}_S}) | \boldsymbol{K}_{\overline{U}}, U,\boldsymbol{Y},\tilde{G},\tilde{B}, F_L, L_U, L_{\overline{U}}, F_U, F_{\overline{U}}) \\
& \leq |F_{\overline{U}}(\boldsymbol{X}|_{L_{\overline{U}} \cup \tilde{G}_S})| -  H(F_{\overline{U}}(\boldsymbol{X}|_{L_{\overline{U}} \cup \tilde{G}_S}) | \boldsymbol{K}_{\overline{U}}, U,\boldsymbol{Y},\tilde{G},\tilde{B}, F_L, L_U, L_{\overline{U}}, F_U, F_{\overline{U}}) \\
& = n(r - 2 \tilde{\delta}) - H(F_{\overline{U}}(\boldsymbol{X}|_{L_{\overline{U}} \cup \tilde{G}_S}) | \boldsymbol{K}_{\overline{U}}, U,\boldsymbol{Y}, \boldsymbol{Y}|_{\tilde{G}_S}, \tilde{G},\tilde{B}, F_L, L_U, L_{\overline{U}}, F_U, F_{\overline{U}}) \\
& \stackrel{\text{(e)}}{=}  n(r - 2 \tilde{\delta}) - H(F_{\overline{U}}(\boldsymbol{X}|_{L_{\overline{U}} \cup \tilde{G}_S}) |  F_{\overline{U}}, \boldsymbol{Y}|_{\tilde{G}_S}, \tilde{G}_S, L_{\overline{U}}) \\
& = n(r - 2 \tilde{\delta}) - H(F_{\overline{U}}(\boldsymbol{X}|_{L_{\overline{U}} \cup \tilde{G}_S}) |  F_{\overline{U}}, \boldsymbol{X}|_{\tilde{G}_S}, \tilde{G}_S, L_{\overline{U}}) \\
& \stackrel{\text{(f)}}{\leq} n(r - 2 \tilde{\delta}) - \left( n(r - 2 \tilde{\delta}) - \frac{2^{ n(r - 2 \tilde{\delta})  -  n(r - \tilde{\delta})  }}{\ln 2} \right) \\
& = \frac{2^{ -n \tilde{\delta} }}{\ln 2}
\end{align*}

where (a) hold since $F_L(\boldsymbol{X}|_{\tilde{G}_L})$ is a function of $(F_L, \boldsymbol{Y},\tilde{G} )$, (b) holds since ($L_U,L_{\overline{U}}$) is a function of ($U,\boldsymbol{Q},\tilde{G},\tilde{B}$) and $\boldsymbol{Q}$ is a function of ($U,L_U,L_{\overline{U}}$), (c) holds since $F_U(\boldsymbol{X}|_{L_U \cup \tilde{G}_S})$ is a function of ($F_U, \boldsymbol{Y},L_U,\tilde{G}$), (d) holds since $\boldsymbol{K}_U$ is independent of all other variables, (e) holds since $F_{\overline{U}}(\boldsymbol{X}|_{L_{\overline{U}} \cup \tilde{G}_S}) - F_{\overline{U}}, \boldsymbol{Y}|_{\tilde{G}_S}, \tilde{G}_S, L_{\overline{U}} - \boldsymbol{K}_{\overline{U}}, U,\boldsymbol{Y}, \tilde{G},\tilde{B}, F_L, L_U, F_U$ is a Markov chain and (f) holds due to $R(\boldsymbol{X}|_{L_{\overline{U}} \cup \tilde{G}_S} \;\; \mathlarger{\mid} \;\; \boldsymbol{X}|_{\tilde{G}_S} = \boldsymbol{x}|_{\tilde{g}_s}, \tilde{G}_S = \tilde{g}_S, L_{\overline{U}} = l_{\overline{u}}) = |L_{\overline{U}}| = n(r - \tilde{\delta})$ and Lemma~\ref{lem:privacy_amplification}.

\item \label{part:3} In order to show (\ref{eqn:ach_1p_wtap_2}) holds for $\{\mathcal{P}_n\}_{n \in \mathbb{N}}$, it suffices to show that $I(U ; V_A | \Upsilon = 0) \longrightarrow 0$. All terms and assertions below are conditioned on the event $\Upsilon = 0$, but we suppress this conditioning for ease of writing.
\begin{align*}
I(U ; V_A ) &= I(U ;\boldsymbol{K}_0,\boldsymbol{K}_1,\boldsymbol{X},\boldsymbol{\Lambda} ) \\
&= I(U ; \boldsymbol{K}_0,\boldsymbol{K}_1,\boldsymbol{X}, \tilde{G}, \tilde{B}, F_L, F_L(\boldsymbol{X}|_{\tilde{G}_L}) \oplus \boldsymbol{Q}, F_0, F_1, \boldsymbol{K}_0 \oplus F_0(\boldsymbol{X}|_{L_0 \cup \tilde{G}_S}), \boldsymbol{K}_1 \oplus F_1(\boldsymbol{X}|_{L_1 \cup \tilde{G}_S})) \\
& = I(U ; \boldsymbol{K}_0,\boldsymbol{K}_1,\boldsymbol{X}, \tilde{G}, \tilde{B}, F_L, F_L(\boldsymbol{X}|_{\tilde{G}_L}) \oplus \boldsymbol{Q}, F_0, F_1, F_0(\boldsymbol{X}|_{L_0 \cup \tilde{G}_S}), F_1(\boldsymbol{X}|_{L_1 \cup \tilde{G}_S})) \\
& \stackrel{\text{(a)}}{=}  I(U ; \boldsymbol{K}_0,\boldsymbol{K}_1,\boldsymbol{X}, \tilde{G}, \tilde{B}, F_L, \boldsymbol{Q}, F_0, F_1, F_0(\boldsymbol{X}|_{L_0 \cup \tilde{G}_S}), F_1(\boldsymbol{X}|_{L_1 \cup \tilde{G}_S})) \\
& \stackrel{\text{(b)}}{=} I(U ; \boldsymbol{K}_0,\boldsymbol{K}_1,\boldsymbol{X}, \tilde{G}, \tilde{B}, F_L, L_0, L_1, F_0, F_1, F_0(\boldsymbol{X}|_{L_0 \cup \tilde{G}_S}), F_1(\boldsymbol{X}|_{L_1 \cup \tilde{G}_S})) \\
& \stackrel{\text{(c)}}{=} I(U ; \boldsymbol{K}_0,\boldsymbol{K}_1,\boldsymbol{X}, \tilde{G}, \tilde{B}, F_L, L_0, L_1, F_0, F_1) \\
& \stackrel{\text{(d)}}{=} I(U ; L_0,L_1) \\
& \stackrel{\text{(e)}}{=} 0
\end{align*}

where (a) hold since $F_L(\boldsymbol{X}|_{\tilde{G}_L})$ is a function of $(F_L, \boldsymbol{X},\tilde{G})$, (b) holds since $(L_0,L_1)$ is a function of $(\tilde{G},\tilde{B},\boldsymbol{Q})$ and $\boldsymbol{Q}$ is a function of $(L_0,L_1)$, (c) holds since $F_0(\boldsymbol{X}|_{L_0 \cup \tilde{G}_S}),F_1(\boldsymbol{X}|_{L_1 \cup \tilde{G}_S})$ is a function of ($F_0, F_1, \boldsymbol{X},L_0,L_1,\tilde{G})$, (d) holds since $U - L_0,L_1 - \boldsymbol{K}_0,\boldsymbol{K}_1,\boldsymbol{X}, \tilde{G}, \tilde{B}, F_L, F_0, F_1$ is a Markov chain and (e) holds since the channel acts independently on each input bit and since $|L_0| = |L_1|$.

\item \label{part:4} In order to show (\ref{eqn:ach_1p_wtap_3}) holds for $\{\mathcal{P}_n\}_{n \in \mathbb{N}}$, it suffices to show that $I(\boldsymbol{K}_0,\boldsymbol{K}_1,U ; V_E | \Upsilon = 0) \longrightarrow 0$ as $n \longrightarrow \infty$. All terms and assertions below are conditioned on the event $\Upsilon = 0$, but we suppress this conditioning for ease of writing.
\begin{align*}
I(\boldsymbol{K}_0,\boldsymbol{K}_1,U ; V_E) &= I(\boldsymbol{K}_U, \boldsymbol{K}_{\overline{U}}, U ; V_E ) \\
&= I(U ; V_E) + I(\boldsymbol{K}_{\overline{U}} ; V_E |U) + I(\boldsymbol{K}_U ; V_E | U, \boldsymbol{K}_{\overline{U}}) \\
&= I(U ; V_E ) + I(\boldsymbol{K}_{\overline{U}} ; U, V_E ) + I(\boldsymbol{K}_U ; U, \boldsymbol{K}_{\overline{U}}, V_E)
\end{align*}

We look at each of the above three terms separately.

\begin{align*}
I(U ; & V_E ) \\ & = I(U ; \boldsymbol{Z}, \boldsymbol{\Lambda}) \\
&= I(U ; \boldsymbol{Z}, \tilde{G}, \tilde{B}, F_L, F_L(\boldsymbol{X}|_{\tilde{G}_L}) \oplus \boldsymbol{Q}, F_0, F_1, \boldsymbol{K}_0 \oplus F_0(\boldsymbol{X}|_{L_0 \cup \tilde{G}_S}), \boldsymbol{K}_1 \oplus F_1(\boldsymbol{X}|_{L_1 \cup \tilde{G}_S})) \\
&\leq I(U ; \boldsymbol{Z}, \tilde{G}, \tilde{B}, F_L, F_L(\boldsymbol{X}|_{\tilde{G}_L}) \oplus \boldsymbol{Q}, F_0, F_1, \boldsymbol{K}_0, F_0(\boldsymbol{X}|_{L_0 \cup \tilde{G}_S}), \boldsymbol{K}_1, F_1(\boldsymbol{X}|_{L_1 \cup \tilde{G}_S})) \\
& \stackrel{\text{(a)}}{=} I(U ; \boldsymbol{Z}, \tilde{G}, \tilde{B}, F_L, F_L(\boldsymbol{X}|_{\tilde{G}_L}) \oplus \boldsymbol{Q}, F_0, F_1, F_0(\boldsymbol{X}|_{L_0 \cup \tilde{G}_S}), F_1(\boldsymbol{X}|_{L_1 \cup \tilde{G}_S})) \\
& = I(U ; F_L(\boldsymbol{X}|_{\tilde{G}_L}) \oplus \boldsymbol{Q}, F_0(\boldsymbol{X}|_{L_0 \cup \tilde{G}_S}), F_1(\boldsymbol{X}|_{L_1 \cup \tilde{G}_S}) |  \boldsymbol{Z}, \tilde{G}, \tilde{B}, F_L, F_0, F_1) \\
& = H(F_L(\boldsymbol{X}|_{\tilde{G}_L}) \oplus \boldsymbol{Q}, F_0(\boldsymbol{X}|_{L_0 \cup \tilde{G}_S}), F_1(\boldsymbol{X}|_{L_1 \cup \tilde{G}_S}) |  \boldsymbol{Z}, \tilde{G}, \tilde{B}, F_L, F_0, F_1) \\ & \quad -  H(F_L(\boldsymbol{X}|_{\tilde{G}_L}) \oplus \boldsymbol{Q}, F_0(\boldsymbol{X}|_{L_0 \cup \tilde{G}_S}), F_1(\boldsymbol{X}|_{L_1 \cup \tilde{G}_S}) |  U, \boldsymbol{Z}, \tilde{G}, \tilde{B}, F_L, F_0, F_1) \\
& \leq |F_L(\boldsymbol{X}|_{\tilde{G}_L})| + |F_0(\boldsymbol{X}|_{L_0 \cup \tilde{G}_S})| + |F_1(\boldsymbol{X}|_{L_1 \cup \tilde{G}_S})| \\ & \quad -  H(F_L(\boldsymbol{X}|_{\tilde{G}_L}), F_0(\boldsymbol{X}|_{L_0 \cup \tilde{G}_S}), F_1(\boldsymbol{X}|_{L_1 \cup \tilde{G}_S}) | \boldsymbol{Q}, U, \boldsymbol{Z}, \tilde{G}, \tilde{B}, F_L, F_0, F_1) \\
& \stackrel{\text{(b)}}{=} |F_L(\boldsymbol{X}|_{\tilde{G}_L})| + |F_0(\boldsymbol{X}|_{L_0 \cup \tilde{G}_S})| + |F_1(\boldsymbol{X}|_{L_1 \cup \tilde{G}_S})| \\ & \quad -  H( F_L(\boldsymbol{X}|_{\tilde{G}_L}), F_0(\boldsymbol{X}|_{L_0 \cup \tilde{G}_S}), F_1(\boldsymbol{X}|_{L_1 \cup \tilde{G}_S}) | L_U, L_{\overline{U}}, U, \boldsymbol{Z}, \tilde{G}, \tilde{B}, F_L, F_0, F_1) \\
& = |F_L(\boldsymbol{X}|_{\tilde{G}_L})| + |F_0(\boldsymbol{X}|_{L_0 \cup \tilde{G}_S})| + |F_1(\boldsymbol{X}|_{L_1 \cup \tilde{G}_S})| \\ & \quad -  H( F_L(\boldsymbol{X}|_{\tilde{G}_L}), F_U(\boldsymbol{X}|_{L_U \cup \tilde{G}_S}), F_{\overline{U}}(\boldsymbol{X}|_{L_{\overline{U}} \cup \tilde{G}_S}) | L_U, L_{\overline{U}}, U, \boldsymbol{Z}, \tilde{G}, \tilde{B}, F_L, F_U, F_{\overline{U}}) \\
& \stackrel{\text{(c)}}{=}  |F_L(\boldsymbol{X}|_{\tilde{G}_L})| + |F_0(\boldsymbol{X}|_{L_0 \cup \tilde{G}_S})| + |F_1(\boldsymbol{X}|_{L_1 \cup \tilde{G}_S})| \\ & \quad -  H(F_L(\boldsymbol{X}|_{\tilde{G}_L}), F_U(\boldsymbol{X}|_{L_U \cup \tilde{G}_S}), F_{\overline{U}}(\boldsymbol{X}|_{L_{\overline{U}} \cup \tilde{G}_S})  \;\; \mathlarger{\mid} \; \;  L_U, L_{\overline{U}}, \boldsymbol{Z}|_{L_U}, \boldsymbol{Z}|_{\tilde{G}_S}, \boldsymbol{Z}|_{\tilde{G}_L}, \tilde{G}_S, F_L, F_U, F_{\overline{U}}) \\
&= |F_L(\boldsymbol{X}|_{\tilde{G}_L})| + |F_0(\boldsymbol{X}|_{L_0 \cup \tilde{G}_S})| + |F_1(\boldsymbol{X}|_{L_1 \cup \tilde{G}_S})|-  H(F_L(\boldsymbol{X}|_{\tilde{G}_L}) \;\; \mathlarger{\mid} \; \; L_U, L_{\overline{U}}, \boldsymbol{Z}|_{L_U}, \boldsymbol{Z}|_{\tilde{G}_S}, \boldsymbol{Z}|_{\tilde{G}_L}, \tilde{G}_S, F_L, F_U, F_{\overline{U}} ) \\ & \qquad - H(F_U(\boldsymbol{X}|_{L_U \cup \tilde{G}_S}), F_{\overline{U}}(\boldsymbol{X}|_{L_{\overline{U}} \cup \tilde{G}_S}) \;\; \mathlarger{\mid} \; \; F_L(\boldsymbol{X}|_{\tilde{G}_L}), L_U, L_{\overline{U}},  \boldsymbol{Z}|_{L_U}, \boldsymbol{Z}|_{\tilde{G}_S}, \boldsymbol{Z}|_{\tilde{G}_L}, \tilde{G}_S, F_L, F_U, F_{\overline{U}} ) \\
& \stackrel{\text{(d)}}{=}  |F_L(\boldsymbol{X}|_{\tilde{G}_L})|  -  H(F_L(\boldsymbol{X}|_{\tilde{G}_L}) \;\; \mathlarger{\mid} \; \; F_L, \boldsymbol{Z}|_{\tilde{G}_L}) \\
   & \quad + |F_0(\boldsymbol{X}|_{L_0 \cup \tilde{G}_S})| + |F_1(\boldsymbol{X}|_{L_1 \cup \tilde{G}_S})| - H(F_U(\boldsymbol{X}|_{L_U \cup \tilde{G}_S}), F_{\overline{U}}(\boldsymbol{X}|_{L_{\overline{U}} \cup \tilde{G}_S}) \;\; \mathlarger{\mid} \; \;  L_U, L_{\overline{U}}, \boldsymbol{Z}|_{L_U}, \boldsymbol{Z}|_{\tilde{G}_S}, \tilde{G}_S, F_U, F_{\overline{U}}) \\
& = |F_L(\boldsymbol{X}|_{\tilde{G}_L})|  -  H(F_L(\boldsymbol{X}|_{\tilde{G}_L}) \;\; \mathlarger{\mid} \; \; F_L, \boldsymbol{Z}|_{\tilde{G}_L}) + |F_0(\boldsymbol{X}|_{L_0 \cup \tilde{G}_S})| + |F_1(\boldsymbol{X}|_{L_1 \cup \tilde{G}_S})| \\
& \quad - H(F_U(\boldsymbol{X}|_{L_U \cup \tilde{G}_S}) \;\; \mathlarger{\mid} \; \;  L_U, L_{\overline{U}}, \boldsymbol{Z}|_{L_U}, \boldsymbol{Z}|_{\tilde{G}_S}, \tilde{G}_S, F_U, F_{\overline{U}})  \\ & \quad -  H(F_{\overline{U}}(\boldsymbol{X}|_{L_{\overline{U}} \cup \tilde{G}_S}) \;\; \mathlarger{\mid} \; \;  F_U(\boldsymbol{X}|_{L_U \cup \tilde{G}_S}), L_U, L_{\overline{U}}, \boldsymbol{Z}|_{L_U}, \boldsymbol{Z}|_{\tilde{G}_S}, \tilde{G}_S, F_U, F_{\overline{U}}) \\
& \stackrel{\text{(e)}}{=} |F_L(\boldsymbol{X}|_{\tilde{G}_L})|  -  H(F_L(\boldsymbol{X}|_{\tilde{G}_L}) \;\; \mathlarger{\mid} \; \; F_L, \boldsymbol{Z}|_{\tilde{G}_L}) + |F_0(\boldsymbol{X}|_{L_0 \cup \tilde{G}_S})| + |F_1(\boldsymbol{X}|_{L_1 \cup \tilde{G}_S})| \\
& \quad - H(F_U(\boldsymbol{X}|_{L_U \cup \tilde{G}_S}) \;\; \mathlarger{\mid} \; \;  F_U, \boldsymbol{Z}|_{L_U \cup \tilde{G}_S})  -  H(F_{\overline{U}}(\boldsymbol{X}|_{L_{\overline{U}} \cup \tilde{G}_S}) \;\; \mathlarger{\mid} \; \;  F_U(\boldsymbol{X}|_{L_U \cup \tilde{G}_S}), L_U, L_{\overline{U}}, \boldsymbol{Z}|_{L_U}, \boldsymbol{Z}|_{\tilde{G}_S}, \tilde{G}_S, F_U, F_{\overline{U}}) \\
& \leq |F_L(\boldsymbol{X}|_{\tilde{G}_L})|  -  H(F_L(\boldsymbol{X}|_{\tilde{G}_L}) \;\; \mathlarger{\mid} \; \; F_L, \boldsymbol{Z}|_{\tilde{G}_L}) + |F_0(\boldsymbol{X}|_{L_0 \cup \tilde{G}_S})| + |F_1(\boldsymbol{X}|_{L_1 \cup \tilde{G}_S})| \\
& \quad - H(F_U(\boldsymbol{X}|_{L_U \cup \tilde{G}_S}) \;\; \mathlarger{\mid} \; \;  F_U, \boldsymbol{Z}|_{L_U \cup \tilde{G}_S})  -  H(F_{\overline{U}}(\boldsymbol{X}|_{L_{\overline{U}} \cup \tilde{G}_S}) \;\; \mathlarger{\mid} \; \;  \boldsymbol{X}|_{L_U}, \boldsymbol{X}|_{\tilde{G}_S}, L_U, L_{\overline{U}}, \boldsymbol{Z}|_{L_U}, \boldsymbol{Z}|_{\tilde{G}_S}, \tilde{G}_S, F_U, F_{\overline{U}}) \\
& \stackrel{\text{(f)}}{=} |F_L(\boldsymbol{X}|_{\tilde{G}_L})|  -  H(F_L(\boldsymbol{X}|_{\tilde{G}_L}) \;\; \mathlarger{\mid} \; \; F_L, \boldsymbol{Z}|_{\tilde{G}_L}) + |F_0(\boldsymbol{X}|_{L_0 \cup \tilde{G}_S})| + |F_1(\boldsymbol{X}|_{L_1 \cup \tilde{G}_S})| \\
& \quad - H(F_U(\boldsymbol{X}|_{L_U \cup \tilde{G}_S}) \;\; \mathlarger{\mid} \; \;  F_U, \boldsymbol{Z}|_{L_U \cup \tilde{G}_S})  -  H(F_{\overline{U}}(\boldsymbol{X}|_{L_{\overline{U}} \cup \tilde{G}_S}) \;\; \mathlarger{\mid} \; \;  F_{\overline{U}}, \boldsymbol{X}|_{\tilde{G}_S}, \tilde{G}_S, L_{\overline{U}}) \\
& \stackrel{\text{(g)}}{\leq} 2(r - \tilde{\delta})n -  (1 - \xi)\left( 2(r - \tilde{\delta})n - \frac{2^{ 2(r - \tilde{\delta})n  -  2nr }}{\ln 2} \right)+  n(r - 2\tilde{\delta}) + n(r - 2 \tilde{\delta}) \\ & \quad - (1 - \xi)\left(   n(r - 2 \tilde{\delta}) - \frac{ 2^{  n(r - 2 \tilde{\delta}) - n(r - \tilde{\delta})  }  }{\ln 2}  \right)  -  \left(  n(r - 2 \tilde{\delta}) - \frac{ 2^{  n(r - 2 \tilde{\delta}) - n(r - \tilde{\delta})  }  }{\ln 2} \right) \\
& =  2 \xi n (r - \tilde{\delta})  +   (1 - \xi) \cdot \frac{2^{-2\tilde{\delta}n}}{\ln 2} + \xi n (r - 2 \tilde{\delta}) + (2 - \xi) \cdot \frac{2^{- \tilde{\delta}n }}{\ln 2} 
\end{align*}

where (a) hold since $\boldsymbol{K}_0,\boldsymbol{K}_1$ are independent of all the other variables, (b) holds since $(L_U,L_{\overline{U}})$ is a function of $(U,\boldsymbol{Q},\tilde{G},\tilde{B})$  and $\boldsymbol{Q}$ is a function of $(U,L_U,L_{\overline{U}})$, (c) holds since $F_L(\boldsymbol{X}|_{\tilde{G}_L}), F_U(\boldsymbol{X}|_{L_U \cup \tilde{G}_S}), F_{\overline{U}}(\boldsymbol{X}|_{L_{\overline{U}} \cup \tilde{G}_S}) - L_U, L_{\overline{U}}, \boldsymbol{Z}|_{L_U}, \boldsymbol{Z}|_{\tilde{G}_S}, \boldsymbol{Z}|_{\tilde{G}_L},$ $\tilde{G}_S, F_L, F_U, F_{\overline{U}} - U, \boldsymbol{Z}, \tilde{G}, \tilde{B}$ is a Markov chain, (d) holds since $F_L(\boldsymbol{X}|_{\tilde{G}_L}) - F_L, \boldsymbol{Z}|_{\tilde{G}_L} -  L_U, L_{\overline{U}}, \boldsymbol{Z}|_{L_U}, \boldsymbol{Z}|_{\tilde{G}_S}, \tilde{G}_S, F_U, F_{\overline{U}}$ and $F_U(\boldsymbol{X}|_{L_U \cup \tilde{G}_S}), F_{\overline{U}}(\boldsymbol{X}|_{L_{\overline{U}} \cup \tilde{G}_S}) - L_U, L_{\overline{U}}, \boldsymbol{Z}|_{L_U}, \boldsymbol{Z}|_{\tilde{G}_S}, \tilde{G}_S, F_U, F_{\overline{U}} - F_L(\boldsymbol{X}|_{\tilde{G}_L}), \boldsymbol{Z}|_{\tilde{G}_L}, F_L$ are Markov chains, (e) holds since $F_U(\boldsymbol{X}|_{L_U \cup \tilde{G}_S}) - F_U, \boldsymbol{Z}|_{L_U \cup \tilde{G}_S} - L_U, L_{\overline{U}},  \boldsymbol{Z}|_{L_U}, \boldsymbol{Z}|_{\tilde{G}_S}, \tilde{G}_S, F_{\overline{U}}$ is a Markov chain, (f) hold since $F_{\overline{U}}(\boldsymbol{X}|_{L_{\overline{U}} \cup \tilde{G}_S}) - F_{\overline{U}}, \boldsymbol{X}|_{\tilde{G}_S}, \tilde{G}_S, L_{\overline{U}} - \boldsymbol{X}|_{L_U}, L_U, \boldsymbol{Z}|_{L_U}, \boldsymbol{Z}|_{\tilde{G}_S}, F_U$ is a Markov chain and (g) holds for the following reasons:

\begin{itemize}
\item  $R(\boldsymbol{X}|_{\tilde{G}_L} \;\; \mathlarger{\mid} \;\; \boldsymbol{Z}|_{\tilde{G}_L} = \boldsymbol{z}|_{\tilde{g}_L}  ) = \#_e(\boldsymbol{z}|_{\tilde{g}_L})$. Whenever $\#_e(\boldsymbol{z}|_{\tilde{g}_L}) \geq (\epsilon_2 - \delta)|\tilde{G}_L| = 2nr$, by applying Lemma~\ref{lem:privacy_amplification} we get $H(F_L(\boldsymbol{X}|_{\tilde{G}_L}) \;\; \mathlarger{\mid} \; \; F_L, \boldsymbol{Z}|_{\tilde{G}_L} = \boldsymbol{z}|_{\tilde{g}_L}) \geq \left( 2(r - \tilde{\delta})n - \frac{2^{ 2(r - \tilde{\delta})n  -  2nr }}{\ln 2} \right)$. Since by Chernoff's bound, $P[\#_e(\boldsymbol{Z}|_{\tilde{G}_L}) \geq (\epsilon_2 - \delta)|\tilde{G}_L|] \geq 1 - \xi$, where $\xi \longrightarrow 0$ exponentially fast as $n \longrightarrow \infty$, we have $H(F_L(\boldsymbol{X}|_{\tilde{G}_L}) \;\; \mathlarger{\mid} \; \; F_L, \boldsymbol{Z}|_{\tilde{G}_L}) \geq (1 - \xi) \cdot \left( 2(r - \tilde{\delta})n - \frac{2^{ 2(r - \tilde{\delta})n  -  2nr }}{\ln 2} \right)$.

\item Note that $R(\boldsymbol{X}|_{L_U \cup \tilde{G}_S} \;\; \mathlarger{\mid} \;\; \boldsymbol{Z}|_{L_U \cup \tilde{G}_S} = \boldsymbol{z}|_{l_u \cup \tilde{g}_S}) = \#_e(\boldsymbol{z}|_{l_u \cup \tilde{g}_S})$, by Chernoff's bound $P[\#_e(\boldsymbol{Z}|_{L_U \cup \tilde{G}_S} \geq (\epsilon_2 - \delta)(|L_U| + |\tilde{G}_S|)] \geq 1 - \xi$ and  $(\epsilon_2 - \delta)(|L_U| + |\tilde{G}_S|) = n(r - \tilde{\delta})$. By a similar argument as above, we get $H(F_U(\boldsymbol{X}|_{L_U \cup \tilde{G}_S}) \;\; \mathlarger{\mid} \; \;  F_U, \boldsymbol{Z}|_{L_U \cup \tilde{G}_S}) \geq (1 - \xi) \cdot \left(   n(r - 2 \tilde{\delta}) - \frac{ 2^{  n(r - 2 \tilde{\delta}) - n(r - \tilde{\delta})  }  }{\ln 2}  \right) $.

\item $R(\boldsymbol{X}|_{L_{\overline{U}} \cup \tilde{G}_S} \;\; \mathlarger{\mid} \;\; \boldsymbol{X}|_{\tilde{G}_S} = \boldsymbol{x}|_{\tilde{g}_S}, \tilde{G}_S = \tilde{g}_S, L_{\overline{U}} = l_{\overline{u}}) = |L_{\overline{U}}| = n(r - \tilde{\delta})$. Applying Lemma~\ref{lem:privacy_amplification}, we get  \\ $H(F_{\overline{U}}(\boldsymbol{X}|_{L_{\overline{U}} \cup \tilde{G}_S}) \;\; \mathlarger{\mid} \; \;  F_{\overline{U}}, \boldsymbol{X}|_{\tilde{G}_S} = \boldsymbol{x}|_{\tilde{g}_S}, \tilde{G}_S = \tilde{g}_S, L_{\overline{U}} = l_{\overline{u}}) \geq \left(  n(r - 2 \tilde{\delta}) - \frac{ 2^{  n(r - 2 \tilde{\delta}) - n(r - \tilde{\delta})  }  }{\ln 2} \right)$. As a result, $H(F_{\overline{U}}(\boldsymbol{X}|_{L_{\overline{U}} \cup \tilde{G}_S}) \;\; \mathlarger{\mid} \; \;  F_{\overline{U}}, \boldsymbol{X}|_{\tilde{G}_S}, \tilde{G}_S, L_{\overline{U}}) \geq \left(  n(r - 2 \tilde{\delta}) - \frac{ 2^{  n(r - 2 \tilde{\delta}) - n(r - \tilde{\delta})  }  }{\ln 2} \right)$.
\end{itemize}

\begin{align*}
I(\boldsymbol{K}_{\overline{U}} ; U, V_E) &= I(\boldsymbol{K}_{\overline{U}} ; U, \boldsymbol{Z}, \boldsymbol{\Lambda}) \\
&= I(\boldsymbol{K}_{\overline{U}} ; U,  \boldsymbol{Z}, \tilde{G}, \tilde{B}, F_L, F_L(\boldsymbol{X}|_{\tilde{G}_L}) \oplus \boldsymbol{Q}, F_0, F_1, \boldsymbol{K}_0 \oplus F_0(\boldsymbol{X}|_{L_0 \cup \tilde{G}_S}), \boldsymbol{K}_1 \oplus F_1(\boldsymbol{X}|_{L_1 \cup \tilde{G}_S}) ) \\
& = I(\boldsymbol{K}_{\overline{U}} ; U,  \boldsymbol{Z}, \tilde{G}, \tilde{B}, F_L, F_L(\boldsymbol{X}|_{\tilde{G}_L}) \oplus \boldsymbol{Q}, F_U, F_{\overline{U}}, \boldsymbol{K}_U \oplus F_U(\boldsymbol{X}|_{L_U \cup \tilde{G}_S}), \boldsymbol{K}_{\overline{U}} \oplus F_{\overline{U}}(\boldsymbol{X}|_{L_{\overline{U}} \cup \tilde{G}_S}) ) \\
&\leq I( \boldsymbol{K}_{\overline{U}} ; U,  \boldsymbol{Z}, \tilde{G}, \tilde{B}, F_L, F_L(\boldsymbol{X}|_{\tilde{G}_L}) \oplus \boldsymbol{Q}, F_U, F_{\overline{U}}, \boldsymbol{K}_U, F_U(\boldsymbol{X}|_{L_U \cup \tilde{G}_S}), \boldsymbol{K}_{\overline{U}} \oplus F_{\overline{U}}(\boldsymbol{X}|_{L_{\overline{U}} \cup \tilde{G}_S}) ) \\
& \stackrel{\text{(a)}}{=} I( \boldsymbol{K}_{\overline{U}} ; U,  \boldsymbol{Z}, \tilde{G}, \tilde{B}, F_L, F_L(\boldsymbol{X}|_{\tilde{G}_L}) \oplus \boldsymbol{Q}, F_U, F_{\overline{U}}, F_U(\boldsymbol{X}|_{L_U \cup \tilde{G}_S}), \boldsymbol{K}_{\overline{U}} \oplus F_{\overline{U}}(\boldsymbol{X}|_{L_{\overline{U}} \cup \tilde{G}_S}) ) \\
& \leq I( \boldsymbol{K}_{\overline{U}} ; U,  \boldsymbol{Z}, \tilde{G}, \tilde{B}, F_L, F_L(\boldsymbol{X}|_{\tilde{G}_L}), \boldsymbol{Q}, F_U, F_{\overline{U}}, F_U(\boldsymbol{X}|_{L_U \cup \tilde{G}_S}), \boldsymbol{K}_{\overline{U}} \oplus F_{\overline{U}}(\boldsymbol{X}|_{L_{\overline{U}} \cup \tilde{G}_S}) ) \\
& \stackrel{\text{(b)}}{=} I( \boldsymbol{K}_{\overline{U}} ; U,  \boldsymbol{Z}, \tilde{G}, \tilde{B}, F_L, F_L(\boldsymbol{X}|_{\tilde{G}_L}), L_U, L_{\overline{U}}, F_U, F_{\overline{U}}, F_U(\boldsymbol{X}|_{L_U \cup \tilde{G}_S}), \boldsymbol{K}_{\overline{U}} \oplus F_{\overline{U}}(\boldsymbol{X}|_{L_{\overline{U}} \cup \tilde{G}_S}) ) \\
& =  I( \boldsymbol{K}_{\overline{U}} ; \boldsymbol{K}_{\overline{U}} \oplus F_{\overline{U}}(\boldsymbol{X}|_{L_{\overline{U}} \cup \tilde{G}_S})  \;\; \mathlarger{\mid} \;\;     U,  \boldsymbol{Z}, \tilde{G}, \tilde{B}, F_L, F_L(\boldsymbol{X}|_{\tilde{G}_L}), L_U, L_{\overline{U}}, F_U, F_{\overline{U}}, F_U(\boldsymbol{X}|_{L_U \cup \tilde{G}_S})) \\
& = H(\boldsymbol{K}_{\overline{U}} \oplus F_{\overline{U}}(\boldsymbol{X}|_{L_{\overline{U}} \cup \tilde{G}_S})  \;\; \mathlarger{\mid} \;\;     U,  \boldsymbol{Z}, \tilde{G}, \tilde{B}, F_L, F_L(\boldsymbol{X}|_{\tilde{G}_L}), L_U, L_{\overline{U}}, F_U, F_{\overline{U}}, F_U(\boldsymbol{X}|_{L_U \cup \tilde{G}_S})) \\ & \quad - H( F_{\overline{U}}(\boldsymbol{X}|_{L_{\overline{U}} \cup \tilde{G}_S})  \;\; \mathlarger{\mid} \;\;   \boldsymbol{K}_{\overline{U}},  U,  \boldsymbol{Z}, \tilde{G}, \tilde{B}, F_L, F_L(\boldsymbol{X}|_{\tilde{G}_L}), L_U, L_{\overline{U}}, F_U, F_{\overline{U}}, F_U(\boldsymbol{X}|_{L_U \cup \tilde{G}_S})) \\
& \leq |F_{\overline{U}}(\boldsymbol{X}|_{L_{\overline{U}} \cup \tilde{G}_S})| - H( F_{\overline{U}}(\boldsymbol{X}|_{L_{\overline{U}} \cup \tilde{G}_S})  \;\; \mathlarger{\mid} \;\;   \boldsymbol{K}_{\overline{U}},  U,  \boldsymbol{Z}, \tilde{G}, \tilde{B}, F_L, F_L(\boldsymbol{X}|_{\tilde{G}_L}), L_U, L_{\overline{U}}, F_U, F_{\overline{U}}, F_U(\boldsymbol{X}|_{L_U \cup \tilde{G}_S})) \\
& \leq |F_{\overline{U}}(\boldsymbol{X}|_{L_{\overline{U}} \cup \tilde{G}_S})| - H( F_{\overline{U}}(\boldsymbol{X}|_{L_{\overline{U}} \cup \tilde{G}_S})  \;\; \mathlarger{\mid} \;\;   \boldsymbol{K}_{\overline{U}},  U,  \boldsymbol{Z}, \tilde{G}, \tilde{B}, F_L, F_L(\boldsymbol{X}|_{\tilde{G}_L}), L_U, L_{\overline{U}}, F_U, F_{\overline{U}}, \boldsymbol{X}|_{L_U},  \boldsymbol{X}|_{\tilde{G}_S}) \\
& \stackrel{\text{(c)}}{=} |F_{\overline{U}}(\boldsymbol{X}|_{L_{\overline{U}} \cup \tilde{G}_S})| - H( F_{\overline{U}}(\boldsymbol{X}|_{L_{\overline{U}} \cup \tilde{G}_S})  \;\; \mathlarger{\mid} \;\;   F_{\overline{U}}, \boldsymbol{X}|_{\tilde{G}_S}, \tilde{G}_S, L_{\overline{U}}) \\
& = n(r - 2 \tilde{\delta}) -  H( F_{\overline{U}}(\boldsymbol{X}|_{L_{\overline{U}} \cup \tilde{G}_S})  \;\; \mathlarger{\mid} \;\;   F_{\overline{U}}, \boldsymbol{X}|_{\tilde{G}_S}, \tilde{G}_S, L_{\overline{U}}) \\
& \stackrel{\text{(d)}}{\leq}  n(r - 2 \tilde{\delta}) -  \left(  n(r - 2 \tilde{\delta}) -  \frac{2^{ n(r - 2 \tilde{\delta}) -  n(r - \tilde{\delta})   }}{\ln 2} \right) \\
& = \frac{2^{- \tilde{\delta}n}}{\ln 2}
\end{align*}

where (a) holds since $\boldsymbol{K}_U$ is independent of all other variables, (b) holds since $(L_U,L_{\overline{U}})$ is a function of $(U,\boldsymbol{Q},\tilde{G},\tilde{B})$  and $\boldsymbol{Q}$ is a function of $(U,L_U,L_{\overline{U}})$, (c) holds since $F_{\overline{U}}(\boldsymbol{X}|_{L_{\overline{U}} \cup \tilde{G}_S}) -     F_{\overline{U}}, \boldsymbol{X}|_{\tilde{G}_S}, \tilde{G}_S, L_{\overline{U}}  -  \boldsymbol{K}_{\overline{U}},  U,  \boldsymbol{Z}, \tilde{G}, \tilde{B}, F_L, F_L(\boldsymbol{X}|_{\tilde{G}_L}), L_U, F_U,  \boldsymbol{X}|_{L_U}$ is a Markov chain and (d) follows by applying Lemma~\ref{lem:privacy_amplification} knowing that $R(\boldsymbol{X}|_{L_{\overline{U}} \cup \tilde{G}_S} \;\; \mathlarger{\mid} \;\;  \boldsymbol{X}|_{\tilde{G}_S} = \boldsymbol{x}|_{\tilde{x}_S}, \tilde{G}_S = \tilde{g}_S, \tilde{L}_U = \tilde{l}_u) = |L_{\overline{U}}| = n(r - \tilde{\delta})$.

\begin{align*}
I(\boldsymbol{K}_U ; U, \boldsymbol{K}_{\overline{U}}, & V_E) \\ & = I(\boldsymbol{K}_U ; U, \boldsymbol{K}_{\overline{U}}, \boldsymbol{Z}, \boldsymbol{\Lambda}) \\
&= I(\boldsymbol{K}_U ; U, \boldsymbol{K}_{\overline{U}}, \boldsymbol{Z}, \tilde{G}, \tilde{B}, F_L, F_L(\boldsymbol{X}|_{\tilde{G}_L}) \oplus \boldsymbol{Q}, F_0, F_1, \boldsymbol{K}_0 \oplus F_0(\boldsymbol{X}|_{L_0 \cup \tilde{G}_S}), \boldsymbol{K}_1 \oplus F_1(\boldsymbol{X}|_{L_1 \cup \tilde{G}_S})) \\
& = I(\boldsymbol{K}_U ; U, \boldsymbol{K}_{\overline{U}}, \boldsymbol{Z}, \tilde{G}, \tilde{B}, F_L, F_L(\boldsymbol{X}|_{\tilde{G}_L}) \oplus \boldsymbol{Q}, F_U, F_{\overline{U}}, \boldsymbol{K}_U \oplus F_U(\boldsymbol{X}|_{L_U \cup \tilde{G}_S}), \boldsymbol{K}_{\overline{U}} \oplus F_{\overline{U}}(\boldsymbol{X}|_{L_{\overline{U}} \cup \tilde{G}_S})) \\
& =  I(\boldsymbol{K}_U ; U, \boldsymbol{K}_{\overline{U}}, \boldsymbol{Z}, \tilde{G}, \tilde{B}, F_L, F_L(\boldsymbol{X}|_{\tilde{G}_L}) \oplus \boldsymbol{Q}, F_U, F_{\overline{U}}, \boldsymbol{K}_U \oplus F_U(\boldsymbol{X}|_{L_U \cup \tilde{G}_S}), F_{\overline{U}}(\boldsymbol{X}|_{L_{\overline{U}} \cup \tilde{G}_S})) \\
& \stackrel{\text{(a)}}{=}  I(\boldsymbol{K}_U ; U, \boldsymbol{Z}, \tilde{G}, \tilde{B}, F_L, F_L(\boldsymbol{X}|_{\tilde{G}_L}) \oplus \boldsymbol{Q}, F_U, F_{\overline{U}}, \boldsymbol{K}_U \oplus F_U(\boldsymbol{X}|_{L_U \cup \tilde{G}_S}), F_{\overline{U}}(\boldsymbol{X}|_{L_{\overline{U}} \cup \tilde{G}_S})) \\
& \leq I(\boldsymbol{K}_U ; U, \boldsymbol{Z}, \tilde{G}, \tilde{B}, F_L, F_L(\boldsymbol{X}|_{\tilde{G}_L}), \boldsymbol{Q}, F_U, F_{\overline{U}}, \boldsymbol{K}_U \oplus F_U(\boldsymbol{X}|_{L_U \cup \tilde{G}_S}), F_{\overline{U}}(\boldsymbol{X}|_{L_{\overline{U}} \cup \tilde{G}_S})) \\
& \stackrel{\text{(b)}}{=} I(\boldsymbol{K}_U ; U, \boldsymbol{Z}, \tilde{G}, \tilde{B}, F_L, F_L(\boldsymbol{X}|_{\tilde{G}_L}), L_U, L_{\overline{U}}, F_U, F_{\overline{U}}, \boldsymbol{K}_U \oplus F_U(\boldsymbol{X}|_{L_U \cup \tilde{G}_S}), F_{\overline{U}}(\boldsymbol{X}|_{L_{\overline{U}} \cup \tilde{G}_S})) \\
& = I(\boldsymbol{K}_U ; \boldsymbol{K}_U \oplus F_U(\boldsymbol{X}|_{L_U \cup \tilde{G}_S}), F_{\overline{U}}(\boldsymbol{X}|_{L_{\overline{U}} \cup \tilde{G}_S})  |    U, \boldsymbol{Z}, \tilde{G}, \tilde{B}, F_L, F_L(\boldsymbol{X}|_{\tilde{G}_L}), L_U, L_{\overline{U}}, F_U, F_{\overline{U}}   ) \\
& = H(\boldsymbol{K}_U \oplus F_U(\boldsymbol{X}|_{L_U \cup \tilde{G}_S}), F_{\overline{U}}(\boldsymbol{X}|_{L_{\overline{U}} \cup \tilde{G}_S})  |    U, \boldsymbol{Z}, \tilde{G}, \tilde{B}, F_L, F_L(\boldsymbol{X}|_{\tilde{G}_L}), L_U, L_{\overline{U}}, F_U, F_{\overline{U}}) \\ & \quad - H(F_U(\boldsymbol{X}|_{L_U \cup \tilde{G}_S}), F_{\overline{U}}(\boldsymbol{X}|_{L_{\overline{U}} \cup \tilde{G}_S})  |    \boldsymbol{K}_U, U, \boldsymbol{Z}, \tilde{G}, \tilde{B}, F_L, F_L(\boldsymbol{X}|_{\tilde{G}_L}), L_U, L_{\overline{U}}, F_U, F_{\overline{U}}) \\
& \leq |F_U(\boldsymbol{X}|_{L_U \cup \tilde{G}_S})| + |F_{\overline{U}}(\boldsymbol{X}|_{L_{\overline{U}} \cup \tilde{G}_S})| \\ & \quad  -  H(F_U(\boldsymbol{X}|_{L_U \cup \tilde{G}_S}), F_{\overline{U}}(\boldsymbol{X}|_{L_{\overline{U}} \cup \tilde{G}_S})  |    \boldsymbol{K}_U, U, \boldsymbol{Z}, \tilde{G}, \tilde{B}, F_L, F_L(\boldsymbol{X}|_{\tilde{G}_L}), L_U, L_{\overline{U}}, F_U, F_{\overline{U}}) \\
& \stackrel{\text{(c)}}{=} |F_U(\boldsymbol{X}|_{L_U \cup \tilde{G}_S})| + |F_{\overline{U}}(\boldsymbol{X}|_{L_{\overline{U}} \cup \tilde{G}_S})| \\ & \quad  -  H(F_U(\boldsymbol{X}|_{L_U \cup \tilde{G}_S}), F_{\overline{U}}(\boldsymbol{X}|_{L_{\overline{U}} \cup \tilde{G}_S})  \;\;  \mathlarger{\mid}  \;\; \boldsymbol{Z}|_{L_U}, \boldsymbol{Z}|_{L_{\overline{U}}}, \boldsymbol{Z}|_{\tilde{G}_S}, \tilde{G}_S, L_U, L_{\overline{U}}, F_U, F_{\overline{U}}) \\
& = |F_U(\boldsymbol{X}|_{L_U \cup \tilde{G}_S})| + |F_{\overline{U}}(\boldsymbol{X}|_{L_{\overline{U}} \cup \tilde{G}_S})|  \\ & \quad -  H(F_U(\boldsymbol{X}|_{L_U \cup \tilde{G}_S})  \;\;  \mathlarger{\mid}  \;\; \boldsymbol{Z}|_{L_U}, \boldsymbol{Z}|_{L_{\overline{U}}}, \boldsymbol{Z}|_{\tilde{G}_S}, \tilde{G}_S, L_U, L_{\overline{U}}, F_U, F_{\overline{U}})  \\ & \quad -  H(F_{\overline{U}}(\boldsymbol{X}|_{L_{\overline{U}} \cup \tilde{G}_S})  \;\;  \mathlarger{\mid}  \;\; F_U(\boldsymbol{X}|_{L_U \cup \tilde{G}_S}), \boldsymbol{Z}|_{L_U}, \boldsymbol{Z}|_{L_{\overline{U}}}, \boldsymbol{Z}|_{\tilde{G}_S}, \tilde{G}_S, L_U, L_{\overline{U}}, F_U, F_{\overline{U}})  \\
& \stackrel{\text{(d)}}{=} |F_U(\boldsymbol{X}|_{L_U \cup \tilde{G}_S})| + |F_{\overline{U}}(\boldsymbol{X}|_{L_{\overline{U}} \cup \tilde{G}_S})|  -  H(F_U(\boldsymbol{X}|_{L_U \cup \tilde{G}_S})  \;\;  \mathlarger{\mid}  \;\; F_U, \boldsymbol{Z}|_{L_U \cup \tilde{G}_S})  \\ & \quad -  H(F_{\overline{U}}(\boldsymbol{X}|_{L_{\overline{U}} \cup \tilde{G}_S})  \;\;  \mathlarger{\mid}  \;\; F_U(\boldsymbol{X}|_{L_U \cup \tilde{G}_S}), \boldsymbol{Z}|_{L_U}, \boldsymbol{Z}|_{L_{\overline{U}}}, \boldsymbol{Z}|_{\tilde{G}_S}, \tilde{G}_S, L_U, L_{\overline{U}}, F_U, F_{\overline{U}})  \\
& \leq |F_U(\boldsymbol{X}|_{L_U \cup \tilde{G}_S})| + |F_{\overline{U}}(\boldsymbol{X}|_{L_{\overline{U}} \cup \tilde{G}_S})|  -  H(F_U(\boldsymbol{X}|_{L_U \cup \tilde{G}_S})  \;\;  \mathlarger{\mid}  \;\; F_U, \boldsymbol{Z}|_{L_U \cup \tilde{G}_S})  \\ & \quad -  H(F_{\overline{U}}(\boldsymbol{X}|_{L_{\overline{U}} \cup \tilde{G}_S})  \;\;  \mathlarger{\mid}  \;\; \boldsymbol{X}|_{L_U},  \boldsymbol{X}|_{\tilde{G}_S}, \boldsymbol{Z}|_{L_U}, \boldsymbol{Z}|_{L_{\overline{U}}}, \boldsymbol{Z}|_{\tilde{G}_S}, \tilde{G}_S, L_U, L_{\overline{U}}, F_U, F_{\overline{U}})  \\
& \stackrel{\text{(e)}}{=} |F_U(\boldsymbol{X}|_{L_U \cup \tilde{G}_S})| + |F_{\overline{U}}(\boldsymbol{X}|_{L_{\overline{U}} \cup \tilde{G}_S})|  -  H(F_U(\boldsymbol{X}|_{L_U \cup \tilde{G}_S})  \;\;  \mathlarger{\mid}  \;\; F_U, \boldsymbol{Z}|_{L_U \cup \tilde{G}_S})  \\ & \quad -  H(F_{\overline{U}}(\boldsymbol{X}|_{L_{\overline{U}} \cup \tilde{G}_S})  \;\;  \mathlarger{\mid}  \;\; F_{\overline{U}}, \boldsymbol{X}|_{\tilde{G}_S}, \tilde{G}_S, L_{\overline{U}})  \\
& = n(r - 2\tilde{\delta}) + n(r - 2\tilde{\delta}) - H(F_U(\boldsymbol{X}|_{L_U \cup \tilde{G}_S})  \;\;  \mathlarger{\mid}  \;\; F_U, \boldsymbol{Z}|_{L_U \cup \tilde{G}_S})  -  H(F_{\overline{U}}(\boldsymbol{X}|_{L_{\overline{U}} \cup \tilde{G}_S})  \;\;  \mathlarger{\mid}  \;\; F_{\overline{U}}, \boldsymbol{X}|_{\tilde{G}_S}, \tilde{G}_S, L_{\overline{U}})  \\
& \stackrel{\text{(f)}}{\leq} n(r - 2\tilde{\delta}) + n(r - 2\tilde{\delta}) - (1 - \xi)\left(  n(r - 2\tilde{\delta}) - \frac{2^{ n(r - 2\tilde{\delta}) - n(r - \tilde{\delta})   }}{\ln 2}  \right)   - \left(  n(r - 2\tilde{\delta}) - \frac{2^{ n(r - 2\tilde{\delta}) - n(r - \tilde{\delta})   }}{\ln 2}  \right) \\
& = \xi n (r - 2 \tilde{\delta}) + (2 - \tilde{\delta}) \cdot \frac{2^{- \tilde{\delta}n}}{\ln 2} 
\end{align*} 

where (a) hold since $\boldsymbol{K}_{\overline{U}}$ is independent of all other variables, (b) holds since $(L_U,L_{\overline{U}})$ is a function of $(U,\boldsymbol{Q},\tilde{G},\tilde{B})$  and $\boldsymbol{Q}$ is a function of $(U,L_U,L_{\overline{U}})$, (c) holds since $F_U(\boldsymbol{X}|_{L_U \cup \tilde{G}_S}), F_{\overline{U}}(\boldsymbol{X}|_{L_{\overline{U}} \cup \tilde{G}_S}) - \boldsymbol{Z}|_{L_U}, \boldsymbol{Z}|_{L_{\overline{U}}}, \boldsymbol{Z}|_{\tilde{G}_S}, \tilde{G}_S, L_U, L_{\overline{U}}, F_U, F_{\overline{U}} - \boldsymbol{K}_U, U, \boldsymbol{Z}, \tilde{G}, \tilde{B}, F_L, F_L(\boldsymbol{X}|_{\tilde{G}_L})$ is a Markov chain, (d) holds since $F_U(\boldsymbol{X}|_{L_U \cup \tilde{G}_S}) - F_U, \boldsymbol{Z}|_{L_U \cup \tilde{G}_S} - \boldsymbol{Z}|_{L_U}, \boldsymbol{Z}|_{L_{\overline{U}}}, \boldsymbol{Z}|_{\tilde{G}_S}, \tilde{G}_S, L_U,$ $L_{\overline{U}}, F_{\overline{U}}$ is a Markov chain, (e) holds since $F_{\overline{U}}(\boldsymbol{X}|_{L_{\overline{U}} \cup \tilde{G}_S}) - F_{\overline{U}}, \boldsymbol{X}|_{\tilde{G}_S}, \tilde{G}_S, L_{\overline{U}} - \boldsymbol{X}|_{L_U}, \boldsymbol{Z}|_{L_U}, \boldsymbol{Z}|_{L_{\overline{U}}}, \boldsymbol{Z}|_{\tilde{G}_S}, L_U, F_U$ is a Markov chain and (f) holds for the following reasons:

\begin{itemize}
 \item $R(\boldsymbol{X}|_{L_U \cup \tilde{G}_S} \;\; \mathlarger{\mid} \;\; \boldsymbol{Z}|_{L_U \cup \tilde{G}_S} = \boldsymbol{z}|_{l_u \cup \tilde{g}_S}) = \#_e(\boldsymbol{z}|_{l_u \cup \tilde{g}_S})$. Whenever $\#_e(\boldsymbol{z}|_{l_u \cup \tilde{g}_S}) \geq (\epsilon_2 - \delta)(|L_U| + |\tilde{G}_S|) = n(r - \tilde{\delta})$, then by applying Lemma~\ref{lem:privacy_amplification} we get $H(F_U(\boldsymbol{X}|_{L_U \cup \tilde{G}_S})  \;\;  \mathlarger{\mid}  \;\; F_U, \boldsymbol{Z}|_{L_U \cup \tilde{G}_S} = \boldsymbol{z}|_{l_u \cup \tilde{g}_S}) \geq \left(  n(r - 2\tilde{\delta}) - \frac{2^{ n(r - 2\tilde{\delta}) - n(r - \tilde{\delta})   }}{\ln 2}  \right)$. By Chernoff's bound, $P[ \#_e(\boldsymbol{Z}|_{L_U \cup \tilde{G}_S} ) \geq  (\epsilon_2 - \delta)(|L_U| + |\tilde{G}_S|)] \geq 1 - \xi$, where $\xi \longrightarrow 0$ exponentially fast as $n \longrightarrow \infty$. As a result, $H(F_U(\boldsymbol{X}|_{L_U \cup \tilde{G}_S})  \;\;  \mathlarger{\mid}  \;\; F_U, \boldsymbol{Z}|_{L_U \cup \tilde{G}_S})  \geq (1 - \xi) \left(  n(r - 2\tilde{\delta}) - \frac{2^{ n(r - 2\tilde{\delta}) - n(r - \tilde{\delta})   }}{\ln 2}  \right)$.

\item $R(\boldsymbol{X}|_{L_{\overline{U}} \cup \tilde{G}_S} \;\; \mathlarger{\mid} \;\; \boldsymbol{X}|_{\tilde{G}_S} = \boldsymbol{x}|_{\tilde{g}_S}, \tilde{G}_S = \tilde{g}_S, L_{\overline{U}} = l_{\overline{u}}) = |L_{\overline{U}}| = n(r - \tilde{\delta})$. Applying Lemma~\ref{lem:privacy_amplification} we get:
\[ H(F_{\overline{U}}(\boldsymbol{X}|_{L_{\overline{U}} \cup \tilde{G}_S})  \;\;  \mathlarger{\mid}  \;\; F_{\overline{U}}, \boldsymbol{X}|_{\tilde{G}_S} = \boldsymbol{x}|_{\tilde{g}_S}, \tilde{G}_S = \tilde{g}_S, L_{\overline{U}} = l_{\overline{u}}) \geq \left(  n(r - 2\tilde{\delta}) - \frac{2^{ n(r - 2\tilde{\delta}) - n(r - \tilde{\delta})   }}{\ln 2}  \right). \]
 As a result, $H(F_{\overline{U}}(\boldsymbol{X}|_{L_{\overline{U}} \cup \tilde{G}_S})  \;\;  \mathlarger{\mid}  \;\; F_{\overline{U}}, \boldsymbol{X}|_{\tilde{G}_S}, \tilde{G}_S, L_{\overline{U}}) \geq \left(  n(r - 2\tilde{\delta}) - \frac{2^{ n(r - 2\tilde{\delta}) - n(r - \tilde{\delta})   }}{\ln 2}  \right)$.
\end{itemize}

\end{enumerate}


\end{document}